\documentclass[reqno,11pt]{amsart}
\usepackage{nicematrix}
\usepackage[latin1]{inputenc}
\usepackage{latexsym}
\usepackage{pgf, tikz}
\usepackage[colorlinks, citecolor=blue,linkcolor=black]{hyperref}
\hypersetup{
	plainpages=false,
	colorlinks=true,
	linkcolor=blue, 
	anchorcolor=black, 
	citecolor=blue, 
	urlcolor=blue, 
	menucolor=black, 
	filecolor=black, 
	bookmarksopen=true,
	bookmarksnumbered=true}
\usepackage{hyperref}
\usepackage{cite}
\usepackage{color}
\usepackage{mathrsfs,amstext,amsmath,amssymb,amsfonts,bm}
\allowdisplaybreaks[4]

\usepackage{mathtools}
\allowdisplaybreaks 

\theoremstyle{plain}                          
\newtheorem{theorem}{Theorem}[section]
\newtheorem{proposition}[theorem]{Proposition}    
\newtheorem{lemma}[theorem]{Lemma}

\theoremstyle{definition}
\newtheorem{definition}[theorem]{Definition}
\newtheorem{prop-defin}[theorem]{Proposition-definition} 
\newtheorem{example}[theorem]{Example}
\theoremstyle{remark}
\newtheorem{remark}[theorem]{Remark}

\numberwithin{equation}{section}

\renewcommand{\theta}{\vartheta}
\renewcommand{\phi}{\varphi}
\renewcommand{\epsilon}{\varepsilon}
\newcommand{\normord}[1]{\vcentcolon\mathrel{#1}\vcentcolon}

\newcommand{\mb}[1]{\mathbb{#1}} 
\newcommand{\mf}[1]{\mathfrak{#1}}
\newcommand{\mc}[1]{\mathcal{#1}}

\newcommand{\N}{\mb{N}} 
\newcommand{\C}{\mb{C}} 
\newcommand{\Z}{\mb{Z}} 

\renewcommand{\P}{\mb{P}}

\DeclareMathOperator*{\Res}{Res}
\DeclareMathOperator{\ad}{ad}

\title{$q$-Deformed Topological Recursion, Weight Vectors and Algebraic Structures}
\author{Fridolin Melong and Raimar Wulkenhaar}
\address{
	Mathematisches Institut der
	Universit\"at M\"unster \newline
	Einsteinstr.\ 62, 48149 M\"unster, Germany,\newline
	{\itshape e-mail:} \normalfont\texttt{fridolin.melong@uni-muenster.de (With copy to  fridomelong@gmail.com)}}
\address{Mathematisches Institut der
	Universit\"at M\"unster \newline
	Einsteinstr.\ 62, 48149 M\"unster, Germany, 
	{\itshape e-mail:} \normalfont
	\texttt{raimar@math.uni-muenster.de}}
\subjclass[2010]{Primary xxxxxx ; Secondary xxxxxx} 
\keywords{xxx, xxxxxxxxxx, xxxxxxx.}
\begin{document}	
	\begin{abstract}
		We investigate a $q$-deformation of the shifted topological recursion, extending the construction of Belliard-Bouchard-Kramer-Nelson to the context of quantum algebras. Through the study of highest weight vectors in $q$-deformed of $\mathcal{W}$-algebra representations, we derive a $q$-analogue of the topological recursion and show it yields $q$-deformed quantum curves. This framework unifies various approaches to quantum integrability and provides new insights into the geometry of $q$-deformed moduli spaces. 
	\end{abstract}
	
	\maketitle
	
	\tableofcontents

	
	\section{Introduction}
	\label{introduction}
	\subsection{Toward a q-Deformed Quantization Formalism}
	Topological recursion \cite{EO07} can be viewed as a powerful quantization formalism \cite{BE09} that associates a wave-function $\psi(\hslash,z)$ to a spectral curve $P(x,y)=0$. In the classical differential setting, this wave-function is constructed from the multidifferentials $\omega_{g,n}$ and is annihilated by a quantum curve $\hat{P}(\hat{x},\hat{y})\psi=0$, where $[\hat{y}, \hat{x}]=\hslash$. However, the non-commutativity of the operators $\hat{x}$ and $\hat{y}=\hslash\frac{d}{dx}$ implies that the choice of $\hat{P}$ is not unique. While topological recursion selects specific quantizations \cite{BE17} and \cite{EGMO21}, these often appear algebraically strange or restrictive, and the traditional freedom in choosing the integration divisor $D$ is insufficient to capture all possible quantum orderings.
	
	The shift to the $q$-deformed setting introduces a more fundamental non-commutativity. Here, the differential relation is replaced by a $q$-difference relation, leading to the $q$-commutation rule:
	\begin{equation}
		\hat{y}\hat{x} = q \hat{x}\hat{y}, \quad \text{where } q = e^{\hslash}.
	\end{equation}
	In this context, a $q$-deformed spectral curve $P(x,y)=0$ is quantized into a linear $q$-difference system. For example, the $(r,1)$ model $x^{r-1}y^r-1=0$ classically leads to a specific differential operator $(\hat{y}\hat{x})^{r-1}\hat{y}-1=0$. In the $q$-deformed case, the challenge is amplified: how can we modify the recursive machinery to reconstruct wave-functions for the full family of $q$-orderings and $q$-difference connections?
	
	The motivation of this paper is to address this limitation. We propose that the correct way to obtain more general $q$-quantizations is to introduce a shifted version of $q$-topological recursion. By allowing non-zero highest weights in the underlying Airy structures which we identify as $q$-Casimirs, we gain the necessary algebraic freedom to reconstruct WKB solutions for a broad class of $q$-quantum curves. This approach not only provides a systematic $q$-quantization procedure but also explains why the classical recursion was restricted to a specific subset of operators.
	\subsection{A Triple Perspective on the q-Deformed Framework}
	The $q$-deformation of the $(r,s)$ model allows us to unify three fundamental perspectives on topological recursion, each revealing a different facet of the shifted $q$-deformed loop equations.
	
	The $q$-WKB Perspective addresses the quantization of the $q$-spectral curves from the point of view of integrability: This is  via the WKB analysis of $q$-difference systems. While the topological type property was established in the classical case to guarantee that WKB solutions are governed by topological recursion \cite{BBE15,BEM17} and \cite{BEM18,IMS18}, we extend this to the $q$-deformed setting. By sharpening the assumptions of the topological type framework, we show that a broader class of $q$-quantum curves admits $q$-WKB solutions, provided they are reconstructed via the shifted $q$-topological recursion introduced in this work.
	
	Complementary to this, the Algebraic Perspective via $q$-Airy Structures ensures the mathematical consistency of the formalism. The $(r,s)$ curves are parametrized by $x(z)=z^r$ and $y(z)=z^{s-r}$, for 
	$ r \geq 2$, $ s \geq 1$, and  these structures are representations of $\mathcal{W}(\mathfrak{gl}_r)$-algebras where the partition function is a vacuum vector of weight zero \cite{BBCCN18}. Our $q$-deformation naturally emerges by relaxing this constraint: by considering highest weight vectors with non-zero weights, we obtain shifted Airy structures. This algebraic shift is precisely what encodes the $q$-difference nature of the system.
	
	Finally, this unified geometric framework provides the synthesis of these approaches. The strength of this triple correspondence lies in its internal consistency. 
	The Airy structure formulation provides a rigorous and immediate proof of the $q$-correlators $W^q_{g,n}$-a property 
	that is notoriously difficult to establish directly from the recursive formulas. 
	By bridging $q$-WKB analysis, $\mathcal{W}$-algebra representations, and $q$- deformed loop equations, we provide a complete description of the $(r,s,q)$-deformed system, where the classical $(r,s)$ constraints $(r\equiv\pm 1 \pmod s)$ are reinterpreted through the lens of $q$-algebraic rigidity.
	
	\subsection{Generalization to the q-deformed setting}
	To generalize our results, we follow the methodology introduced by Belliard et {\it al} in \cite{BBKN25}. This allows us to consistently define the $q$-analogue of our model.
	
	The main objective of this paper is to bridge the gap between $q$-difference systems and topological recursion through the lens of Airy structures. While the classical $(r,s)$ models are well-understood in the differential context, their $q$-analogues introduce non-local effects that require a fundamental shift in the recursive machinery. Our results can be summarized across three main axes: the algebraic construction of shifted Airy structures, the derivation of $q$-loop equations, and the reconstruction of $q$-quantum curves.
	
	To establish this generalization, we focus on the $(r,s,q)$-spectral curves, which are $q$-deformations of the local models for ramification points.  Classically, topological recursion is well-behaved on these curves only if $ r = \pm 1 \pmod{s}$\cite{BBCCN18}. We investigate how  this condition evolves when the system is governed by $q$-difference operators. 
	
	Our main finding is that for $r\equiv 1\pmod s$, we can construct large families of $q$-quantum curves. Their solutions are calculated using a new formula we call shifted $q$-topological recursion. This formula produces  multidifferentials by incorporating the $q$-shifts directly into the recursive structure.

	More precisely, the scope of the $q$-deformation depends on the algebraic constraints of the $(r,s)$ system. For the cases $(r,s)=(r,1)$, the standard partition function is annihilated by all non-negative modes $W^i_k$, $k \geq 0$, $1 \leq i \leq r$ of the $\mathcal{W}(\mathfrak{gl}_r)$-algebra generators, defining a highest weight vector of weight zero. We show that we can generalize this to the $q$-deformed setting by constructing Airy structures from highest weight vectors with non-zero weights. These weights, corresponding to the zero modes $W^i_0$, manifest in the $q$-difference operators as $r$ independent scalars $ S_i \in \hslash \C [ \hslash ] $, for $ 1 \leq i \leq r$,  providing the maximal degrees of freedom for quantization.
	
	For the cases $r = 1 \pmod{s}$ with $s \geq 2$, the $q$-deformation is more constrained as the partition function must be annihilated by certain negative modes in addition to the non-negative ones. In this regime, although we can still construct $q$-Airy structures, the only available freedom is to assign a non-zero weight $S_1 \in \hslash \C [ \hslash ]$ to the zero mode $W^q_{0,1}$ of the conformal weight 1 generator. Finally, for $r =-1 \pmod{S}$ with $s \geq 3$, we prove a rigidity result: no non-zero weights can be introduced within the Airy structure framework, implying that these systems do not admit such $q$-shifted deformations.

	Our investigation begins by constructing a $q$-deformed framework for Airy structures.
	
	In Section \ref{s:shifted}, we show that it is possible to define more general families, which we call shifted $q$-Airy structures, by allowing the partition functions to correspond to highest weight vectors with non-zero weights (see Theorem \ref{t:shifts}). These non-zero weights act as the algebraic engine of the $q$-deformation, capturing the non-local shifts of the $q$-difference system. In this setting, the $(r,s)$-Airy structures associated with classical topological recursion, constructed in \cite{BBCCN18} as representations of $\mathcal{W}(\mathfrak{gl}_r)$-algebras with vacuum partition functions (weight zero), are naturally recovered in the limit where $q \to 1$ and where these weights vanish.
	
	Finally in Section \ref{s:shiftedle}, we extend the correspondence between Airy structures and topological recursion to the $q$-deformed setting. We show that shifted $(r,s,q)$-Airy structures are equivalent to a specific modification of the loop equations for the $q$-correlators $W^q_{g,n}$, which we define as shifted $q$-deformed loop equations (see Proposition \ref{p:shifteloopeq}). These equations can be solved recursively, leading to a variation of the recursion formula that we term shifted $q$-topological recursion (see Theorem \ref{ShiftedTR}). In this framework, the non-zero highest weights (the $q$-Casimirs) introduce explicit corrections to the $W^q_{g,1}$ disk amplitudes. Crucially, as this recursion originates from the Airy structure formalism, the resulting multidifferentials are guaranteed to satisfy the required symmetry properties.
	
	With these shifted $q$-deformed loop equations in hand, we investigate whether the associated quantization formalism produces wave-functions for a broader class of $q$-deformed spectral curves. 
	
	Beyond these results, this paper also serves an expository purpose. We unify several fundamental perspectives on $q$-topological recursion: the geometric approach via residues, the algebraic formulation through $q$-Airy structures, and the integrable perspective via the WKB analysis of $q$-connections of topological type. The central pillar connecting these viewpoints is the concept of $q$-deformed loop equations, which appear as the common thread throughout this work.
	
	\subsection{Notation}
	We introduce $\hslash$ along the conventions in \cite{BCJ22}. 
	
	We use the convention that $\mathbb{N} = \{0,1,2,\ldots \}$ and $\mathbb{N}^* = \{1,2,3,\ldots \}$. We write $[r] = \{1,\ldots, r\}$. For a set $ N$ and a variable $z$, we write $ z_N = \{ z_n \, | \, n \in N \}$. Moreover, the $q$-deformed number and the $q$-deformed factorial are defined respectively by\cite{Kac2002}:
	\begin{equation}
		[n]_q=\frac{q^n-q^{-n}}{q-q^{-1}},\quad\mbox{and}\quad [n]_q!=\prod_{k=1}^{n}[k]_q.
	\end{equation}
	
	We consider fields in vertex operator algebras as differential forms of degree equal to the conformal weight of the state. I.e. if in a VOA $V$, the state $ v \in V$ has conformal weight $ \Delta$, then we index its field by
	\begin{equation}
		Y(v;x) = \sum_{k \in \Z} v_k \frac{(dx)^\Delta}{x^{\Delta + k}}\,.
	\end{equation}
	We use $x$ for the variable instead of the conventional $ z$, because this conforms with our interpretation via the spectral curve of topological recursion \cite{BKS23}.
	
	When considering a spectral curve with local coordinate $ z$, and functions $  x(z)$ and $ y(z)$, we may write $ x_j = x(z_j)$ and $ y_j = y(z_j)$ to lighten notation.
	
	While the connection between Airy structures and topological recursion is well-established for classical $\mathcal{W}$-algebras \cite{BKS23, BBCCN18}, the extension to $q$-deformed structures requires a careful treatment of the shifted loop equations and the modified basis of differentials.
	
	Our approach throughout this work is largely based on the framework established by Belliard  et {\it al } in \cite{BBKN25}, which we extend here to the $q$-deformed case.
	\section{Deformed shifted $(r,s,q)$-Airy structures and highest weight vectors}\label{s:shifted}
				In this section, we illustrate how  higher quantum Airy structures in the sense of \cite{BBCCN18,KS17} or  the related Airy ideals \cite{BCJ22}, may be used to rebuild highest weight vectors for $q$-deformed  $\mathcal{W}(\mathfrak{gl}_r)$ at self-dual level. This concerns a $q$-deformation of the $(r,s)$-Airy structures investigated in \cite{BBCCN18}, which we call  $q$-deformed shifted $(r,s)$-Airy structures, denoted $(r,s,q)$. Here, we recall only main results, and follow the concepts of Airy structures introduced in \cite{Bo24,BCJ22}.
				\subsection{$q$-deformed Airy structures}
				Let us start by giving the definition of the $q$-deformed Airy structures (also called $q$-deformed Airy ideals). We follow the results presented in \cite{Bo24} to investigate their $q$-deformation.
				\subsubsection{The Rees $q$-deformed Weyl algebra}
				Let $A$ be a finite or countably infinite index set. We exploit the notation $x_A$ for the set of variables $\{x_a \}_{a \in A}$, and $\mathcal{D}_{q,A}$ for the set of $q$-deformed differential operators $\left\{ \mathcal{D}_{q,A} \right \}_{a \in A}$, where the action of $\mathcal{D}_{q,A}$  on a function $f(x_a)$ is defined by:
				{\begin{equation}
						\mathcal{D}_{q,A}f(x_a)=\frac{f(q\,x_a)-f(q^{-1}\,x_a)}{x_a(q-q^{-1})}.
				\end{equation}}
				The $q$-deformed Weyl algebra $ \mathbb{C}[x_A]\langle \mathcal{D}_{q,A} \rangle$ is the algebra of $q$-deformed differential operators with polynomial coefficients defined by the commutation relations:
				{\begin{equation}
						\mathcal{D}_{q,a} \, x_b = \delta_{ab} \mathbf{M}_{q,a} + q^{\delta_{ab}} x_b S_{q,a}^{-1} \mathcal{D}_{q,a} \,,
				\end{equation}} 
				where $\delta_{ab}$ is the Kronecker symbol {, $S_{q,a}$ the $q$-shift operator $S_{q,a} f(x_a) = f(q x_a)$, and $\mathbf{M}_{q,a}$  the symmetric mean $q$-deformed operator (or $q$-average):
					\begin{equation}
						\mathbf{M}_{q,a} f(x_a) = \frac{q f(q x_a) + q^{-1} f(q^{-1} x_a)}{q + q^{-1}} \,.
				\end{equation}}
				We also define the completed  $q$-deformed Weyl algebra 
				as the completion of the  $q$-Weyl algebra, allowing infinite sums in the $\mathcal{D}_{q,a}$-derivatives (for a countably infinite index set $A$) but restricted to finite polynomials in the variables $x_A$.  
				The $q$-deformed operator $\mathcal{D}_{q,A}$ has many filtrations, one of which is the Bernstein filtration (for the classical case, the reader is refered to  Definition 2.3 in \cite{Bo24}). From this filtration, we build a  graded algebra via the Rees construction.
				\begin{definition}\label{d:rees}
					The  $q$-deformed Rees Weyl algebra $\mathcal{D}_{q,A}^\hslash$ associated to $\mathcal{D}_{q,A}$ with the Bernstein filtration is given by:
					\begin{equation}
						\mathcal{D}_{q,A}^\hslash = \bigoplus_{n \in \mathbb{N}} \hslash^n F_n \mathcal{D}_{q,A},
					\end{equation}
					where the $F_n \mathcal{D}_{q,A}$ denoted the subspaces in the Bernstein filtration of $\mathcal{D}_{q,A}$.
				\end{definition}
				When $A$ is countably infinite, we want to be able to take infinite linear combinations of operators $P_a$ without divergent sums appearing. To this end, we define the notion of a bounded collection of differential operators:
				\begin{definition}\label{def:bounded}
					Let $I$ be a finite or countably infinite index set, and $\{P_i\}_{i \in I}$ a collection of  $q$-deformed differential operators {$P_i \in\widehat{\mathcal{D}}_{q,A}^\hslash$} of the following form:
					\begin{equation}
						P_i = \sum_{n \in \mathbb N} \hslash^n \sum_{\substack{m,k \in \mathbb N \\ m+k = n} }\sum_{a_1,\ldots, a_m \in A} p^{(n,k)}_{i;a_1,\ldots, a_m}(x_A)\mathcal{D}_{q,a_1} \ldots\mathcal{D}_{q,a_m}.
					\end{equation}
					The collection is bounded if, for any fixed  indices $ a_1,\ldots, a_m, n  $ and $ k $, the polynomials  $ p^{(n,k)}_{i;a_1,\ldots, a_m}(x_A) $ vanish for all but finitely many indices $ i \in I $.
				\end{definition} 
				For any bounded collection $\{P_i\}_{i \in I}$, the linear combinations $\sum_{i \in I} c_i P_i$ for any {$c_i \in \mathcal{D}_{q,A}^\hslash$} are well defined operators in {$\mathcal{D}_{q,A}^\hslash$}, regardless of whether $I$ is finite or countably infinite.
				\subsubsection{$q$-deformed Airy ideals}
				Here, we  define the concepts of $q$-deformed Airy ideals or $q$-deformed Airy structures, which is a particular class of left ideals in {$\mathcal{D}_{q,A}^\hslash$}. 
				\begin{definition}\label{d:airy}
					Let {$\mathcal{I} \subseteq\mathcal{D}_{q,A}^\hslash$} be a left ideal {in the  $q$-deformed Rees Weyl algebra}. We say that it is a $q$-deformed Airy ideal or  $q$-deformed Airy structure) if there exists a bounded generating set $\{H_a \}_{a \in A}$ for $\mathcal{I}$ such that:
					\begin{enumerate}
						\item The operators $H_a$ take the following form:
						\begin{equation}\label{eq:form}
							H_a = \hslash \mathcal{D}_{q,a} + \hslash p_a(x_A) + O(\hslash^2),
						\end{equation}
						where {$\mathcal{D}_{q,a}$ is the  $q$-derivative and $p_a(x_A)$ are linear polynomials in the coordinates $x_A$.}
						\item The left ideal $\mathcal{I}$ satisfies {the $q$-closure property under the commutator}:
						{\begin{equation}
								[ \mathcal{I}, \mathcal{I}] \subseteq \hslash^2 \mathcal{I}.
						\end{equation}}
					\end{enumerate}
					{
						Specifically, for the generators, this implies $[H_a, H_b] = \hslash \sum_c f_{ab}^c H_c$ for some bounded collection of coefficients $f_{ab}^c \in \mathcal{D}_{q,A}^\hslash$.}
				\end{definition}    
				\subsubsection{$q$-deformed partition function}
				The main reason that Airy ideals are interesting is because they are annihilator ideals for some partition functions. 
				\begin{definition}\label{d:pf}
					A  $q$-deformed partition function in the set of variables $x_A$ is an expression given by  the following form:
					\begin{equation}\label{eq:pfFgncoeff}
						Z_q = \exp\left( \sum_{g \in \frac{1}{2} \mathbb{N}, n \in \mathbb{N}^*} \frac{\hslash^{2g-2+n}}{[n]_q!}  \sum_{k_1, \ldots, k_n \in A}  F^{q}_{g,n}[k_1, \ldots,k_n]  x_{k_1} \cdots x_{k_n}\right).
					\end{equation}
					We say that:
					\begin{enumerate}
						\item [(i)] $Z_q$ is stable if $F^{q}_{0,1}[k_1]=F^{q}_{0,2}[k_1,k_2]=F^{q}_{\frac{1}{2},1}[k_1] =0$. 
						\item [(ii)] $Z_q$ is semistable if $F^{q}_{0,1}[k_1]=0$.
						\item [(iii)] unstable otherwise.
					\end{enumerate} 
				\end{definition}
				\begin{remark}
					In the context of the  $q$-Airy structure, the stability condition (i) ensures that the partition function $Z_q$ is the unique solution (up to a constant factor) to the system of equations $H_a Z_q = 0$, where $H_a$ are the generators of the  $q$-Airy ideal $\mathcal{I}$.
				\end{remark}
				{\begin{remark}
						The relation\eqref{eq:pfFgncoeff} can be generalized as follows:\begin{equation}\label{eq:pfFgncoeff_generalised}
							Z_q = \exp\left( \sum_{g \in \frac{1}{2} \mathbb{N}, n \in \mathbb{N}^*} \frac{\hslash^{2g-2+n}}{[n]_q!}  \sum_{\substack{k_1, \ldots, k_n \in A\\ \mu_1,\ldots,\mu_n\in [N]}}  F^{q}_{g,n} \bigg[\begin{array}{ccc} \mu_1 & \ldots &\mu_n \\ k_1 & \ldots & k_n\end{array} \bigg] x_{k_1, \mu_1} \cdots x_{k_n, \mu_n}\right).
						\end{equation}
						
						It is obvious to see that it satisfy the conditions
						\begin{enumerate}
							\item [(i)] $Z_q$ is stable if $F^{q}_{0,1}\bigg[\begin{array}{c} \mu_1 \\ k_1 \end{array} \bigg]=F^{q}_{0,2}\bigg[\begin{array}{cc} \mu_1&\mu_2 \\ k_1&k_2 \end{array} \bigg]=0$.
							\item [(ii)] $Z_q$ is semistable if $F^{q}_{0,1}\bigg[\begin{array}{c} \mu_1 \\ k_1 \end{array} \bigg]=0$.
							\item [(iii)] unstable otherwise.
						\end{enumerate} 
						Moreover,  when $N=1$, we obtain $\mu_i = 1$ for all $i \in \{1, \dots, n\}$. By identifying $x_{k_i, 1} \equiv x_{k_i}$ and setting:
						\begin{equation}
							F^{q}_{g,n} \left[\begin{array}{ccc} 1 & \cdots & 1 \\ k_1 & \cdots & k_n \end{array} \right] = F^{q}_{g,n}[k_1, \dots, k_n],
						\end{equation}
						the partition function~\eqref{eq:pfFgncoeff_generalised} immediately reduces to the classic formulation~\eqref{eq:pfFgncoeff}.
				\end{remark}}
				Recall the definition of annihilator ideal:
				
				\begin{definition}
					Let $Z_q$ be a { $q$-deformed} partition function defined in \eqref{eq:pfFgncoeff}. The \emph{annihilator ideal} $\mathcal{I}$ = {$\text{Ann}_{\mathcal{D}_{q,A}^\hslash}(Z_q)$ is the left ideal in the $q$-deformed Rees Weyl algebra $\mathcal{D}_{q,A}^\hslash$ defined by:} 
					\begin{equation}
						\operatorname{Ann}_{\mathcal{D}_{q,A}^\hslash}(Z_q) = \{ P \in \mathcal{D}_{q,A}^\hslash \mid P Z_q = 0 \}. 
					\end{equation}
				\end{definition}
				{Before stating the main result, we establish the integrability condition for $q$-difference systems.	Let $\mathbb{K}$ be a field of characteristic zero (typically $\mathbb{C}$). We denote by $\mathbb{K}[[x_A]]$ the ring of formal power series in the variables $\{x_a\}_{a \in A}$ with coefficients in $\mathbb{K}$ and $\mathcal{D}_{q,i}$ denote the $q$-derivative operator acting on the $i$-th variable.
					\begin{lemma}
						\label{lem:qPoincare_sym}
						Let $\phi_1, \dots, \phi_n \in \mathbb{K}[[x_1, \dots, x_n]]$ be formal power series. The system of $q$-difference equations 
						\begin{equation}
							\mathcal{D}_{q,i}\,F = \phi_i(x), \quad \forall i \in \{1, \dots, n\}
						\end{equation}
						admits a unique solution $F \in \mathbb{K}[[x]]$ with $F(0)=0$ if and only if the following $q$-compatibility conditions are satisfied:
						\begin{equation}
							\mathcal{D}_{q,i}\, \phi_i = \mathcal{D}_{q,i}\, \phi_j, \quad \forall i, j \in \{1, \dots, n\}.
						\end{equation}
					\end{lemma}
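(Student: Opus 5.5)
The plan is to argue coefficientwise on monomials, using the fact that, with the symmetric convention for $\mathcal{D}_{q,i}$ fixed above, every monomial is an eigen-direction up to a degree shift. The compatibility condition I will prove is the cross condition
\[
\mathcal{D}_{q,i}\,\phi_j=\mathcal{D}_{q,j}\,\phi_i \qquad \text{for all } i,j .
\]
The condition as printed reads $\mathcal{D}_{q,i}\,\phi_i = \mathcal{D}_{q,i}\,\phi_j$, which is evidently a misprint of this cross condition.

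\emph{Basic facts.} A direct computation from the definition gives
\[
\mathcal{D}_{q,i}\,x^{\alpha}=[\alpha_i]_q\,x^{\alpha-e_i},
\]
where $x^\alpha=x_1^{\alpha_1}\cdots x_n^{\alpha_n}$ and the right-hand side is read as $0$ when $\alpha_i=0$. Since $\mathcal{D}_{q,i}$ acts only on $x_i$, the operators $\mathcal{D}_{q,i}$ and $\mathcal{D}_{q,j}$ commute on $\mathbb{K}[[x]]$. Throughout, I assume that $q$ is not a root of unity, or that $q=e^{\hslash}$ is treated formally. Then $[m]_q\neq 0$ for every $m\geq 1$, and this non-vanishing is the only genuine hypothesis the argument needs.

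\emph{Necessity.} If $\mathcal{D}_{q,i}F=\phi_i$ for all $i$, then
\[
\mathcal{D}_{q,j}\phi_i=\mathcal{D}_{q,j}\mathcal{D}_{q,i}F=\mathcal{D}_{q,i}\mathcal{D}_{q,j}F=\mathcal{D}_{q,i}\phi_j .
\]

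\emph{Uniqueness.} If $\mathcal{D}_{q,i}G=0$ for all $i$, then every coefficient $g_\alpha$ with $\alpha\neq 0$ satisfies $[\alpha_i]_q\,g_\alpha=0$ for some $i$ with $\alpha_i\geq 1$. Hence $G$ is constant, and the normalisation $F(0)=0$ pins $F$ down.

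\emph{Existence.} Write $\phi_i=\sum_\beta c^{(i)}_\beta x^\beta$. For each multi-index $\alpha\neq 0$, choose any $i$ with $\alpha_i\geq 1$ and set
\[
f_\alpha=[\alpha_i]_q^{-1}\,c^{(i)}_{\alpha-e_i}, \qquad f_0=0 .
\]
The key step, and the only real obstacle, is to show that $f_\alpha$ does not depend on the choice of $i$. Suppose $\alpha_i,\alpha_j\geq 1$ with $i\neq j$. Compare the coefficients of $x^{\alpha-e_i-e_j}$ on both sides of $\mathcal{D}_{q,j}\phi_i=\mathcal{D}_{q,i}\phi_j$. The left side contributes $[\alpha_j]_q\,c^{(i)}_{\alpha-e_i}$ and the right side contributes $[\alpha_i]_q\,c^{(j)}_{\alpha-e_j}$. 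Dividing the resulting equality by $[\alpha_i]_q[\alpha_j]_q\neq 0$ gives
\[
[\alpha_i]_q^{-1}c^{(i)}_{\alpha-e_i}=[\alpha_j]_q^{-1}c^{(j)}_{\alpha-e_j},
\]
which is exactly the required consistency.

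\emph{Verification.} Put $F=\sum_\alpha f_\alpha x^\alpha$. For each $i$ and each $\beta$, the coefficient of $x^\beta$ in $\mathcal{D}_{q,i}F$ is $[\beta_i+1]_q\,f_{\beta+e_i}$. Since the index $i$ is admissible for $\alpha=\beta+e_i$, well-definedness lets us compute $f_{\beta+e_i}$ using $i$, so this coefficient equals $c^{(i)}_\beta$. Hence $\mathcal{D}_{q,i}F=\phi_i$ for every $i$, which completes the argument.
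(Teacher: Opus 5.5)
Your proof is correct, and there is no competing argument to compare it with: the paper states this lemma without proof. Your coefficientwise argument supplies the missing proof. It uses $\mathcal{D}_{q,i}x^\alpha=[\alpha_i]_q x^{\alpha-e_i}$, commutativity of the $\mathcal{D}_{q,i}$ for necessity, triviality of the common kernel for uniqueness, and comparison of the coefficient of $x^{\alpha-e_i-e_j}$ for well-definedness of $f_\alpha$. It also carries over verbatim to countably many variables $x_A$, since each coefficient identity involves only finitely many variables. That is the setting in which the paper actually applies the lemma.

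Your two caveats are substantive rather than cosmetic. First, the condition as printed is false in both directions, so reinterpreting it was necessary:
\begin{itemize}
\item for $n=2$ and $F=x_1^2$ one gets $\phi_1=[2]_q x_1$, $\phi_2=0$, and $\mathcal{D}_{q,1}\phi_1=[2]_q\neq 0=\mathcal{D}_{q,1}\phi_2$;
\item $\phi_1=\phi_2=x_1$ satisfies the printed identities yet admits no solution, because $\mathcal{D}_{q,2}\phi_1=0\neq 1=\mathcal{D}_{q,1}\phi_2$.
\end{itemize}
Your cross condition is exactly the one the paper invokes in step (2)(i) of the proof of Proposition~\ref{t:airy}.

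Second, the non-vanishing of $[m]_q$ is a genuine hypothesis left implicit in the paper. Suppose $q^2\neq 1$ but $q^{2m}=1$ for some $m\geq 2$. Then already for $n=1$, where compatibility is vacuous, $x_1^m$ lies in the kernel of $\mathcal{D}_{q,1}$, and $\mathcal{D}_{q,1}F=x_1^{m-1}$ has no solution. Under the paper's convention $q=e^{\hslash}$ with $\hslash$ formal, $[m]_q=m+O(\hslash^2)$ is a unit, so your assumption holds automatically there.
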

				
				With this tool, we can now characterize the partition function associated with a $q$-deformed Airy ideal.	The result regarding the theory of $q$-deformed Airy structures is given in the following proposition. The classical case is furnished in the theorem of the work\cite{KS17}.
				\begin{definition}
					{We define the $q$-gradient, denoted by $\nabla_q$, as the vector of $q$-derivative operators with respect to each variable $x_i$:
						\begin{equation}
							\nabla_q = \left( 	\mathcal{D}_{q,1}, 	\mathcal{D}_{q,2}, \dots, 	\mathcal{D}_{q,N} \right),\end{equation}
						where
						\begin{equation}	\mathcal{D}_{q,i} f(x) = \frac{f(x_1\dots, qx_i, \dots x_N) - f(x_1\dots, q^{-1}x_i, \dots x_N)}{(q-q^{-1})x_i}.
					\end{equation}}
				\end{definition}
				\begin{proposition}\label{t:airy}
					Let $\mathcal{I} \subset \mathcal{D}_{q,A}^\hslash$ be a  $q$-deformed Airy ideal. There exists a unique  $q$-deformed partition function $Z_q$, of the form \eqref{eq:pfFgncoeff}, such that $\mathcal{I}$ is the annihilator ideal of $Z_q$ in $\mathcal{D}_{q,A}^\hslash$:
					\begin{equation}
						\mathcal{I} = \operatorname{Ann}_{\mathcal{D}_{q,A}^\hslash}(Z_q).
					\end{equation}
					Furthermore, the coefficients of $Z_q$ satisfy $F^q_{0,1} = F^q_{0,2} = 0$. If the linear potentials $p_a(x_A)$ vanish for all $a \in A$, then $Z_q$ is strictly stable, satisfying:
					\begin{equation}
						F^{q}_{0,1}[k_1] = F^{q}_{0,2}[k_1,k_2] = F^{q}_{\frac{1}{2},1}[k_1] = 0.
					\end{equation}
				\end{proposition}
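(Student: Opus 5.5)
We follow the structure of the classical Kontsevich--Soibelman argument, as presented in \cite{Bo24}. The proof has three steps: existence of a formal solution $Z_q$ order by order in $\hslash$, uniqueness of that solution, and the identification of $\mathcal{I}$ with the full annihilator. The only genuinely $q$-specific input is Lemma~\ref{lem:qPoincare_sym}, which replaces the classical Poincar\'e lemma.

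\textbf{Reduction to first-order equations.} Write $Z_q=\exp(F)$ with $F=\sum \hslash^{2g-2+n}F^q_{g,n}/[n]_q!$ as in \eqref{eq:pfFgncoeff}. We want $H_aZ_q=0$ for all $a$, i.e. $Z_q^{-1}H_aZ_q=0$. The operator $\mathcal{D}_{q,a}$ does not act as a derivation on exponentials, so $Z_q^{-1}\mathcal{D}_{q,a}Z_q$ is not $\mathcal{D}_{q,a}F$. Instead it equals
\[
\frac{e^{F(qx_a)-F(x)}-e^{F(q^{-1}x_a)-F(x)}}{(q-q^{-1})x_a}.
\]
We expand this in $\hslash$, using that $q=e^{\hslash}$, so that each $q$-shift of $x_a$ contributes only positive powers of $\hslash$.

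At order $\hslash^1$ the equation reads $\mathcal{D}_{q,a}F_{0,2}+p_a+(\text{contribution of }F_{0,1})=0$. The normal form \eqref{eq:form} contains no $\hslash^0$ term. This forces $F^q_{0,1}=0$ and determines the quadratic part $F_{0,2}$ through the system $\mathcal{D}_{q,a}F_{0,2}=-p_a$.

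At each higher order $\chi=2g-2+n$, the equation takes the form $\mathcal{D}_{q,a}F_\chi=\phi_a$. Here $\phi_a$ is a polynomial expression built from $F_{\chi'}$ with $\chi'<\chi$ and from the higher $\hslash$-coefficients of $H_a$. Boundedness of $\{H_a\}$ guarantees that each coefficient of $\phi_a$ is a finite sum.

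\textbf{Solving by induction.} We apply Lemma~\ref{lem:qPoincare_sym} degree by degree in the monomials $x_{k_1}\cdots x_{k_n}$. The lemma needs the $q$-compatibility of the $\phi_a$, and we would extract it from the closure property $[H_a,H_b]=\hslash\sum_c f_{ab}^cH_c$. Assume all lower equations hold. Conjugate the commutator by $Z_q^{(<\chi)}$ and take the order-$\chi$ part: the right-hand side only involves lower-order residues, which vanish by induction. The left-hand side then reduces to the difference of cross $q$-derivatives of the $\phi$'s, and this difference must vanish.

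This is the main obstacle. $\mathcal{D}_{q,a}$ and $\mathcal{D}_{q,b}$ commute for $a\neq b$, but the conjugation introduces shift operators $S_{q,a}^{\pm1}$ and the averaging operator $\mathbf{M}_{q,a}$. We must show that these corrections are of strictly higher order in $\hslash$, which we expect since $S_{q,a}=1+O(\hslash)$, so that they feed only into later steps of the induction. Our first attempt would be to expand all such corrections in $\hslash$ explicitly and absorb them into $\phi_a$ at the next order.

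Uniqueness with $F(0)=0$ then follows from the uniqueness clause of Lemma~\ref{lem:qPoincare_sym}.

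\textbf{Annihilator and stability.} By construction $\mathcal{I}\subseteq\operatorname{Ann}(Z_q)$. For the reverse inclusion, take $P\in\operatorname{Ann}(Z_q)$ and reduce it modulo the generators $H_a=\hslash\mathcal{D}_{q,a}+\dots$ to a remainder involving only multiplication operators, order by order in $\hslash$. A multiplication operator annihilating the invertible series $Z_q$ must vanish, so $P\in\mathcal{I}$.

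For the stability claims, $F_{0,1}=0$ already follows from the first step. The quadratic part $F_{0,2}$ is fixed by the equations $\mathcal{D}_{q,a}F_{0,2}=-p_a$. When all $p_a=0$, the unique solution is $F_{0,2}=0$. Moreover, $F_{\frac12,1}$ is sourced only by the $\hslash^{2}$ constant-in-$x$ parts of $H_a$. Under the normal form we would show that $F_{\frac12,1}$ vanishes, because its defining equation is homogeneous with zero initial data, which gives strict stability.
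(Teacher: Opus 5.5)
Your outline follows the paper's own proof: solve order by order with Lemma~\ref{lem:qPoincare_sym}, and take compatibility from closure. The place you single out as the main obstacle, controlling the corrections produced by the $q$-shifts, is exactly where the argument breaks. The paper's proof asserts this step without argument. The failure, though, is one of degree bookkeeping, not of compatibility.

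Classically, $\partial_a$ is a derivation, and the Bernstein filtration guarantees that every term of order $\hslash^m$ in $Z^{-1}H_aZ$ has $x$-degree at most $m$. This is exactly the room filled by the new unknowns $\hslash\partial_a(\hslash^{2g-2+n}F_{g,n}x^n)$ with $2g-1+n=m$, which have degree $m-2g$.

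For $\mathcal{D}_{q,a}$ the Leibniz rule is nonlinear:
\begin{equation*}
Z_q^{-1}\,\hslash\mathcal{D}_{q,a}\,Z_q=\hslash\,\frac{e^{A_a^+}-e^{A_a^-}}{(q-q^{-1})x_a},\qquad A_a^{\pm}=F(q^{\pm1}x_a)-F(x).
\end{equation*}
With $q=e^{\hslash}$, a genus-zero term $\hslash^{n-2}x^n$ of $F$ contributes to $A_a^\pm$ at size $\hslash^{n-1}x^n$. The quadratic part of the exponential is harmless, because $A_a^++A_a^-$ is one order of $\hslash$ smaller than $A_a^\pm$. The cubic part is not: $(A_a^+)^3-(A_a^-)^3\approx2(A_a^+)^3$, which produces terms of order $\hslash^m$ and $x$-degree $m+2$.

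These are not lower-order residues that vanish by induction, and pushing them ``to the next order'' does not help. Your $\phi_a$ acquires components in degrees where $F_\chi$ has no monomials, so Lemma~\ref{lem:qPoincare_sym} returns a solution outside the ansatz \eqref{eq:pfFgncoeff}. The fact that $S_{q,a}^{\pm1}=1+O(\hslash)$ is beside the point: it is the nonlinearity, not the shift itself, that raises the degree.

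This obstruction is genuine, not a weakness of your method. Take $A=\{1\}$ and the single generator $H=\hslash\mathcal{D}_q-\hslash^2x^2$, so $p=0$ and there is nothing to be compatible. From $[n]_qc_n=\hslash c_{n-3}$, the unique solution with $Z_q(0)=1$ is $Z_q=\sum_{k\ge0}\hslash^kx^{3k}/\prod_{j=1}^k[3j]_q$ (classically $e^{\hslash x^3/3}$). Write $\log Z_q=\ell_1\hslash x^3+\ell_2\hslash^2x^6+\ell_3\hslash^3x^9+\cdots$. Then $\ell_1=1/[3]_q$, $\ell_2=-\frac{\ell_1^2}{2}\bigl(1-\frac{2}{q^3+q^{-3}}\bigr)$, and
\begin{equation*}
\ell_3\,(q^9-q^{-9})=-\ell_1\ell_2\bigl[(q^9-q^{-9})-(q^6-q^{-6})-(q^3-q^{-3})\bigr]-\frac{\ell_1^3}{6}\bigl[(q^3-1)^3-(q^{-3}-1)^3\bigr].
\end{equation*}
For $q=e^{\hslash}$ this gives $\ell_2=-\hslash^2/4+O(\hslash^4)$, an admissible genus-zero term. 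But $\ell_3=-\hslash^2/54+O(\hslash^4)$, so $\log Z_q$ contains $-\hslash^5x^9/54$. In \eqref{eq:pfFgncoeff} this would need $2g-2+9=5$, that is $g=-1$. If $q$ is kept independent of $\hslash$, the $x^6$ term already needs $g=-1$.

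Hence no $Z_q$ of the form \eqref{eq:pfFgncoeff} with coefficients regular at $\hslash=0$ exists here. The existence claim cannot be proved as stated without changing the grading of the ansatz or restricting the allowed operators.

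There are also three smaller problems in your stability step. First, your own order-$\hslash$ equation $\mathcal{D}_{q,a}F_{0,2}=-p_a$ gives $F_{0,2}=0$ only when $p_a=0$, not unconditionally as the statement claims. Second, $F_{\frac12,1}$ sits at $\hslash^0$ in $F$, like $F_{0,2}$, so it is fixed at order $\hslash^1$ by the constant part of $p_a$, not by $\hslash^2$ constants of $H_a$. Third, the absence of an $\hslash^0$ term does not force $F_{0,1}=0$, because the order-$\hslash^0$ equation is nonlinear in $F_{0,1}$; in the classical theory $F_{0,1}$ is excluded by the ansatz itself.
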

					
				\begin{proof}
					{We proceed by induction on the genus $g \in \frac{1}{2}\mathbb{Z}_{\geq 0}$ and, for each $g$, by induction on the degree $n$ of the coefficients of the potentials $F_{g,n}^q$.
						\begin{itemize}
							\item[(1)]
							The annihilator equation for $g=0$ is written, for every generator $\widehat{L}_i \in \mathcal{I}$ as follows:
							\begin{equation}
								L_i(x, \nabla_q F_0^q) = 0 \in \mathbb{K}[[x_A]],
							\end{equation}
							where $L_i$ represents the semi-classical limit ($\hslash \to 0$) of the $q$-difference operator. 
							\begin{itemize}
								\item[(a)] The condition 1) of the Airy ideal definition stipulates that $$\widehat{L}_i = \hslash \mathcal{D}_{q,i} + \text{higher-order terms} \geq 2.$$ In the limit $\hslash \to 0$, this defines a $q$-Lagrangian sub-manifold in the phase space. The commutativity condition of the ideal $[\widehat{L}_i, \widehat{L}_j] \subset \hslash \mathcal{I}$ ensures that the $q$-1-form $\sum_i (\mathcal{D}_{q,i} F_0^q) dx_i$ is closed. By the Lemma \ref{lem:qPoincare_sym} on the space of formal power series, there exists a unique $F_0^q$ (up to a constant) satisfying the system.
								\item[(b)] Since the Airy ideal is dominated by terms at least quadratic in $(x, \nabla_q)$, $F_0^q$ cannot contain terms of degree 1 or 2. Thus, $F_{0,1}^q = F_{0,2}^q = 0$, forcing the expansion $F_0^q = \sum_{n \geq 3} F_{0,n}^q$.
							\end{itemize}
							\item[(2)] For the induction on the genus $g > 0$, we assume the existence and uniqueness of the partition function $Z_q$ up to order $\hslash^{g-1}$. Substituting the ansatz $Z_q = \exp\left( \sum \hslash^{g-1} F_g^q \right)$ into the equation $\widehat{L}_i Z_q = 0$ yields, at order $\hslash^g$, a system of the form:
							\begin{equation}
								\label{eq:RHS_q_en}
								\mathcal{D}_{q,i} F_g^q = \text{RHS}_{g,i}(x_A),
							\end{equation}
							where the term $\text{RHS}_{g,i}$ is the coefficient of $\hslash^g$ in the expansion of $-\exp(-S_q) \widehat{L}_i \exp(S_q)$. 
							\begin{itemize}
								\item[(i)] The commutation relations $[\widehat{L}_{i}, \widehat{L}_{j}] \in \hslash \mathcal{I}$ ensure the following compatibility condition $\mathcal{D}_{q,j} (\text{RHS}_{g,i}) = \mathcal{D}_{q,i} (\text{RHS}_{g,j})$. 
								\item[(ii)] For each degree $n$ of the series $F_g^q$, the coefficient $F_{g,n}^q$ is uniquely determined by the integration of the $q$-derivative \eqref{eq:RHS_q_en}. The integration constant is fixed by the stability condition $F_{g,0}^q = 0$.
							\end{itemize}
							\item[(3)]
							Finally, suppose the linear potentials $p_a$ vanish. The absence of source terms of order $\hslash^0 x^1$ or $\hslash^1 x^0$ in the generators of the ideal implies that the equations for the initial orders are homogeneous. By induction on the recursion system, we obtain:
							\begin{equation}
								F^q_{0,1} = F^q_{0,2} = F^q_{\frac{1}{2},1} = 0,
							\end{equation}
							which concludes the proof of the existence and uniqueness of the $q$-deformed partition function $Z_q$ associated with the $q$-Airy ideal $\mathcal{I}$.
					\end{itemize}}
				\end{proof}
				
				\subsubsection{$q$-deformed Airy ideals in universal enveloping algebras}
				{Following the framework developed in \cite{BBCCN18}, we extend the construction of Airy structures to the $q$-deformed setting. These structures are often realized via representations of quantum Lie algebras or $q$-deformed non-linear algebras (e.g., $W_q$-algebras).}
				
				Let $\mathfrak{g}_q$ be either a quantum Lie algebra or a non-linear quantum Lie algebra  and $U_q(\mathfrak{g}_q)$ the $q$-deformed universal enveloping algebra. Given  an exhaustive ascending filtration on $U_q(\mathfrak{g}_q)$ (such as the filtration by conformal weight); then according to Definition \ref{d:rees},  we construct the  $q$-deformed Rees universal enveloping algebra as follows: \begin{equation}
					U_q^\hslash(\mathfrak{g}_q) = \bigoplus_{n \in \mathbb{N}} \hslash^n F_n U_q(\mathfrak{g}_q).\end{equation}
				
				The  construction of  $q$-deformed Airy ideals proceeds as follows:
				\begin{lemma}\label{l:uea}
					Let $\rho_q: U_q^\hslash(\mathfrak{g}_q) \to \mathcal{D}_{q,A}^\hslash$ be a representation of the $q$-deformed Rees enveloping algebra into the  $q$-deformed Rees Weyl algebra. Let $\mathcal{I}_{U_q^\hslash} \subseteq U_q^\hslash(\mathfrak{g}_q)$ be a left ideal, and $\mathcal{I} = \mathcal{D}_{q,A}^\hslash \rho_q(\mathcal{I}_{U_q^\hslash}) \subseteq \mathcal{D}_{q,A}^\hslash$ the corresponding ideal in the $q$-Weyl algebra.
					
					Assume that $\mathcal{I}_{U_q^\hslash}$ satisfies the following properties:
					\begin{enumerate}
						\item[(i)]  The ideal is closed under the commutator:
						\begin{equation}
							[\mathcal{I}_{U_q^\hslash}, \mathcal{I}_{U_q^\hslash}] \subseteq \hslash^2 \mathcal{I}_{U_q^\hslash}.
						\end{equation}
						\item[(ii)]  There exists a set of generators $\{H_a\}_{a \in A}$ for $\mathcal{I}_{U_q^\hslash}$ such that their representation is:
						\begin{equation}
							\rho_q(H_a) = \hslash \mathcal{D}_{q,a} + \hslash p_a(x_A) + O(\hslash^2),
						\end{equation}
						where $\mathcal{D}_{q,a}$ is the  $q$-derivative.
						\item[(iii)]  The collection $\{\rho_q(H_a)\}_{a \in A}$ is bounded in the sense of Definition \ref{def:bounded}.
					\end{enumerate}
					Then $\mathcal{I}$ is a  $q$-deformed Airy ideal.
				\end{lemma}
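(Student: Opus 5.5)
The plan is to check the two conditions of Definition \ref{d:airy} for $\mathcal{I} = \mathcal{D}_{q,A}^\hslash \rho_q(\mathcal{I}_{U_q^\hslash})$, using the generating set $\{\rho_q(H_a)\}_{a\in A}$.

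\emph{Generation and boundedness.} First I would note that $\mathcal{I}$ is, by construction, the left ideal of $\mathcal{D}_{q,A}^\hslash$ generated by $\rho_q(\mathcal{I}_{U_q^\hslash})$. Since the $H_a$ generate $\mathcal{I}_{U_q^\hslash}$ as a left ideal, every element has the form $\sum_a u_a H_a$ with $u_a\in U_q^\hslash(\mathfrak{g}_q)$. Hence every element of $\rho_q(\mathcal{I}_{U_q^\hslash})$ is $\sum_a \rho_q(u_a)\rho_q(H_a)$, and the $\rho_q(H_a)$ generate $\mathcal{I}$. These sums are well defined by hypothesis (iii) together with the remark following Definition \ref{def:bounded}. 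Hypothesis (ii) gives exactly the normal form \eqref{eq:form}, so condition (1) of Definition \ref{d:airy} holds.

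\emph{Closure under commutators.} This is the main step. Take $P=\sum_a c_a\rho_q(H_a)$ and $Q=\sum_b d_b\rho_q(H_b)$ with $c_a,d_b\in\mathcal{D}_{q,A}^\hslash$. Expanding with the Leibniz rule $[XY,Z]=X[Y,Z]+[X,Z]Y$ writes $[P,Q]$ as a sum of three kinds of terms:
\begin{equation*}
c_a d_b[\rho_q(H_a),\rho_q(H_b)],\qquad c_a[\rho_q(H_a),d_b]\rho_q(H_b),\qquad [c_a,d_b\rho_q(H_b)]\rho_q(H_a)\text{-type terms}.
\end{equation*}
For the first kind, $\rho_q$ is an algebra morphism, so $[\rho_q(H_a),\rho_q(H_b)]=\rho_q([H_a,H_b])$. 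By (i) this lies in $\rho_q(\hslash^2\mathcal{I}_{U_q^\hslash})\subseteq \hslash^2\mathcal{I}$, using that $\rho_q$ is $\hslash$-linear as a map of Rees algebras.

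The other two kinds involve commutators of arbitrary Weyl elements with generators. Here I need that commutators in $\mathcal{D}_{q,A}^\hslash$ acquire a factor $\hslash$, which is the graded analogue of the Bernstein filtration. In the $q$-setting this comes from the relation $\mathcal{D}_{q,a}x_b=\delta_{ab}\mathbf M_{q,a}+q^{\delta_{ab}}x_bS_{q,a}^{-1}\mathcal{D}_{q,a}$. I would first expand $q=e^{\hslash}$ to see that $q^{\delta_{ab}}S^{-1}_{q,a}-1=O(\hslash)$. Combined with the leading $\hslash$ already present in $\rho_q(H_a)=\hslash\mathcal{D}_{q,a}+\dots$, this gives $[\rho_q(H_a),d]\in\hslash\,\mathcal{D}_{q,A}^\hslash\cdot(\ldots)$. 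To upgrade this to $\hslash^2\mathcal{I}$, I would rewrite $[c_a,\rho_q(H_b)]$ so that a generator stands on the right. I would then invoke the standard argument of \cite{Bo24}: for a left ideal $\mathcal{I}$, the condition $[\mathcal{I},\mathcal{I}]\subseteq\hslash^2\mathcal{I}$ is equivalent to $[\rho_q(H_a),\rho_q(H_b)]\in\hslash^2\mathcal{I}$ for all generators. The proof of that equivalence uses the identity $[uH,vH']=u[H,v]H'+uv[H,H']+\ldots$ and the fact that each remaining term already ends in a generator, so it lies in $\mathcal{I}$. The $\hslash$-order is supplied by the grading of the Rees algebra.

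I expect this last point to be the main obstacle. In the $q$-deformed Weyl algebra the relation involves $\mathbf M_{q,a}$ and shift operators rather than pure scalars. The required $\hslash$-divisibility of $[\rho_q(H_a),d]$ therefore has to be checked by expanding $S_{q,a}^{\pm1}=\exp(\pm\hslash x_a\partial_{x_a})$ and verifying that every non-scalar correction is $O(\hslash)$ in the Rees grading. Once that is established, the reduction to generator commutators goes through as in the classical case, giving condition (2), and hence $\mathcal{I}$ is a $q$-deformed Airy ideal.
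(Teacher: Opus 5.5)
The easy parts of your argument are correct. Hypotheses (ii) and (iii) give the normal form \eqref{eq:form} and boundedness, and $[\rho_q(H_a),\rho_q(H_b)]=\rho_q([H_a,H_b])\in\hslash^2\mathcal I$ follows from (i). The gap is the step you hand off to ``the standard argument of \cite{Bo24}''.

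The cross terms $c_a[\rho_q(H_a),d_b]\rho_q(H_b)$ and $[c_a,d_b\rho_q(H_b)]\rho_q(H_a)$ end in a generator, so they lie in $\mathcal I$. They lie in $\hslash^2\mathcal I$ only if $[\mathcal D^\hslash_{q,A},\mathcal D^\hslash_{q,A}]\subseteq\hslash^2\mathcal D^\hslash_{q,A}$. Classically this comes from $[F_m,F_n]\subseteq F_{m+n-2}$ in the Bernstein filtration, i.e.\ from $[\partial_a,x_b]=\delta_{ab}$ being a scalar. The $\hslash$ in front of $\partial_a$ in $\rho(H_a)$ is only the Rees weight of a degree-one element. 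So adding it to an $O(\hslash)$ extracted from $q^{\delta_{ab}}S^{-1}_{q,a}-1$, as you propose, counts the same factor twice.

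In the $q$-Weyl algebra there is no degree drop at all. From the definition of $\mathcal D_{q,a}$ one gets $\mathcal D_{q,a}x_a=S^{-1}_{q,a}+q\,x_a\mathcal D_{q,a}$ (the relation displayed in the paper also has a non-scalar right-hand side). Hence $C_a:=[\mathcal D_{q,a},x_a]=S^{-1}_{q,a}+(q-1)x_a\mathcal D_{q,a}$, which acts on $x_a^n$ by $[n+1]_q-[n]_q$. So $C_a\in F_2$ is not a scalar for $q\neq 1$. Consequently $[\hslash\mathcal D_{q,a},\hslash x_a]=\hslash^2C_a$ is not in $\hslash\mathcal D^\hslash_{q,A}$, let alone in $\hslash^2\mathcal D^\hslash_{q,A}$. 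Setting $q=e^\hslash$ and expanding $S^{\pm1}_{q,a}=e^{\pm\hslash x_a\partial_{x_a}}$ does not repair this. The term $\hslash\,x_a\partial_{x_a}=\hslash^{-1}(\hslash x_a)(\hslash\partial_{x_a})$ is of order $\hslash^{-1}$ in the Rees grading, so the corrections are not $O(\hslash)$ there.

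So what you identify as the main obstacle is a genuine obstruction. It defeats the statement as formulated, with the Bernstein filtration by total degree in the $x_a$ and $\mathcal D_{q,a}$ and $q$ not a root of unity. Take one variable, $\mathfrak g_q$ one-dimensional abelian, and $\mathcal I_{U_q^\hslash}$ generated by a single $H$ with $\rho_q(H)=\hslash\mathcal D_q$; then (i)--(iii) hold trivially. For $P=\hslash x\cdot\hslash\mathcal D_q$ and $Q=\hslash\mathcal D_q$ one finds $[P,Q]=-\hslash^2\,C\,\hslash\mathcal D_q$ with $C=[\mathcal D_q,x]$. Because $\mathcal D_q$ is surjective on $\mathbb C[x]$, an identity $[P,Q]=\hslash^2R\,\hslash\mathcal D_q$ with $R\in\mathcal D^\hslash_q$ forces $R=-C$. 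Since $C$ does not involve $\hslash$, this would put $C$ in $F_0=\mathbb C$, which is false.

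The paper gives no proof of this lemma; the remark following it only asserts that the closure condition is inherited from (i). So the missing ingredient is not there either. What your argument does prove is the generator-level statement $[\rho_q(H_a),\rho_q(H_b)]\in\hslash^2\mathcal I$. Passing from it to $[\mathcal I,\mathcal I]\subseteq\hslash^2\mathcal I$ requires changing condition (2) of Definition~\ref{d:airy} or the grading on $\mathcal D^\hslash_{q,A}$.
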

				\begin{remark}
					In this construction we see that the two conditions in the definition of $q$-deformed Airy ideals, Definition \ref{d:airy}, are obtained independently. The condition $[\mathcal{I}_{U_q^\hslash},\mathcal{I}_{U_q^\hslash}] \subseteq \hslash^2 \mathcal{I}_{U_q^\hslash}$ is a condition on the left ideal $\mathcal{I}_{U_q^\hslash} \subseteq U_q^\hslash(\mathfrak{g}_q)$  in the  $q$-deformed Rees universal enveloping algebra, while the second condition that there exists a generating set $\{H_a \}_{a \in A}$ for $\mathcal{I}_{U_q^\hslash}$ such that $\rho_q(H_a) = \hslash \mathcal{D}_{q,a} + \hslash p_a(x_A) + O(\hslash^2)$ depends on the choice of representation.
					
					The algebraic condition $[\mathcal{I}_{U_q^\hslash},\mathcal{I}_{U_q^\hslash}] \subseteq \hslash^2\mathcal{I}_{U_q^\hslash}$ is in fact fairly easy to satisfy. We first define an operation that maps elements of $U_q(\mathfrak{g}_q)$ to elements of $U_q^\hslash(\mathfrak{g}_q)$:
				\end{remark}
				\begin{definition}\label{d:homogenization}
					Let 
					$p \in U_q^\hslash(\mathfrak{g}_q)$, and let  $i = \min\{ k \in \mathbb{N}~|~ p \in F_k U_q^\hslash(\mathfrak{g}_q) \}$. {Then,  the $q$-deformed homogenization $h_q(p)$ of $p$ is defined by $h_q(p) = \hslash^i p \in U_q^\hslash(\mathfrak{g}_q)$.}
					
					{Moreover, the $q$-deformed  homogenization $h_q(\mathcal{I}_{U_q^\hslash})$ of a left ideal $\mathcal{I}_{U_q^\hslash} \subseteq U_q^\hslash(\mathfrak{g}_q)$ is defined to be the left ideal in $U_q^\hslash(\mathfrak{g}_q)$ generated by all homogenized elements $h_q(p)$, $p \in \mathcal{I}_{U_q^\hslash}$}. 
				\end{definition}
				Then we have the following simple lemma:
				
				\begin{lemma}\label{l:hom}
					Let $\mathcal{I}_{U_q^\hslash} \subseteq U_q^\hslash(\mathfrak{g}_q)$ be a left ideal {in the  quantum universal enveloping algebra}. Then its  $q$-deformed homogenization $h_q(\mathcal{I}_{U_q^\hslash}) \subseteq U_q^\hslash(\mathfrak{g}_q)$ satisfies \begin{equation}
						\big[h_q(\mathcal{I}_{U_q^\hslash}), h_q(\mathcal{I}_{U_q^\hslash})\big] \subseteq \hslash^2 h_q(\mathcal{I}_{U_q^\hslash}).
					\end{equation}
				\end{lemma}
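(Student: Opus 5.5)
The plan follows the classical argument for homogenized ideals in the Rees construction, where the only input is that the associated graded of the filtered algebra is commutative. The key structural fact is that the filtration $F_\bullet U_q(\mathfrak{g}_q)$ satisfies
\[
[F_i U_q(\mathfrak{g}_q), F_j U_q(\mathfrak{g}_q)] \subseteq F_{i+j-2} U_q(\mathfrak{g}_q).
\]
For the conformal-weight filtration on a $\mathcal{W}$-type algebra, the commutator of modes of weights $i$ and $j$ has weight at most $i+j-2$. This is the filtration-degree drop built into the $q$-deformed Rees algebra: it is the same mechanism that makes $[\hslash \mathcal{D}_{q,a}, x_b]$ of order $\hslash$ while $\hslash\mathcal{D}_{q,a}$ and $x_b$ have degrees $1$ and $1$. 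I take this property as part of the definition of an admissible filtration in this setting, and I would state it explicitly at the start of the proof.

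The proof has three steps.

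\emph{Step 1: generators.} Take two generators $h_q(p) = \hslash^i p$ and $h_q(p') = \hslash^j p'$, with $p \in F_i$, $p' \in F_j$ minimal, and $p, p' \in \mathcal{I}_{U_q^\hslash}$. Their commutator is
\[
\hslash^{i+j}[p,p'].
\]
Since $\mathcal{I}_{U_q^\hslash}$ is a left ideal, $[p,p'] = pp' - p'p$ lies in $\mathcal{I}_{U_q^\hslash}$. By the filtration property it also lies in $F_{i+j-2}$. So its minimal filtration index $k$ satisfies $k \le i+j-2$, and we can write
\[
\hslash^{i+j}[p,p'] = \hslash^{\,i+j-k}\, h_q([p,p']) \in \hslash^2\, h_q(\mathcal{I}_{U_q^\hslash}),
\]
because $i+j-k \ge 2$.

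\emph{Step 2: extension to the whole ideal.} A general element of $h_q(\mathcal{I}_{U_q^\hslash})$ is a finite sum $\sum_\alpha a_\alpha h_q(p_\alpha)$ with $a_\alpha \in U_q^\hslash(\mathfrak{g}_q)$. Using
\[
[aX, bY] = a[X,b]Y + ab[X,Y] + b[a,Y]X + [a,b]YX
\]
for $X,Y$ generators, the term $ab[X,Y]$ is handled by Step 1, since $\hslash^2 h_q(\mathcal{I})$ is a left ideal. The other three terms each end in a generator $X$ or $Y$, so they lie in $h_q(\mathcal{I})$. What remains is to extract the factor $\hslash^2$ from them, and this again uses the $\hslash^2$ drop in commutators inside $U_q^\hslash$.

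For example, $[X,b]$ is $\hslash^2$ times an element of $U_q^\hslash$ whenever $X$ and $b$ are homogeneous, because in the Rees algebra
\[
[\hslash^i u, \hslash^j v] = \hslash^{i+j}[u,v] \in \hslash^2 \cdot \hslash^{i+j-2} F_{i+j-2}.
\]
So $[U^\hslash, U^\hslash] \subseteq \hslash^2 U^\hslash$, and each of the three terms lands in $\hslash^2 U^\hslash\, h_q(\mathcal{I}) = \hslash^2 h_q(\mathcal{I})$. Bilinearity then finishes the argument.

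The main obstacle is justifying the filtration inequality $[F_i, F_j] \subseteq F_{i+j-2}$ for the $q$-deformed (possibly non-linear) algebra $\mathfrak{g}_q$. For $q$-deformed relations such as $xy - q\,yx$, the plain commutator need not drop degree by $2$ unless the $q$-factors are absorbed into $\hslash$ via $q = e^\hslash$. I would handle this by expanding $q = 1 + \hslash + \dots$ and counting each power of $\hslash$ as filtration degree, so that the failure of commutativity is again of order $\hslash^2$ in the Rees algebra. If this fails for a given filtration, I would impose it as a hypothesis on the filtration.
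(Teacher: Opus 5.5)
The paper gives no proof of this lemma; it is only announced as a ``simple lemma''. So there is no argument to compare with. Judged on its own, your Steps 1 and 2 are the standard Rees-algebra argument. They are correct once one assumes $[F_iU_q(\mathfrak g_q),F_jU_q(\mathfrak g_q)]\subseteq F_{i+j-2}U_q(\mathfrak g_q)$, which is equivalent to $[U_q^\hslash(\mathfrak g_q),U_q^\hslash(\mathfrak g_q)]\subseteq\hslash^2U_q^\hslash(\mathfrak g_q)$.

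You are right that this is a genuine extra hypothesis: the statement as printed, for an arbitrary exhaustive filtration, is false without it. Take $\mathfrak g=\mathrm{span}(x,y)$ with $[x,y]=y$, the PBW filtration on $U(\mathfrak g)$, and $\mathcal I=U(\mathfrak g)\,y$. Then $h(y)=\hslash y$, $h(xy)=\hslash^2xy$ and $[xy,y]=y^2$. Since $h(\mathcal I)=\bigoplus_n\hslash^n(\mathcal I\cap F_n)$,
\[
[\hslash^2xy,\hslash y]=\hslash^3y^2\notin\hslash^2h(\mathcal I),
\]
because the degree-$3$ part of $\hslash^2h(\mathcal I)$ is $\hslash^3(\mathcal I\cap F_1)=\hslash^3\,\mathbb C y$. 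So the lemma must be read with the filtration hypothesis built in.

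What does not work is your proposed way of securing that hypothesis in the $q$-setting. Setting $q=e^\hslash$, with $\hslash$ the Rees parameter, gains only one power of $\hslash$. For $x,y\in F_1$ with $xy=q\,yx$ one gets
\[
[\hslash x,\hslash y]=(q-1)(\hslash y)(\hslash x)=(\hslash+O(\hslash^2))(\hslash y)(\hslash x).
\]
This is the same single factor of $\hslash$ as in the counterexample above, not the required $\hslash^2$. Moreover, once the structure constants depend on the Rees parameter, $\bigoplus_n\hslash^nF_n$ is no longer a graded algebra. The bookkeeping of Step 1 (minimal filtration index, $h_q(p)=\hslash^ip$) then loses its meaning.

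The drop by two therefore has to be imposed as a hypothesis, or checked for the concrete algebra $U_{r,q}$ used afterwards. One route is the free-field realization: if the Heisenberg brackets are central scalars, each contraction removes two currents. You would also need the non-automatic fact that the current-degree filtration coincides with the conformal-weight filtration on $U_{r,q}$.

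As a minor simplification, Step 2 is unnecessary. The ideal $h_q(\mathcal I)=\bigoplus_n\hslash^n(\mathcal I\cap F_n)$ is $\hslash$-saturated: if $\hslash u\in h_q(\mathcal I)$ then $u\in h_q(\mathcal I)$. Hence for $A,B\in h_q(\mathcal I)$ you get $[A,B]\in h_q(\mathcal I)\cap\hslash^2U_q^\hslash=\hslash^2h_q(\mathcal I)$ directly, without the four-term identity.
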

				\begin{remark}
					This lemma demonstrates that any ideal of a quantum Lie algebra is a potential candidate for a $q$-deformed Airy ideal. However, the process is sensitive to the filtration: if an element $p$ contains a degree-$0$ term (a classical source $S$), its homogenization $h_q(p) = S$ would violate the linear dominance required in Definition \ref{d:airy}, as the leading term must be of the form $\hslash \mathcal{D}_{q,a}$.
				\end{remark}
				Thus any left ideal $\mathcal{I}_{U^\hslash_q} \subseteq U^\hslash_q(\mathfrak{g})$ that is obtained as the homogenization of a left ideal in $U_q^\hslash(\mathfrak{g})$ automatically satisfies $[\mathcal{I}_{U^\hslash_q},\mathcal{I}_{U^\hslash_q}] \subseteq \hslash^2\mathcal{I}_{U^\hslash_q}$. 
				We summarize the construction of  $q$-Airy ideals from universal enveloping algebras in three steps:
				\begin{enumerate}
					\item We start with a left ideal $\mathcal{I}_{U_q^\hslash} \subseteq U_q^\hslash(\mathfrak{g})$ or, equivalently, a cyclic left module $M \simeq U_q^\hslash(\mathfrak{g})/ \mathcal{I}_{U_q^\hslash}$ generated by a vector $v$ whose annihilator is $\mathcal{I}_{U_q^\hslash} = \text{Ann}_{U_q^\hslash(\mathfrak{g})}(v)$.
					\item We construct the homogenization $\mathcal{I}_{U^\hslash_q}= h_q(\mathcal{I}_{U_q^\hslash})$, which is a left ideal in $U^\hslash_q(\mathfrak{g})$.  By construction, we know that $[\mathcal{I}_{U^\hslash_q}, \mathcal{I}_{U^\hslash_q}] \subseteq \hslash^2 \mathcal{I}_{U^\hslash_q}$. From the point of view of modules, we obtain a cyclic left module $M[\hslash] \simeq U^\hslash_q(\mathfrak{g})/\mathcal{I}_{U^\hslash_q}$ generated by the vector $v$ and where $\hslash$ acts by multiplication; the annihilator of $v$ in $U^\hslash_q(\mathfrak{g})$ is $\mathcal{I}_{U^\hslash_q} = \text{Ann}_{U^\hslash_q(\mathfrak{g})}(v)$.
					\item Establish a representation $\rho_q: U_q^\hslash(\mathfrak{g}_q) \to \mathcal{D}_{q,A}^\hslash$ into the  $q$-Weyl algebra such that there exists a bounded generating set $\{H_a\}_{a \in A}$ for $\mathcal{I}_{U_q^\hslash}$ satisfying:
					\begin{equation}
						\rho_q(H_a) = \hslash \mathcal{D}_{q,a} + \hslash p_a(x_A) + O(\hslash^2) \,.
					\end{equation}
				\end{enumerate}
				By Lemma \ref{l:uea}, the ideal $\mathcal{I} = \mathcal{D}_{q,A}^\hslash \rho_q(\mathcal{I}_{U_q^\hslash})$ is a { $q$-deformed Airy ideal}, uniquely determining a symmetric $q$-partition function $Z_q$ via the recursion.
				
				\subsection{The $(r,s,q)$-Airy structures}
				\label{s:rsairystruct}
				Here, we apply these principles to {construct }$q$-deformed Airy ideals from the universal enveloping algebra of the modes of the strong generators of the $q$-deformed $\mathcal{W}(\mathfrak{gl}_r)$-algebra at the self-dual level.
				
				Following the three-step approach generalized for the $q$-deformed setting, the construction is then lifted to the quantum  setting as follows:
				\begin{enumerate}
					\item 
					Defining the $q$-deformed-Rees homogenization of the $q$-deformed-$\mathcal{W}$ generators, where the filtration is governed by the $q$-deformed-conformal weight.
					\item 
					Ensuring the $q$-deformed-integrability of the ideal through the commutation relations in the Rees algebra: $\big[ h_q(\mathcal{I}), h_q(\mathcal{I}) \big] \subseteq \hslash^2 h_q(\mathcal{I})$. This closure property is guaranteed by the $q$-W-algebra structure constants being appropriately scaled by $\hslash$.
					\item 
					Implementing a representation $\rho_q$ in the  $q$-deformed Rees Weyl algebra $\mathcal{D}_{q,A}^\hslash$, where the leading terms of the generators are recovered as  $q$-derivatives $\hslash \mathcal{D}_{q,a}$ (or their appropriate linear combinations).\end{enumerate}
				\subsubsection{The $q$-deformed $\mathcal{W}(\mathfrak{gl}_r)$-algebra at self-dual level}
				
				Let us introduce the $q$-deformed $\mathcal{W}(\mathfrak{gl}_r)$-algebra at self-dual level via its realization as a subalgebra of the $q$-deformed Heisenberg VOA $\mathcal{H}(\mathfrak{gl}_r)$. 
				
				Let $\mathfrak{h} \subset \mathfrak{gl}_r$ be the Cartan subalgebra with orthogonal canonical basis $\{\chi^j \}_{j=1}^r$. We denote by $\mathcal{H}_q$ the $q$-deformed vertex algebra which is the $q$-analogue of the Heisenberg VOA. It is generated by modes of the form $\{J^{j}_n\}_{j\in [r],\,n\in\mathbb{Z}}$ obeying the following $q$-deformed commutation relation.
				\begin{equation}
					[J^{i}_n,\,J^{k}_m]_q=\frac{[n]_q}{n}\delta_{i,k}{\delta_{n+m,0}}.
				\end{equation} 
				We define the $q$-deformed fields as follows by preserving the classical differential form notation to ensure a smooth contact with the topological recursion limit:
				\begin{equation}
					J^j(z) = Y(\chi_{-1}^j |0 \rangle, z) = \sum_{n \in \mathbb{Z}} J_n^j \frac{dz}{z^{n+1}}.
				\end{equation}
				However, unlike the classical case, the modes $J_n^j$ verify  the $q$-deformed Heisenberg commutation relations, and the differential $dz$ is understood to be compatible with the $q$-deformed-shifted geometry of the $\mathcal{W}$-algebra.
				
				The $q$-deformed-$\mathcal{W}(\mathfrak{gl}_r)$-algebra at self-dual level is the quantum VOA strongly freely generated by the $q$-analogs vectors 
				\begin{equation}\label{WrGenerators}
					w^j_q = \sum_{1 \leq i_1 < i_2 < \dots < i_j \leq r} \chi^{i_1}_{-1} \chi^{i_2}_{-1} \dots \chi^{i_j}_{-1} |0\rangle_q , \qquad j \in [r].
				\end{equation} 
				\begin{remark}
					In this quantum setting, the classical elementary polynomial $e_j$ is replaced by its ordered version (often associated with the quantum determinant). This ordering is essential because the $q$-deformed Heisenberg modes $\chi^{i}_{-1}$ obey the  $q$-deformed commutation relations that depend on their indices.
				\end{remark}
				The corresponding fields take the following form:
				\begin{align}\label{eq:expl1}
					W^j(z) :&= Y_q\left( \sum_{1 \leq i_1 < \dots < i_j \leq r} \chi^{i_1}_{-1} \chi^{i_2}_{-1} \dots \chi^{i_j}_{-1} |0 \rangle_q, z \right)\nonumber \\
					&= \sum_{1 \leq i_1 < \dots < i_j \leq r} : J^{i_1}(z) J^{i_2}(zq^{-2}) \dots J^{i_j}(zq^{-2(j-1)}) : \frac{(dz)^j}{z^j}\nonumber \\
					&= \sum_{n \in \mathbb{Z}} W^j_n \frac{(dz)^j}{z^{n+j}}.
				\end{align}
				This gives the explicit relation
				\begin{equation}\label{eq:explicit}
					W^j_n = \sum_{1 \leq i_1 < \ldots < i_j \leq r} \sum_{m_1 + \ldots + m_j = n} \left( \prod_{k=1}^j q^{2(k-1)(m_k+1)} \right) J^{i_1}_{m_1} J^{i_2}_{m_2} \dots J^{i_j}_{m_j}.
				\end{equation} 
				The modes $\{W^j_n\}_{j \in [r], n \in \mathbb{Z}}$ defined in the relation \eqref{eq:explicit} span the $q$-deformed non-linear Lie algebra or $q$-deformed $\mathcal{W}$-algebra.  Let  $U_{r,q}$ denote the $q$-deformed universal enveloping algebra of the modes. Despite the non-local shifts in the parameter of deformation $q$, the algebra inherits a natural filtration by conformal weight (or spin), which remains the fundamental grading for the Airy structure.
				
				We define the filtration by assigning the degree $j$ to the mode $W^{j}_{n}$, regardless of the mode index $n$ or the $q$-deformed factors involved. Specifically, the subspace $F_nU_{r,q}$ is spanned by monomials of the form:
				\begin{equation}
					W^{j_1}_{n_1} W^{j_2}_{n_2} \cdots W^{j_k}_{n_k} \quad \text{such that} \quad \sum_{p=1}^{k} j_p \leq n.
				\end{equation}
				The $q$-deformed Rees algebra is built by introducing the formal quantum parameter $\hslash$ to keep track of this filtration. This is equivalent to the homogenization:
				\begin{equation}
					\mathcal{W}^j_n := \hslash^j W^j_n.
				\end{equation}
				By the construction, these operators satisfy the $q$-deformed commutation relations:
				\begin{equation}
					\big[\mathcal{W}^j_n, \mathcal{W}^l_m\big] = \hslash^{j+l-1} \sum_{p, s} f_{n,m,s}^{j,l,p}(q) \mathcal{W}^p_s,
				\end{equation}
				where the structure constants $f(q)$ are rational functions of $q$. Since the right-hand side is $O(\hslash^2)$ for $j,l \geq 1$ (except for the $q$-center), the ideal generated by these modes satisfies the integrability condition of Definition \ref{d:airy}.
				
				\subsubsection{Preliminary $q$-lemmas}
				We establish several preliminary $q$-lemmas, which will be useful throughout this part. We first prove a simple result about partitions and elementary  polynomials.
				To construct the $q$-analogue Airy ideal, we must organize the generators of the $\mathcal{W}$-algebra according to a block structure compatible with the decomposition of the quantum spectral curve. In the classical case, this organization is dictated by the branching of the curve into several sheets. In our $q$-framework, this structure is preserved by the combinatorics of integer partitions, which allows us to define a suitable polarization for the Heisenberg modes.
				This structure is formally captured by the following definition:
				\begin{definition}\label{d:ps}
					Let $\lambda = (\lambda_1, \lambda_2, \ldots, \lambda_p)$ be an integer partition of $r$, that is, $\lambda_1 \geq \lambda_2 \geq \ldots \geq \lambda_p \geq 1$ and $\sum_{i=1}^p \lambda_i = r$. We define the partial sums $\mu_k = \sum_{i=1}^k \lambda_i$ for $k \in \{1, \dots, p\}$, with the convention $\mu_0 = 0$.
				\end{definition}
				
				\begin{lemma}\label{l:esp}
					Let $\lambda = (\lambda_1, \lambda_2, \ldots, \lambda_p)$ be an integer partition of $r$. 
					Then, the  $q$-deformed $W$-generator of spin $j$ admits the following decomposition:
					\begin{equation}
						W^j(z) = \sum_{j_1 + \dots + j_p = j} : \mathcal{W}^{j_1}_{I_1}(z) \mathcal{W}^{j_2}_{I_2}(zq^{-2\mu_1}) \dots \mathcal{W}^{j_p}_{I_p}(zq^{-2\mu{p-1}}) :
					\end{equation}
					where $\mathcal{W}^{j_k}_{I_k}(z)$ is the spin generator $j_k$ constructed uniquely from the currents $J^{i}(z)$ for $i\in I_k$, with the index blocks defined as:
					\begin{equation}
						I_k = \left\{ \mu_{k-1} + 1, \, \mu_{k-1} + 2, \, \dots, \, \mu_k \right\}
					\end{equation} 
					and the order of the blocks is crucial because of the shifts $q^{-2\mu_{k-1}}$.
				\end{lemma}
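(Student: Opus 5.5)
The approach is a purely combinatorial regrouping of the defining sum \eqref{eq:expl1}. We start from
\[
W^j(z)=\sum_{1\le i_1<\dots<i_j\le r} :J^{i_1}(z)J^{i_2}(zq^{-2})\cdots J^{i_j}(zq^{-2(j-1)}):\frac{(dz)^j}{z^j}.
\]
Each strictly increasing tuple $(i_1,\dots,i_j)$ splits uniquely according to the blocks $I_1,\dots,I_p$ of Definition \ref{d:ps}. Set $j_k=\#\{a: i_a\in I_k\}$. Because the tuple is increasing and the blocks are consecutive intervals ordered as $I_1<I_2<\dots<I_p$, the indices lying in $I_1$ come first, then those in $I_2$, and so on. This gives a bijection between increasing $j$-tuples in $[r]$ and data $(j_1,\dots,j_p)$ with $\sum j_k=j$, together with an increasing $j_k$-tuple in each $I_k$. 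The first step is therefore to rewrite the sum as $\sum_{j_1+\dots+j_p=j}\prod_k(\text{sum over increasing } j_k\text{-tuples in } I_k)$, keeping the order of the factors inside the normal ordering.

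The second step is to match the $q$-shifts. The factor in position $a$ carries the argument $zq^{-2(a-1)}$. For the $b$-th element of block $k$ we have $a=j_1+\dots+j_{k-1}+b$. We then check that this equals $zq^{-2(b-1)}$ multiplied by the block's base point. Here the shift $q^{-2(j_1+\dots+j_{k-1})}$ has to be compared with the $q^{-2\mu_{k-1}}$ in the statement. The two agree only if the ordered-product convention for $\mathcal{W}^{j_k}_{I_k}$ is adapted accordingly.

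My concrete plan is the following. I define $\mathcal{W}^{j_k}_{I_k}(w)$ as the ordered normal product $\sum :J^{i_1}(w)J^{i_2}(wq^{-2})\cdots:$ over increasing tuples in $I_k$, with the shift indexed by the position within the block. Then I absorb the difference between $q^{-2(j_1+\dots+j_{k-1})}$ and $q^{-2\mu_{k-1}}$ by interpreting the shift convention of \eqref{eq:expl1} as depending on the label $i_a$ rather than on the position $a$. Equivalently, I use \eqref{eq:explicit}, where the factor $q^{2(k-1)(m_k+1)}$ would be read with the label offset. If the position-based convention is kept, the identity holds with $\mu_{k-1}$ replaced by $j_1+\dots+j_{k-1}$. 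I would state this clarification explicitly, since the two conventions coincide whenever all $j_{k'}=\lambda_{k'}$ for $k'<k$.

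The third step is to verify that normal ordering factorizes, that is, $:A\,B: = :\,:A:\,:B:\,:$ for products of free-field currents. This holds because normal ordering depends only on the sign of the mode indices. Finally, I compare the measure factors: $(dz)^j/z^j$ splits as $\prod_k (dz)^{j_k}/z^{j_k}$. Under $z\mapsto zq^{-2c}$ one has $dw/w=dz/z$, so the factors match.

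The main obstacle is the second step, the bookkeeping of the shift exponent ($\mu_{k-1}$ versus the cumulative count of chosen indices). Before finalizing, I would test the case $r=2$, $\lambda=(1,1)$, $j=1$ to fix the correct convention.
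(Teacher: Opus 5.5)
Your strategy of splitting each increasing tuple $(i_1,\dots,i_j)$ along the blocks $I_1<\dots<I_p$ is the expanded form of the paper's argument, which factors $G(z)=\,:\prod_{i=1}^r(1+J^i(zq^{-2(i-1)})):$ blockwise. Your diagnosis is also correct, and sharper than the paper. That generating function shifts by the \emph{label} $i$, not by the position $a$ as in \eqref{eq:expl1}. With the position-based convention the identity holds only with $\mu_{k-1}$ replaced by $j_1+\dots+j_{k-1}$. So for $q\neq 1$ the statement is false under \eqref{eq:expl1}, already for $\lambda=(1,1)$, $j=1$.

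The gap is in your concrete fix. You switch the global convention to label-based, but you keep $\mathcal{W}^{j_k}_{I_k}(w)=\sum :J^{i_1}(w)J^{i_2}(wq^{-2})\cdots:$ with shifts indexed by position in the chosen tuple. These two choices are incompatible. For $i\in I_k$ the label-based shift factors as
\[
zq^{-2(i-1)}=\bigl(zq^{-2\mu_{k-1}}\bigr)\,q^{-2(i-\mu_{k-1}-1)},
\]
so inside block $k$ what remains is a shift by the relative label $i-\mu_{k-1}$. This differs from the tuple position whenever the chosen subset of $I_k$ is not an initial segment.

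Your mixed version fails even for $p=1$. With $\lambda=(r)$ and $j=1$, the label-based $W^1(z)$ is $\sum_i J^i(zq^{-2(i-1)})$, whereas your $\mathcal{W}^1_{I_1}(z)$ is $\sum_i J^i(z)$. Your planned test $\lambda=(1,1)$ cannot detect this, because singleton blocks do not distinguish the two within-block conventions. You need a block of size at least two, e.g.\ $\lambda=(2,1)$ with $j=1$.

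The repair is to use the label-based convention throughout. Build each block generator by the same construction for $\mathfrak{gl}_{\lambda_k}$, with its own labels $1,\dots,\lambda_k$:
\[
\mathcal{W}^{m}_{I_k}(w)=\sum_{\mu_{k-1}<i_1<\dots<i_m\le\mu_k}:\prod_{b=1}^m J^{i_b}\bigl(wq^{-2(i_b-\mu_{k-1}-1)}\bigr):\,\frac{(dw)^m}{w^m}.
\]
Then the factorization above, your tuple bijection, the factorization of normal ordering and $dw/w=dz/z$ give the lemma with the stated offsets $q^{-2\mu_{k-1}}$. This is exactly what the paper's step $G(z)=\,:\prod_k G_k(zq^{-2\mu_{k-1}}):$ tacitly assumes.
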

				\begin{proof}
					We define the current-generating function $W^{j}(z)$  as an ordered (normal) product of shifted Heisenberg currents as follows :
					\begin{equation}
						G(z) := \sum_{j=0}^r W^j(z) = \,\, : \prod_{i=1}^r \left( 1 + J^i(zq^{-2(i-1)}) \right) :
					\end{equation}
					where $W^{0}(z)=1$. By using the partition $\lambda = (\lambda_1, \lambda_2, \ldots, \lambda_p)$, we may decompose this product into blocks of size $\lambda_k$ :
					\begin{align}
						G(z) &= \,\, : \prod_{k=1}^{p} \left[ \prod_{i \in I_k} (1 + J^i(zq^{-2(i-1)})) \right] : \nonumber\\
						&= \,\, : \prod_{k=1}^{p} G_k(zq^{-2\mu_{k-1}}) :
					\end{align}
					where $G_k(z)$ is the generating function in $k$-th block $I_k$:
					\begin{equation}
						{G_k(z) = \sum_{j_k=0}^{\lambda_k} W^{j_k}_{I_k}(z)}.
					\end{equation}
					Substituting this expansion into the block product, we obtain:
					\begin{align}
						G(z) &= \,\, : \prod_{k=1}^{p} \left( \sum_{j_k=0}^{\lambda_k} W^{j_k}_{I_k}(zq^{-2\mu_{k-1}}) \right) : \nonumber\\
						&= \sum_{j=0}^r \left( \sum_{j_1+\dots+j_p=j} : \prod_{k=1}^p W^{j_k}_{I_k}(zq^{-2\mu_{k-1}}) : \right).
					\end{align}
					By identifying the powers (or spins $j$), we deduce the decomposition into $q$-deformed blocks:
					\begin{equation}
						W^j(z) = \sum_{j_1+\dots+j_p=j} : W^{j_1}_{I_1}(z) W^{j_2}_{I_2}(zq^{-2\mu_1}) \dots W^{j_p}_{I_p}(zq^{-2\mu{p-1}}) :
					\end{equation}
				\end{proof}
				
				The realization of the $q$-deformed $\mathcal{W}(\mathfrak{gl}_r)$-algebra at the self-dual level allows for a natural embedding into the product of its $q$-deformed components.
				\begin{lemma}\label{l:embed}
					Let $\lambda = (\lambda_1, \lambda_2, \ldots, \lambda_p)$ be an integer partition of $r$.  Let $W^j(z)$, $j \in [r]$ be the $q$-deformed strong generators of the $q$-deformed  $\mathcal{W}(\mathfrak{gl}_r)$ and $W^j_m$ their modes. There exist  a natural embedding $q$-$\mathcal{W}(\mathfrak{gl}_r) \subset \prod_{k=1}^{p} q-\mathcal{W}(\mathfrak{gl}_{\lambda_k})$ given by the $q$-deformed convolution formula as follows:
					\begin{equation}\label{eq:embedding}
						W^j_m = \sum_{\sum j_k=j} \, \sum_{\sum m_k=m} \left( \prod_{k=1}^p q^{2\mu_{k-1}(m_k+j_k)} \right) : \prod_{k=1}^p X^{k,j_k}{m_k} :
					\end{equation}
					where the $X^{k,j_k}_{m_k}$, $j_k \in [\lambda_k]$, $m_k \in \mathbb{Z}$  are the modes of the  generators of the $q$-$\mathcal{W}(\mathfrak{gl}_{\lambda_k}) $ factors, $\mu_k$ are the partial sum of $\lambda$ and the product is normaly ordered. 
				\end{lemma}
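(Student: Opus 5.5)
The plan is to obtain \eqref{eq:embedding} by taking modes in the block decomposition of Lemma \ref{l:esp}. First, I would identify the generators of the factor $q$-$\mathcal{W}(\mathfrak{gl}_{\lambda_k})$ with the block fields $W^{j_k}_{I_k}(z)$. The currents $J^i$ with $i \in I_k$ generate a copy of the $q$-deformed Heisenberg algebra of rank $\lambda_k$. Relabelling $i \mapsto i-\mu_{k-1}$, the shift inside the block becomes $q^{-2(i-\mu_{k-1}-1)}$. So $W^{j_k}_{I_k}(z)$ is exactly the spin-$j_k$ generator of $q$-$\mathcal{W}(\mathfrak{gl}_{\lambda_k})$ built from \eqref{WrGenerators}, and we write it as $\sum_{m}X^{k,j_k}_{m}\,(dz)^{j_k}z^{-m-j_k}$.

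Since the currents in different blocks commute by the relation $[J^i_n,J^k_m]_q\propto\delta_{i,k}$, these factors mutually commute. The normally ordered product of block fields therefore realizes a subalgebra of the tensor product $\prod_k q$-$\mathcal{W}(\mathfrak{gl}_{\lambda_k})$.

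Second, I would expand the shifted block fields in Lemma \ref{l:esp}. For a single factor,
\begin{equation*}
W^{j_k}_{I_k}(zq^{-2\mu_{k-1}}) = \sum_{m_k\in\Z} X^{k,j_k}_{m_k}\, q^{2\mu_{k-1}(m_k+j_k)}\,\frac{(d z)^{j_k}}{z^{m_k+j_k}},
\end{equation*}
where the differential is rescaled by the same convention used in \eqref{eq:expl1}: there $dz/z$ is treated as shift invariant, and the factor $q^{2(k-1)(m_k+1)}$ in \eqref{eq:explicit} arises this way. This convention needs to be checked against \eqref{eq:explicit} in the case $\lambda=(1,\dots,1)$. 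In that case $\mu_{k-1}=k-1$, $j_k=1$ and $X^{k,1}_{m}=J^k_m$, and the formula should reduce to \eqref{eq:explicit}; this reduction serves as the consistency check fixing the exponent.

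Third, I would multiply the $p$ expansions and collect the coefficient of $(dz)^j z^{-m-j}$ on both sides. Because $\sum_k j_k=j$, the powers of $z$ give the constraint $\sum_k m_k=m$. The sum over $(j_1,\dots,j_p)$ comes directly from Lemma \ref{l:esp}, and the normal ordering is inherited. This yields \eqref{eq:embedding}.

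The map is injective because the $W^j_m$ are expressed through commuting Heisenberg modes of disjoint blocks. By strong free generation, this gives an embedding of the vertex algebras inside the product of the $q$-Heisenberg factors.

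The main obstacle is bookkeeping the $q$-powers when a shifted argument is substituted into a field viewed as a $j_k$-differential. It is ambiguous whether $(dz)^{j_k}$ picks up a factor $q^{-2\mu_{k-1}j_k}$. I would settle this by fixing the convention so that the $\lambda=(1^r)$ case reproduces \eqref{eq:explicit}, and then apply the same convention blockwise.
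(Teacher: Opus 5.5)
Your three main steps are the paper's own proof: use the block decomposition of Lemma~\ref{l:esp}, identify $W^{j_k}_{I_k}$ with the generators $X^{k,j_k}$ of the $k$-th factor, Laurent-expand the shifted block fields, and collect coefficients using $\sum_k j_k=j$ to get $\sum_k m_k=m$.

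The step that would fail is the one you rely on to resolve your ``main obstacle'': fixing the $q$-exponent by requiring that $\lambda=(1^r)$ reproduce \eqref{eq:explicit}. No convention achieves this. For $\lambda=(1^r)$ we have $\mu_{k-1}=k-1$, $j_k\in\{0,1\}$ and $X^{k,1}_m=J^k_m$. The statement then gives the weight $\prod_l q^{2(i_l-1)(m_l+1)}$ to the monomial $J^{i_1}_{m_1}\cdots J^{i_j}_{m_j}$, whereas \eqref{eq:explicit} gives $\prod_l q^{2(l-1)(m_l+1)}$. The ratio $q^{2(i_l-l)(m_l+1)}$ depends on the mode $m_l$. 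It therefore cannot be absorbed by deciding whether $(dz)^{j_k}$ is rescaled, since that choice contributes only a mode-independent factor $q^{\mp 2\mu_{k-1}j_k}$.

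Concretely, take $r=2$, $j=1$. Lemma~\ref{l:esp} gives $W^1(z)=J^1(z)+J^2(zq^{-2})$, hence $W^1_m=J^1_m+q^{2(m+1)}J^2_m$, or $J^1_m+q^{2m}J^2_m$ if the differential is rescaled. Equation \eqref{eq:explicit} instead gives $J^1_m+J^2_m$. The source is an inconsistency in the paper. Equation \eqref{eq:expl1} shifts the $l$-th current of the chosen subset by $q^{-2(l-1)}$, i.e.\ by position. The generating function $G(z)$ in Lemma~\ref{l:esp} shifts $J^i$ by $q^{-2(i-1)}$, i.e.\ by index.

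Your parenthetical justification is also off. Treating $dz/z$ as shift invariant produces $q^{2(k-1)m_k}$, and $q^{2\mu_{k-1}m_k}$ blockwise, not $q^{2(k-1)(m_k+1)}$. Your displayed block expansion with $q^{2\mu_{k-1}(m_k+j_k)}$ is correct only if $(dz)^{j_k}$ is \emph{not} rescaled.

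The repair is to do exactly what the paper does. The whole argument rests on Lemma~\ref{l:esp}, so adopt its index-shifted convention as the definition of $W^j(z)$ and of the block generators; your relabelling $i\mapsto i-\mu_{k-1}$ in step 1 is consistent with it. Then expand $X^{k,j_k}(w)=\sum_n X^{k,j_k}_n\,w^{-n-j_k}$, rescaling only the coefficient function, so that $(zq^{-2\mu_{k-1}})^{-m_k-j_k}=q^{2\mu_{k-1}(m_k+j_k)}z^{-m_k-j_k}$. This yields the stated exponent directly. There is nothing left to calibrate, and you should not claim agreement with \eqref{eq:explicit}.

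For the embedding claim, strong free generation is not needed. It suffices that both algebras sit inside $\mathcal{H}_q(\mathfrak{gl}_r)=\bigotimes_k\mathcal{H}_q(\mathfrak{gl}_{\lambda_k})$ and that the formula places the generators of the former in $\bigotimes_k q$-$\mathcal{W}(\mathfrak{gl}_{\lambda_k})$.
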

				\begin{proof}
					This follows from the decomposition of the generating function $G(z)$. By using the expansion of $W^{j}_m$ terms of normal products of shifted Heisenberg currents, we get:
					\begin{align}
						W^j(z) &= \sum_{\sum j_k = j} : \prod_{k=1}^p W^{j_k}_{I_k}(z q^{-2\mu{k-1}}) : \nonumber\\
						&= \sum_{\sum j_k = j} : \prod_{k=1}^p X^{k,j_k}(z q^{-2\mu_{k-1}}) :
					\end{align}
					where $X^{k,j_k}(z)$ is the  spin-generating current $j_k$ for the $k$-th block $\mathfrak{gl}_{\lambda_k}$. To obtain the formula for the modes, we can expand each current in the Laurent series $X^{k,j_k}(z)=\sum_{n\in\mathbb{Z}}X^{k,j_k}_n\,z^{-n-j_k}$. 
					
					By rescaling $z\rightarrow zq^{-2\mu_{k-1}}$, we have the product
					\begin{equation}
						\bigg(zq^{-2\mu_{k-1}}\bigg)^{-m_{k}-j_{k}}=z^{-m_{k}-j_{k}}\, \cdot\, q^{2\mu_{k-1}(m_{k}+j_{k})}.
					\end{equation}
					By identifying the coefficients of $z^{m+j}$ in the total product (where $m+j=(m_k+j_k))$, we obtain:
					\begin{equation}
						W^j_m = \sum_{\sum j_k=j} \sum_{\sum m_k=m} \left( \prod_{k=1}^p q^{2\mu_{k-1}(m_k+j_k)} \right) : X^{1,j_1}_{m_1} \dots X^{p,j_p}_{m_p} :
					\end{equation}
					and the proof is achieved.
				\end{proof}
				
				Next we introduce a few simple definitions: 
				
				\begin{definition}\label{d:lambdann}
					Let $\lambda = (\lambda_1, \lambda_2, \ldots, \lambda_p)$ be an integer partition of $r$, and consider the $q$-embedding. A mode $W^j_m$ is non-negative of level $d$ with respect to $\lambda$ if for $m < 0$, every normally ordered monomial in the $q$-convolution:
					\begin{equation}
						: X^{1,j_1}_{m_1} \dots X^{p,j_p}_{m_p} :
					\end{equation}
					satisfies:
					\begin{enumerate}
						\item[(a)] $\exists k \in [p]$ such that $m_k > 0$ (presence of an annihilation mode);
						\item[(b)] there are at least $d$ distinct indices $k_i$ such that $m_{k_i} = 0$ and $j_{k_i} > 0$.
					\end{enumerate}
				\end{definition}
				To put it simply, a mode $W^j_m$ with $m<0$ is non-negative of level $d$ with respect to $\lambda$ if all monomials in the sum \eqref{eq:embedding} contain either one positive mode or at least $d$ non-trivial zero modes.
				
				\begin{definition}\label{d:lambdaj}
					Let $\lambda = (\lambda_1, \dots, \lambda_p)$ be an integer partition of $r$. We define the block-index function $\lambda(j)$ as follows:
					\begin{equation}
						\lambda(j) = \min \{ s \in [p] \mid \mu_s \geq j \},
					\end{equation}
					where $\mu_s = \sum_{i=1}^s \lambda_i$ are the partial sums.
				\end{definition}
				The notions are related as follows:
				
				\begin{lemma}\label{l:lambdann}
					Let $\lambda = (\lambda_1, \lambda_2, \ldots, \lambda_p)$ be an integer partition of $r$. For $d \in [p]$, the mode $W^j_m$ is non-negative of level $d$ with respect to $\lambda$ if and only if $m \geq 0$ if $\lambda(j) \leq  d$ and $m \geq d - \lambda(j)$ if $\lambda(j) > d$.
				\end{lemma}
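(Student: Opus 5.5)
We argue directly from the $q$-convolution formula \eqref{eq:embedding} of Lemma \ref{l:embed}. The $q$-prefactors $\prod_k q^{2\mu_{k-1}(m_k+j_k)}$ are nonzero scalars, so they cannot make a monomial vanish. Hence the combinatorial content is exactly as in the classical case of \cite{BBCCN18}. The question reduces to this: for which $m$ does every tuple $(j_k,m_k)_{k\in[p]}$ with $\sum j_k=j$, $\sum m_k=m$, $0\le j_k\le\lambda_k$ and $X^{k,0}_{m_k}=\delta_{m_k,0}$ satisfy condition (a) or condition (b) of Definition \ref{d:lambdann}? We interpret the definition, as in the sentence following it, as requiring \emph{either} a positive mode \emph{or} at least $d$ nontrivial zero modes. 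For $m\ge 0$ the condition is vacuous by definition, so only $m<0$ needs analysis.

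First we record the key count. The number of blocks with $j_k>0$ is at least $\lambda(j)$, and this value is attained. Indeed, since $\lambda$ is weakly decreasing, the fewest blocks whose capacities sum to at least $j$ is obtained by taking the first ones, which gives $\lambda(j)=\min\{s:\mu_s\ge j\}$. Any choice of $s$ blocks has total capacity at most $\mu_s$.

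Sufficiency: suppose $\lambda(j)>d$ and $d-\lambda(j)\le m<0$. Take a monomial with no positive mode, so every $m_k\le 0$. The nonzero $m_k$ are negative integers summing to $m$, so at most $|m|\le\lambda(j)-d$ of them are nonzero. Among the at least $\lambda(j)$ active blocks, at least $\lambda(j)-|m|\ge d$ therefore carry $m_k=0$ with $j_k>0$, which verifies (b).

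Necessity, which is the main obstacle: we must exhibit a violating monomial whenever the inequality fails. If $\lambda(j)\le d$ and $m<0$, choose $j_k$ supported on the first $\lambda(j)$ blocks, and put $m_1=m$ and $m_k=0$ otherwise. Then there is no positive mode and at most $\lambda(j)-1<d$ nontrivial zero modes. If $\lambda(j)>d$ and $m<d-\lambda(j)$, so that $|m|\ge\lambda(j)-d+1$, distribute $-1$ to each of $\lambda(j)-d+1$ active blocks among the first $\lambda(j)$ and put the remaining negative part on one of them. This leaves $d-1$ zero active modes and no positive mode.

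The subtle point is to check that these normally ordered monomials genuinely survive in \eqref{eq:embedding}, that is, that there is no cancellation. This holds because distinct tuples $(j_k,m_k)$ give linearly independent products in $\prod_k q$-$\mathcal{W}(\mathfrak{gl}_{\lambda_k})$, as the factors lie in different tensor components, and the $q$-coefficient is nonzero.

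Finally we need $\lambda(j)\ge1$ for the construction to make sense. This holds since $j\ge1$, and $d\in[p]$ guarantees the bounds are consistent.
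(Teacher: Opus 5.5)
Your proof is correct and complete. The paper states Lemma \ref{l:lambdann} without proof and then invokes it in the proof of Proposition \ref{p:leftideals}, so there is no argument of the paper's to compare against. What you give is the natural pigeonhole count, which is presumably the classical argument being implicitly imported. Every term of \eqref{eq:embedding} has at least $\lambda(j)$ non-trivial factors, and negative integers summing to $m$ can occupy at most $|m|$ of them. The bound is sharp because the first $\lambda(j)$ blocks can all be made active. As you observe, the prefactors $\prod_k q^{2\mu_{k-1}(m_k+j_k)}$ never vanish, so nothing $q$-specific enters.

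Your write-up also makes explicit two conventions that the lemma genuinely depends on and that the paper leaves ambiguous. First, Definition \ref{d:lambdann} must be read as (a) \emph{or} (b). With the literal conjunction, no mode with $m<0$ could qualify, since the term placing all of $m$ on a single active block has no positive mode; this would contradict the lemma whenever $\lambda(j)>d$. Second, the sum in \eqref{eq:embedding} must include inactive blocks $j_k=0$ with $X^{k,0}_{m_k}=\delta_{m_k,0}$, even though Lemma \ref{l:embed} literally writes $j_k\in[\lambda_k]$. Your non-cancellation check is only needed if ``monomial in the $q$-convolution'' is read after collecting terms rather than termwise; it is harmless either way.
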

				
				We can rewrite the condition above in terms of a new partition of $r$.
				\begin{lemma}\label{l:newpart}
					Let $\lambda = (\lambda_1, \lambda_2, \ldots, \lambda_p)$ be an integer partition of $r$. For $d \in [p]$, define a new partition $\tilde{\lambda} =(\tilde{\lambda}_1, \ldots, \tilde{\lambda}_{p-d+1}) = (\mu_d, \lambda_{d+1} ,\ldots, \lambda_p)$, where $\mu_d = \sum_{i=1}^d \lambda_i$. The mode $W^j_m$ is non-negative of level $d$ with respect to $\lambda$ if and only if $m \geq 1 - \tilde{\lambda}(j)$.
				\end{lemma}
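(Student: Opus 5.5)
The plan is to deduce the statement directly from Lemma \ref{l:lambdann}, which already characterizes non-negativity of level $d$ with respect to $\lambda$ in terms of the block-index function $\lambda(j)$. It therefore suffices to express the block-index function $\tilde{\lambda}(j)$ of the new partition through $\lambda(j)$, and then check that the two conditions coincide case by case. No further $q$-dependence enters: the $q$-factors in the convolution \eqref{eq:embedding} only rescale monomials and do not affect which mode indices $m_k$ occur, and Lemma \ref{l:lambdann} is taken as given.

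First I compute the partial sums of $\tilde{\lambda}=(\mu_d,\lambda_{d+1},\ldots,\lambda_p)$. Writing $\tilde{\mu}_s=\sum_{i=1}^s\tilde{\lambda}_i$, one has $\tilde{\mu}_1=\mu_d$ and, more generally, $\tilde{\mu}_s=\mu_{d+s-1}$ for $s\in[p-d+1]$. By Definition \ref{d:lambdaj}, $\tilde{\lambda}(j)=\min\{s\in[p-d+1]\mid \mu_{d+s-1}\geq j\}$. This splits into two cases.
\begin{itemize}
\item[(i)] If $\lambda(j)\leq d$, then $\mu_d\geq\mu_{\lambda(j)}\geq j$, so $\tilde{\lambda}(j)=1$.
\item[(ii)] If $\lambda(j)>d$, then $\mu_d<j$, so the minimum is attained at the $s$ with $d+s-1=\lambda(j)$. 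Hence $\tilde{\lambda}(j)=\lambda(j)-d+1$.
\end{itemize}
Only monotonicity of the partial sums is used here. In particular, it does not matter that $\tilde{\lambda}$ need not be weakly decreasing; if necessary I will remark that Definition \ref{d:lambdaj} only requires a composition.

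Next I substitute these values into the inequality $m\geq 1-\tilde{\lambda}(j)$. In case (i) it reads $m\geq 0$. In case (ii) it reads $m\geq 1-(\lambda(j)-d+1)=d-\lambda(j)$. These are exactly the two conditions of Lemma \ref{l:lambdann}, so the equivalence follows.

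The only delicate point is the bookkeeping of the index shift $s\mapsto d+s-1$ and the boundary case $j=\mu_d$. That case falls under (i), since $\lambda(\mu_d)\leq d$, and it gives $\tilde{\lambda}(j)=1$ in both descriptions, so the two cases glue consistently.
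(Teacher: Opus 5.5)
Your argument is correct and is the intended one: the paper gives no proof of this lemma, presenting it only as a rewriting of Lemma~\ref{l:lambdann}. Your computation via $\tilde{\mu}_s=\mu_{d+s-1}$, giving $\tilde{\lambda}(j)=1$ for $\lambda(j)\leq d$ and $\tilde{\lambda}(j)=\lambda(j)-d+1$ for $\lambda(j)>d$, is exactly what that rewriting amounts to. One small correction to a side remark: $\tilde{\lambda}$ is always weakly decreasing, since $\mu_d\geq\lambda_d\geq\lambda_{d+1}$, so it is a genuine partition and no appeal to compositions is needed.
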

				
				\subsubsection{ {Constructing quantum left ideals $\mathcal{I}_{U_{r,q}}(\lambda) \subset U_{r,q}$}}
				{We start with the construction of the Airy ideals in the framework of $q$-deformed algebra. The first step is to construct a family of  quantum proper left ideals $\mathcal{I}_{U_{r,q}} \subset U_{r,q}$ in the universal enveloping algebra of modes associated to partitions of $r$. The construction presented in this part is the generalization in the sense of $q$-deformation of the results investigated in\cite{BBCCN18}. We provide a proof of the main result so that we can generalize it in the next section.}
				
				\begin{proposition}\label{p:leftideals}
					Let $\lambda = (\lambda_1, \lambda_2, \ldots, \lambda_p)$ be an integer partition of $r$. 
					Let $\mathcal{I}_{U_{r,q}}(\lambda)$ be the left ideal generated by the $q$-deformed modes $W^j_m$, with $j \in [r]$ and $m \geq1- \lambda(j)$. Then 
					\begin{enumerate}
						\item[(a)]
						$U_{r,q} / \mathcal{I}_{U_{r,q}}(\lambda)$ is a cyclic left module generated by a non-zero vector $v_q$ such that $W^j_m \cdot v_q=0$ for all $m \geq1- \lambda(j)$ \item[(b)] 
						$W^j_m \notin \mathcal{I}_{U_{r,q}}(\lambda)$ for all  $j \in [r]$ and $m < 1- \lambda(j)$.
					\end{enumerate}
				\end{proposition}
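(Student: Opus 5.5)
The plan is to follow the strategy of \cite{BBCCN18}: build an explicit module in which the generators of $\mathcal{I}_{U_{r,q}}(\lambda)$ act by zero on a distinguished vector, while the remaining modes do not.

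\textbf{Step 1: construct the module.} Use the embedding of Lemma \ref{l:embed}, $q$-$\mathcal{W}(\mathfrak{gl}_r)\subset\prod_{k=1}^p q$-$\mathcal{W}(\mathfrak{gl}_{\lambda_k})$, and realize each factor inside the block Heisenberg algebra generated by the $J^i$, $i\in I_k$. On the $k$-th block, take a twisted Fock module of the Heisenberg algebra of $\mathfrak{gl}_{\lambda_k}$, twisted by the cyclic permutation of order $\lambda_k$. Concretely, use modes $J^{(k)}_{a/\lambda_k}$, $a\in\mathbb{Z}$, with the $q$-deformed bracket rescaled accordingly, and let $v_k$ be the twisted vacuum. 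In the twisted module, the spin-$j_k$ generator $X^{k,j_k}$ has modes $X^{k,j_k}_{m}$ that annihilate $v_k$ for $m\ge 1-j_k$ (the twisted analogue of the classical fact; in the undeformed case $m\ge 0$ suffices for $j_k=1$, with a shift for higher spin). This is the point to establish first, by expanding $X^{k,j_k}(z)=\,:\!\prod J(zq^{-2(i-1)})\!:$ in fractional modes and counting the degrees of the monomials.

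\textbf{Step 2: annihilation.} Set $v_q=v_1\otimes\cdots\otimes v_p$ and apply formula \eqref{eq:embedding}. For $j\in[r]$ and $m\ge 1-\lambda(j)$, we check that every normally ordered monomial $:X^{1,j_1}_{m_1}\cdots X^{p,j_p}_{m_p}:$ with $\sum j_k=j$ and $\sum m_k=m$ contains a factor with $m_k\ge 1-j_k$ and $j_k>0$. Because the blocks are ordered by size, at most $\lambda(j)$ blocks can carry nonzero $j_k$. A pigeonhole argument on $\sum m_k\ge 1-\lambda(j)$ then gives the claim, following the bookkeeping of Lemma \ref{l:lambdann} and Lemma \ref{l:newpart} with $d$ chosen appropriately. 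The $q$-factors $q^{2\mu_{k-1}(m_k+j_k)}$ are invertible scalars, so they do not affect vanishing. It follows that $\mathcal{I}_{U_{r,q}}(\lambda)\subseteq\mathrm{Ann}(v_q)$. Part (a) is then formal: $U_{r,q}/\mathcal{I}_{U_{r,q}}(\lambda)$ is cyclic on the class of $1$, and it is nonzero because it surjects onto $U_{r,q}\cdot v_q\neq 0$. This also shows that the ideal is proper.

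\textbf{Step 3: part (b).} For $m<1-\lambda(j)$, it suffices to show that $W^j_m v_q\ne0$. Take the leading monomial in which all $j_k$ sit in the first $\lambda(j)$ blocks, with every factor a creation mode. Its action on $v_q$ produces a nonzero Fock vector whose $q$-coefficient is a nonzero monomial in $q$. This vector cannot be cancelled by the other terms, since those land in different multi-degree components of the tensor product Fock space.

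\textbf{Expected obstacle.} The main difficulty is Step 1: defining the $q$-twisted module rigorously and verifying the precise annihilation range $m\ge 1-j_k$ for $X^{k,j_k}_m$, given the shifted arguments $zq^{-2(i-1)}$. My first attempt would be to diagonalize the cyclic twist by a discrete Fourier transform of the $J^i$, exactly as in the classical case. I would then check that the $q$-shifts only multiply the fractional modes by scalars $q^{2(i-1)a/\lambda_k}$, so that the degree counting is unchanged.
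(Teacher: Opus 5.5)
Your skeleton (the embedding \eqref{eq:embedding}, a tensor-product vector, counting of modes, PBW non-cancellation) is the paper's. The gap is in Step 1: you pick the wrong module, and the annihilation range you want from it cannot produce the threshold $1-\lambda(j)$.

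First, the twisted Fock vacuum does not satisfy your claim; it is killed only by the modes with $m\ge 1$. Already in the classical limit, for $\lambda_k=2$ the vector $X^{k,2}_{-1}v_k$ has a nonzero component along $J^-_{-1/2}J^-_{-1/2}v_k$, although $-1\ge 1-j_k$. Here $J^-=J^1-J^2$ is the anti-periodic combination. In addition, the fully contracted $\delta_{k,0}$ term in \eqref{eq:ww} makes $X^{k,2}_0$ act on the $\mathbb{Z}_{\lambda_k}$-twisted vacuum by the twist-field weight, which is a nonzero scalar. So for $\lambda=(r)$ your $v_q$ has $W^2_0v_q\neq 0$, although $W^2_0\in\mathcal{I}_{U_{r,q}}(\lambda)$, and Step 2 fails.

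Second, even a module with exactly your range would not work; that range is the one of the untwisted vacuum module. If $X^{k,j_k}_{m_k}v_k=0$ for all $m_k\ge 1-j_k$, then every surviving monomial has $m=\sum_k m_k\le -\sum_k j_k=-j$. Hence $W^j_m v_q=0$ for all $m\ge 1-j$, independently of $\lambda$. Since $\lambda(j)\le j$, this kills the modes with $1-j\le m<1-\lambda(j)$, which part (b) needs to survive. For $\lambda=(r)$ you would get $W^j_{-1}v_q=0$ for all $j\ge 2$. Step 3 then has nothing to work with, since requiring every factor to be a creation mode forces $m\le -j$.

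Finally, your counting is reversed. Since $j_k\le\lambda_k$ and $t$ blocks carry total spin at most $\mu_t$, \emph{at least} $\lambda(j)$ blocks must carry $j_k>0$, not at most.

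What is needed, and what the paper uses, is to take each $v_{q,k}$ a \emph{weight-zero highest weight vector} of $q$-$\mathcal{W}(\mathfrak{gl}_{\lambda_k})$. That is, $X^{k,j_k}_{m_k}v_{q,k}=0$ exactly for $m_k\ge 0$ (all spins, zero modes included), with the negative modes acting freely (a Verma-type module, justified via PBW as in the paper). Then a monomial of \eqref{eq:embedding} survives on $v_q$ only if each of the $t\ge\lambda(j)$ blocks with $j_k>0$ has $m_k\le -1$. So surviving monomials have $m\le -t\le -\lambda(j)$. This is Lemma \ref{l:lambdann} at level $d=1$, and it gives (a).

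For (b), the surviving terms are tensor products $\bigotimes_k X^{k,j_k}_{m_k}v_{q,k}$ of distinct PBW basis vectors, hence linearly independent. For $m\le-\lambda(j)$ at least one survives: put the spin on the first $\lambda(j)$ blocks and take all $m_k\le -1$. This gives $W^j_m v_q\neq 0$.

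With this module in place of the twisted Fock spaces, your Steps 2 and 3 go through as written, with ``creation mode'' meaning $m_k\le -1$. You are also right that the factors $q^{2\mu_{k-1}(m_k+j_k)}$ are harmless.
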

	
				\begin{proof}
					{The proof relies on the $q$-deformed embedding $q\text{-}\mathcal{W}(\mathfrak{gl}_r) \subset \bigotimes_{k=1}^p q\text{-}\mathcal{W}(\mathfrak{gl}_{\lambda_k})$ established in Lemma \ref{l:embed}, specifically using the $q$-convolution formula \eqref{eq:embedding}.}
					
					{Let $v_{q,k}$ be the $q$-highest-weight vector of the $k$-th factor $q\text{-}\mathcal{W}(\mathfrak{gl}_{\lambda_k})$ with weight zero. By definition of the $q$-deformed vacuum, this vector satisfies:
						\begin{equation}
							X^{k,j_k}_{m_k} \cdot v_{q,k} = 0, \quad \forall j_k \in \{1, \dots, \lambda_k\}, \,\, m_k \geq 0.
						\end{equation}
						The corresponding cyclic module is spanned by the action of the negative modes $X^{k,j_k}_{m_k}$ with $m_k < 0$. We define the total $q$-vacuum vector as the tensor product $v_q = v_{q,1} \otimes \dots \otimes v_{q,p}$.}
					
					{Under the embedding \eqref{eq:embedding}, each mode $W^j_m$ is expressed as a sum of normally ordered products of local modes $: \prod X^{k,j_k}_{m_k} :$ with the constraint $\sum m_k = m$. 
						According to Lemma \ref{l:lambdann}, if $m \geq 1 - \lambda(j)$, every monomial in the expansion of $W^j_m$ contains at least one local mode $X^{k,j_k}_{m_k}$ with a non-negative index $m_k \geq 0$. Due to the normal ordering, these non-negative modes are placed to the right and act directly on the $q$-highest-weight vectors $v_{q,k}$, thus:
						\begin{equation}
							W^j_m \cdot v_q = 0 \quad \text{for all } m \geq 1 - \lambda(j).
						\end{equation}
						This confirms that $v_q$ is a cyclic generator for the quotient $U_{r,q} / \mathcal{I}_{U_{r,q}}(\lambda)$ and that the ideal $\mathcal{I}_{U_{r,q}}(\lambda)$ is the annihilator of $v_q$.}
					
					{For the modes satisfying $m < 1 - \lambda(j)$, the convolution formula \eqref{eq:embedding} includes at least one term where all participating local modes $X^{k,j_k}_{m_k}$ have strictly negative indices $m_k < 0$. 
						Since the $q$-deformed generators $X^{k,j_k}_{m_k}$ satisfy a $q$-analogue of the PBW (Poincaré-Birkhoff-Witt) theorem (see Theorem \ref{pbw} in Appendix), these strictly negative monomials act linearly independently on the vacuum $v_q$. Specifically, they do not vanish and cannot be expressed as a linear combination of terms containing non-negative modes. 
						Consequently, $W^j_m \cdot v_q \neq 0$, which implies:
						\begin{equation}
							W^j_m \notin \mathcal{I}_{U_{r,q}}(\lambda) \quad \text{for all } j \in \{1, \dots, r\} \text{ and } m < 1 - \lambda(j).
						\end{equation}
						The $q$-deformation, being formal in $\hslash$ (where $q=e^{\hslash}$), preserves the filtration by degree and the linear independence of the PBW basis, ensuring the result holds for the $q$-case.}
				\end{proof}

				\subsubsection{Determining the $q$-deformed homogenization of $\mathcal{I}_{U_{r,q}}(\lambda)$}\label{ssd}
				
				Associated to a partition $\lambda$ of $r$ we constructed a left ideal $\mathcal{I}_{U_{r,q}}(\lambda)$ in the universal enveloping algebra of modes. The homogenization of $\mathcal{I}_{U^\hslash_{r,q}}(\lambda)$ is obtained by homogenizing all elements of $\mathcal{I}_{U_{r,q}}(\lambda)$. By Lemma \ref{l:hom}, we know that
				\begin{equation}
					[ \mathcal{I}_{U^\hslash_{r,q}}(\lambda), \mathcal{I}^\hslash_{U_{r,q}}(\lambda)] \subseteq \hslash^2 \mathcal{I}_{U^\hslash_{r,q}}(\lambda).
				\end{equation}
				For the  $q$-ideals that we constructed above, the homogenization is easy to describe. Since the modes $W^j_m$ form a PBW basis for $U_{r,q}$, and $W^j_m \in \mathcal{I}_{U_{r,q}}(\lambda)$ for $m \geq 1- \lambda(j)$ but $W^j_m \notin \mathcal{I}_{U_{r,q}}(\lambda)$ for $m < 1- \lambda(j)$, we conclude that the $q$-deformed  homogenization $\mathcal{I}_{U^\hslash_{r,q}}(\lambda) \subset U^\hslash_{r,q}$ is generated by the homogenization of the modes, that is, by $W^{\hslash,j}_m := \hslash^j W^j_m$ for $j \in [r]$ and $m \geq  1-\lambda(j)$. 
				\subsubsection{Representing $U^\hslash_{r,q}$ in $\mathcal{D}_{A,q}^\hslash$}\label{ssr}
				
				To a partition $\lambda$ of $r$ we constructed a left ideal $ \mathcal{I}_{U_{r,q}^\hslash}(\lambda) \in U^\hslash_{r,q}$ that satisfies the condition $[ \mathcal{I}_{U_{r,q}^\hslash}(\lambda), \mathcal{I}_{U_{r,q}^\hslash}(\lambda)] \subseteq \hslash^2 \mathcal{I}_{U_{r,q}^\hslash}(\lambda)$. For each of those, can we find a representation  $\rho_q: U^\hslash_{r,q} \to \mathcal{D}_{A,q}^\hslash$, for some index set $A$, such that there exists a  generating set $\{H_a\}_{a \in A}$ for $\mathcal{I}_{U^\hslash_{r,q}}$ with $\rho_q(H_a) = \hslash \mathcal{D}_{a,q} + O(\hslash^2)$ and the collection $\{\rho_q(H_a)\}_{a \in A}$ bounded?
				
				\begin{lemma}[$q$-deformed summation]\label{l:Aq1}
					{For any $j \in \{0, \dots, \lfloor i/2 \rfloor\}$, the following identity holds for the $q$-deformed kernels:
						\begin{equation}\label{eq:Aq3_q}
							\sum_{a_1, \dots, a_{2j} = 0}^{r-1} \Psi^{(0)}_{r,q}(a_1, \dots, a_i) \prod_{l'=1}^j \omega_q(a_{2l'-1}, a_{2l'}) \delta_{a_{2l'-1}+a_{2l'}, r} = \Psi^{(j)}_{r,q}(a_{2j+1}, \dots, a_i),
						\end{equation}
						where $\omega_q$ is the $q$-weight related to  the pairing of modes.}
				\end{lemma}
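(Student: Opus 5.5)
The plan is to prove \eqref{eq:Aq3_q} by induction on the number $j$ of contracted pairs. For $j=0$ there is nothing to prove. For general $j$, the kernel $\Psi^{(j)}_{r,q}$ is understood as the result of contracting $j$ pairs of indices in $\Psi^{(0)}_{r,q}$. I take $\Psi^{(0)}_{r,q}(a_1,\dots,a_i)$ to be the $q$-analogue of the root-of-unity kernel of \cite{BBCCN18}: a sum over ordered $i$-tuples of distinct sheets (indices $1\le i_1<\dots<i_i\le r$ as in \eqref{eq:explicit}) of products $\prod_l \theta^{a_l i_l}$, with $\theta=e^{2\pi \mathrm{i}/r}$, weighted by the shift factors $q^{2(l-1)(a_l+1)}$ coming from $J^{i_l}(zq^{-2(l-1)})$. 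The pairing weight $\omega_q(a,b)$ arises from the $q$-Heisenberg bracket $[J^i_n,J^k_m]_q=\frac{[n]_q}{n}\delta_{ik}\delta_{n+m,0}$.

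\emph{Inductive step.} Assuming the identity with $j-1$ contractions, I would isolate the last pair $(a_{2j-1},a_{2j})$. The delta $\delta_{a_{2j-1}+a_{2j},r}$ eliminates $a_{2j}=r-a_{2j-1}$. The remaining sum over $a_{2j-1}\in\{0,\dots,r-1\}$ then contains three ingredients: the character factor $\theta^{a_{2j-1}(i_{2j-1}-i_{2j})}$, the product of the two shift factors, and $\omega_q(a_{2j-1},r-a_{2j-1})$. The key intermediate claim is the following. Once $\omega_q$ is chosen as the inverse of the $q$-shift mismatch between positions $2j-1$ and $2j$, this product collapses to a finite $q$-geometric sum in $a_{2j-1}$. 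By character orthogonality on $\Z/r\Z$, that sum forces $i_{2j-1}=i_{2j}$ (up to a boundary term at $a=0$), or else it evaluates in closed form via $[r]_q$-type expressions. Substituting back removes the two sheet indices and produces exactly the kernel $\Psi^{(j)}_{r,q}$ in the remaining variables $a_{2j+1},\dots,a_i$. The ordering constraint $i_1<\dots<i_i$ must be tracked; collisions of sheets are absorbed into the normal-ordering corrections that define $\Psi^{(j)}$.

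\emph{Consistency of the recursion.} I would check that the contractions commute with one another, so that the order in which pairs are contracted does not matter. This follows because the shift exponents are additive in the positions, which makes the result independent of which pair is treated last. It also justifies the symmetric product $\prod_{l'}\omega_q$ on the left-hand side.

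The main obstacle is the inductive step. The $q$-shift factors depend on the \emph{positions} $l$, so after a contraction the surviving indices are reindexed, and their weights $q^{2(l-1)(a_l+1)}$ shift by $q^{-4}$. My first attempt would be to absorb this reindexing into the definition of $\Psi^{(j)}_{r,q}$, by defining it via the recursion itself. If that makes the statement tautological, I would instead verify the closed-form identity directly at $j=1$ with explicit $q$-geometric sums. Finally, I would check that the whole identity reduces to Lemma A.1 of \cite{BBCCN18} as $q\to1$.
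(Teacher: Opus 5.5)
There is a genuine gap, and it begins with the kernel you assume. In the paper, $\Psi^{(0)}_{r,q}$ carries no position-dependent factors $q^{2(l-1)(a_l+1)}$ and no ordered sheet labels. Those come from the untwisted formula \eqref{eq:explicit}, whereas $\Psi^{(j)}_{r,q}$ belongs to the $\Z_r$-twisted representation and is given explicitly, in the proposition right after the lemma, by
\[
\Psi^{(j)}_{r,q}(a_{2j+1},\dots,a_i)=\frac{1}{[i]_q!}\sum_{\substack{m_1,\dots,m_i=0\\ m_l\neq m_k}}^{r-1}\prod_{k=1}^{j}K_q(m_{2k-1},m_{2k})\prod_{l=2j+1}^{i}x_{m_l}^{-a_l},\qquad x_m=\theta^m q^{-2m},
\]
with $j=0$ giving $\Psi^{(0)}_{r,q}$. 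The $q$-dependence is attached to the sheet $m_l$ through $x_{m_l}$, not to the position $l$. So the reindexing problem you single out as the main obstacle does not occur. Since $\Psi^{(j)}_{r,q}$ is defined independently of the lemma, your fallback of defining it through the recursion would make the statement vacuous.

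With this definition no induction is needed. After substitution, the indices $a_{2l'-1},a_{2l'}$ of each pair appear only in $x_{m_{2l'-1}}^{-a_{2l'-1}}x_{m_{2l'}}^{-a_{2l'}}$. Hence, for fixed distinct $m_1,\dots,m_i$, the left-hand side factorizes into $j$ independent single-pair sums, which also settles your question about the order of contractions. The paper evaluates one such sum, $\mathcal{S}_q(m_1,m_2)=\sum_{a=0}^{r-1}\omega_q(a)\,x^a$ with $x=x_{m_1}/x_{m_2}\neq 1$, via $\sum_a\omega_q(a)x^a=x/(x-1)^2$, which is exactly $K_q(m_1,m_2)=x_{m_1}x_{m_2}/(x_{m_1}-x_{m_2})^2$.

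Your key step goes in the wrong direction. The pairing weight is not constant in $a$: classically the relevant identity is $\sum_{a=0}^{r-1}\frac{a(r-a)}{2r}x^a=\frac{x}{(x-1)^2}$ for $x^r=1$, $x\neq 1$. So character orthogonality is not the mechanism. Worse, any mechanism that forces the two sheets to coincide gives zero, because $\Psi^{(0)}_{r,q}$ only sums over distinct sheets. The paired sheet indices are not removed; they stay summed in $\Psi^{(j)}_{r,q}$ and enter through $K_q(m_{2k-1},m_{2k})$. What your proposal is missing is precisely this single-pair evaluation producing the Bergman-type kernel $K_q$.

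Be aware of one further point. For generic $q$ one has $x^r\neq 1$, and eliminating the second index of a pair produces a factor $x_{m_2}^{-r}$ that is no longer $1$. So this step rests entirely on the choice of $\omega_q$, which the paper leaves unspecified.
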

				\begin{proof}
					{By substituting the explicit definition of $\Psi^{(0)}_{r,q}$ into the left-hand side of \eqref{eq:Aq3_q}, we obtain:
						\begin{equation}\label{eq:Aq4_q}
							\frac{1}{[i]_q!} \sum_{\substack{m_1, \dots, m_i = 0 \\ m_l \neq m_k}}^{r-1} \left( \sum_{a_1, a_3, \dots, a_{2j-1} = 0}^{r-1} \prod_{l'=1}^j \Omega_q(m_{2l'-1}, m_{2l'}, a_{2l'-1}) \right) \prod_{l=2j+1}^i (\theta^{m_l}q^{-2m_l})^{-a_l},
						\end{equation}
						where $\Omega_q$ corresponds to the internal summation over the paired indices. To evaluate this, we consider the fundamental $q$-sum:
						\begin{equation}
							\mathcal{S}_q(m_1, m_2) = \sum_{a=0}^{r-1} \omega_q(a) \left( \frac{\theta^{m_1}q^{-2m_1}}{\theta^{m_2}q^{-2m_2}} \right)^a.
						\end{equation}
						Setting the $q$-spectral variable $x = \theta^{m_1-m_2} q^{-2(m_1-m_2)}$ for distinct $m_1, m_2 \in \{0, \dots, r-1\}$, the $q$-summation identity for the  kernel yields:
						\begin{equation}
							\sum_{a=0}^{r-1} \omega_q(a) x^a = \frac{x}{(x-1)^2}.
						\end{equation}
						Substituting the expression for $x$, we find:
						\begin{equation}
							\mathcal{S}_q(m_1, m_2) = \frac{\theta^{m_1-m_2} q^{-2(m_1-m_2)}}{\left(\theta^{m_1-m_2} q^{-2(m_1-m_2)} - 1\right)^2} = \frac{\theta^{m_1+m_2} q^{-2(m_1+m_2)}}{\left(\theta^{m_1}q^{-2m_1} - \theta^{m_2}q^{-2m_2}\right)^2}.
						\end{equation}
						We identify this expression as the  $q$-Green kernel ${K}_q(m_1, m_2)$. By iterating this summation for the $j$ pairs of indices $a_1, a_3, \dots, a_{2j-1}$, the inner products in \eqref{eq:Aq4_q} reproduce the $j$ factors of ${K}_q$ required for the definition of $\Psi^{(j)}_{r,q}$:
						\begin{equation}
							\Psi^{(j)}_{r,q} \big(a_{2j+1}, \dots, a_i\big) = \frac{1}{[i]_q!} \sum_{\substack{m_1, \dotsc, m_i = 0\\ m_l \neq m_k}}^{r-1} \prod_{k = 1}^j {K}_q(m_{2k-1}, m_{2k}) \prod_{l=2j+1}^i (\theta^{m_l} q^{-2m_l})^{-a_l}.
						\end{equation}
						This concludes the proof.}
				\end{proof}

				{One way to do that for a subset of those ideals is to consider representations of $U_{r,q}^\hslash$ that come from $\mathbb{Z}_r$-twisted representations for the $q$-deformed Heisenberg VOA $\mathcal{H}_q(\mathfrak{gl}_r)$. 
					For more simplicity, we present the final result in the following  Proposition.}
				
				\begin{proposition}
					There exists a representation $\mu_q: U^\hslash_{r,q} \to \mathcal{D}_{q}^\hslash$ that takes the form:
					\begin{equation}\label{eq:ww}
						\mu_q( W^{\hslash,i}_k)
						=
						\left(\frac{\hslash}{r} \right)^i \sum_{j=0}^{\lfloor \frac{i}{2} \rfloor} \frac{[i]_q!}{[2]_q^j [j]_q! [i-2j]_q!} \sum_{\substack{p_{2j+1}, \dotsc p_i \in \Z \\ \sum p_l = rk}} \Psi^{(j)}_{r,q} (p_{2j+1}, \dotsc, p_i) \normord{ \prod_{l=2j+1}^i J^{(q)}_{p_l}} \,,
					\end{equation}
					where,  $ \theta = e^{2 \pi i /r}$,
					\begin{equation}
						\Psi^{(j)}_{r,q} (a_{2j+1}, \dotsc, a_i) \coloneqq \frac{1}{[i]q!} \sum_{\substack{m_1, \dotsc, m_i = 0\\ m_l \neq m_k}}^{r-1} \prod_{k = 1}^j {K}_q(m_{2k-1}, m_{2k}) \prod_{l=2j+1}^i (\theta^{m_l} q^{-2m_l})^{-a_l},
					\end{equation} where ${K}_q(m, n)=\frac{\theta^{m+n}q^{-2(m+n)}}{\big(\theta^{m}q^{-2m}-\theta^{n}q^{-2n}\big)^2}$
					and
					\begin{equation}\label{eq:Js}
						J^{(q)}_m  = \begin{cases} [m]_q \partial{x_m} & m > 0 \\ 0 & m =0 \\ -m x_{-m} & m < 0\end{cases} \,.
					\end{equation}
					In \eqref{eq:ww}, for cases such that $j=i/2$ the condition $\sum p_l = rk$ is understood as the Kronecker delta $\delta_{k,0}$.
				\end{proposition}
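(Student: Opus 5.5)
The plan is to follow the classical construction of \cite{BBCCN18} via a $\mathbb{Z}_r$-twisted module of the $q$-deformed Heisenberg VOA $\mathcal{H}_q(\mathfrak{gl}_r)$, and to track where $q$-factors enter. First I diagonalize the cyclic permutation $\sigma$ of the Cartan basis $\{\chi^j\}$, with eigenvalues $\theta^{m}$, $m\in\{0,\dots,r-1\}$. A $\sigma$-twisted module is then described by a single Heisenberg field $J(x)=\sum_{p\in\Z}J^{(q)}_p\,dx/x^{p/r+1}$ in the variable $x=z^r$. The $r$ currents $J^{i}(z)$ are recovered as the images of $J$ on the $r$ sheets $z\mapsto\theta^{m}z$.

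Because the $q$-deformed generators \eqref{eq:expl1} involve the shifted arguments $zq^{-2(k-1)}$, the $m$-th sheet carries the combined factor $\theta^{m}q^{-2m}$. I would check that the modes \eqref{eq:Js} realize the commutator $[J^{i}_n,J^{k}_m]_q=\frac{[n]_q}{n}\delta_{i,k}\delta_{n+m,0}$ on $\C[x_A][[\hslash]]$. The choice $[m]_q\partial_{x_m}$ against $-m\,x_{-m}$ is exactly designed to produce $[m]_q$ in the commutator, with the zero mode set to $0$ (no shift). This gives a representation of $\mathcal{H}_q$ in $\mathcal{D}^\hslash_q$.

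Next I substitute this into the normally ordered product defining $W^i(z)$ and expand each sheet-current in modes. Summing over ordered $i$-tuples of distinct sheets $m_1,\dots,m_i$ gives the factor $\frac{1}{[i]_q!}\sum_{m_l\neq m_k}\prod_l(\theta^{m_l}q^{-2m_l})^{-p_l}$. The twisted normal ordering differs from the untwisted one by contractions between pairs of currents on distinct sheets. I would organize the expansion by the number $j$ of contracted pairs. The combinatorial count of choosing $j$ disjoint pairs among $i$ points is $\frac{[i]_q!}{[2]_q^j[j]_q![i-2j]_q!}$ in the $q$-setting. Each contracted pair produces a sum over its two mode indices, constrained by $a_{2l'-1}+a_{2l'}=r$. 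Lemma \ref{l:Aq1} is precisely the statement that these internal sums collapse to the kernel $K_q(m_{2k-1},m_{2k})$, which yields $\Psi^{(j)}_{r,q}$. The remaining uncontracted modes must satisfy $\sum p_l=rk$, by matching powers of $x$ against $z^{-rk-ri}$. When $j=i/2$ there are no free modes, and the condition degenerates to $\delta_{k,0}$. The prefactor $(\hslash/r)^i$ comes from $W^{\hslash,i}_k=\hslash^iW^i_k$ together with the normalization $1/r$ of each sheet-current in the twisted module.

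Finally, I would verify that $\mu_q$ respects the relations of $U^\hslash_{r,q}$. The formula is obtained by restricting a representation of $\mathcal{H}_q$ to the subalgebra generated by the $W^j_n$, via the realization \eqref{eq:explicit}, so this should be automatic. The main obstacle I expect is the contraction step: showing that the $q$-shifted normal ordering in the twisted sector yields exactly the $q$-binomial pair-counting coefficient and the kernel $K_q$, rather than something depending on the order of the shifts $q^{-2(k-1)}$. My first attempt would be to prove a $q$-Wick theorem for the twisted fields by induction on $i$, absorbing the position-dependent $q$-powers into the sheet variables $\theta^{m}q^{-2m}$ so that the pairwise contraction depends only on the two sheets involved.
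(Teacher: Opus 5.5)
Your outline is the paper's own route: sheet currents carrying $\theta^mq^{-2m}$, contractions organized by the number $j$ of pairs, Lemma~\ref{l:Aq1} for the pair sums, momentum conservation, and restriction of a Heisenberg representation. The paper's proof makes the same moves by assertion. But the load-bearing steps fail once the factors $q^{-2m}$ are present. The central problem is the loss of the $\mathbb{Z}_r$ deck symmetry. The points $\theta^mq^{-2m}z$ do not lie in one fibre of $x=z^r$, since $(\theta^mq^{-2m}z)^r=q^{-2mr}x$. Neither $z\mapsto\theta z$ nor $z\mapsto\theta q^{-2}z$ permutes them (e.g.\ $\theta q^{-2}\cdot\theta^{r-1}q^{-2(r-1)}=q^{-2r}$). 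In \cite{BBCCN18} the generators are symmetric in $\chi^1,\dots,\chi^r$ and the deck group permutes the sheets. This is exactly what (a) lets a $\sigma$-twisted Heisenberg module restrict to a genuine module of the $\sigma$-fixed subalgebra containing $\mathcal{W}$ --- the only reason your final step is ``automatic'' --- and (b) kills every term with $r\nmid\sum_l p_l$ after summing over sheets. Neither survives here. The ordered products in \eqref{eq:expl1} are not invariant under cyclic relabelling of $\chi^1,\dots,\chi^r$; note also that the shift $q^{-2(k-1)}$ there depends on the position of a factor, not on a sheet, and is absent for $i=1$. Already for $i=1$, $j=0$ one finds $\sum_{m=0}^{r-1}(\theta^mq^{-2m})^{-p}=\frac{1-q^{2pr}}{1-\theta^{-p}q^{2p}}\neq0$ for $r\nmid p$ and generic $q$. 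So ``matching powers of $x$'' discards non-zero terms, the field is not single-valued in $x$, and nothing forces the retained coefficients to satisfy the relations of $U_{r,q}$.

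Nor does your mechanism produce the coefficients or the kernel in \eqref{eq:ww}. \emph{Coefficients:} a Wick expansion sums over pairings, each with coefficient one, so its weights are integers. If, as you propose, every contraction depends only on its two sheets, symmetrizing over sheets returns the classical $\frac{i!}{2^jj!(i-2j)!}$ against $\frac{1}{i!}$. It does not give $\frac{[i]_q!}{[2]_q^j[j]_q![i-2j]_q!}$ against $\frac{1}{[i]_q!}$. Already for $i=2$, $j=0$ the sum over unordered pairs of distinct sheets is $\frac12\sum_{m_1\neq m_2}$, whereas \eqref{eq:ww} has $\frac{1}{[2]_q}\sum_{m_1\neq m_2}$. \emph{Kernel:} with \eqref{eq:Js} one gets $[J^{(q)}_m,J^{(q)}_{-m}]=m\,[m]_q$ for $m>0$, not $[m]_q$ as you claim, and $m^2$ at $q=1$ rather than the standard $m$. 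So two currents at $az$ and $bz$ contract to $\sum_{p>0}p\,[p]_q\,(b/a)^p\,\frac{dz^2}{z^2}$. By contrast, $K_q(m,n)=\frac{ab}{(a-b)^2}$, with $a=\theta^mq^{-2m}$ and $b=\theta^nq^{-2n}$, is the contraction for $[J_p,J_{-p}]=p$. The finite pair sums you delegate to Lemma~\ref{l:Aq1} are, classically, root-of-unity identities such as $\sum_{a=0}^{r-1}\frac{a(r-a)}{2r}y^a=\frac{y}{(y-1)^2}$ for $y^r=1\neq y$. These fail for generic $q$ at $y=\theta^{m-n}q^{-2(m-n)}$, where $y^r=q^{-2r(m-n)}\neq1$. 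What is missing is a genuinely $\mathbb{Z}_r$-equivariant deformation (sheet variables permuted by the deck group, $\sigma$-invariant generators) together with a contraction rule that actually yields $q$-integers. Neither your proposal nor the paper's proof supplies one, so the argument establishes neither the formula \eqref{eq:ww} nor the representation property.
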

				
				\begin{proof}
					{The proof of the proposition consists in lifting the classical construction of $W$-algebra generators to the $q$-deformed setting by replacing the standard Heisenberg currents with their $q$-deformed counterparts $J^{(q)}_m$ and accounting for the $q$-harmonic analysis on the cyclic group $\mathbb{Z}/r\mathbb{Z}$.}
					
					{We define the $q$-currents $\chi_m(z)$ using the $q$-shifted roots of unity. Inverting the relation between the variables and the $q$-Heisenberg currents $J^{(a)}(z)$, we obtain:
						\begin{equation}
							\chi_m(z) = \frac{1}{r} \sum_{a=0}^{r-1} (\theta^m q^{-2m})^{-a} J^{(a)}(z),
						\end{equation}
						where the term $(\theta^m q^{-2m})$ incorporates the $q$-deformation of the spectral curve.}
					
					{The generators $W^{\hslash,i}(z)$ are obtained through the normally ordered product of these currents. A pair of contractions between two currents $\chi_{m}$ and $\chi_{n}$ yields the $q$-Green kernel ${K}_q(m, n)$:
						\begin{equation}
							{K}_q(m, n) = \frac{\theta^{m+n}q^{-2(m+n)}}{\big(\theta^{m}q^{-2m}-\theta^{n}q^{-2n}\big)^2}.
						\end{equation} 
						For $j$ pairs of such contractions, the resulting contribution to the vertex operator is weighted by the kernel $\Psi^{(j)}_{r,q}$.}
					
					{To determine the global coefficient in \eqref{eq:ww}, we count the number of ways to choose $2j$ indices from the set $\{1, \dots, i\}$ to be paired (contracted). 
						\begin{itemize}
							\item The number of ways to choose a subset $J \subseteq \{1, \dots, i\}$ of cardinality $|J|=2j$ is given by the binomial coefficient $\binom{i}{2j}$.
							\item The number of perfect matchings (pairings) on $J$ is $(2j-1)!! = \frac{(2j)!}{2^j j!}$.
						\end{itemize}
						Multiplying these factors and generalizing to the $q$-factorial case to preserve the $U^\hslash_{r,q}$, we obtain the combinatorial weight:
						\begin{equation}
							\frac{[i]_q!}{[2]_q^j [j]_q! [i-2j]_q!}.
					\end{equation}}
					
					{By extracting the $k$-th mode of the fields $W^i(z)$, we impose the momentum conservation $\sum p_l = rk$. 
						\begin{itemize}
							\item When $j = i/2$, all currents are contracted into $q$-scalars, implying $W^i(z)$ is a constant in $z$. This corresponds to the mode $k=0$, hence the factor $\delta_{k,0}$.
							\item For the remaining fields, the action is represented by the differential operators $J^{(q)}_m$ defined in \eqref{eq:Js}.
						\end{itemize}
						Summing over all possible numbers of pairs $j \in \{0, \dots, \lfloor i/2 \rfloor\}$ yields the final expression \eqref{eq:ww}.
					}
				\end{proof}
				
				\begin{remark}
					In the limit $q \rightarrow 1$, we have $[n]_q \rightarrow n$, and the representation $\mu_q$ recovers exactly the result given in  {\cite[Proposition~4.5 \& Corollary~4.7]{BBCCN18}}.
				\end{remark}
				This is not yet in the form required for an $q$-deformed Airy structure, since for $i \geq 2$ the generators $\mu_{q}( W^{\hslash,i}_k)$ are of order $O(\hslash^i)$. To recover a linear term in $\hslash$, we perform a polynomial shift of the Heisenberg partition function.
				For $ s \in [r+1]$ and $s$ coprime with $r$,
				we may define a new representation also called shifted representation $\rho^q$ via conjugation:
				\begin{equation}\label{eq:rep}
					\rho_q( W^{\hslash,i}_k) = \hat{T}^q_s \mu_q(W^{\hslash,i}_k) (\hat{T}^q_s)^{-1}, 
				\end{equation}
				where $\hat{T}^q_s = \exp \Big( -\frac{J^{(q)}_s}{[s]_q\hslash} \Big).$ Under this transformation, the leading order term of each generator becomes linear in $\hslash$, verifying the following relation: 
				\begin{equation}
					\rho_q( W^{\hslash,i}_k) = \hslash J^{(q)}_{r k + s (i-1)} + O(\hslash^2),
				\end{equation}
				provided $k  \geq - \lfloor \frac{s(i-1)}{r} \rfloor $.
				\begin{remark}
					By substituting the explicit action of the $q$-deformed Heisenberg modes, the shifted representation takes the canonical form of an Airy structure:
					\begin{equation}
						\rho_q( W^{\hslash,i}_k) = \hslash [rk + s(i-1)]_q \frac{\partial}{\partial x_{rk + s(i-1)}} + \sum_{n \geq 2} \hslash^n \mathcal{O}^{(n)}_{i,k},
					\end{equation}
					where the linear part is a $q$-deformed weighted derivative, and $\mathcal{O}^{(n)}{i,k}$ contains higher-order differential operators representing the quantum corrections and the non-linear geometry of the $q$-deformed $\mathcal{W}$ algebra.
				\end{remark}
				Combining the two Subsubsections \ref{ssd} and \ref{ssr}, we need to find partitions $\lambda$ of $r$ such that $$1-\lambda(i) = - \lfloor \frac{s(i-1)}{r} \rfloor.$$ 
				We have a valid partition if and only if $r \equiv \pm 1 \pmod{s}$ (for $s \geq 2$). More precisely, the partition $\lambda$ is determined as follows:
				\begin{itemize}
					\item {For $s=1$:} The condition is trivially satisfied, and the partition is given by a single row, $\lambda = (r)$.
					\item {For $2 \leq s \leq r-1$:} The condition $r \equiv \pm 1 \pmod{s}$ allows us to uniquely write the division $r = r' s + r''$ with $r' \geq 1$ and a remainder $r'' \in \{1, s-1\}$. The partition is then given by $\lambda = (\lambda_1, \dots, \lambda_s)$ with:
					\begin{equation}
						\lambda_1 = \dots = \lambda_{r''} = r'+1, \qquad \lambda_{r''+1} = \dots = \lambda_s = r'.
					\end{equation}
					\item {For $s=r+1$:} We have $r \equiv -1 \pmod{s}$, and the partition is a single column of ones, $\lambda = (1, 1, \dots, 1)$.
				\end{itemize}
				
				We can summarize this in the following theorem:
				
				\begin{theorem}\label{t:rsAs}
					Let $ r \geq 2$, and $ s \in [r+1]$ such that $ r = \pm 1 \pmod{s}$. Let $\rho_{q}: U^\hslash_{r,q} \to \mathcal{D}_{q}^\hslash$ be the representation defined in \eqref{eq:rep}. Let $\mathcal{I}_{U_{r,q}^\hslash} \in U^\hslash_{r,q}$ be the left ideal generated by the  modes
					\begin{equation}
						W^{\hslash,j}_m \quad \text{with} \quad j \in [r] \quad \text{and} \quad m \geq - \left\lfloor \frac{s(j-1)}{r} \right\rfloor.
					\end{equation} 
					Then, the image $\mathcal{I}^q = \rho_q(\mathcal{I}_{U_{r,q}^\hslash})$ 
					is an $q$-deformed Airy ideal, which we call the  $(r,s,q)$-Airy structure.
				\end{theorem}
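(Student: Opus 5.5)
The proof will follow the three-step scheme summarized after Lemma \ref{l:hom}, and then invoke Lemma \ref{l:uea}. We check its three hypotheses (closure, leading-order form, boundedness) for the ideal generated by the $W^{\hslash,j}_m$ with $m\ge -\lfloor s(j-1)/r\rfloor$.

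\textbf{Step 1: matching the ideal to a partition.} We first show that, under $r\equiv\pm1 \pmod s$, the partition $\lambda$ listed just before the theorem satisfies
\begin{equation*}
1-\lambda(j) = -\left\lfloor \tfrac{s(j-1)}{r}\right\rfloor \quad\text{for all } j\in[r].
\end{equation*}
For $s=1$ we take $\lambda=(r)$, so $\lambda(j)=1$ and both sides vanish. For $s=r+1$ we take $\lambda=(1,\dots,1)$, so $\lambda(j)=j$ and $\lfloor (r+1)(j-1)/r\rfloor=j-1$ for $j\le r$.

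For $2\le s\le r-1$ we write $r=r's+r''$ with $r''\in\{1,s-1\}$. Here $\lambda(j)$ is the smallest $t$ with $\mu_t\ge j$, where $\mu_t=t(r'+1)$ for $t\le r''$ and $\mu_t=tr'+r''$ otherwise. We must show that this $t$ equals $1+\lfloor s(j-1)/r\rfloor$. Equivalently, $\mu_{t-1}<j\le \mu_t$ should hold exactly when $(t-1)r\le s(j-1)<tr$. This is an elementary floor-function check, split into the two cases $r''=1$ and $r''=s-1$. The congruence hypothesis is exactly what makes the breakpoints $\lceil tr/s\rceil$ coincide with the partial sums $\mu_t$.

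With this identity, the ideal in the statement is the homogenization of $\mathcal{I}_{U_{r,q}}(\lambda)$ from Proposition \ref{p:leftideals}, by the discussion in Subsection \ref{ssd}. Lemma \ref{l:hom} then gives condition (i) of Lemma \ref{l:uea}:
\begin{equation*}
[\mathcal{I}_{U^\hslash_{r,q}},\mathcal{I}_{U^\hslash_{r,q}}]\subseteq\hslash^2\mathcal{I}_{U^\hslash_{r,q}}.
\end{equation*}

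\textbf{Step 2: the leading-order form.} Applying $\rho_q$ of \eqref{eq:rep} gives
\begin{equation*}
\rho_q(W^{\hslash,j}_m)=\hslash J^{(q)}_{rm+s(j-1)}+O(\hslash^2)
\end{equation*}
in the allowed range $m\ge-\lfloor s(j-1)/r\rfloor$. We will check that the map $(j,m)\mapsto a=rm+s(j-1)$ is a bijection onto $\N^*$, apart from the single generator $(j,m)=(1,0)$. That generator has $a=0$ and $J^{(q)}_0=0$, so it acts as a scalar; it is dropped, or set to zero as the weight-zero condition. Bijectivity holds because $\gcd(r,s)=1$ makes $s(j-1)$, $j\in[r]$, run over all residues mod $r$. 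The lower bound on $m$ then says precisely $a\ge0$.

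Then $\rho_q(W^{\hslash,j}_m)=\hslash[a]_q\partial_{x_a}+O(\hslash^2)$. After rescaling by $[a]_q^{-1}$, which is invertible as a formal series in $\hslash$, this has the form \eqref{eq:form}. The derivative $\partial_{x_a}$ should be read as the $q$-derivative $\mathcal{D}_{q,a}$ consistent with \eqref{eq:Js}, and we take $p_a=0$. The conjugation by $\hat T^q_s$ shifts $x_s$ by a constant. The $O(\hslash^2)$ remainder has polynomial coefficients of bounded degree, because each $\mu_q(W^{\hslash,i}_k)$ is polynomial of degree $i$ in the $J^{(q)}$.

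\textbf{Step 3: boundedness — the main obstacle.} We need that for a fixed differential monomial $\partial_{x_{a_1}}\cdots\partial_{x_{a_m}}$ and fixed $\hslash$-order, only finitely many $(j,m)$ contribute. We plan to use grading. Give degree $\pm b$ to $J^{(q)}_{\pm b}$; each term of $\mu_q(W^{\hslash,i}_k)$ then has total degree $rk$. After the shift, each extracted $J^{(q)}_s$ becomes a constant and lowers the degree by $s$. The number of such extractions is at most $i\le r$, which is bounded.

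A term with prescribed derivatives has degree at most $\sum a_l$ coming from the derivatives. It has at most $r$ factors of $x$, and the total degree must balance, so $k$ is bounded. This is where the $q$-factors $q^{-2m}$ in $\Psi^{(j)}_{r,q}$ could interfere: we must make sure they do not spoil momentum conservation $\sum p_l=rk$. Since they enter only as coefficients, conservation persists. Lemma \ref{l:uea} then concludes that $\mathcal{I}^q$ is a $q$-deformed Airy ideal.
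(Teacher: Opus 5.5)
Your argument is correct within the paper's framework, but it is organized differently from the paper's proof, and in some respects it is more complete.

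\textbf{Closure.} The paper's proof only says that $\rho_q$ preserves brackets and calls the range $m\ge-\lfloor s(j-1)/r\rfloor$ a ``parabolic subalgebra''. You instead use the three-step setup that the paper sketches before the theorem. You identify the range with $1-\lambda(j)$; your check that the partial sums are $\mu_t=\lceil tr/s\rceil$ is right, and this is where $r\equiv\pm1 \pmod s$ enters. You then invoke Proposition~\ref{p:leftideals}, Subsection~\ref{ssd} and Lemma~\ref{l:hom}. This is the argument that actually works for a nonlinear $\mathcal{W}$-algebra. Commutators of modes in the range are normally ordered products involving modes outside it, so closure must come from the cyclic module, not from a subalgebra property.

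\textbf{Boundedness and indexing.} You verify hypothesis (iii) of Lemma~\ref{l:uea}, which the paper never addresses. Your degree count is fine: removing a factor $J^{(q)}_{-s}$ raises rather than lowers the remaining degree by $s$, but this is harmless since there are at most $r$ extractions. Your index map $(j,m)\mapsto rm+s(j-1)$ is consistent with the display after \eqref{eq:rep} and with the stated range. The paper's proof instead switches to $rk+(s-r)(i-1)$ with a different lower bound.

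\textbf{Linear coefficient.} What the paper's route supplies and yours does not is the non-vanishing of the linear coefficient. You import coefficient $1$ from that unproved display. The actual coefficient is a nonzero constant times $i\Psi^{(0)}_{r,q}(-s,\dots,-s,rk+s(i-1))$, which the paper controls via Lemmas~\ref{l:Aq3_q} and~\ref{l:Aq4_q}. You can close this cheaply. At $q=1$, periodicity of $\theta$ reduces the last argument to $(i-1)s$, and Lemma~\ref{l:Aq4_q} with $\gcd(r,s)=1$ gives the value $(-1)^{i-1}r\neq0$. So the coefficient is an invertible series in $\hslash$ and can be absorbed along with $[a]_q$.

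\textbf{Shared caveat.} Like the paper, you read $[a]_q\partial_{x_a}$ as $\mathcal{D}_{q,a}$ while also treating $\hat T^q_s$ as a translation of $x_s$. These readings are incompatible, since $[\mathcal{D}_{q,s},x_s]$ is not a scalar. Strictly, what you obtain is the Airy form with ordinary derivatives and $q$-dependent coefficients.
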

				\begin{proof}
					{The proof consists in showing that the images of the modes $W^{\hslash,j}_m$ under the representation $\rho_q$ satisfy the axioms of a $q$-deformed Airy ideal in $\mathcal{D}_q^\hslash$.}
					
					{The first condition requires that the generators form a graded Lie subalgebra of the $q$-W-algebra. 
						\begin{itemize}
							\item[(i)] The commutation relations of the modes $W^{\hslash,j}_m$ in $U^\hslash_{r,q}$ are governed by the $q$-deformed structure constants. Since $\rho_q$ is a representation (as established in \eqref{eq:ww}), it preserves the Lie bracket $\big[ \cdot, \cdot \big]_{q}$ (the $q$-commutator).
							\item[(ii)] 
							The choice of the index range $m \geq - \lfloor \frac{s(j-1)}{r} \rfloor$ corresponds precisely to a parabolic subalgebra of the $q$-deformed $W$-algebra, denoted $\mathcal{A}(r,s,q)$. 
							\item[(iii)] The condition $r = \pm 1 \pmod{s}$ ensures that the partition $\lambda$ associated with the embedding \eqref{eq:embedding} is consistent, avoiding singular denominators in the $q$-Green kernel ${K}_q$. The requirement $W^1_0 = 0$ (setting $Q=0$) is necessary as it is a central element in the $q$-Heisenberg sector.
					\end{itemize}}
					
					{We must verify that each $\rho_q(W^{\hslash,j}_m)$ is at most linear in the $q$-annihilation operators and quadratic in the $q$-creation operators, after a possible shift. Let $J^{(q)}_p$ be the operators defined in \eqref{eq:Js}.
						\begin{itemize}
							\item[(a)]  A constant term appears only if all currents in \eqref{eq:ww} have modes $p_l$ summing to $rk$ such that they all vanish or cancel. However, the condition $s > 0$ and the bound on $m$ ensure that $rk + j(s-r) + r > s$, preventing any degree 0 terms from appearing in the stable case.
							\item[(b)]  The linear part of $\rho_q(W^{\hslash,j}_m)$ comes from terms where only one $J^{(q)}_p$ remains after $j-1$ contractions. Using the $q$-periodicity of the kernel $\Psi^{(j)}_{r,q}$, the leading term is:
							\begin{equation}\label{eq:q_linear}
								\rho_q(W^{\hslash,i}_k) = C_{i,k}(q) J^{(q)}_{rk + (s-r)(i-1)} + \mathcal{O}(\text{deg} \geq 2),
							\end{equation}
							where $C_{i,k}(q)$ is a $q$-deformed prefactor involving the $q$-kernel. Since $r$ and $s$ are coprime ($r = \pm 1 \pmod{s}$ implies $\gcd(r,s)=1$), the  Lemma~\ref{l:Aq3_q} and Lemma ~\ref{l:Aq4_q} ensures $C_{i,k}(q) \neq 0$.
					\end{itemize}}
					{For the ideal to be an Airy structure, the map $\Pi_{s,q}: (i, k) \mapsto rk + (s-r)(i-1)$ must be a bijection from the set of indices $\mathcal{I}_{r,s}$ to $\mathbb{Z}_{>0}$. 
						\begin{enumerate}
							\item The injectivity is guaranteed by the condition $\gcd(r,s)=1$. 
							\item The lower bound $m \geq - \lfloor \frac{s(j-1)}{r} \rfloor$ is exactly the threshold required to ensure that $rk + (s-r)(i-1) \geq 1$. 
							\item Specifically, for $i=1$, we need $rk > 0$, so $k \geq 1$. For $i \geq 2$, the condition $rk + (s-r)(i-1) > 0$ is equivalent to $k > (i-1) - \frac{s}{r}(i-1)$, which simplifies to $k \geq i-1 - \lfloor \frac{s(i-1)}{r} \rfloor$ due to the coprimality of $r$ and $s$.
						\end{enumerate}
						Thus, each $q$-derivative $\mathcal
						{D}_{q,x_p}$ (for $p > 0$) appears exactly once as the leading linear part of a generator. This satisfies the $q$-deformed Airy condition.}
				\end{proof}
				\begin{remark}
					{It is noteworthy that the results of Lemma~\ref{l:Aq3_q} and Lemma~\ref{l:Aq4_q} do not depend on the deformation parameter $q$. This independence stems from the fact that the $q$-deformation preserves the cyclic of the variables $x_m = \theta^m q^{-2m}$.}
					
					{Moreover, the $q$-dependent factors arising from the inverse powers $x_m^{-(i-1)s}$ and the elementary polynomials $e_{i-1}(\vec{{x}}^s[m])$ cancel each other exactly. Consequently, the combinatorial skeleton of the Airy structure remains classical, even though the representation $\rho_q$ acts on a $q$-deformed Fock space. This implies that the $(r,s,q)$-Airy structure is a robust deformation that maintains the same intersection-theoretic properties as the $(r,s)$ higher quantum Airy structure.}
				\end{remark}
				\begin{remark}
					\begin{enumerate}
						\item[(a)] In the limit $q\rightarrow 1$, we recovered the result given in {\cite[Theorem~4.9]{BBCCN18}}.
						\item[(b)]	
						Since $\mathcal{I}^q$ is a $(r,s,q)$-Airy ideal, there exists a unique $q$-deformed partition function $Z_q$ such that $\mathcal{I}^q Z_{q} = 0$. In concrete terms, this implies that $Z_q$ is annihilated by the shifted generators:
						\begin{equation}
							\rho_q(W_m^{\hslash,i}) Z_q = 0 \qquad \text{for $i \in [r], m \geq - \lfloor \frac{s(i-1)}{r} \rfloor$.}
						\end{equation}
						These  $q$-deformed differential constraints provide a recursive scheme to uniquely reconstruct the free energies $F^{q}_{g,n}$ constituting $Z_q=\exp\big(\sum\,\hslash^{2g-2+n}F^{q}_{g,n}\big)$. This procedure is the algebraic equivalent of the  $q$-deformed Topological Recursion applied to the quantum spectral curve associated with the $(r,s)$ family. In the limit $q\rightarrow 1$, this system reduces to the topological recursion on the classical spectral curves $x^{r-s} y^r - 1 = 0$ as demonstrated in \cite{BBCCN18}.
					\end{enumerate}
					\end{remark}
				
				\subsubsection{Generalized quantum deformed representations}
				{We can generalize the construction of the $q$-deformed Airy ideals in Theorem \ref{t:rsAs} by constructing more general representations $\rho_q: U^\hslash_{r,q} \to \mathcal{D}_{q}^\hslash$.
					The idea is to conjugate by more complicated deformed operators, instead of conjugating by $T^q_s$ given in \eqref{eq:rep} .}
				
				{We consider} a collection of $q$-deformed  complex numbers {given by:}
				\begin{equation}\label{eq:complex}
					F^{q}_{0,1}[-k]\,\text{ for } k \geq \min \{ s, r\} \,, \quad F^{q}_{\frac{1}{2},1}[-k]\,\text{ for } k > 0\,, \quad F^{q}_{0,2}[-k,-l] \,\text{ for } k,l > 0,
				\end{equation}
				with the {following } conditions $ F^{q}_{0,1}[-s] \neq 0$ and $ F^{q}_{0,2}[-k,-l] = F^{q}_{0,2}[-l,-k]$.
				\begin{definition}
					{We define the $q$-deformed  operators as follows:
						\begin{align}\label{eq:TPhi}
							\hat{T}^q
							&=
							\exp \bigg( \sum_k \left( \frac{1}{\hslash} F^{q}_{0,1}[-k] + F^{q}_{\frac{1}{2},1}[-k] \right) \frac{J^{(q)}_k}{[k]_q}\bigg) \,,
							\\
							\hat{\Phi}^q
							&=
							\exp \bigg( \frac{1}{2} \sum_{k,l > 0} F^{q}_{0,2}[-k,-l]\frac{J^{(q)}_k\, J^{(q)}_l}{[k]_q[l]_q} \bigg).
					\end{align}}
				\end{definition}
				Note that for $k,l > 0$, the modes $J^{(q)}_k$ and $J^{(q)}_l$ commute, making the symmetrized product implicit.
				We define a generalized representation $\rho_q':  U^{q}_{\hslash, r} \to \mathcal{D}^{q}_{\mathbb{N}^*}$ via conjugation:
				\begin{equation}\label{eq:repp}
					\rho_q'( W^{\hslash,i}_k) = \hat{\Phi}^{q} \hat{T}^{q} \mu_{q}(W^{\hslash,i}_k) \hat{T}^{q})^{-1} (\hat{\Phi}^{q})^{-1}.
				\end{equation}
				
				Then it is not too difficult to show that Theorem \ref{t:rsAs} generalizes to this new class of representations:
				
				\begin{proposition}\label{p:rsAsgen}
					Let $ r \geq 2$, and $ s \in [r+1]$ such that $ r = \pm 1 \pmod{s}$. Let $\rho_q': U^\hslash_{r,q} \to \mathcal{D}_{q}^\hslash$ be the generalized representation defined by the double conjugation in \eqref{eq:repp}. Let $\mathcal{I}_{U_{r,q}^\hslash} \in U^\hslash_{r,q}$ be the left ideal generated by the  modes 
					\begin{equation}
						W^{\hslash,j}_m \quad \text{with} \quad j \in [r] \quad \text{and} \quad m \geq - \left\lfloor \frac{s(j-1)}{r} \right\rfloor.
					\end{equation}
					Then, the image $\mathcal{I}^q = \rho_q'(\mathcal{I}_{U_{r,q}^\hslash})$
					is an $q$-deformed Airy ideal, which we call the  $(r,s,q)$-Airy structure.
				\end{proposition}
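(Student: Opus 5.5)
The strategy is to reduce Proposition \ref{p:rsAsgen} to Theorem \ref{t:rsAs} through Lemma \ref{l:uea}. The left ideal $\mathcal{I}_{U_{r,q}^\hslash}$ is identical to the one in Theorem \ref{t:rsAs}. By Subsubsection \ref{ssd} and Lemma \ref{l:hom}, it is the homogenization of $\mathcal{I}_{U_{r,q}}(\lambda)$ for the partition $\lambda$ fixed by $r\equiv\pm1 \pmod s$. Condition (i) of Lemma \ref{l:uea}, namely $[\mathcal{I}_{U_{r,q}^\hslash},\mathcal{I}_{U_{r,q}^\hslash}]\subseteq\hslash^2\mathcal{I}_{U_{r,q}^\hslash}$, therefore holds for the same reason as before and does not depend on the representation. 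Moreover, $\rho_q'$ is obtained from $\mu_q$ by conjugation with invertible elements of a completion of $\mathcal{D}_q^\hslash$, so it is again an algebra representation. What remains is to check conditions (ii) and (iii) for the generators $\rho_q'(W^{\hslash,i}_k)$.

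For (ii), I would write $\hat T^q=\hat T^q_s\,\hat T^q_{\rm rest}$. Here $\hat T^q_s$ carries the normalized term $F^q_{0,1}[-s]$; after rescaling $x_A$ I may take $F^q_{0,1}[-s]=-1$, using $F^q_{0,1}[-s]\neq0$. The factor $\hat T^q_{\rm rest}$ collects the terms $F^q_{0,1}[-k]$ with $k>s$ and all the terms $F^q_{\frac12,1}[-k]$. All of $\hat T^q_s$, $\hat T^q_{\rm rest}$ and $\hat\Phi^q$ are exponentials of functions of the commuting annihilation modes $J^{(q)}_k$, $k>0$. Conjugation by them therefore acts only on the creation modes:
\[
J^{(q)}_{-m}\mapsto J^{(q)}_{-m}+\tfrac{1}{\hslash}F^q_{0,1}[-m]+F^q_{\frac12,1}[-m]+\sum_{l>0}F^q_{0,2}[-m,-l]\tfrac{J^{(q)}_l}{[l]_q}
\]
up to the normalization constants coming from $[J^{(q)}_k,J^{(q)}_{-m}]$. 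The $\hat T^q_s$ part reproduces exactly the representation $\rho_q$ of \eqref{eq:rep}, which Theorem \ref{t:rsAs} shows has leading term $\hslash J^{(q)}_{rk+s(i-1)}$. The extra shifts by $F^q_{0,1}[-m]$ with $m>s$ produce additional $\hslash^{1}$-order terms. For the extra terms I would argue by degree counting in the grading $\deg J^{(q)}_p=p$, $\deg\hslash$ fixed by homogeneity: these extra contributions either have strictly higher degree or are $\hslash$-linear in annihilation modes $J^{(q)}_{p}$ with $p>rk+s(i-1)$. The shifts by $F^q_{\frac12,1}$ and $F^q_{0,2}$ carry an extra power of $\hslash$ and so contribute at $O(\hslash^2)$, or at order $\hslash$ only through terms linear in $x_A$, which are allowed as the $\hslash p_a(x_A)$ part. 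The leading $\hslash$-part of $\rho_q'(W^{\hslash,i}_k)$ is then an upper-triangular perturbation of $\hslash J^{(q)}_{\Pi(i,k)}$ with respect to the bijection $\Pi$ from Theorem \ref{t:rsAs}. Inverting this triangular matrix, which is possible in the completed algebra because each row involves only indices $\geq\Pi(i,k)$, yields a new generating set $H_a=\hslash\mathcal{D}_{q,a}+\hslash p_a(x_A)+O(\hslash^2)$.

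For (iii), boundedness follows because for fixed monomial type, momentum conservation $\sum p_l=rk$ in \eqref{eq:ww}, together with the fact that the shifts only lower the number of creation operators, allows only finitely many $(i,k)$ to contribute. The triangular inversion preserves this.

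The main obstacle is the triangularity step. I would need to check that the lower-order shifts $F^q_{0,1}[-k]$ with $k\geq\min\{s,r\}$, in particular when $r<s=r+1$, never generate constant terms or $\hslash^{1}$-linear terms in $J^{(q)}_p$ with $p<\Pi(i,k)$. My first attempt would be to show that each such contraction raises the total mode index by $k-s\geq0$. In the borderline case $k=r<s$ I would verify directly that the resulting term is proportional to an element already in the ideal, such as $W^1_0$-type contributions, so it can be absorbed.
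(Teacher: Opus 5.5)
Your route differs from the paper's and is more thorough. The paper gets commutator closure from the fact that $\mathrm{Ad}_{\hat\Phi^q\hat T^q}$ preserves brackets, combined with Theorem~\ref{t:rsAs}. It then simply asserts that $\hat\Phi^q$ acts as a linear change of Heisenberg modes, adding only degree $\geq 2$ terms and no constants, so that $\Pi_{s,q}$ survives ``up to a linear change of coordinates''. It never splits $\hat T^q$: it tacitly identifies $\hat T^q\mu_q(\cdot)(\hat T^q)^{-1}$ with the $\rho_q$ of \eqref{eq:rep}, and never examines the shifts by $F^q_{0,1}[-k]$ with $k\neq s$, or by $F^q_{\frac12,1}$. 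Your argument fills this in. You obtain condition (i) of Lemma~\ref{l:uea} from Lemma~\ref{l:hom}, independently of the representation. You factor $\hat T^q=\hat T^q_s\hat T^q_{\rm rest}$ and use momentum counting to show that shifts with $k>s$ only add $\hslash J^{(q)}_p$ with $p>\Pi(i,k)$, and you then re-choose generators triangularly. You also correctly locate the one place where naive triangularity fails ($s=r+1$, shift by $F^q_{0,1}[-r]$), which the paper's proof does not notice.

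Three steps need repair.

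First, $\hat\Phi^q$ does not leave the order-$\hslash$ part alone because of an extra power of $\hslash$. In the paper's normalisation, $\hat\Phi^q J^{(q)}_{-m}(\hat\Phi^q)^{-1}=J^{(q)}_{-m}+\sum_{l>0}F^q_{0,2}[-m,-l]\tfrac{m}{[l]_q}J^{(q)}_l$, with no $\hslash$. The correct reason is momentum conservation in \eqref{eq:ww}.
\begin{itemize}
\item An order-$\hslash$ term of $\hat T^q\mu_q(W^{\hslash,i}_k)(\hat T^q)^{-1}$ comes from the contraction-free part, with $i-1$ factors replaced by $F^q_{0,1}$-shifts of indices $m_l\geq\min\{s,r\}$.
\item A surviving creation mode $J^{(q)}_{-m}$ would then force $rk\leq -1-(i-1)\min\{s,r\}<-r\lfloor s(i-1)/r\rfloor$, which is excluded. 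The same count rules out $\hslash^0$ constants.
\item So the order-$\hslash$ part consists only of annihilation modes and $\hslash$-constants (from $F^q_{\frac12,1}$), on which $\hat\Phi^q$ acts trivially.
\end{itemize}

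Second, the borderline case is not about $W^1_0$ (indeed $\mu_q(W^{\hslash,1}_0)=0$). By \eqref{eq:ww}, $J^{(q)}_r\propto\hslash^{-1}\mu_q(W^{\hslash,1}_1)$, so the $F^q_{0,1}[-r]$-factor of $\hat T^q$ equals $\exp\big(c\,\hslash^{-2}\mu_q(W^{\hslash,1}_1)\big)$ for some constant $c$.
\begin{itemize}
\item Classically $[W^1_1,W^i_k]=(r-i+1)W^{i-1}_{k+1}$, so conjugation by this factor sends $W^{\hslash,i}_k$ to $\sum_{j\geq 0}c_{i,j}W^{\hslash,i-j}_{k+j}$.
\item For $s=r+1$ one has $k+j\geq -(i-j-1)=-\lfloor s(i-j-1)/r\rfloor$, and the $j=i$ term is $\delta_{k+i,0}=0$.
\item Hence this factor maps generators into the span of generators, and can be discarded before running your triangularity argument.
\item In the $q$-setting you must still check that $\ad_{W^1_1}$ maps $W^i_k$ into the span of the $W^{i-1}_{k+1}$. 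The $q$-factors in \eqref{eq:explicit} make this non-obvious, and the paper never establishes it.
\end{itemize}

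Third, for (iii): after $\hat\Phi^q$, creation modes become annihilation modes of arbitrary index. Finiteness therefore comes from the bound $rk\leq\sum(\text{annihilation indices})-\sum(\text{creation indices})$ of the target monomial, since every shifted or converted index enters with a minus sign. It does not come from counting creation operators.
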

				\begin{proof}
					{The proof relies on the stability of the $q$-deformed Airy structure axioms under the adjoint action of the intertwining operators $\hat{T}$ and $\hat{\Phi}$. We verify the two fundamental conditions for $\mathcal{I}^q = \rho_q'(\mathcal{I}_{U_{r,q}^\hslash})$.}
					{The graded Lie subalgebra property is stable under conjugation by any invertible operator in the $q$-deformed 
						algebra $\mathcal{D}_q^\hslash$. 
						\begin{itemize}
							\item[(i)] Let $W_m, W_n \in \mathcal{I}_{U_{r,q}^\hslash}$ be two modes. Since $\rho_q'$ is a representation defined by $\rho_q'(W) = (\hat{\Phi} \hat{T}) \mu_q(W) (\hat{\Phi} \hat{T})^{-1}$, the commutator map is preserved:
							\begin{equation}
								\big[\rho_q'(W_m), \rho_q'(W_n)\big] = \text{Ad}_{\hat{\Phi} \hat{T}} \left( \big[\mu_q(W_m), \mu_q(W_n)\big] \right).
							\end{equation}
							\item[(ii)] As established in Theorem~\ref{t:rsAs}, the selected modes $W^{\hslash,j}_m$ form a $q$-graded Lie subalgebra in the standard representation. Therefore, their image under $\rho_q'$ remains a $q$-graded Lie subalgebra.
					\end{itemize}}
					{The second condition requires the operators $H^{\hslash,i}_k = \rho_q'(W^{\hslash,i}_k)$ to have a vanishing constant part and a bijective linear part. We examine the effect of $\hat{\Phi}$ on the degree filtration:
						\begin{equation}\label{eq:q_phi_approx}
							\hat{\Phi} \left( \hat{T} \mu_q(W^{\hslash,i}_k) \hat{T}^{-1} \right) \hat{\Phi}^{-1} = \hat{T} \mu_q(W^{\hslash,i}_k) \hat{T}^{-1} + \mathcal{O}(\text{deg} \geq 2).
						\end{equation}
						\begin{itemize}
							\item[(iii)] The operator $\hat{\Phi}$ represents a $q$-linear transformation of the $q$-Heisenberg modes $J^{(q)}_p$. In the $q$-deformed setting, $\hat{\Phi}$ acts on the creation and annihilation operators without introducing degree 0 shifts (since $J^{(q)}_0 = 0$).
							\item[(iv)] In the representation $\mu_q$, the terms of degree 1 consist of single $J^{(q)}_p$ operators with $p > 0$. The conjugation by $\hat{\Phi}$ maps these linear terms to other linear terms in the $q$-Fock space.
							\item[(v)] If the transformation results in non-normal ordered products, the re-ordering process in $\mathcal{D}_q^\hslash$ involves $q$-commutators of the form $\big[J^{(q)}_p, J^{(q)}_l\big] = \hslash \delta_{p+l,0}$. However, since there are no degree 1 terms of the form $J^{(q)}_{-l}$ with $l > 0$ in the original $W^{\hslash,i}_k$ for the given range, no degree 0 (constant) terms are generated by the contraction.
					\end{itemize}}                                  
					{The linear part of $H^{\hslash,i}_k$ is obtained by applying the $q$-linear map $\hat{\Phi}$ to the linear part of the standard $q$-Airy structure. 
						\begin{itemize}
							\item[(vi)] The condition $r = \pm 1 \pmod{s}$ ensures that the transformation $\hat{\Phi}$ is non-singular on the relevant mode sectors. 
							\item[(vii)] The map $\Pi_{s,q}: (i, k) \mapsto rk + (s-r)(i-1)$ remains a bijection onto $\mathbb{Z}_{>0}$ after the conjugation, potentially up to a $q$-linear change of the coordinates $(x_p)_{p>0}$.
						\end{itemize}
						Thus, the operators $H^{\hslash,i}_k$ satisfy the conditions of a $q$-deformed Airy structure, effectively defining the $(r,s,q)$-Airy structure.}
				\end{proof}
				\subsection{Shifted $(r,s,q)$-Airy structures}
				\label{s:shiftedrs}
				
				The construction of the previous section can be naturally generalized by starting with highest weight vectors with non-zero weights. This gives rise to new quantum left ideals that can be used to construct quantum Airy structures. We continue by using the three-step approach.
				\subsubsection{Construction of quantum left ideals $\mathcal{I}_{U_{r,q}}(\lambda; S)$}
				
				We start by generalizing the previous
				construction, but now starting with highest weight vectors.
				\begin{lemma}\label{t:leftidealsshifted}
					Let $\lambda = (\lambda_1, \lambda_2, \ldots, \lambda_p)$ be an integer partition of $r$. For $d \in [p]$, we define a reduced ( or new partition) $\tilde{\lambda} =(\tilde{\lambda}_1, \ldots, \tilde{\lambda}_{p-d+1}) = (\mu_d, \lambda_{d+1} ,\ldots, \lambda_p)$, where $\mu_k = \sum_{i=1}^k \lambda_i$ denotes the partial sums.  Let $S_j \in \mathbb{C}$ be a set of shift parameters for $j \in [\mu_{d-1}]$ and $S_j =0$ for $j > \mu_{d-1}$.
					
					Let $\mathcal{I}_{U_{r,q}}(\tilde\lambda)$ be the  quantum left ideal generated by the shifted $q$-deformed modes \begin{equation}
						W^j_m - S_j \delta_{m,0}, \quad\mbox{with}\quad j \in [r]\quad \mbox{and}\quad m \geq 1 - \tilde{\lambda}(j).
					\end{equation}
					
					Then $U_{r,q}/ \mathcal{I}_{U_{r,q}}(\tilde \lambda)$ is a cyclic left quantum module generated by a non-zero vector $v$. Furthermore, the action of the lower modes is free in the following sense:
					\begin{equation}
						W^{\hslash, j}_m v \neq 0 \quad \text{for all } j \in [r] \text{ and } m < 1 - \tilde{\lambda}(j).
					\end{equation}
					Consequently, $W^{\hslash, j}_m \notin \mathcal{I}_{U_{r,q}}(\tilde\lambda)$ for all $m < 1 - \tilde{\lambda}(j)$ ensuring the minimality of the generating set.
				\end{lemma}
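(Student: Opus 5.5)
We construct an explicit module realizing the quotient and then argue exactly as in Proposition \ref{p:leftideals}, replacing the weight-zero vacuum of the first block by a highest weight vector of non-zero weight.

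\textbf{Step 1: merged embedding.} Apply Lemma \ref{l:embed} to the reduced partition $\tilde\lambda=(\mu_d,\lambda_{d+1},\ldots,\lambda_p)$. This gives
\begin{equation*}
q\text{-}\mathcal{W}(\mathfrak{gl}_r)\subset q\text{-}\mathcal{W}(\mathfrak{gl}_{\mu_d})\otimes\bigotimes_{k>d}q\text{-}\mathcal{W}(\mathfrak{gl}_{\lambda_k}),
\end{equation*}
where the first factor absorbs the first $d$ blocks of $\lambda$. Its generators are $X^{1,j}$ for $j\in[\mu_d]$.

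\textbf{Step 2: the cyclic vector.} In the first factor, take a highest weight vector $v_{q,1}$ with $X^{1,j}_m v_{q,1}=0$ for $m>0$ and $X^{1,j}_0 v_{q,1}=\sigma_j v_{q,1}$. In every other factor, take the weight-zero vacuum $v_{q,k}$, and set $v=v_{q,1}\otimes\cdots\otimes v_{q,p-d+1}$. One can realize $v_{q,1}$ concretely as a Fock vector of the $q$-Heisenberg algebra $\mathcal{H}_q(\mathfrak{gl}_{\mu_d})$ with zero-mode charges $J^i_0=\alpha_i$ for $i\le \mu_d$. The weights $\sigma_j$ are then the $q$-elementary symmetric functions of the $\alpha_i$, with the $q$-shifts $q^{-2(i-1)}$ coming from \eqref{eq:explicit}.

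\textbf{Step 3: annihilation conditions.} For $m\ge 1-\tilde\lambda(j)$, Lemma \ref{l:lambdann} (equivalently Lemma \ref{l:newpart}) says that every normally ordered monomial in \eqref{eq:embedding} for $W^j_m$ either contains a positive local mode, which kills $v$, or else all local modes are zero modes. The second case can only produce a nonzero result when $m=0$ and $j\le\mu_d$, which forces $\tilde\lambda(j)=1$. In that case the only surviving term is $X^{1,j}_0$, acting by $\sigma_j$, up to the $q$-factor $\prod_k q^{2\mu_{k-1}(m_k+j_k)}$, which is trivial for the first block.

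So $W^j_m v=S_j\delta_{m,0}v$, provided we choose the $\alpha_i$ so that $\sigma_j=S_j$ for $j\le\mu_{d-1}$ and $\sigma_j=0$ for $\mu_{d-1}<j\le\mu_d$. Solving this is the step I expect to be the main obstacle. We need a generic $\mu_d$-tuple whose top $\mu_d-\mu_{d-1}$ symmetric functions vanish. Setting $\alpha_i=0$ for $i>\mu_{d-1}$ does this, but the $q$-ordering of the shifted currents means the vanishing of the top $q$-symmetric functions must be checked using Lemma \ref{l:esp} applied to the split $(\mu_{d-1},\mu_d-\mu_{d-1})$. After that, solving $\sigma_j=S_j$ for $j\le\mu_{d-1}$ is the $q$-analogue of Vieta's formulas over $\mathbb{C}$ (algebraically closed), so the $\alpha_i$ exist.

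From this, $\mathcal{I}_{U_{r,q}}(\tilde\lambda)\subseteq\operatorname{Ann}(v)$. Hence $v$ generates a nonzero cyclic quotient of $U_{r,q}/\mathcal{I}_{U_{r,q}}(\tilde\lambda)$, and in particular the ideal is proper.

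\textbf{Step 4: freeness of lower modes.} For $m<1-\tilde\lambda(j)$, Lemma \ref{l:lambdann} produces a monomial in \eqref{eq:embedding} whose local modes are all strictly negative, or zero modes acting by scalars. By the $q$-PBW theorem (Theorem \ref{pbw}), the resulting vector in the Verma-type module is nonzero and linearly independent from the other contributions, exactly as in Proposition \ref{p:leftideals}. So $W^{\hslash,j}_m v=\hslash^jW^j_mv\neq0$, which gives $W^{\hslash,j}_m\notin\mathcal{I}_{U_{r,q}}(\tilde\lambda)$ and hence the minimality of the generating set.
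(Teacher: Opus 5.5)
Your Step 3 rests on a false claim. For $m\ge 1-\tilde\lambda(j)$ it is not true that every monomial of \eqref{eq:embedding}, taken with respect to the merged partition $\tilde\lambda$, either contains a positive local mode or consists only of zero modes. Lemma \ref{l:newpart} says instead that $m\ge 1-\tilde\lambda(j)$ iff $W^j_m$ is non-negative of level $d$ with respect to the \emph{original} partition $\lambda$: a positive mode, or at least $d$ distinct non-trivial zero modes among the $p$ blocks of $\lambda$. After the first $d$ blocks are merged, this still allows monomials in which the merged factor contributes one zero mode $X^{1,j_1}_0$ and every other non-trivial factor contributes a strictly negative mode.

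Already for $r=3$, $\lambda=(1,1,1)$, $d=2$, $\tilde\lambda=(2,1)$, this happens. The generator $W^3_{-1}$ (note $\tilde\lambda(3)=2$) contains $X^{1,2}_0X^{2,1}_{-1}$, and $W^3_{-1}v=\sigma_2X^{2,1}_{-1}v$. Nothing in your argument distinguishes $j\le\mu_{d-1}$ from $\mu_{d-1}<j\le\mu_d$; you set $\sigma_j=0$ there only to match the hypothesis $S_j=0$. So as written, your argument would equally show that all of $S_1,\dots,S_{\mu_d}$ may be chosen freely. That is false: in the same example (at $q=1$), $[W^1_1,W^3_{-1}]=W^2_0$ lies in the ideal, so a non-zero $S_2$ would make the ideal improper. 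The restriction of the shifts to $j\le\mu_{d-1}$ is exactly the content of the step you skipped.

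There are two ways to close the gap.
\begin{enumerate}
\item Keep the blocks of $\lambda$ separate, which is what the level-$d$ notion is designed for. Give blocks $1,\dots,d-1$ highest weights whose convolution yields the prescribed $S_1,\dots,S_{\mu_{d-1}}$ (factor $1+\sum_jS_jt^j$ over $\mathbb{C}$). Give blocks $d,\dots,p$ weight zero. Then any monomial with at least $d$ distinct non-trivial zero modes meets a weight-zero block and annihilates $v$.
\item Stay in your merged set-up and prove the missing bound directly. Suppose a monomial $X^{1,j_1}_0\prod_{k\in K}X^{k,j_k}_{m_k}$ occurs in $W^j_m$, with all $m_k<0$ and $K\neq\emptyset$ the set of non-trivial factors $k\ge2$. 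Then $j_1>\mu_{d-1}$. Otherwise $j\le\mu_{d-1}+\sum_{k\in K}\lambda_{d+k-1}\le\mu_{d+|K|}-\lambda_d\le\mu_{d+|K|-1}=\tilde\mu_{|K|}$, where $\tilde\mu_s=\mu_{d+s-1}$ are the partial sums of $\tilde\lambda$. This gives $\tilde\lambda(j)\le|K|\le-m\le\tilde\lambda(j)-1$, a contradiction. Hence $\sigma_{j_1}=0$ kills the term.
\end{enumerate}

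The Vieta/Fock step you flagged is not the real obstacle. Take $v_{q,1}$ to be the highest weight vector of a Verma module with the prescribed eigenvalues $\sigma_j$; this avoids the step entirely. It is also what the PBW freeness in your Step 4 requires, since a Fock realization is not free. The paper gives no separate proof of this lemma, and its sketch for Theorem \ref{c:leftidealsshifted} does not address this point either.
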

				
				Looking at the statement of the lemma, we notice that in the end, the ideal $\mathcal{I}_{U_{r,q}}(\tilde{\lambda}; S)$
				is almost entirely defined in terms of the new partition $\tilde{\lambda}$; the original partition only
				appears in the choice of non-zero weights. So we can think of the partition $\tilde{\lambda}$ as our
				starting point. After renaming $\tilde{\lambda}$ as $\lambda$ for simplicity, this leads to the following simpler
				reformulation of the result
				\begin{theorem}\label{c:leftidealsshifted}
					Let $\lambda = (\lambda_1, \lambda_2, \ldots, \lambda_p)$ be an integer partition of $r$. Let $S_j \in \mathbb{C}$  be shift parameters  for $j \in [\lambda_1-\lambda_2]$ and $S_j =0$ for $j > \lambda_1-\lambda_2$. Let $\mathcal{I}_{U_{r,q}}(\lambda)$ be the quantum left ideal in the $q$-deformed $\mathcal{W}$-algebra $U^{\hslash}_{r,q}$ generated by the shifted modes:
					\begin{equation}
						W^j_m - S_j \delta_{m,0}, \quad \mbox{with } j \in [r]\quad \mbox{and}\quad m \geq 1 - \lambda(j).
					\end{equation} 
					Then, the quotient $U_{r,q}/ \mathcal{I}_{U_{r,q}}(\lambda)$ is a cyclic quantum  left module generated by a non-zero vector $v_q$. Furthermore, the action of the remaining modes is non-trivial: 
					\begin{equation}
						W^{\hslash, j}_m v_q \neq 0, \quad \mbox{for all } j \in [r] \mbox{ and } m < 1 - \lambda(j).
					\end{equation}
					In particular, $W^{\hslash, j}_m \notin \mathcal{I}_{U_{r,q}}(\lambda)$ for all indices $(j,m)$ outside the generating range.
				\end{theorem}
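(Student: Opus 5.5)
The plan is to deduce Theorem \ref{c:leftidealsshifted} from Lemma \ref{t:leftidealsshifted} by choosing the original partition appropriately. Given the partition $\lambda=(\lambda_1,\dots,\lambda_p)$ of the statement, I will build an auxiliary partition $\Lambda$ of $r$ and an integer $d$ such that the reduced partition $\tilde\Lambda=(\mu_d(\Lambda),\Lambda_{d+1},\dots)$ equals $\lambda$, and such that the admissible shift range $[\mu_{d-1}(\Lambda)]$ equals $[\lambda_1-\lambda_2]$.

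\textbf{Choice of $\Lambda$.} If $\lambda_1>\lambda_2$, take $d=2$ and
\[
\Lambda=(\lambda_1-\lambda_2,\ \lambda_2,\ \lambda_2,\ \lambda_3,\dots,\lambda_p).
\]
This is a non-increasing partition of $r$ only when $\lambda_1-\lambda_2\ge\lambda_2$. When this inequality fails, the lemma must be applied with a different ordering convention for the blocks.

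In that case I would re-derive the lemma's construction directly. Its proof uses only the tensor-product embedding of Lemma \ref{l:embed}, and that embedding does not actually require the block sizes to be non-increasing. So I will allow a first block of size $\lambda_1-\lambda_2$ followed by a block of size $\lambda_2$. Merging these two blocks reproduces $\lambda_1$, and $\mu_{d-1}=\lambda_1-\lambda_2$. If $\lambda_1=\lambda_2$, no shifts are allowed, and the statement reduces to Proposition \ref{p:leftideals}.

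\textbf{Producing the vector $v_q$.} I take $v_q=v_{q,1}\otimes v_{q,2}\otimes\cdots$, where:
\begin{itemize}
\item[$\bullet$] $v_{q,1}$ is a highest weight vector of $q$-$\mathcal W(\mathfrak{gl}_{\lambda_1-\lambda_2})$ with zero-mode eigenvalues $X^{1,j}_0 v_{q,1}=a_j v_{q,1}$;
\item[$\bullet$] the remaining factors are weight-zero vacua.
\end{itemize}
Through \eqref{eq:embedding}, the zero modes $W^j_0$ for $j\le\lambda_1-\lambda_2=\lambda(1)$-block size act on $v_q$ by a triangular polynomial in the $a_j$. Its leading term is $a_j$ times a nonzero power of $q$, so the $a_j$ can be solved recursively to give any prescribed $S_j$.

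For modes $W^j_m$ with $m\ge 1-\lambda(j)$, I apply Lemma \ref{l:lambdann} in its level-$2$ form, i.e. Lemma \ref{l:newpart} with respect to $\Lambda$. This shows that every monomial in \eqref{eq:embedding} either contains a positive mode, or contains at least two nontrivial zero modes, one in each of the first two blocks. The second block is a vacuum, so such monomials vanish unless all of their nonzero indices lie in the first block. The only surviving contribution is then the $m=0$ term with all $j_k$ concentrated in the first block, which produces exactly $S_j$, and only for $j\le\lambda_1-\lambda_2$. Hence $(W^j_m-S_j\delta_{m,0})v_q=0$, and the ideal of the statement annihilates $v_q$. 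This makes the quotient cyclic and shows $v_q\ne0$.

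\textbf{Freeness, the main obstacle.} I must show $W^{\hslash,j}_m v_q\neq0$ for $m<1-\lambda(j)$. As in the proof of Proposition \ref{p:leftideals}, I will isolate in \eqref{eq:embedding} a monomial whose local modes are all strictly negative. By the PBW theorem (Theorem \ref{pbw}), this monomial contributes a basis vector of the Verma-type module, and no other monomial can cancel it.

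The delicate point is that the nonzero weights $a_j$ now let zero-mode monomials contribute nontrivially. I therefore need a grading argument: filter by total negative mode degree and choose the monomial of maximal PBW length. The weights only affect terms of lower length, so they cannot cancel the leading vector. Checking that such a purely negative monomial exists exactly when $m<1-\lambda(j)$ is the converse direction of Lemma \ref{l:lambdann}, which I will invoke directly. Finally, non-membership in the ideal follows immediately, since $\mathcal I$ annihilates $v_q$.
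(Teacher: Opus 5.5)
There is a genuine gap in the freeness step, and it comes from your choice of witness vector.

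\textbf{The witness fails.} Your $v_q$ lives in the tensor product over the \emph{refined} blocks $\Lambda=(\lambda_1-\lambda_2,\lambda_2,\lambda_2,\lambda_3,\dots,\lambda_p)$, so you have split the first block of $\lambda$. In \eqref{eq:embedding} taken with respect to $\Lambda$, a monomial whose local modes are all strictly negative needs as many blocks as it takes to reach $j$ using the sizes of $\Lambda$. This can exceed $\lambda(j)$. The converse of Lemma \ref{l:lambdann} gives you a purely negative monomial in the blocks of $\lambda$, not of $\Lambda$, and you conflate the two.

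Take $p\ge 2$, $\lambda_1>\lambda_2$ and $(j,m)=(\lambda_1,-1)$, which lies in the range $m<1-\lambda(j)=0$. No block of $\Lambda$ has size $\ge\lambda_1$, and a surviving monomial with $m=-1$ contains exactly one negative mode. So the only survivors are $X^{1,\lambda_1-\lambda_2}_0X^{k,\lambda_2}_{-1}$, with $k\ge 2$ a block of size $\lambda_2$. Hence $W^{\lambda_1}_{-1}v_q$ equals $S_{\lambda_1-\lambda_2}$ times a fixed vector. It vanishes whenever $S_{\lambda_1-\lambda_2}=0$, in particular for $S=0$, where the statement is just Proposition \ref{p:leftideals}(b) and is true. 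Concretely:
\begin{itemize}
\item[$\bullet$] $\lambda=(3,1)$ gives $W^3_{-1}v_q=S_2\bigl(X^{2,1}_{-1}+X^{3,1}_{-1}\bigr)v_q$;
\item[$\bullet$] $\lambda=(4,1)$ gives $W^4_{-1}v_q\propto S_3$. Here $\Lambda=(3,1,1)$ is a genuine partition, so this construction also cannot be what proves Lemma \ref{t:leftidealsshifted}.
\end{itemize}
Your PBW-length argument therefore has no leading purely negative term to protect.

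A smaller point: for non-decreasing $\Lambda$, Lemma \ref{l:newpart} is false. For $\lambda=(3,2)$, $\Lambda=(1,2,2)$, the mode $W^4_{-1}$ contains $X^{2,2}_0X^{3,2}_{-1}$, which has a single zero mode and no positive mode. Your annihilation conclusion survives only because any zero mode outside block $1$ hits a weight-zero vector.

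\textbf{The repair.} Do not split $\lambda_1$. Take $v_q=u\otimes v_{q,2}\otimes\cdots\otimes v_{q,p}$ in the embedding for $\lambda$ itself. Here $u$ is a highest weight vector of $q$-$\mathcal{W}(\mathfrak{gl}_{\lambda_1})$ with $X^{1,j}_0u=S_ju$, where $S_j=0$ for $j>\lambda_1-\lambda_2$, generating a module on which negative modes act freely. Use weight-zero vectors of the same kind in the other blocks. Then:
\begin{itemize}
\item[$\bullet$] A monomial survives on $v_q$ only if it has no positive mode and its only zero mode is some $X^{1,j_1}_0$ with $j_1\le\lambda_1-\lambda_2$.
\item[$\bullet$] A greedy count shows that for $j>\lambda_1$ such a monomial needs at least $\lambda(j)$ negative blocks. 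Hence $W^j_mv_q=0$ for $0>m\ge 1-\lambda(j)$, while $W^j_0v_q=S_jv_q$.
\item[$\bullet$] For $m\le-\lambda(j)$, the purely negative monomial in the blocks of $\lambda$ exists, exactly as in Proposition \ref{p:leftideals}. Any monomial producing the same PBW vector has the same negative factors, which already carry total spin $j$, so it is that monomial. Thus $W^j_mv_q\neq 0$.
\end{itemize}
This is, in substance, what the paper's direct (and much sketchier) proof does. It works with $\lambda$ throughout and points to an uncancelled creation term, never passing through Lemma \ref{t:leftidealsshifted} or a refined partition.
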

				\begin{proof}
					{The proof extends the structural analysis of $q$-$W$-algebra modules to the case of shifted generators. We consider the $q$-deformed left ideal $\mathcal{I}_{U_{r,q}}(\lambda)$ generated by $W^j_m - S_j \delta_{m,0}$.}
					
					{In our setting, the consistency of the ideal requires that the shifts $S_j$ do not conflict with the commutator relations. Following the logic of the classical case, let us denote $\lambda = (\lambda_1, \lambda_2, \dots, \lambda_p)$. The number of non-zero shifts is determined by the step between the components of the partition. For a partition $\lambda$ to be realized as a valid truncation of the $q$-Heisenberg field expansion, the non-zero shifts must satisfy $j \in [\lambda_1 - \lambda_2]$. This value is maximal because $\lambda_1 - \lambda_2$ represents the largest possible weight for a stable vacuum shift before the $q$-Wick contractions (governed by ${K}_q$) introduce inconsistencies in the $m=0$ sector.}
					
					{Let $v_q$ be the image of the identity in the quotient $U_{r,q}/ \mathcal{I}_{U_{r,q}}(\lambda)$. Since the ideal is defined by the action on a $q$-Fock space where $J^{(q)}_m v_q = 0$ for $m > 0$, the shifted modes $W^j_m$ for $m \geq 1 - \lambda(j)$ act as annihilation operators (plus a constant shift $S_j$). This ensures that $v$ is a well-defined non-zero cyclic vector.}

					{For any index $(j, m)$ such that $m < 1 - \lambda(j)$, the mode $W^{\hslash, j}_m$ contains at least one $q$-creation operator $J^{(q)}_p$ (with $p < 0$) in its $q$-normally ordered expansion that is not annihilated by the ideal. By the linear independence of the $q$-Heisenberg basis in $\mathcal{D}_q^\hslash$, we conclude:
						\begin{equation}
							W^{\hslash, j}_m v_q \neq 0 \implies W^{\hslash, j}_m \notin \mathcal{I}_{U_{r,q}}(\lambda).
						\end{equation}
						The result holds for the specified range, establishing the structure of the quantum left module.}
				\end{proof}
				\begin{remark}
					In essence, this result implies that, for any partition  $\lambda$ of $r$ and left ideal generated by the modes $W^j_m$ with $j \in [r]$ and $m \geq 1 - \lambda(j)$, we  possess the freedom to shift the zero-modes $W^j_0$ for $j \in [\lambda_1-\lambda_2]$. . These shifted modes generate a new left ideal while ensuring that all modes  $W^j_m$ with $m < 1 - \lambda(j)$ remain outside the ideal.
					
					Following the terminology introduced in \cite{BBCC21}, the zero-modes $W^j_0$ for $ j \in [\lambda_1-\lambda_2]$ are termed  \emph{extraneous}. In the $q$-deformed setting, this flexibility suggests that the $(r,s,q)$-Airy structure may be viewed as a point in a larger family of structures, where the extraneous modes act as coordinates or  allowing for deformations that preserve the cyclic nature of the associated module.
				\end{remark}
				\subsubsection{Determination of the quantum  homogenization of $\mathcal{I}_{U_{r,q}}(\lambda;S)$}
				Just as for the $(r,s,q)$-Airy ideals, the homogenization is easy to obtain. By the same argument as before, we conclude that the quantum homogenization  $\mathcal{I}_{U^\hslash_{r,q}}(\lambda;S) \subset U^\hslash_{r,q}$ is generated by the homogenization of the shifted modes, that is, by $W^{\hslash,j}_m(S) := \hslash^j (W^j_m- \delta_{m,0} S_j)$ for $j \in [r]$ and $m \geq  1-\lambda(j)$. Therefore, by Lemma \ref{l:hom}, we conclude that:
				\begin{equation}
					\big[ \mathcal{I}_{U_{r,q}^\hslash}(\lambda;S), \mathcal{I}_{U_{r,q}^\hslash}(\lambda;S)\big] \subseteq \hslash^2 \mathcal{I}_{U_{r,q}^\hslash}(\lambda;S).
				\end{equation}
				\subsubsection{Representation of $U^\hslash_{r,q}$ in $\mathcal{D}_{A,q}^\hslash$}
				For the modes $W^{\hslash,j}_m(S)$ with $m >0$, we use the same representation $\mu_q: U^\hslash_{r,q}\to \mathcal{D}_{q}^\hslash$ as before from \eqref{eq:ww}. We extend this representation to the shifted zero-modes by setting:
				\begin{equation}
					\mu_q(W^{\hslash,i}_0(S) ) = \mu_q(W^{\hslash,i}_0) - \sum_{n=1}^\infty \hslash^n S_{i,n},
				\end{equation}
				where the $S_{i,n} \in \mathbb{C}$ are constants for $i \in [\lambda_1-\lambda_2]$ and $S_{i,n} = 0$ for $i > \lambda_1-\lambda_2$. One can easily verify that mapping the formal shifts to the power series $\sum_{n=1}^\infty \hslash^n S_{i,n}$ preserves the algebraic structure, thereby yielding a well-defined representation of the universal enveloping algebra $U^\hslash_{r,q}$.
				
				As before, for $ s \in [r+1]$ and $s$ coprime with $r$,
				we  define a generalized representation  $\rho_q: U^\hslash_{r,q} \to \mathcal{D}_{q}^\hslash$ via conjugation by the $q$-deformed shift operator:
				\begin{equation}\label{eq:rep2}
					\rho_q( W^{\hslash,i}_k(S)) = T^q_s \mu_q(W^{\hslash,i}_k(S)) (T^q_s)^{-1}, \quad \mbox{where} \quad T^q_s = \exp \Big( -\frac{J^{(q)}_s}{[s]_q\hslash} \Big).
				\end{equation}
				By restricting to modes where $k  \geq - \lfloor \frac{s(i-1)}{r} \rfloor $, the expansion of the generators yields:
				\begin{equation}
					\rho_q( W^{\hslash,i}_k) = \hslash J^{(q)}_{r k + s (i-1)} + \hslash S_{i,1} + O(\hslash^2),
				\end{equation}
				which matches the required canonical form for an Airy structure.
				
				A partition  $\lambda$ of $r$ satisfying $1-\lambda(i) = - \lfloor \frac{s(i-1)}{r} \rfloor$ exists if and only if $r = \pm 1 \pmod{s}$. 
				\begin{itemize}
					\item 
					For $s=1$, the partition is simply $\lambda=(r)$. 
					\item For $2 \leq s \leq r-1$,  let $r = r' s + r''$ with $r'' \in \{1, s-1\}$.The partition is $\lambda = (\lambda_1,\ldots, \lambda_s)$ with
					\begin{equation}
						\lambda_1 = \ldots = \lambda_{r''} = r'+1, \qquad \lambda_{r''+1} = \ldots = \lambda_s = r'.
					\end{equation}
					\item For $s=r+1$, we obtain the partition  $\lambda = (1,1,\ldots,1)$.
				\end{itemize}
				We observe that, for $s \geq 2$, the shifting freedom depends on the residue of $r\pmod{s}$.
				\begin{itemize}
					\item
					If $r= 1 \pmod{s}$, then $\lambda_1 = \lambda_2+1$, implying that only the shift $S_{1,n}$ can be non-zero. Consequently, we can only shift the first zero-mode $W^1_0$. 
					\item if $r=-1 \pmod{s}$, with $s \geq 3$. Then,  $\lambda_1 = \lambda_2$. In this case, all shift parameters $S_{i,n}$ must vanish, and we recover the standard  $(r,s,q)$-Airy structures. 
				\end{itemize}
				The case $s=1$ is of particular interest. The partition is $\lambda = (r)$, which allows for the shifting of all zero-modes; thus, $S_{i,n}$ may be non-zero  for all $i \in [r]$ and $n \geq 1$. 
				
				To summarize these conditions, we define the notion of a set of $s$-consistent shifts:
				\begin{definition}\label{d:consistent}
					Let $S=\{ S_{i,n} \}_{i \in [r], n \in \mathbb{N}^*}$ be a set of complex numbers. We say that it is \emph{$s$-consistent} if the following two conditions are satisfied:
					\begin{enumerate}
						\item[(a)] If $s \geq 2$ and $r = 1 \pmod{s}$, then $S_{i,n} = 0$ for all $2 \leq i \leq r$, and:
						\item[(b)] If $s \geq 3$ and $r = -1 \pmod{s}$, then $S_{i,n} = 0$ for all $i \in [r]$.
					\end{enumerate}
				\end{definition}
				We then obtain the following theorem:
				\begin{theorem}\label{t:shifts}
					Let $ r \geq 2$, and $ s \in [r+1]$ such that $ r = \pm 1 \pmod{s}$. Let $\rho: U^\hslash_{r,q} \to \mathcal{D}_{q}^\hslash$ be the representation defined via the $q$-deformed shift conjugation in \eqref{eq:rep2}. Let $\mathcal{I}_{U_{r,q}^\hslash}(S) \in U^\hslash_{r,q}$ be the left ideal generated by the shifted modes $W^{\hslash,j}_m(S)$ with $j \in [r]$ and $m \geq - \lfloor \frac{s(i-1)}{r} \rfloor $, where the set of shifts $S$ is $s$-consistent.
					
					Let 
					$\mathcal{I}^q(S)=\rho_q(\mathcal{I}_{U_{r,q}^\hslash}(S))$ be the corresponding left ideal in the Weyl algebra $\mathcal{D}_{q}^\hslash$. Then, $\mathcal{I}^q(S)$ is an Airy ideal, which we define as the shifted $(r,s,q)$-Airy structure.The allowed shifts are characterized as follows:
					\begin{itemize}
						\item
						For $s=1$, all zero modes are shifted.
						\begin{equation}
							\rho_q(W^{\hslash,j}_0(S)) = \rho_q(W^{\hslash,j}_0) - \sum_{n=1}^\infty \hslash^n S_{j,n}.
						\end{equation}
						\item For $s \geq 2$ and $r = 1 \pmod{s}$, only the first zero-mode may be shifted:
						\begin{equation}
							\rho_q(W^{\hslash,j}_0(S)) = \rho_q(W^{\hslash,j}_0) - \delta_{j,1} \sum_{n=1}^\infty \hslash^n S_{1,n}.
						\end{equation}
						\item For $s \geq 3$ and $r = -1 \pmod{s}$, no shifts are permitted, and the structure reduces to the standard $(r,s,q)$-Airy structure.
					\end{itemize}
				\end{theorem}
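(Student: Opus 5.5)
The plan is to follow the same three-step scheme used for Theorem~\ref{t:rsAs}, feeding in the shifted left ideal of Theorem~\ref{c:leftidealsshifted}. We then check the two axioms of Definition~\ref{d:airy} through Lemma~\ref{l:uea}.

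\textbf{Step 1: fix the partition and the admissible shifts.} Let $\lambda$ be the partition of $r$ with $1-\lambda(j) = -\lfloor s(j-1)/r\rfloor$. As listed above, it exists exactly when $r=\pm1 \pmod s$: it is $(r)$ for $s=1$, $(1,\dots,1)$ for $s=r+1$, and otherwise the $s$-part partition with parts $r'+1$ and $r'$. Theorem~\ref{c:leftidealsshifted} says shifts of $W^j_0$ are allowed for $j\in[\lambda_1-\lambda_2]$.
\begin{itemize}
\item For $s=1$ there is a single part, $\lambda_2=0$, so every $j\in[r]$ may be shifted.
\item For $r=1\pmod s$ with $s\ge2$ we have $r''=1$, so $\lambda_1-\lambda_2=1$ and only $j=1$ may be shifted.
\item For $r=-1\pmod s$ with $s\ge3$ we have $r''=s-1\ge2$, so $\lambda_1=\lambda_2$ and no shift is allowed.
\end{itemize}
This is precisely $s$-consistency (Definition~\ref{d:consistent}), and it yields the three bullet points of the statement. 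Theorem~\ref{c:leftidealsshifted} then gives a cyclic module $U_{r,q}/\mathcal{I}_{U_{r,q}}(\lambda;S)$ in which the modes outside the generating range act freely. Hence the homogenized ideal is generated by $W^{\hslash,j}_m(S)$, and Lemma~\ref{l:hom} gives $[\mathcal{I},\mathcal{I}]\subseteq\hslash^2\mathcal{I}$.

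\textbf{Step 2: verify the representation and the leading-order form.} Extending $\mu_q$ by subtracting the central series $\sum_n\hslash^nS_{i,n}$ from the zero modes is compatible with all brackets, since constants are central. So $\mu_q$, and its conjugate $\rho_q$ by $T^q_s$, remain representations, and condition (i) of Lemma~\ref{l:uea} carries over.

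Conjugation fixes constants, so
\begin{equation}
\rho_q(W^{\hslash,i}_k(S)) = \rho_q(W^{\hslash,i}_k) - \delta_{k,0}\sum_{n\ge1}\hslash^nS_{i,n}.
\end{equation}
By the computation behind Theorem~\ref{t:rsAs}, this equals $\hslash J^{(q)}_{rk+s(i-1)} - \delta_{k,0}\hslash S_{i,1} + O(\hslash^2)$. The constant $\hslash S_{i,1}$ is a degree-zero polynomial, which is allowed in the $\hslash p_a(x_A)$ slot of \eqref{eq:form}. It is crucial here that the shifts start at $n\ge1$: a term $S_{i,0}$ would appear at order $\hslash^0$ and destroy condition (1).

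The index map $(i,k)\mapsto rk+s(i-1)$ is untouched by the shifts. It is therefore still a bijection from the admissible pairs onto $\mathbb{N}^*$, exactly as in Theorem~\ref{t:rsAs}. Boundedness is also unchanged, because the shifts only alter finitely many polynomial coefficients, namely those of the zero modes.

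\textbf{Step 3: conclude.} Lemma~\ref{l:uea} now applies and shows that $\mathcal{I}^q(S)$ is a $q$-deformed Airy ideal.

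\textbf{Main obstacle.} The delicate point is the claim in Step 1 that the shifted ideal is proper, i.e.\ that shifting by $S_j$ creates no constant in $[\mathcal{I},\mathcal{I}]$. Concretely, one needs that no commutator $[W^j_m,W^l_{-m}]$ with both modes in the generating range produces a zero mode $W^p_0$ with $p>\lambda_1-\lambda_2$ that carries a nonzero shift. We rely on Theorem~\ref{c:leftidealsshifted} for this. If a direct check is needed, the first thing to try is realising $v$ as a tensor product of a highest-weight vector of weight $S$ for the first block $\mathcal{W}(\mathfrak{gl}_{\lambda_1})$ with vacua of the other blocks, using Lemma~\ref{l:embed}. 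The condition $j\le\lambda_1-\lambda_2$ should be exactly what guarantees that these zero modes appear only through the first block.
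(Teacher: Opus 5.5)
Your architecture (the gap $\lambda_1-\lambda_2$ from Theorem~\ref{c:leftidealsshifted}, homogenization with Lemma~\ref{l:hom}, then the conjugated representation and Lemma~\ref{l:uea}) is the paper's, and your derivation of the three bullets from $s$-consistency is fine. The gap is in Step~2, at the generator $W^{\hslash,1}_0(S)$. This is the only shift permitted for $s\ge2$, $r=1\pmod s$, and one of the shifts for $s=1$. For $(i,k)=(1,0)$ the index $rk+s(i-1)$ is $0$. By \eqref{eq:ww} and \eqref{eq:Js}, $\mu_q(W^{\hslash,1}_0)=\frac{\hslash}{r}\Psi^{(0)}_{r,q}(0)\,J^{(q)}_0=0$, so
\[
\rho_q\big(W^{\hslash,1}_0(S)\big)=-\sum_{n\ge1}\hslash^nS_{1,n}
\]
is a pure constant with no $\hslash\mathcal{D}_{q,a}$ term. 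Your claim that $(i,k)\mapsto rk+s(i-1)$ is a bijection from the admissible pairs onto $\N^*$ ``exactly as in Theorem~\ref{t:rsAs}'' is therefore false. The proof of Theorem~\ref{t:rsAs} gets bijectivity only by taking $k\ge1$ for $i=1$, which is harmless there because $W^{\hslash,1}_0$ is represented by zero. The $\hslash p_a(x_A)$ slot can only absorb a constant that accompanies a derivative. Your remark that an $\hslash^0$ shift would be fatal applies, for $j=1$, at every order. If $S_{1,n_0}$ is the first non-zero shift, $\mathcal{I}^q(S)$ contains $\hslash^{n_0}$ times a unit of $\C[[\hslash]]$, so it annihilates no non-zero $Z_q$ and is not an Airy ideal. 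The root cause is that $W^1_0$ is central and $\rho_q$ sends it to $0$, so no representation with $J^{(q)}_0=0$ carries a vector on which $W^1_0$ has eigenvalue $S_1\neq0$. The real obstacle is therefore the representation, not the properness of the abstract ideal that you flagged.

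The paper's proof treats the degree-zero part differently: it requires $Q_j-\sum_n\hslash^nS_{j,n}=0$, i.e.\ the shift must be compensated by a constant already present in $\rho_q(W^{\hslash,j}_0)$. For $j\ge2$ your reading is the right one, since $\rho_q(W^{\hslash,j}_0)=\hslash J^{(q)}_{s(j-1)}+O(\hslash^2)$ has a genuine derivative. For $j=1$, however, compensation is forced. You must let the central zero mode act by the scalar $J^{(q)}_0\mapsto\sum_{n\ge1}\hslash^{n-1}S_{1,n}$, which is the charge $Q$ that the proof of Theorem~\ref{t:rsAs} sets to zero. Then $\rho_q(W^{\hslash,1}_0(S))\equiv0$ and it drops out of the generating set. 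You must then check that the charge does not spoil the other generators. After conjugation by $T^q_s$, a term containing $J^{(q)}_0$ reaches order $\hslash$ only if all its other factors are dilaton-shifted $J^{(q)}_{-s}$. Momentum conservation $-(i-1)s=rk$ with $\gcd(r,s)=1$ then forces $(i,k)=(1,0)$, so every other generator changes only at $O(\hslash^2)$. A $dz/z$ term such as $S_{1,1}\,dz/z$ in $\omega^q_{\frac12,1}$ of Definition~\ref{d:rsdefshift} is the signature of such a charge; it cannot come from conjugation by $\hat T^q$.

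A smaller point: Lemma~\ref{l:hom} applied to $W^j_0-S_j$ with $S_j\in\C$ puts the constant at order $\hslash^j$, not $\sum_{n\ge1}\hslash^nS_{j,n}$. To obtain shifts starting at order $\hslash$, treat the $S_j$ as formal central parameters of filtration degree $j$, homogenize, and only then specialize $\hslash^jS_j\mapsto\sum_n\hslash^nS_{j,n}$.
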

				\begin{proof}
					{The proof consists of verifying that the shifted ideal $\mathcal{I}^q(S)$ satisfies the axioms of a $q$-deformed Airy structure under the representation $\rho_q$. We proceed in three steps.}
					
					{The shifted generators are defined as $W^{\hslash,j}_m(S) = W^{\hslash,j}_m - \sigma_{j,m}(\hslash)$, where $\sigma_{j,m}(\hslash) = \sum_{n=1}^\infty \hslash^n S_{j,n} \delta_{m,0}$ are central constants in $\mathbb{C}[[\hslash]]$. 
						The commutator of any two shifted generators satisfies:
						\begin{equation}
							\big[W^{\hslash,i}_k - \sigma_{i,k}, W^{\hslash,j}_l - \sigma_{j,l}\big] = \big[W^{\hslash,i}_k, W^{\hslash,j}_l\big].
						\end{equation}
						From Theorem~\ref{t:rsAs}, the unshifted modes $W^{\hslash,j}_m$ form a $q$-graded Lie subalgebra $\mathcal{A}(r,s,q)$. Since the shifts only affect the zero-modes ($m=0$), and the zero-modes act as a weight-zero sector in the parabolic restriction, the $q$-commutator relations remain closed within the ideal $\mathcal{I}^q(S)$. Thus, the first axiom of the $q$-deformed Airy structure is satisfied.}
					
					{The representation $\rho_q$ maps the $q$-$\mathcal{W}$-algebra to the $q$-Weyl algebra $\mathcal{D}_q^\hslash$. An Airy ideal must contain no constant (degree 0) terms. 
						\begin{itemize}
							\item For $m > 0$, the operator $\rho_q(W^{\hslash,j}_m)$ is naturally of degree $\geq 1$ as it consists of $q$-normally ordered products of $q$-Heisenberg modes $J^{(q)}_p$ with at least one annihilation operator.
							\item For $m = 0$, the $q$-Wick contractions in the representation $\rho_q(W^{\hslash,j}_0)$ may produce a constant denoted $Q_j(q, \hslash)$. The shifts $S_{j,n}$ are specifically defined  to introduce stable $q$-deformations:
							\begin{equation}
								\text{deg}_0 \left( \rho_q(W^{\hslash,j}_0) - \sum_{n=1}^\infty \hslash^n S_{j,n} \right) = Q_j(q, \hslash) - \sum_{n=1}^\infty \hslash^n S_{j,n} = 0.
							\end{equation}
						\end{itemize}
						This ensures that all generators in $\mathcal{I}^q(S)$ have a vanishing degree 0 part.}
					
					{The restriction on the allowed shifts $S_{j,n}$ depends on the arithmetic relation between $r$ and $s$. This is determined by the partition $\lambda$ associated with the $(r,s)$-embedding and the gap $\lambda_1 - \lambda_2$ (cf. Theorem~\ref{c:leftidealsshifted}):
						\begin{itemize}
							\item For $s=1$, the partition is $\lambda = (r)$, which implies $\lambda_1 - \lambda_2 = r$. All $r$ zero-modes $W^{\hslash,j}_0$ for $j \in [r]$ can be shifted, providing $r$ degrees of freedom.
							\item For $s \geq 2$ and $r \equiv 1 \pmod{s}$, the partition structure implies $\lambda_1 - \lambda_2 = 1$. Consequently, only the first zero-mode $W^{\hslash,1}_0$ can be shifted. Shifting $W^{\hslash,j}_0$ for $j > 1$ would break the bijectivity of the linear part $\Pi_{s,q}$ from Theorem~\ref{t:rsAs}.
							\item For $s \geq 3$ and $r \equiv -1 \pmod{s}$, we have, $\lambda_1 = \lambda_2$, so the gap is $0$. No shifts are permitted ($S_{j,n} = 0$), and the structure reduces to the standard $(r,s,q)$-Airy structure to preserve the $q$-Airy filtration.
						\end{itemize}
						Under these $s$-consistency conditions, the linear part of the generators remains a bijection onto the $q$-Heisenberg space $\mathbb{Z}_{>0}$, completing the proof.}
				\end{proof}
				
				The $s=1$ case is particularly interesting. Since $\mathcal{I}^q(S)$ is an $q$-deformed Airy ideal, there exists a unique partition function $Z_q$ such that $\mathcal{I}^q(S) Z_q = 0$. Explicitly, the annihilation conditions are:
				\begin{equation}
					\rho_q(W^{\hslash,j}_m(S)) Z_q = 0 \qquad \mbox{for $j\in[r], m \geq 0$.} 
				\end{equation}
				Expanding this using the definition of the shifted modes, we obtain:
				\begin{equation}
					\rho_q(W^{\hslash,j}_m) Z_q = \left(\sum_{n=1}^\infty \hslash^n S_{j,n} \right) Z_q \qquad \mbox{for $j\in[r],\, m \geq 0$.} 
				\end{equation}
				Consequently, we can interpret the $q$-deformed partition function $Z_q$ for the shifted $(r,s,q)$-Airy structures as being a highest weight vector for for the $q$-deformed $\mathcal{W}$-algebra $\mathcal{W}_q(\mathfrak{gl}_r)$ at self-dual level. In this representation, the zero-modes $W^j_0$ do not simply annihilate the vacuum but rather act diagonally, with eigenvalues prescribed by the $s$-consistent shift parameters $S_{j,n}$.
				\begin{remark}\label{r:extra}
					The Theorem \ref{t:shifts} implies that for $(r,s,q)$-Airy structures only exhibit zero-modes exist exclusively in the cases $s=1$ or $r = 1 \pmod{s}$. Specifically:
					\begin{itemize}
						\item 
						For $s=1$, all zero modes $W^j_0$, $j \in [r]$, are not required, which allows for a maximal set of shift parameters..
						\item for $r=1 \pmod{s}$  $W^1_0$ is the only residual zero mode.
					\end{itemize}
				\end{remark}
				\subsubsection{Extended $q$-deformed representations}
				Just as with the standard $(r,s,q)$-Airy structures, we can construct more general shifted $(r,s,q)$-Airy structures via conjugation. We define the generalized representations
				$\rho_q': U^\hslash_{r,q} \to \mathcal{D}_{q}^\hslash$  through the composite conjugation:
				\begin{equation}\label{eq:rep2p}
					\rho_q'( W^{\hslash,i}_k(S^q)) = \hat{\Phi}^q\hat{T}^q \mu^q(W^{\hslash,i}_k(S^q)) (\hat{T}^q)^{-1} (\hat{\Phi}^q)^{-1}.
				\end{equation}
				where $\hat{T}^q$ and $\hat{\Phi}^q$ are the $q$-deformed operators defined in \eqref{eq:TPhi}.
				
				Following the logic established in \cite{BBCCN18} and \cite{BKS23}, the results of Theorem \ref{t:shifts} extend to this broader class of representations:
				
				\begin{proposition}\label{p:shiftsgen}
					Let $ r \geq 2$, and $ s \in [r+1]$ such that $ r = \pm 1 \pmod{s}$. Let $\rho_q': U^\hslash_{r,q} \to \mathcal{D}_{q}^\hslash$ be the representation defined in \eqref{eq:rep2p}. Let $\mathcal{I}_{U_{r,q}^\hslash}(S) \in U^\hslash_{r,q}$ be the left ideal generated by the shifted modes $W^{\hslash,j}_m(S)$ with $j \in [r]$ and $m \geq - \lfloor \frac{s(i-1)}{r} \rfloor $, where the set of shifts $S$ is $s$-consistent.
					Then, the image  $\mathcal{I}^q(S)=\rho_q'(\mathcal{I}_{U^\hslash_{r,q}}(S))$ in $\mathcal{D}^\hslash_{q}$  is an Airy ideal, which we refer to as the deformed and shifted $(r,s,q)$-Airy structure.
				\end{proposition}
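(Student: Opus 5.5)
The plan is to reduce Proposition \ref{p:shiftsgen} to Theorem \ref{t:shifts}, in the same way that Proposition \ref{p:rsAsgen} was reduced to Theorem \ref{t:rsAs}. The point is that the representation \eqref{eq:rep2p} is obtained from the one in \eqref{eq:rep2} by conjugation with invertible operators in the completed algebra $\widehat{\mathcal{D}}_q^\hslash$. I would write $\hat{T}^q = \hat{T}^q_s\,\hat{T}^q_{\mathrm{rest}}$. Here $\hat{T}^q_{\mathrm{rest}}$ collects the terms $F^q_{0,1}[-k]$ with $k\neq s$ (after normalising $F^q_{0,1}[-s]$ to match $T^q_s$), together with all the $F^q_{\frac12,1}[-k]$. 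Since every $J^{(q)}_k$ with $k>0$ is a $q$-derivative, all these exponentials commute with one another. This gives
\[
\rho_q'(W^{\hslash,i}_k(S)) = \hat{\Phi}^q\,\hat{T}^q_{\mathrm{rest}}\,\rho_q(W^{\hslash,i}_k(S))\,(\hat{T}^q_{\mathrm{rest}})^{-1}(\hat{\Phi}^q)^{-1}.
\]
The shifts $\sum_n \hslash^n S_{j,n}$ are central scalars, so they pass unchanged through the conjugation.

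First, the closure axiom. Conjugation by an invertible operator is an algebra automorphism, so
\[
[\mathcal{I}^q(S),\mathcal{I}^q(S)] = \mathrm{Ad}\big([\rho_q(\mathcal{I}_{U^\hslash_{r,q}}(S)),\rho_q(\mathcal{I}_{U^\hslash_{r,q}}(S))]\big) \subseteq \hslash^2\,\mathrm{Ad}(\rho_q(\mathcal{I}_{U^\hslash_{r,q}}(S))).
\]
This inclusion follows from the homogenisation result (Lemma \ref{l:hom}) applied to $\mathcal{I}_{U^\hslash_{r,q}}(\lambda;S)$. The ideal property and the existence of a generating set likewise transfer along the automorphism.

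Second, the form axiom \eqref{eq:form}. Because the $J^{(q)}_k$ with $k>0$ act as $q$-derivatives, conjugating $J^{(q)}_{-m}$ (multiplication by $m x_m$) by $\hat{T}^q_{\mathrm{rest}}$ adds a scalar of the form $\hslash^{-1}F^q_{0,1}[-m]+F^q_{\frac12,1}[-m]$ times a unit. Conjugating it by $\hat{\Phi}^q$ adds a linear combination $\sum_l F^q_{0,2}[-m,-l]\,J^{(q)}_l$ of annihilation modes. Annihilation modes are left unchanged. I then substitute these transformed modes into the normally ordered expression \eqref{eq:ww} and track the powers of $\hslash$ term by term. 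The extra $\hslash^{-1}$ from $F^q_{0,1}[-k]$ with $k>s$ should raise the effective mode index without lowering the $\hslash$-order below the leading term $\hslash J^{(q)}_{rk+s(i-1)}$. This uses the same counting that produced that leading term for $T^q_s$, together with the condition $k\geq\min\{s,r\}$ in \eqref{eq:complex}. The $F^q_{\frac12,1}$ terms and the $\hat{\Phi}^q$ terms do not carry a $\hslash^{-1}$, so they only enter at order $\hslash^2$ or higher, except for a linear polynomial term $\hslash\,p_a(x_A)$, which Definition \ref{d:airy} permits. The constant part $\hslash S_{i,1}$ coming from Theorem \ref{t:shifts} is also absorbed into $p_a$.

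Third, the bijectivity of the leading part. The linear part is $\hslash J^{(q)}_{rk+s(i-1)}$ plus terms $\hslash J^{(q)}_{p}$ with $p>rk+s(i-1)$. This makes the matrix taking the leading terms to the $q$-derivatives upper-triangular with nonzero diagonal, so it is invertible on the bounded collection. Recombining the generators in $\widehat{\mathcal{D}}^\hslash_q$ then recovers exactly the form $\hslash\mathcal{D}_{q,a}+\hslash p_a+O(\hslash^2)$, and boundedness in the sense of Definition \ref{def:bounded} follows from the mode grading.

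I expect the $\hslash$-power counting in the second step to be the main obstacle: I must show that the $\hslash^{-1}$ terms coming from $F^q_{0,1}[-k]$ with $k>s$ never produce terms of order $\hslash^0$ or lower. I would first try the grading in which $\deg x_m = m$ and $\deg\hslash = r+s$, in the spirit of \cite{BBCCN18}, and then check that each generator remains bounded below in this grading. The $s$-consistency of $S$ enters only through Theorem \ref{t:shifts}.
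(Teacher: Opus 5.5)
Your strategy matches the paper's proof. You reduce to Theorem~\ref{t:shifts} and get closure from the fact that $\mathrm{Ad}_{\hat\Phi^q\hat T^q}$ is an automorphism, together with Lemma~\ref{l:hom}. You then check that no $\hslash^0$ terms appear and that the linear part stays bijective. On the last two points you are more precise than the paper. The paper asserts that the conjugation only adds terms of degree $\geq 2$, attributes the change of the linear part to $\hat\Phi^q$, and declares $\hat\Phi^q$ ``non-singular'' because $r=\pm1\pmod s$. You correctly place the change of the linear part in the extra $F^q_{0,1}[-k]$ of $\hat T^q$. Indeed $\hat\Phi^q$ fixes the $O(\hslash)$ part, which consists only of annihilation modes and constants. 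You also replace the vague non-singularity claim by a triangularity argument.

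The step you flag as the main obstacle is immediate and needs no grading. A term in which $a$ currents are replaced by $\hslash^{-1}F^q_{0,1}$-shifts has order $\hslash^{i-a}$. The case $a=i$ would force $rk=\sum_l p_l\leq -i\min\{s,r\}$, which contradicts $rk\geq -r\lfloor s(i-1)/r\rfloor$. Reordering after $\hat\Phi^q$ produces only central contractions, which carry no $\hslash^{-1}$. The grading you propose would not work anyway: with $\deg x_m=m$, $\hat T^q_s$ is homogeneous only if $\deg\hslash=-s$.

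The genuine gap is in your third step. You assume that every $k\neq s$ collected in $\hat T^q_{\mathrm{rest}}$ satisfies $k>s$, so that all corrections to the linear part have index $p>rk+s(i-1)$. But \eqref{eq:complex} allows $k\geq\min\{s,r\}$, and for $s=r+1$ this includes $k=r<s$. Replacing $b$ of the $i-1$ shifted currents by $\hslash^{-1}F^q_{0,1}[-r]$ gives linear terms $\hslash J^{(q)}_{rk+s(i-1)-b}$, which lie below the diagonal. For example, take $(r,s)=(2,3)$. The generator coming from $W^{\hslash,2}_0$, with leading term $\hslash J^{(q)}_3$, acquires a nonzero multiple of $\hslash F^q_{0,1}[-2]\,J^{(q)}_2$. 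But $J^{(q)}_2$ is the leading term of the generator coming from $W^{\hslash,1}_1$. The matrix then has entries on both sides of the diagonal, and your invertibility argument for infinite upper-triangular matrices no longer applies.

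A clean fix is to split off $\hat T_r=\exp\big(F^q_{0,1}[-r]J^{(q)}_r/([r]_q\hslash)\big)$. It commutes with $\hat\Phi^q$ and with the rest of $\hat T^q$, so $\mathcal{I}^q(S)=\mathrm{Ad}_{\hat T_r}(\mathcal{I}')$, where $\mathcal{I}'$ is the ideal built without it. Your triangular argument shows $\mathcal{I}'$ is Airy.
\begin{itemize}
\item The image of $W^{\hslash,1}_1$ is untouched by all the conjugations, so $\mathcal{I}'$ contains a nonzero multiple of $\hslash J^{(q)}_r$.
\item Closure then gives $[J^{(q)}_r/\hslash,P]=\hslash^{-2}[\hslash J^{(q)}_r,P]\in\mathcal{I}'$ for every $P\in\mathcal{I}'$.
\item Since $\mathrm{ad}_{J^{(q)}_r}$ lowers the degree in $x_r$, the exponential series terminates on each element. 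Hence $\mathrm{Ad}_{\hat T_r}$ and its inverse both map $\mathcal{I}'$ into itself, so $\mathcal{I}^q(S)=\mathcal{I}'$.
\end{itemize}
Alternatively, $s$-consistency forces $S=0$ when $s=r+1$. But citing Proposition~\ref{p:rsAsgen} in that case only moves the same issue to its proof.
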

				\begin{proof}
					{The proof consists in showing that the image of the shifted ideal under the generalized representation $\rho_q'$ satisfies the axioms of a $q$-deformed Airy structure. We decompose it into three main arguments.}
					
					{Firstly, the representation $\rho_q'$ is defined as the composition of the standard $q$-Heisenberg representation $\rho_q$ with the adjoint action of the intertwining operators: $\rho_q'(\cdot) = \text{Ad}_{\hat{\Phi}\hat{T}} \circ \rho_q(\cdot)$.
						\begin{itemize}
							\item[(a)] As given in Theorem~\ref{t:shifts}, the shifted modes $W^{\hslash,j}_m(S) = W^{\hslash,j}_m - \sigma_{j,m}$ generate a $q$-graded Lie subalgebra in $U^\hslash_{r,q}$.
							\item[(b)] Since the conjugation $\text{Ad}_{\hat{\Phi}\hat{T}}$ is an automorphism of the $q$-Weyl algebra $\mathcal{D}_q^\hslash$, which preserves the commutator structure:
							\begin{equation}
								\big[\rho_q'(W^{\hslash,j}_m(S)),\, \rho_q'(W^{\hslash,j}_n(S))\big] = \text{Ad}_{\hat{\Phi}\hat{T}} \left( \big[\rho_q(W^{\hslash,j}_m(S)),\, \rho_q(W^{\hslash,j}_n(S))\big] \right).
							\end{equation}
							Therefore, $\mathcal{I}^q(S)$ is closed under the bracket, satisfying the first Airy axiom.
					\end{itemize}}
					
					{For $\mathcal{I}^q(S)$ to be an Airy ideal, the generators must have no constant (degree 0) component in $\mathcal{D}_q^\hslash$.
						\begin{itemize}
							\item[(i)] By Theorem~\ref{t:shifts}, the $s$-consistency of the shifts $S$ ensures that the degree 0 part of the standard representation $\rho_q(W^{\hslash,j}_m(S))$ vanishes for all $j, m$ in the generating range.
							\item[(ii)] The operators $\hat{T}$  and $\hat{\Phi}$ are designed to respect the degree filtration. Specifically, $\hat{\Phi}$ maps $q$-annihilation operators to $q$-annihilation operators without introducing degree 0 shifts:
							\begin{equation}
								\rho_q'(W^{\hslash,j}_m(S)) = \rho_q(W^{\hslash,j}_m(S)) + \mathcal{O}(\text{deg} \geq 2).
							\end{equation}
							\item[(iii)] Re-ordering the resulting terms into normal order involves commutators $[J^{(q)}_p, J^{(q)}_l] = \hslash \delta_{p+l,0}$. However, the $s$-consistency conditions on $r \pmod{s}$ ensure that no such contractions produce a degree 0 term, as the indices $p, l$ are restricted by the parabolic structure.
					\end{itemize}}
					
					{The linear part of the generators in $\mathcal{I}^q(S)$ is obtained by applying the $q$-linear transformation $\hat{\Phi}$ to the linear part of the standard representation. 
						\begin{itemize}
							\item The condition $r = \pm 1 \pmod{s}$ guarantees that the map $\hat{\Phi}$ is non-singular on the relevant mode sectors of the $q$-Fock space.
							\item The shift parameters $S$ only modify the $m=0$ modes. Because the number of these shifts is strictly bounded by $\lambda_1 - \lambda_2$ (the gap of the associated partition $\lambda$), the linear map $\Pi_{s,q}$ remains a bijection from the set of generator indices $(j,m)$ to the set of $q$-annihilation modes $\mathbb{Z}_{>0}$.
						\end{itemize}
						Thus, the family of operators $\rho_q'(W^{\hslash,j}_m(S))$ satisfies the axioms of a $q$-deformed Airy structure, completing the proof.}
				\end{proof}
				\subsection{Additional shift parameters}
				
				In the previous section we showed that we can shift some zero modes to get new shifted $(r,s,q)$-Airy structures. But are we allowed to shift other modes that are not zero modes? The answer is no, because of the following simple lemma.
				
				\begin{lemma}\label{l:othershifts}
					Let $\lambda = (\lambda_1, \lambda_2, \ldots, \lambda_p)$ be an integer partition of $r$.  Fix a pair $(\alpha,\beta)$, with $\alpha \in [r]$ and $0 \neq \beta \in \mathbb{Z}$. Let $\mathcal{I}$ be the left ideal in $U_{q,r}$ generated by the modes \begin{equation}
						W^j_m - S  \delta_{j,\alpha} \delta_{m,\beta}\end{equation}
					for $j \in [r]$ and $m \geq 1 - \lambda(j)$, where $0 \neq S \in \mathbb{C}$. Then, $\mathcal{I} = U_{r,q}$. That is, the left ideal is not proper and the corresponding module is trivial.
				\end{lemma}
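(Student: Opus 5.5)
The strategy is to exhibit an element of $\mathcal{I}$ that is a non-zero scalar; then $1 \in \mathcal{I}$ and so $\mathcal{I} = U_{r,q}$. The scalar will come from a commutator. Since $\mathcal{I}$ is a left ideal, for generators $A,B$ of $\mathcal{I}$ the product $AB$ lies in $\mathcal{I}$. The product $BA$ also lies in $\mathcal{I}$, because $A \in \mathcal{I}$ and $B$ multiplies from the left. Hence $[A,B] \in \mathcal{I}$, and the commutator of two generators lies in the ideal.

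The key observation is that the mode $W^1_\beta$ is essentially the Heisenberg current $\sum_i J^i_\beta$. This follows from \eqref{eq:explicit} with $j=1$: there is a single factor, $k=1$, so no $q$-prefactor appears. In particular $W^1$ acts as a derivation-like grading operator, and we expect a relation of the form
\begin{equation*}
[W^1_n, W^j_m] = c_{j}(n,q)\, W^j_{n+m} + \delta_{j,1}\,\kappa(n,q)\,\delta_{n+m,0}.
\end{equation*}
Here $c_j(n,q)$ is a non-zero $q$-number (of the form $j[n]_q/n$ or similar, obtained from the $q$-Heisenberg relation $[J^i_n,J^k_m]_q=\frac{[n]_q}{n}\delta_{i,k}\delta_{n+m,0}$ applied to each factor in \eqref{eq:explicit}). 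The second term is the central term $\kappa(n,q) = r[n]_q/n$ for $j=1$.

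I would split into two cases.

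\textbf{Case $\beta>0$.} Choose $n \ge 1$ with $n \geq 1-\lambda(1)$ and $n-\beta \ge 1-\lambda(\alpha)$; any sufficiently large $n$ works. Both $W^1_n$ and $W^\alpha_{\beta-n}$ are then generators. Their shifts vanish, since $\beta-n\neq\beta$ and $n\neq\beta$ can be arranged. Commuting them, $[W^1_n,W^\alpha_{\beta-n}] \in \mathcal{I}$. Using the relation above, this commutator equals $c_\alpha(n,q)W^\alpha_\beta$ plus possibly a central term, and the central term is absent since $\beta\neq0$. But $W^\alpha_\beta \equiv S \pmod{\mathcal{I}}$, because $W^\alpha_\beta - S$ is a generator. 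Therefore $c_\alpha(n,q)\,S \in \mathcal{I}$, a non-zero scalar.

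\textbf{Case $\beta<0$.} Here $\beta \ge 1-\lambda(\alpha)$ forces $\alpha>\lambda_1\ge1$. I would instead commute the shifted generator $W^\alpha_\beta - S$ with $W^1_{-\beta}$, which is a generator since $-\beta>0$. The commutator is $[W^1_{-\beta},W^\alpha_\beta] = c_\alpha W^\alpha_0$, plus a central term only if $\alpha=1$, which is excluded here. Then I would need $W^\alpha_0 \in \mathcal{I}$ to produce a contradiction. This does not immediately give a scalar, so I would iterate. Commute $W^\alpha_\beta - S$ with $W^1_{n}$ for suitable $n>0$ to reach $W^\alpha_{\beta+n}\in\mathcal{I}$; then, using $[W^1_{-n'},\cdot]$ is not available since negative $W^1$ modes are not generators. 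The cleaner path is to use the zero mode $W^1_0$, which is a generator with vanishing shift since $\beta\ne0$. Because $[W^1_0,W^\alpha_\beta] = -\beta\, c' W^\alpha_\beta$ (grading by $L_0$-type action; for the $q$-case the eigenvalue is some non-zero $q$-number $\gamma_\beta$), we get $[W^1_0, W^\alpha_\beta - S] = \gamma_\beta W^\alpha_\beta \in \mathcal{I}$. Subtracting $\gamma_\beta(W^\alpha_\beta - S)$ yields $\gamma_\beta S\in\mathcal{I}$. This argument in fact works uniformly for both signs of $\beta$, so I would use it as the main proof, with the first case serving as a cross-check.

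The main obstacle is establishing rigorously the commutator $[W^1_0, W^j_m] = \gamma_{m}\, W^j_m$ with $\gamma_m\neq 0$ for $m\neq0$ in the $q$-deformed algebra. Classically $W^1_0$ commutes with everything, since $J_0$ is central. So a genuine grading operator must be used instead, for instance the Virasoro $L_0$ built from $W^1,W^2$, which lies in the ideal when its modes are generators. My first attempt would be to use $L_0 = \tfrac12(W^1W^1)_0 - W^2_0$ (suitably $q$-corrected), which lies in $\mathcal{I}$ up to the shifts allowed by the statement. I would check that $[L_0,W^j_m]=-m\,W^j_m$ persists after $q$-deformation via \eqref{eq:explicit}, where each $J^{i}_{m_k}$ carries $L_0$-weight $-m_k$ and the $q$-factors do not affect the total weight $\sum m_k=m$.
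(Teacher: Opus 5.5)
Your final strategy (find a grading operator inside $\mathcal{I}$, commute it with the generator $W^\alpha_\beta - S$, and subtract a multiple of that generator to isolate a non-zero scalar) is the same mechanism as the paper's one-line proof. The paper uses $W^2_0$ in place of your $L_0$; $W^2_0$ is an unshifted generator, since $1-\lambda(2)\le 0$ and $\beta\neq 0$. However, the proposal as written has two concrete defects.

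First, the relation $[W^1_n,W^j_m]=c_j(n,q)W^j_{n+m}+\dots$ is false. $W^1_n=\sum_i J^i_n$ has scalar brackets with the currents, so by the Leibniz rule applied to \eqref{eq:explicit} it removes one current from each monomial. Thus $[W^1_n,W^j_m]$ has spin $j-1$; at $q=1$ it equals $n(r-j+1)W^{j-1}_{n+m}$. Hence your Case $\beta>0$ and the $\beta<0$ variant produce $W^{\alpha-1}$-modes and no scalar. Moreover, Case $\beta>0$ requires $\beta-n\ge 1-\lambda(\alpha)$, which fails for large $n$, not the reverse. These cases should be discarded, not kept as a cross-check.

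Second, and affecting your main argument, $L_0=\tfrac12(W^1W^1)_0-W^2_0$ is not in $\mathcal{I}$ when $\alpha=1$ (which forces $\beta>0$). Write $(W^1W^1)_0=W^1_0W^1_0+2\sum_{k>0}W^1_{-k}W^1_k$. Since $W^1_{-\beta}W^1_\beta=W^1_{-\beta}(W^1_\beta-S)+SW^1_{-\beta}$, only $L_0-SW^1_{-\beta}$ lies in $\mathcal{I}$. Commuting this with $W^1_\beta-S$ picks up the central term $-S[W^1_{-\beta},W^1_\beta]=r\beta S$ (at $q=1$). The scalar that survives is then $(r-1)\beta S$, not $-\beta S$.

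This is exactly where $r\ge 2$ is used. For $r=1$ (the Heisenberg algebra), the left ideal generated by $J_m-S\delta_{m,\beta}$, $m\ge 0$, is proper: these elements commute and define a coherent-state module. So an argument that treats ``$L_0\in\mathcal{I}$ up to the shifts'' as harmless would prove a false statement.

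The paper's choice of $W^2_0$ avoids this case split. At $q=1$ one has $[W^2_0,W^1_\beta]=-(r-1)\beta W^1_\beta$ exactly. For $\alpha\ge 2$, again at $q=1$, the difference $W^2_0+L_0=\tfrac12(W^1W^1)_0$ has commutator with $W^\alpha_\beta$ lying in $\mathcal{I}$. So the two routes agree there, and $W^2_0$ has the advantage of being a single generator rather than an infinite sum whose membership in $\mathcal{I}$ must be argued.

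Finally, in the $q$-setting your derivation remark only shows that an operator $D$ with $[D,J^i_n]=-nJ^i_n$ acts on $W^j_m$ by $-m$. The real content is that such an operator lies in $\mathcal{I}$ modulo the corrections above. Neither you nor the paper derives this from \eqref{eq:explicit}.
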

				\begin{proof}
					This is a direct consequence of the fact that $W^2_0$ acts as a grading operator on $U_{r,q}$. Shifting any mode with $m\neq 0$ introduces a non-zero scalar into the ideal, rendering it improper.
				\end{proof}
				It follows from this simple lemma that the homogenization of $\mathcal{I}^q$ is the whole $q$-Rees universal enveloping algebra $U_{r,q}^\hslash$. Consequently, it is impossible to find a representation that maps its generators to operators of the required form in a $q$-Rees Weyl algebra, leading us to conclude that we cannot obtain Airy ideals in this way.
				
				\begin{remark}
					In light of Lemma \ref{l:othershifts}, redundant modes for $(r,s,q)$-Airy structures are restricted to zero-modes. This result, combined with Theorem \ref{t:shifts}, completes the classification of these modes.
				\end{remark}
				\section{Shifted $q$-loop equations and  $q$-topological recursion}
				\label{s:shiftedle}
				In this section, we demonstrate that the defining constraints of these structures are equivalent to shifted $q$-loop equations for a system of $q$-correlators defined on a new family of shifted $(r,s,q)$-spectral curves.
				
				Furthermore, we derive a recursive formula that solves these shifted $q$-loop equations. While this formula mirrors the structure of the $q$-deformed Topological Recursion, it incorporates appropriate shifts in the initial data and the resulting correlators. Consequently, we refer to this framework as Shifted $q$-Topological Recursion.
				\subsection{$q$-Spectral curves, $q$-loop equations and $q$-topological recursion}
				The construction of $q$-spectral curves, $q$-loop equations, and $q$-topological recursion constitutes the core of this section. These structures are built to be consistent with their classical counterparts, recovering the results of \cite{BBCKS, Bo24} in the limit $q \to 1$.
				
				We begin with the general definition of the local data:
				\begin{definition}\label{d:sc}
					A $q$-deformed \emph{admissible local spectral curve} $ \mc{S}_q = ( C, x, \omega^{q}_{0,1}, \omega^{q}_{\frac{1}{2},1}, \omega^{q}_{0,2})
					$ consists of:
					\begin{itemize}
						\item[(i)] A
						collection of small disks $ C = \bigsqcup_{j=1}^N C_j $ for some positive integer $N$.
						\item[(ii)] Holomorphic maps $ x \colon C_j \to \P^1 \colon z \mapsto z^{r_j} + x_j$, where the $ x_j \in \P^1$ are distinct (representing the centers of the local branch points).
						\item[(iii)] Two one-forms $\omega^{q}_{0,1}$ and $ \omega^{q}_{\frac{1}{2},1} $ , which possess the following expansions on each disk $C_j$
						\begin{align}
							\omega^{j,q}_{0,1} (z) 
							&= 
							\sum_{k \geq s_j} F^{j,q}_{0,1}[-k] z^{k-1} dz \,,
							\\
							\omega^{j,q}_{\frac{1}{2},1} (z) 
							&= 
							\sum_{k \geq 0} F^{j,q}_{\frac{1}{2},1}[-k] z^{k-1} dz \,,
						\end{align}
						where $ F^{j,q}_{0,1} [-s_j] \neq 0$ and $s_j \in [r_j+1] $ such that $ r_j = \pm 1 \pmod{s_j}$.
						\item[(iv)] A \emph{fundamental bidifferential of the second kind}
						\begin{equation}
							\omega^{j,q}_{0,2} \in H^0 ( C^2 ; K_C^{\boxtimes 2}(2 \Delta))^{\mf{S}_2},
						\end{equation}
						normalized with biresidue $ 1$ on the diagonal $\Delta \subset C^2 $.
					\end{itemize}
				\end{definition}
				Given a $q$-deformed admissible local spectral curve, we construct a specific basis of one-forms that facilitates the decomposition of higher-order correlators.
				\begin{definition}\label{d:xibasis}Let $\mathcal{S}_q$ be a $q$-deformed admissible local spectral curve. For each component $C_j$ with $j \in [N]$, we define a basis of one-forms:
					\begin{itemize}
						\item[(a)] Holomorphic basis ($k>0$):
						\begin{equation}
							\xi^{(j,q)}_{k} (z) \coloneqq z^{k-1} dz \,.
						\end{equation}
						\item[(b)] Polar basis ($k>0$):
						\begin{equation}
							\xi^{(j,q)}_{-k} (z) \coloneqq \Res_{w = 0} \left( \int^{w} \omega^{q}_{0,2}( \mathord{\cdot}, z) \right) \frac{1}{w^{k+1}} d w = \left( \frac{1}{z^{k+1}} + \text{holomorphic} \right) dz \,.
						\end{equation}
					\end{itemize}
				\end{definition}
				
				We also introduce the notation:
				\begin{definition}\label{d:fz}
					Let $\mc{S}_q$ be a $q$-deformed admissible local spectral curve. For each component $C_j$ with $j \in [N]$, we define the set of sheets $	\mathfrak{f}(z)$  of the branched covering $x: C_j \to \mathbb{P}^1$ near the ramification point $z=0$ as follows:
					\begin{equation}
						\mathfrak{f}(z) \coloneqq { z^{(k)} }_{k\in [r_j]} \quad \text{where} \quad z^{(k)} = \theta^k z, \quad \theta = e^{\frac{2 \pi i}{r_j}}.
					\end{equation} 
					The set $	\mathfrak{f}(z)$ represents the $r_j$ pre-images of a point $x$ in the neighborhood of the branch point $x_j$.
				\end{definition}
				
				The main object of study is a system of correlators.
				
				\begin{definition}\label{d:system}
					A \emph{system of correlators} on a $q$-deformed admissible local spectral curve $ \mathcal{S}_q$ is a collection $ \{ \omega^{q}_{g,n} \}_{g \in \frac{1}{2} \N, n\in \N^*}$ such that:
					\begin{itemize}
						\item[(a)] The unstable correlators $ \omega^{q}_{0,1}$, $ \omega^{q}_{\frac{1}{2},1}$, and $ \omega^{q}_{0,2}$ are the forms provided in the data of the $q$-deformed spectral curve $S_q$. \item[(b)] For all stable indices $(g,n)$ such that $2g-2+n > 0$, the $ \omega^{q}_{g,n}$   are meromorphic $n$-differentials on $ C^n$.
						\item[(c)]  These stable correlators are allowed to have poles only at the origins of the disks $ C_j$ (the ramification points), and these poles must have vanishing residues.
					\end{itemize} 
				\end{definition}

				We will single out particular systems of correlators that satisfy the projection property.
				
				\begin{definition}\label{d:projection} 
					Let $ \{ \omega^{q}_{g,n} \}_{g \in \frac{1}{2} \N, n\in \N^*}$ be a system of correlators on a $q$-deformed admissible local spectral curve $\mathcal{S}_q$. We say that the system satisfies the \emph{projection property} if for all stable indices $ 2g - 2 + n > 0$, the following identity holds:
					\begin{equation}
						\omega^{q}_{g,n} (z_{[n]}) = \sum_{j=1}^{N} \Res_{z = 0 \in C_j} \Big( \int_0^z \omega^{q}_{0,2} (\mathord{\cdot},z_1) \Big) \omega^{q}_{g,n} (z, z_2, \dotsc, z_n).
					\end{equation}
				\end{definition}
				
				It is straightforward to see that the basis of one-forms introduced in Definition \ref{d:xibasis} is naturally suited for studying systems of correlators that satisfy the projection property.
				\begin{lemma}\label{l:finiteness}
					Let $\{ \omega^{q}_{g,n} \}_{g \in \frac{1}{2} \N, n\in \N^*}$ be a system of correlators on a $q$-deformed admissible local  spectral curve $\mathcal{S}_q$. The system satisfies the projection property if and only if each stable correlator admits an expansion of the following form:
					\begin{equation}\label{eq:expansion}
						\omega^{q}_{g,n}(z_1,\ldots,z_n) = \sum_{j_1, \ldots, j_n \in [N]} \sum_{k_1, \ldots, k_n \in \mathbb{N}^*} F^{q}_{g,n} \begin{bmatrix} j_1 & \ldots & j_n \\ k_1 & \ldots & k_n \end{bmatrix} \xi_{-k_1}^{(j_1,q)}(z_1) \cdots  \xi_{-k_n}^{(j_n,q)}(z_n),
					\end{equation}
					where $F^{q}_{g,n}\begin{bmatrix} j_1 & \ldots & j_n \\ k_1 & \ldots & k_n \end{bmatrix}$ are constant coefficients and, for each $(g,n)$, only a finite number of these coefficients are non-zero.
				\end{lemma}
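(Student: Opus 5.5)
We prove the two implications separately, using the properties of the basis $\xi^{(j,q)}_{\pm k}$ from Definition \ref{d:xibasis}. The key observation is that the kernel $\int_0^z \omega^{q}_{0,2}(\cdot,z_1)$, expanded in $z$ near $0\in C_j$, reproduces the polar basis. Since $\omega^{q}_{0,2}$ has a double pole with biresidue $1$ on the diagonal, for $|z|<|z_1|$ (with $z_1$ on the same disk) we expect
\begin{equation}
\int_0^z \omega^{q}_{0,2}(\cdot,z_1) = \sum_{k\geq 1} \frac{z^{k}}{k}\,\xi^{(j,q)}_{-k}(z_1),
\end{equation}
and the same expansion (without any diagonal singularity) when $z_1$ lies on another disk. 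This follows by expanding the regular part of $\omega^{q}_{0,2}$ in $z$, integrating term by term, and comparing with the residue formula defining $\xi^{(j,q)}_{-k}$. The precise normalisation here must be fixed against Definition \ref{d:xibasis}. Establishing this identity cleanly is the step I expect to be the main obstacle: the $1/k$ and the possible $w^{-k-1}$ versus $w^{-k}$ shift have to match exactly.

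\textbf{Forward implication.} Assume the projection property holds. Fix $(z_2,\dots,z_n)$ and regard $\omega^{q}_{g,n}(z,z_2,\dots,z_n)$ as a one-form in $z$ on $C_j$. By Definition \ref{d:system} it is meromorphic with poles only at $0$ and has vanishing residue there. Its Laurent expansion at $0$ therefore has the form $\sum_{k\geq 1} c_k z^{-k-1}dz + (\text{holomorphic})$, where only finitely many $c_k$ are non-zero because the pole has finite order. Substitute this and the kernel expansion into the projection identity. The holomorphic part contributes no residue, and $\Res_{z=0} z^{k} z^{-m-1}dz=\delta_{km}$. This yields
\begin{equation}
\omega^{q}_{g,n}(z_1,\dots)=\sum_j\sum_k \frac{c^{(j)}_k(z_2,\dots,z_n)}{k}\,\xi^{(j,q)}_{-k}(z_1).
\end{equation}
The coefficients $c^{(j)}_k$ are themselves residues of $\omega^{q}_{g,n}$ paired against $z^{k}$, so they inherit the same structure in the remaining variables. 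The correlators are symmetric; if symmetry is not assumed, we instead apply the projection in each slot, using the fact that the coefficients are again meromorphic with poles only at the origins and zero residues. Iterating over $z_2,\dots,z_n$ then produces the expansion \eqref{eq:expansion}. Finiteness holds because the pole orders in each variable are bounded, so only finitely many $k_i$ occur for each $(g,n)$.

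\textbf{Converse implication.} Assume \eqref{eq:expansion} holds. By linearity it suffices to check the projection property on a single basis element, namely
\begin{equation}
\sum_{j'}\Res_{z=0\in C_{j'}}\Big(\int_0^z\omega^{q}_{0,2}(\cdot,z_1)\Big)\xi^{(j,q)}_{-k}(z)=\xi^{(j,q)}_{-k}(z_1).
\end{equation}
Because $\xi^{(j,q)}_{-k}$ is supported on $C_j$, only $j'=j$ contributes. There, $\xi^{(j,q)}_{-k}(z)=z^{-k-1}dz+(\text{holomorphic})$, and the holomorphic part is killed by the residue since the kernel vanishes at $z=0$. Inserting the kernel expansion then leaves exactly $\xi^{(j,q)}_{-k}(z_1)$. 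This is the same kernel identity used in the forward direction, so both implications reduce to that single normalisation computation.
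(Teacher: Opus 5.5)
The paper gives no argument beyond calling this straightforward. Your outline (Taylor-expand the kernel, pair residues, iterate) is the standard argument, but two steps fail as written.

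\textbf{The normalisation.} The step you flag as the main obstacle is actually a tautology, and your guess for it is wrong. Definition~\ref{d:xibasis} says exactly that $\xi^{(j,q)}_{-k}(z_1)$ is the $k$-th Taylor coefficient at $w=0$ of $w\mapsto\int^w\omega^{q}_{0,2}(\cdot,z_1)$; the base point only changes the constant term. Hence
\[
\int_0^w\omega^{q}_{0,2}(\cdot,z_1)=\sum_{k\geq 1}w^k\,\xi^{(j,q)}_{-k}(z_1),
\]
uniformly on any circle $|w|=\epsilon$ enclosing no pole of $\omega^{q}_{0,2}(\cdot,z_1)$. There is no factor $1/k$, and the diagonal behaviour is not used. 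With your $1/k$, the converse computation returns $\frac1k\xi^{(j,q)}_{-k}(z_1)$, not $\xi^{(j,q)}_{-k}(z_1)$ as you assert, and the forward coefficients are off by the same factor.

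The biresidue-$1$ condition of Definition~\ref{d:sc}(iv) enters elsewhere. It fixes the leading coefficient $1$ in $\xi^{(j,q)}_{-k}(z)=(z^{-k-1}+\text{hol})\,dz$, i.e.\ $\Res_{z=0}z^m\xi^{(j,q)}_{-k}(z)=\delta_{mk}$ for $m,k\geq1$, which is exactly what your converse uses. This matters here: for the explicit kernel $\frac{dz_1dz_2}{(z_1-qz_2)(z_1-q^{-1}z_2)}$, which has no diagonal pole, one gets $\xi^{q}_{-k}(z)=\frac{[k]_q}{k}\frac{dz}{z^{k+1}}$, and projecting $\xi^{q}_{-k}$ returns $\frac{[k]_q}{k}\xi^{q}_{-k}$.

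Also, $\xi^{(j,q)}_{-k}$ is not supported on $C_j$. Through the cross-component regular part of $\omega^{q}_{0,2}$ it is generally nonzero on the other disks. What kills the residues at $0\in C_{j'}$, $j'\neq j$, is that it is holomorphic there.

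\textbf{Symmetry.} The forward implication needs symmetry, and your fallback does not work. Definition~\ref{d:projection} projects only the first argument, and Definition~\ref{d:system} does not impose symmetry; the paper treats symmetry as something to prove, cf.\ Theorem~\ref{ShiftedTR}. Without symmetry the statement is false. Take $\omega^{q}_{g,2}(z_1,z_2)=\xi^{(1,q)}_{-1}(z_1)\,\xi^{(1,q)}_{1}(z_2)$. It satisfies Definition~\ref{d:system}, and by the converse computation in $z_1$ it satisfies the projection identity. Yet applying $\Res_{z_1=0}z_1(\cdot)$ to a putative expansion \eqref{eq:expansion} would make the nonzero holomorphic form $\xi^{(1,q)}_1(z_2)$ a finite combination of polar forms, which is impossible.

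So there is no ``projection in each slot'' available. You must add symmetry as an explicit hypothesis. With it the iteration works: write $\omega^{q}_{g,n}(w,z_2,\dots)=\omega^{q}_{g,n}(z_2,w,\dots)$, project in the first slot, and apply $\Res_{w=0}w^k$, which commutes with the finite sum. This gives \eqref{eq:expansion} with $F^{q}_{g,n}=\Res_{w_1=0}\cdots\Res_{w_n=0}\prod_i w_i^{k_i}\,\omega^{q}_{g,n}(w_1,\dots,w_n)$. Only finitely many are nonzero, because the pole order in each variable is uniformly bounded (the polar divisor lies in the coordinate hyperplanes).
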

				\begin{example}
					{To obtain the explicit form of $\omega^q_{0,2}(z_1, z_2)$, we compute the expansion \eqref{eq:expansion} for $g=0, n=2$. Then, we have:
						\begin{equation}\label{w2}
							\omega^{q}_{0,2}(z_1, z_2) = \sum_{j_1, j_2, k_1, k_2} F^{q}_{0,2} \begin{bmatrix} j_1 & j_2 \\ k_1 & k_2 \end{bmatrix} \xi_{-k_1}^{(j_1,q)}(z_1) \xi_{-k_2}^{(j_2,q)}(z_2).
						\end{equation}
						The coefficients $F^q_{0,2}$ are determined by the $q$-normal ordering constants of the $q$-Heisenberg algebra. In the basis of differentials $\xi_{-k}^{(q)}(z_1)$ and $\xi_{k}^{(q)}(z_2)$, the non-vanishing terms occur for $k_1 = -k_2 = k$. Thus, we get:
						\begin{equation}
							F^{q}_{0,2} \begin{bmatrix} k_1 & k_2 \end{bmatrix} = \big[J^{(q)}_{k_1}, J^{(q)}_{k_2}\big] = \delta_{k_1+k_2, 0} \frac{q^{k_1} - q^{-k_1}}{q - q^{-1}}.
						\end{equation}
						Substituting the basis differentials $\xi_{-k}^{(q)}(z) = z^{-k-1} dz$ and $\xi_{k}^{(q)}(z) = z^{k-1} dz$ into the relation\eqref{w2}, we obtain the following series expansion for the two-point function:
						\begin{equation}
							\omega^q_{0,2}(z_1, z_2) = \sum_{k=1}^{\infty} \left( \frac{q^k - q^{-k}}{q - q^{-1}} \right) z_1^{-k-1} z_2^{k-1} dz_1 dz_2.
						\end{equation}
						The above expression can be expressed into two independent geometric series. Then, we have:
						\begin{align}
							\omega^q_{0,2}(z_1, z_2) &= \frac{dz_1 dz_2}{(q - q^{-1})z_1^2} \left[ \sum_{k=1}^{\infty} q \left( \frac{q z_2}{z_1} \right)^{k-1} - \sum_{k=1}^{\infty} q^{-1} \left( \frac{q^{-1} z_2}{z_1} \right)^{k-1} \right]\nonumber\\&=\frac{dz_1 dz_2}{(q - q^{-1})z_1^2} \left[ \frac{q}{1 - \frac{qz_2}{z_1}} - \frac{q^{-1}}{1 - \frac{q^{-1}z_2}{z_1}} \right].
						\end{align}
						After computation, we recover the $q$-deformed Bergman kernel on the Riemann sphere:
						\begin{equation}
							\omega^{q}_{0,2}(z_1, z_2) = \frac{dz_1 dz_2}{(z_1 - q z_2)(z_1 - q^{-1} z_2)}.
						\end{equation}
						This confirms that the topological recursion based on the $(\hslash, q)$-Airy structure correctly reproduces the $q$-deformed geometry of the spectral curve.}
				\end{example}
				
				To formulate the loop equations, we define symmetric combinations of correlators that account for the different ways to distribute topological data $(g,n)$ over multiple sheets.
				\begin{definition}\label{d:EW}
					Let $ \{ \omega^{q}_{g,n} \}_{g \in \frac{1}{2} \N, n\in \N^*}$ be a system of correlators on a $q$-deformed admissible local  spectral curve $\mathcal{S}_q$.  For any $i \in \mathbb{N}^*$, we define the following objects:
					\begin{align}\label{partiallydisconnected}
						\mc{W}^{q}_{g,i,n} (z_{[i]} ; w_{[n]}) =& \sum_{\substack{P \vdash z_{[i]} \\ \bigsqcup_{S \in P} N_S = w_{[n]}\\ \sum_{S \in P} (g_S -1) = g-i}}\prod_{S \in P} \omega^{q}_{g_S, |S| + |N_S|}(S, N_S) \,, \\
						\mc{W}^{'q}_{g,i,n} (z_{[i]} ; w_{[n]}) \coloneqq& \sum'_{\substack{P \vdash z_{[i]} \\ \bigsqcup_{S \in P} N_S = w_{[n]}\\ \sum_{S \in P} (g_S -1) = g-i}}\prod_{S \in P} \omega^{q}_{g_S, |S| + |N_S|}(S, N_S) \,,
					\end{align}
					where 
					\begin{itemize}
						\item[(a)] The sum runs over set partitions $P$ of the points $z_{[i]} = {z^{(k_1)}, \dots, z^{(k_i)}}$ evaluted on different sheets.
						\item[(b)] Over all possible splittings of $w_{[n]}=w_1,\cdots,w_n$ into possibly empty disjoint subsets $N_S$ where $S$ runs over all parts of $P$ and $\bigsqcup_{S \in P} N_S = w_{[n]}$.
						\item[(c)] Over all sets of non-negative half-integers $\{ g_S \}_{S \in P}$ such that $\sum_{S \in P} (g_S - 1) = g-i$. The difference between the first and second object is that the prime over the summation symbol means that the terms with $\omega^{q}_{0,1}$ are omitted from the sum.
					\end{itemize}
				\end{definition}
				For each component $C_j$, with $j \in [N]$, and for each $i \in [ r_j]$, we also define the symmetric $i$-sheeted correlator combinations:
				\begin{equation}
					{\mathcal{E}^{i, (j)}_{g,n}(x; z_{[n]}) = \sum_{\substack{Z \subseteq \mf{f}(z) \\ |Z| = i}} \mc{W}^{q}_{g,i,n}(Z; z_{[n]}).}
				\end{equation}
				We can now define so-called $q$-loop equations, which are particular equations satisfied by systems of $q$-correlators.
				\begin{definition}
					A system of $q$-correlators $ \{ \omega^{q}_{g,n} \}_{g \in \frac{1}{2} \N, n\in \N^*}$  satisfies the $q$-\emph{loop equations} if, for all $j \in [N]$, $i \in [r_j]$, and stable indices $2g-2+n > 0$,
					\begin{equation}
						\mathcal{E}^{i,(j)}_{g,n} (x; z_{[n]}) \in  \mc{O} \left( x^{ \lfloor \frac{s_j(i-1)}{r_j}\rfloor + 1} \right) \left( \frac{dx}{x}\right)^i \,.
					\end{equation}
				\end{definition}
				
				The primary result of relevance is that the data of a $q$-deformed admissible local  spectral curve $\mathcal{S}_q$ uniquely determines its correlators through a recursive process.
				\begin{theorem}\label{UnshiftedTR}
					For a $q$-deformed admissible local  spectral curve $\mathcal{S}_q$, there exists exactly one system of $q$-correlators $\omega^{q}_{g,n}$ that satisfies both the $q$-loop equations and the $q$-projection property. These $q$-correlators are computed recursively by the $q$-\emph{topological recursion formula}:
					\begin{equation}
						\omega^{q}_{g,n+1}(z_0, z_{[n]}) = -\sum_{ j \in [N]} \Res_{z = 0 \in C_j} \sum_{Z \subseteq \mf{f}' (z)} K_q^{1 + |Z|}(z_0; z, Z) \mathcal{W}^{'q}_{g,1+|Z|,n}\big(z,Z; z_{[n]}\big)\,,
					\end{equation}
					where $ \mathbf{f}'(z) = x^{-1}(x(z)) \setminus \{z \}$ and the \emph{recursion kernels} are defined as:
					\begin{equation}
						K^{1+|Z|}_q (z_0; z, Z) = \frac{\frac{1}{2}\int_0^z \omega^{q}_{0,2} (\mathord{\cdot}, z_0)}{\prod_{z' \in Z} \big( \omega^{q}_{0,1}(z') - \omega^{q}_{0,1}(z) \big)} \, .
					\end{equation}
				\end{theorem}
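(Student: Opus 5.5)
The plan is to follow the classical argument of \cite{BBCCN18, BKS23} and split the statement into uniqueness and existence. Both parts use the decomposition of Lemma \ref{l:finiteness} and the correspondence between loop equations and $(r,s,q)$-Airy structures from Proposition \ref{p:rsAsgen}. Throughout, I treat each disk $C_j$ separately, since both the $q$-loop equations and the $q$-projection property are local at the ramification points. The recursion kernel then simply sums the residues over $j \in [N]$.

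\emph{Uniqueness and the recursion formula.} Fix a stable $(g,n+1)$ and isolate the terms of $\mathcal{E}^{i,(j)}_{g,n}(x;z_0,z_{[n]})$ that contain the unknown $\omega^{q}_{g,n+1}$.
\begin{itemize}
\item These are exactly the terms in which one sheet $z$ carries $\omega^q_{g,n+1}$ and every other sheet in $Z$ carries $\omega^q_{0,1}$.
\item Because the sum in $\mathcal{E}^{i,(j)}$ runs over all subsets, summing over $i$ with alternating signs, in the manner of the inclusion--exclusion used classically, expresses
\[
\omega^q_{g,n+1}(z,\cdot)\prod_{z'\in\mathfrak{f}'(z)}\bigl(\omega^q_{0,1}(z')-\omega^q_{0,1}(z)\bigr)
\]
as a combination of the $\mathcal{E}^{i,(j)}$ and of $\mathcal{W}'^{q}$-terms.
\item All $\mathcal{W}'^{q}$-terms involve only correlators of strictly smaller $2g-2+n$.
\end{itemize}
Next, I multiply by $\tfrac12\int_0^z\omega^q_{0,2}(\cdot,z_0)$ and take $\Res_{z=0}$. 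The $q$-projection property (Definition \ref{d:projection}) turns the left-hand side into $\omega^q_{g,n+1}(z_0,z_{[n]})$. The loop equation bound $\mathcal{O}\bigl(x^{\lfloor s_j(i-1)/r_j\rfloor+1}\bigr)(dx/x)^i$, combined with the vanishing order $s_j$ of $\omega^q_{0,1}$, should make the contribution of the $\mathcal{E}$'s residue-free. This leaves exactly the stated formula with kernel $K^{1+|Z|}_q$. Hence any solution is determined by lower data, which gives uniqueness by induction on $2g-2+n$.

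\emph{Existence.} Proving directly that the recursively defined $\omega^q_{g,n}$ satisfy the loop equations and are symmetric is hard, so I would instead go through Airy structures.
\begin{enumerate}
\item Take the $(r_j,s_j,q)$-Airy ideal of Proposition \ref{p:rsAsgen} on each disk. Its conjugation data $F^q_{0,1}, F^q_{\frac12,1}, F^q_{0,2}$ are read off from the expansions in Definition \ref{d:sc} and from $\omega^q_{0,2}$ in the basis $\xi^{(j,q)}_{\pm k}$ of Definition \ref{d:xibasis}.
\item Form the tensor product over $j\in[N]$. By Proposition \ref{t:airy} this gives a unique $Z_q$ with coefficients $F^q_{g,n}$.
\item Define $\omega^q_{g,n}$ by the expansion \eqref{eq:expansion}. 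The projection property then holds by Lemma \ref{l:finiteness}, and symmetry is automatic because the $F^q_{g,n}$ are coefficients of $Z_q$.
\item Show that the constraints $\rho_q'(W^{\hslash,i}_k)Z_q=0$ for $k\ge-\lfloor s(i-1)/r\rfloor$ are equivalent to the vanishing of the coefficients of $x^{-k-i}\,(dx)^i$ in $\mathcal{E}^{i,(j)}_{g,n}$ below the stated order. To do this, I would write the mode generating series $\sum_k\rho_q'(W^{\hslash,i}_k)\,x^{-k-i}(dx)^i$ acting on $Z_q$ as $\sum_{|Z|=i}\mathcal{W}^q$ over the sheets $\mathfrak{f}(z)$. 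This uses the $\Psi^{(j)}_{r,q}$-expansion \eqref{eq:ww}, Lemma \ref{l:Aq1} to convert contractions into $\omega^q_{0,2}$-factors, and the identification of the shifted $J^{(q)}_{-k}$ with $\omega^q_{0,1}$ through $\hat T^q$.
\end{enumerate}
Finally, uniqueness identifies these correlators with the ones produced by the recursion.

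\emph{Main obstacle.} The hardest step is step 4, matching the $q$-deformed $W$-modes with the sheet sums $\mathcal{E}^{i,(j)}$. The sheet set $\mathfrak{f}(z)$ uses the undeformed $\theta^k z$, whereas the representation carries the factors $\theta^m q^{-2m}$. I would first try to absorb the $q^{-2m}$ factors into a rescaling of the local coordinates on each sheet. I would then check, using the $q$-independence noted after Theorem \ref{t:rsAs}, that the order bound $\lfloor s(i-1)/r\rfloor+1$ is unchanged. A secondary difficulty is showing that the $\mathcal{E}$-terms leave no residue in the uniqueness step; there I would use the factor $\prod(\omega^q_{0,1}(z')-\omega^q_{0,1}(z))^{-1}$, whose order is $\mathcal{O}(z^{-(s-1)|Z|})$ times $(dz)^{-|Z|}$, and compare it against the loop-equation vanishing order.
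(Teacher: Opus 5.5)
Your proposal follows essentially the same route as the paper's proof. Existence comes from the partition function $Z_q$ of the Airy structure; your step 4 is what the paper packages as Propositions \ref{p:hh} and \ref{p:loopeq}, so you can cite it instead of re-deriving it. Uniqueness and the formula come from the projection property, the combinatorial identity \eqref{CombinatorialIdentity}, and the fact that the loop equations make the $\mathcal{E}^{i,q}_{g,n}$-contribution, divided by $\prod_{z'\in\mathfrak{f}'(z)}(\omega^q_{0,1}(z')-\omega^q_{0,1}(z))$, holomorphic at $z=0$.

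One normalization slip, however, is shared with the paper. Definition \ref{d:projection} uses $\int_0^z\omega^q_{0,2}$ with no factor $\frac12$, so multiplying by $\frac12\int_0^z\omega^q_{0,2}(\cdot,z_0)$ and invoking the projection property gives $\frac12\omega^q_{g,n+1}$, not $\omega^q_{g,n+1}$. For the stated formula to come out exactly, the kernel must be taken without the $\frac12$, and the sum must run over $Z\neq\emptyset$, which is what your derivation actually produces.
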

				\begin{proof}
					{The proof establishes that the system of correlators $(\omega^q_{g,n})_{g,n}$ defined by the $q$-topological recursion formula is the unique solution satisfying the $q$-deformed loop equations and the projection property. The structure of the kernel $K_q^{1+|Z|}$ identifies this as the $q$-deformed analogue of the classical higher-order Topological Recursion for the case $p=0$ given in \cite{BBCKS}.}
					
					{The $q$-deformed Ward identities are encoded in the constraints $H^{i,q}_k Z_q = 0$. For any point $p \notin \text{Ram}_q$, the $q$-deformed loop equations imply that the elementary symmetric combinations of the currents:
						\begin{equation}
							\mathcal{E}^{(p),q}_{g,i;n}(z; z_{[n]}) = \sum_{1 \leq k_1 < \dots < k_i \leq r} \mathcal{W}^{q}_{g,i,n}(z^{(k_1)}, \dots, z^{(k_i)}; z_{[n]})
						\end{equation}
						are pullbacks via $x$ of holomorphic forms on the base curve $V$. Away from ramification points, these correlators are thus holomorphic. At $p \in \text{Ram}_q$, the existence and uniqueness of a formal solution $Z_q$ (and thus the correlators $\omega^q_{g,n}$) are guaranteed by the 
						Theorem applied to the $q$-deformed $W$-algebra generated by the operators $H^{i,q}_k$.}
					
					{We suppose that the existence of a system $(\omega^q_{g,n})$ satisfy the projection property. We define the global differential as follows:
						\begin{equation}
							\mathcal{H}^{(p),q}(z) \coloneqq \frac{1}{\Upsilon^q_{r-1}(\mathfrak{f}'(z); z)} \sum_{i=1}^{r} (-\omega^{q}_{0,1}(z))^{r-i} \mathcal{E}^{(p),q}_{g,i;n}(z; z_{[n]}),
						\end{equation}
						where $\Upsilon^q_{r-1}$ is the $q$-Vandermonde product of the differences $(\omega^q_{0,1}(z') - \omega^q_{0,1}(z))$ i.e 
						$\Upsilon^q_{|S|}(S; z) \coloneqq \prod_{z' \in S\subseteq \mathfrak{f}(z)} \left( \omega^q_{0,1}(z') - \omega^q_{0,1}(z) \right)$.
						In the local chart $\zeta \to 0$ ($x(z)=\zeta^r$), the $q$-loop equations imply $\mathcal{E}^q_{g,i;n} = \mathcal{O}(d\zeta^i / \zeta^{r d_p(i) - (r-1)i})$ with $d_p(i) = \lfloor \frac{s_p(i-1)}{r} \rfloor$. Substituting the scaling of $\omega^q_{0,1}$ and $\Upsilon^q$, the exponent of $\zeta$ for the $i$-th term is:
						\begin{equation}
							E_i = (r-i)(s_p-1) - (r-1)(s_p-1) + (r-1)i - r \left\lfloor \frac{s_p(i-1)}{r} \right\rfloor = r + s_p - s_p i - 1 + r \left\lfloor \frac{s_p(i-1)}{r} \right\rfloor.
						\end{equation}
						Since $E_i \geq 0$ (the $p=0$ case condition), $\mathcal{H}^{(p),q}(z)$ is holomorphic at $p$. By the residue theorem, the sum of its residues weighted by the $q$-primitive $\alpha^{(p),q}_{0,2}(z_0; z) \coloneqq \frac{1}{2} \int_{o_p}^z \omega^q_{0,2}(\cdot, z_0)$ must vanish:
						\begin{equation}
							\sum_{p \in \text{Ram}_q} \Res_{z=p} \alpha^{(p),q}_{0,2}(z_0; z) \mathcal{H}^{(p),q}(z) = 0.
					\end{equation}}
					
					{Applying the $q$-combinatorial identity to the sum in $\mathcal{H}^{(p),q}(z)$, we rewrite the vanishing residue sum as:
						\begin{equation}
							\sum_{p \in \text{Ram}} \Res_{z=p} \alpha^{q}_{0,2}(z_0; z) \omega^q_{g,1+n}(z, z_{[n]}) = \sum_{p \in \text{Ram}} \Res_{z=p} \sum_{|Z| \geq 1} K_q^{1+|Z|}(z_0; z, Z) \mathcal{W}^{'q}_{g,1+|Z|,n}(z, Z; z_{[n]}).
						\end{equation}
						By using the $q$-projection property, the LHS is exactly $\omega^q_{g,1+n}(z_0, z_{[n]})$. At inadmissible points $p \in \text{Ram}''$ where $s_p \leq -1$, the kernel $K_q^{1+|Z|}$ provides a zero of order $\geq (1-s_p)|Z| \geq 2|Z|$, which cancels any possible poles from the correlators. Thus, the residues at $\text{Ram}''$ vanish, and we recover the $q$-Topological Recursion formula.}
					
					{Although the Topological Recursion formula only uses residues at admissible points $\text{Ram}'$, the resulting $\omega^q_{g,n}$ are holomorphic at $\text{Ram}''$. The behavior of $\mathcal{E}^q_{g,i;n}$ is then dominated by the $\omega^q_{0,1}$ terms. Using the pullback property by $x$, the order of vanishing is $\theta_p = -r \lceil \frac{1+s_p(i-1)}{r} \rceil + i$. This matches the requirement $\tilde{\theta}_p = i - r - r \lfloor \frac{s_p(i-1)}{r} \rfloor$ because $\lceil \frac{1+X}{r} \rceil = 1 + \lfloor \frac{X}{r} \rfloor$ for these indices, ensuring the $q$-loop equations hold globally.}
					
					{If $\bar{s}_p = s_p - 1$, the leading term of $\omega^q_{0,1}$ is invariant under the fiber symmetry and cancels in the kernel. An inductive argument on $i$ shows that the extra terms introduced by the full $\omega^q_{0,1}$ satisfy the required vanishing orders. Since the residue calculus is independent of the integration constant of the local primitive $\alpha^q_{0,2}$, the system is uniquely defined for any choice of primitives.}
				\end{proof}
				\begin{remark}
					Several key features distinguish Theorem \ref{UnshiftedTR} from the standard topological recursion:
					\begin{itemize}
						\item[(a)]Due to the presence of the shift $\omega^{q}_{\frac{1}{2},1}$ in the initial data of $S_q$, the recursion naturally populates correlators with half-integer genus $g \in \frac{1}{2} \N$. These terms vanish in the standard (unshifted) case but are essential here to match the $q$-deformed Airy structure partition function.
						\item[(b)]The $q$-deformation is primarily encoded in the unstable correlators $\omega^{q}_{0,1}$ and  $\omega^{q}_{0,2}$. Since the kernels $K^{1+|Z|}_q$ are built from these objects, the entire recursive tower inherits the $q$-dependence.
						\item[(c)] Unlike the $r=2$ case (where $Z$ is always a single point), the sum over subsets $Z \subseteq \mf{f}' (z)$ allows this formula to solve the higher-degree loop equations associated with $q$-deformed $\mathcal{W}(\mathfrak{gl}_r)$ algebras.
					\end{itemize}
				\end{remark}
				\subsection{The $(r,s,q)$-spectral curves}
				From now on we will focus on admissible local spectral curves with only one component ($N=1$); we will therefore drop the superscript $^{(j)}$ from the various expressions.
				
				{We start by define the $q$-deformed Bergman kernel as:
					\begin{equation}\label{qB}
						B_q(z_1,z_2):= \frac{1}{(z_1 - qz_2)(z_1 - q^{-1}z_2)}.
					\end{equation} It is easy to see that the $q$-deformed Bergman kernel obey the relations $B_q(z_1,z_2)=B_q(z_2,z_1)$ and $
					\oint_C B_q(z_1,z_2) f(z_1) \frac{dz_1}{2\pi i} =  \mathcal{D}_q f(z_2)$.}
				
				{A special example of the construction may be derived from the $(r,s,q)$-Airy structures of Section \ref{s:rsairystruct}. We can show that finding the partition function of the $(r,s,q)$-Airy structures of Theorem \ref{t:rsAs} is equivalent to topological recursion on the $(r,s,q)$-spectral curve which is defined as follows:}
				\begin{definition}\label{d:rs}
					Let $r  \in \mathbb{Z}$ such that $r \geq 2$, and $s \in [r+1]$ with $r = \pm 1 \pmod{s}$.
					The \emph{$(r,s,q)$-spectral curve} is given by $\mathcal{S}_q = (C, x, \omega^{q}_{0,1}, \omega^{q}_{\frac{1}{2},1} \omega^{q}_{0,2} ) $, where  $C$ is a small disk, $ x =  z^r$,\, $\omega^{q}_{0,1} = \frac{rs}{[s]_q} z^{s-1}\ dz$,\, $\omega^{q}_{\frac{1}{2},1} = 0$, and
					{\begin{equation}
							\omega^{\textup{std},q}_{0,2}(z_1, z_2)= \frac{dz_1 dz_2}{(z_1 - qz_2)(z_1-q^{-1}z_2)}\,.
					\end{equation}}
				\end{definition}
				
								If we define the meromorphic function $y$ on $C$ by the relation $\omega^{q}_{0,1} = y\ dx$, where $x=z^r$ is the standard coordinate on the base, we find $y(z) = \frac{s}{[s]_q} z^{s-r}$. This allows us to interpret $x(z)$ and $y(z)$ as a parametrization of the following $q$-deformed algebraic curve: 
								\begin{itemize}
									\item[(i)] For $s\in[r-1]$, we obtain:
									\begin{equation}\label{eq:rsac1}
										C^q\,\cdot x^{r-s} y^r - 1 = 0,
									\end{equation}
									where $C^q=\big([s]_q\big)^{r}\,s^{-r}$ is a $q$-dependent constant.
									\item[(ii)]For $s=r+1$, we get the $q$-deformed $r$-Airy algebraic curve
									\begin{equation}\label{eq:rsac2}
										y^r - \bigg(\frac{r+1}{[r+1]_q}\bigg)^{r}\cdot x =0.
									\end{equation}
								\end{itemize}
								We call these algebraic curves the $(r,s,q)$-algebraic curves.
								\begin{remark}
									In the limit $q\rightarrow 1$, these recover the classical $(r,s)$-spectral curves.
								\end{remark}
								
								In fact, the correspondence applies more generally to the deformed $(r,s,q)$-Airy structures of Proposition \ref{p:rsAsgen}, so let us explain it in this more general setting.
								We define the $q$-deformed $(r,s,q)$-spectral curves in terms of the data introduced in \eqref{eq:complex}.
								\begin{definition}
									Let $r  \in \mathbb{Z}$ such that $r \geq 2$, and $s \in [r+1]$ with $r = \pm 1 \pmod{s}$. We consider the $q$-deformed complex numbers:
									\begin{equation}
										F^{q}_{0,1}[-k]\,, k \geq \min \{ s, r\} \,, \qquad F^{q}_{\frac{1}{2},1}[-k]\,, k > 0\,, \qquad F^{q}_{0,2}[-k,-l] \,, k,l > 0
									\end{equation}
									defined as follows:
									\begin{align}
										\,	F^{q}_{0,1}[-k]&:={k\over [k]_q}F_{0,1}[-k]\label{F1}\\\,F^{q}_{\frac{1}{2},1}[-k] &:= \left(\frac{[k]_q}{k} \right)^2 F_{\frac{1}{2},1}[-k]\label{F2}\\\,F_{0,2}^{q}[-k,-l] &:= \frac{k l}{[k]_q [l]_q}\cdot F_{0,2}[-k,-l]\label{F3}.
									\end{align}
								\end{definition}
								
								\begin{definition}\label{d:rsdeformed}
									The \emph{deformed $(r,s,q)$-spectral curve} is given by $\mathcal{S}_q = (C, x, \omega^{q}_{0,1}, \omega^{q}_{\frac{1}{2},1} \omega^{q}_{0,2} ) $, where $C$ is a small disk, $x=z^r$,
									\begin{align}
										\omega^{q}_{0,1} (z)
										&= 
										\sum_k F^{q}_{0,1} [-k] z^{k-1} dz\label{q1} \,,
										\\
										\omega^{q}_{\frac{1}{2},1} (z)
										&=
										\sum_k F^{q}_{\frac{1}{2},1} [-k] z^{k-1} dz\label{q2} \,,
										\\
										\omega^{q}_{0,2} (z_1, z_2)
										&=
										\omega^{\textup{std},q}_{0,2} (z_1, z_2) + \sum_{k,l} F^{q}_{0,2}[-k,-l] z_1^{k-1} z_2^{l-1} dz_1 dz_2\label{q3}.
								\end{align}\end{definition}
								{\begin{definition}
										A $q$-deformed local spectral curve	$\mathcal{S}_q = (C, x, \omega^q_{0,1}, \omega^q_{0,2} ) $, where $C$ is a small disk, $x: C:{\mu} \longrightarrow \mathbb{P}^1$, $ \binom{\mu}{z}\mapsto z^{r_{\mu}}+x_{\mu}$, $x_{\mu}\in \mathbb{P}^1$
										\begin{align}
											\omega^q_{0,1} (z)
											&= 
											\sum_{\substack{\mu\in [N],\\ k> 0}}F^q_{0,1}\big[\begin{array}{c} \mu \\ -k\end{array} \big]\,\xi^{\mu}_k  \,,
											\\
											\omega^q_{0,2} (z_1, z_2)
											&=
											\omega^{\textup{std},q}_{0,2}  + \sum_{\substack{k_1,k_2>0\\ \mu_1,\mu_2\in[N]}} F^q_{0,2}\bigg[\begin{array}{cc} \mu_1 & \mu_2 \\ -k_1 & -k_2\end{array} \bigg]\,\xi^{\mu_1}_{k_1}\,\xi^{\mu_2}_{k_2},
										\end{align}
										with $\omega^{\textup{std},q}_{0,2}\bigg(\begin{array}{cc} \nu_1 & \nu_2 \\ z_1 & z_2\end{array} \bigg)=\delta_{\nu_1,\nu_2}\frac{dz_1 dz_2}{(z_1 - qz_2)(z_1-q^{-1}z_2)}$ and $\xi^{\nu}_k\binom{\mu}{z}=\delta_{\mu,\nu}z^{k-1} dz$.\end{definition}}
								\begin{remark}
									The $q$-deformed spectral data can be  related to the classical $(r,s)$-data through the introduction of a quantization operator $\mathcal{Q}$. Let $\mathcal{Q}$ be the linear operator acting on the basis of holomorphic differentials as follows:
									\begin{equation}
										\mathcal{Q}(z^{k-1}dz) = \frac{k}{[k]_q} z^{k-1}dz.
									\end{equation}
									Then, the $q$-deformed correlators are expressed by the following relations:
									\begin{itemize}
										\item[(a)] The primary differential (disk amplitude) is a direct rotation of the classical one:
										\begin{equation}
											\omega^{q}_{0,1}(z)=\mathcal{Q} \cdot	\omega^{cl}_{0,1}(z).
										\end{equation}
										\item[(b)]  The $\frac{1}{2}$-genus differential (shifted sector) scales with the inverse square of the operator's weights, reflecting the specific coupling of the shift in the $q$-deformed Airy structure:
										\begin{equation}
											\omega^{q}_{\frac{1}{2},1}(z)=\sum_{k}\mathcal{Q}^{-2}(z^{k-1}dz)\cdot F_{\frac{1}{2},1}[-k].
										\end{equation}
										\item[(c)] The deformed $\omega^{q}_{0,2}$ is the sum of the $q$-deformed Bergman kernel and a regular part where each mode $(k,l)$ is rescaled by the product $\frac{k l}{[k]_q [l]_q}$.
									\end{itemize}
									This operational approach highlights that the $q$-deformation is not a global rescaling, but a mode-dependent transformation that ensures compatibility with the underlying $q$-deformed $\mathcal{W}$-algebra.
								\end{remark}
								
								To extract the $q$-loop equations from the deformed $(r,s,q)$-Airy structure, we follow the correspondence established in  Proposition \ref{p:rsAsgen}. The claim is that the differential constraints  for the partition function $Z_q$ of the deformed $(r,s,q)$-Airy structure can be recast as a system of $q$-correlators $
								\omega^q_{g,n}$ on the deformed $(r,s,q)$-spectral curve. These deformed correlators must satisfy both the $q$-loop equations and the $q$-projection property inherited from the $q$-deformed $\mathcal{W}$-algebra.
								
								For clarity of notation, let us denote the generators of the deformed $(r,s,q)$-Airy structure as:
								\begin{equation}\label{eq:unshiftedH}
									H^{i,q}_k := \rho_q'( W^{\hslash,i}_k), 
								\end{equation}
								following the representation introduced in \eqref{eq:repp}. The $q$-deformed partition function $Z_q$ is then uniquely determined by the set of $q$-differential constraints:
								\begin{equation}
									H^{i,q}_k Z_q = 0 \,, \quad i \in [r], \quad k \geq - \lfloor \frac{s(i-1)}{r} \rfloor\,.
								\end{equation}
								To translate these algebraic relations into geometry, we introduce the higher-order quantum fields $H^{i,q}(x)$ constructed from the operators $H^{i,q}_k$. These fields are defined as formal series of $i$-differentials on the base of the spectral curve:
								\begin{equation}
									H^{i,q}(x)
									=
									\sum_{k \in \Z_q} H^{i,q}_k \frac{dx^i}{x^{k+i}},
								\end{equation}
								where the summation index $k$ respects the bounds imposed by the Airy structure. These fields represent the quantum version of the elementary polynomials of the spectral curve.
								
								We decompose the $q$-deformed currents into their positive and negative frequency parts. Recalling the definition of the modes $J^{(q)}_k$ in \eqref{eq:Js} and the basis of one-forms $\xi^{(j,q)}_{-k}(z)=z^{k-1}dz$ from Definition \ref{d:xibasis}, we define:
								\begin{align}
									\mathcal{J}^{(q)}_-(z)  &=  \sum_{k>0} J^{(q)}_k \xi^q_{-k}(z)\label{J1}\,,   \\   \mathcal{J}^{(q)}_+(z)   &=   \sum_{k>0} J^{(q)}_{-k} \xi^q_k (z)\label{J2}\,.   \end{align}
								\begin{remark}
									By substituting the explicit representation of the $q$-modes $J^{(q)}_k=[k]_q\,\frac{\partial}{\partial\,t_k}$ into these fields, we obtain the operational form of the currents:
									\begin{equation}
										\mathcal{J}^{(q)}_-(z) = \sum_{k>0} [k]_q \frac{\partial}{\partial t_k} z^{-k-1} dz, \quad \text{and} \quad \mathcal{J}^{(q)}_+(z) = -\sum_{k>0} [k]_q t_{-k} z^{k-1} dz.
									\end{equation}
									This specific weighting ensures that the expectation values of the fields $H^{i,q}(x)$ recover the $q$-deformed spectral curve geometry. This formulation shows that the $q$-deformation of the Airy structure is effectively encoded as a mode-dependent rescaling of the classical loop insertion operators.
								\end{remark}
								\begin{remark}
									Here, $\mathcal{J}^{(q)}_-(z)$ acts as a generating operator for the correlators $\omega^q_{g,n}$, while $\mathcal{J}^{(q)}_+(z)$ encodes the classical data of the spectral curve. Through this identification, the $q$-deformation is explicitly carried by the coefficients $F^q_{g,n}$ within the modes $J^{(q)}_k$. Consequently, the vanishing of $H^{i,q}(x)\cdot Z_q$ is equivalent to a set of recursive identities on the differentials $\omega^q_{g,n}$, where the $q$-deformed number factors $[k]_q$ precisely compensate for the shifts induced by the $q$-difference operators of the $q$-deformed $\mathcal{W}$-algebra.
								\end{remark} 	  
								
								Using this notation, we may rewrite the $q$-deformed differential  operators $H^{i,q}_k$ more explicitly in terms of the data of the deformed $(r,s,q)$-spectral curve from Definition \ref{d:rsdeformed}.
								\begin{proposition}\label{p:hh}
									For a set $ S$, let $\mathcal{P}(S)$ be the set of all sets of disjoint pairs of elements in $ S$. For any  $ P \in \mathcal{P}(S) $, let $ \sqcup P = \bigsqcup_{p \in P} p \subseteq S$ denote the union of these pairs. Then
									\begin{align}
										rH^{i,q}(x) 
										&= 
										\sum_{\substack{Z_q \subseteq \mathbf{f}(z)\\ |Z| = i}} 
										\sum_{\substack{(\sqcup P) \sqcup A_0 \sqcup A_{\frac{1}{2}} \sqcup A_+ \sqcup A_- = Z \\ P \in \mathcal{P}(\mathbf{f}(z))}}
										\prod_{\{ z', z''\} \in P} \hslash^2 \omega^{q}_{0,2}(z',z'') 
										\prod_{z' \in A_0} \hslash \omega^{q}_{0,1} (z') 
										\nonumber\\
										&\times
										\prod_{z' \in A_{\frac{1}{2}}} \hslash^2 \omega^{q}_{\frac{1}{2},1}(z') 
										\prod_{z' \in A_+} \mathcal{J}^{(q)}_{+} (z') 
										\prod_{z' \in A_-} \mathcal{J}^{(q)}_{-} (z').
									\end{align}
								\end{proposition}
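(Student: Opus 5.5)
The plan is to mirror the classical argument of \cite{BBCCN18} (and its reformulation in \cite{BBKN25}): first express the unconjugated fields $\mu_q(W^{\hslash,i})$ through the $\mathbb{Z}_r$-twisted currents on the sheets $\mathbf{f}(z)$, then push the conjugation by $\hat{\Phi}^q\hat{T}^q$ through the currents. Concretely, I will start from the representation \eqref{eq:ww} and rewrite the generating series $\sum_k \mu_q(W^{\hslash,i}_k)\,dx^i/x^{k+i}$ in terms of a single twisted current $\mathcal{J}^{(q)}(z)=\mathcal{J}^{(q)}_-(z)+\mathcal{J}^{(q)}_+(z)$, evaluated at the points $z'\in\mathbf{f}(z)$. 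Using $x=z^r$ and the discrete Fourier inversion $\chi_m(z)=\frac1r\sum_a(\theta^m q^{-2m})^{-a}J^{(a)}(z)$ from the proof of the representation, the sum over $p_{2j+1},\dots,p_i$ with $\sum p_l=rk$ collapses into a sum over $i$-element subsets $Z\subseteq\mathbf{f}(z)$ of the product $\prod_{z'\in Z}\mathcal{J}^{(q)}(z')$. The overall factor $r^{-i}$ in \eqref{eq:ww}, combined with $dx=rz^{r-1}dz$, accounts for the single factor $r$ on the left-hand side.

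Next I will handle the contraction terms $j\ge1$ of \eqref{eq:ww}. By Lemma~\ref{l:Aq1}, each pair contraction produces the kernel $K_q(m_1,m_2)$. Re-expressed on the sheets, $K_q$ evaluated on $z',z''\in\mathbf{f}(z)$ with $z'\neq z''$ should coincide with $\omega^{\mathrm{std},q}_{0,2}(z',z'')$. This is the point where the example computing $\omega^q_{0,2}$ as the $q$-Bergman kernel is used. The combinatorial weight $\frac{[i]_q!}{[2]_q^j[j]_q![i-2j]_q!}$ must then reorganise, against the $1/[i]_q!$ in $\Psi^{(j)}_{r,q}$, into a plain sum over sets of disjoint pairs $P\in\mathcal{P}(Z)$. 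That gives the pieces $\prod_{P}\hslash^2\omega^{\mathrm{std},q}_{0,2}$ and $\prod_{A_\pm}\mathcal{J}^{(q)}_\pm$, with the normal ordering splitting each remaining current into its $\pm$ parts.

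Then I will conjugate. Since $\hat{T}^q$ is the exponential of a linear combination of annihilators $J^{(q)}_k/[k]_q$, it shifts each $J^{(q)}_{-k}$ by a scalar. Hence $\mathcal{J}^{(q)}_+(z')$ becomes $\mathcal{J}^{(q)}_+(z')+\hslash\omega^q_{0,1}(z')+\hslash^2\omega^q_{\frac12,1}(z')$, once the normalisations \eqref{F1}--\eqref{F2} are used to absorb the $[k]_q$ factors. Similarly, $\hat{\Phi}^q$ is quadratic in the annihilators, so it shifts $J^{(q)}_{-k}$ by a linear combination of the $J^{(q)}_l$. This converts part of $\mathcal{J}^{(q)}_-$ cross terms into the regular part of $\omega^q_{0,2}$ in \eqref{q3}, and the re-normal-ordering adds exactly the regular part $F^q_{0,2}$ to the pair contractions. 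Expanding the products multinomially over each $z'\in Z$ yields the decomposition $(\sqcup P)\sqcup A_0\sqcup A_{\frac12}\sqcup A_+\sqcup A_-=Z$.

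The main obstacle I expect is the second step: verifying that the $q$-weighted contraction sum really equals $\omega^{\mathrm{std},q}_{0,2}$ on distinct sheets, and that the $q$-factorial prefactors cancel to give unit coefficients for each pairing. Unlike the classical case, the twist $\theta^m q^{-2m}$ breaks the pure $\mathbb{Z}_r$ symmetry, so I would first check $i=2$ by hand, comparing $K_q$ with $B_q(z',z'')\,dz'dz''$ for $z''=\theta^a z'$. If the factors do not match exactly, I would absorb the discrepancy into the normalisation of $\xi^q_{\pm k}$ and $J^{(q)}_k$, which is consistent with the mode-dependent rescaling interpretation in the preceding remarks. After that I would argue the general $i$ by induction on the number of pairs.
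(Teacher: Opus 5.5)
Your architecture is the correct classical one: the undressed field as a sum over sheets with internal contractions, then $\mathrm{Ad}_{\hat{T}^q}$ for the shifts and $\mathrm{Ad}_{\hat{\Phi}^q}$ for the regular part of $\omega^q_{0,2}$. However, the step you flag as the main obstacle genuinely fails with the paper's formula \eqref{eq:ww}, for three reasons.

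First, the uncontracted factors are $(\theta^{m}q^{-2m})^{-p}$. After summing over modes, the currents are therefore evaluated at $\theta^{m}q^{-2m}z$, which are not points of $\mathbf{f}(z)=\{\theta^k z\}$. Moreover, the constraint $\sum p_l=rk$ no longer follows from symmetry: the sum over distinct $m_l\in\{0,\dots,r-1\}$ is not invariant under $m\mapsto m+1 \bmod r$ once the non-periodic factor $q^{-2m}$ is present. So summing over $k$ gives only a projection of a product of currents, not the product itself.

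Second, set $u=\theta^{m-n}q^{-2(m-n)}$ and $v=\theta^{m-n}$. Then $K_q(m,n)=u/(u-1)^2$, whereas $\omega^{\mathrm{std},q}_{0,2}(\theta^m z,\theta^n z)=\frac{v}{(v-q)(v-q^{-1})}\,\frac{dz^2}{z^2}$. The latter depends only on the relative sheet $\theta^{m-n}$, the former on the integer $m-n$. Already for $r=2$ they give $-1/(q^2+2+q^{-2})$ versus $-1/(q+2+q^{-1})$.

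Third, the prefactor $\frac{1}{[i]_q!}\cdot\frac{[i]_q!}{[2]_q^j[j]_q![i-2j]_q!}$, applied to the ordered sum over distinct $m$'s, assigns to each configuration $(Z,P)$ the weight $\frac{2^j j!\,(i-2j)!}{[2]_q^j[j]_q![i-2j]_q!}$, not $1$.

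Your fallback cannot repair this. Rescaling $\xi^q_{\pm k}$ and $J^{(q)}_k$ only produces factors depending on the mode index $k$. It cannot absorb discrepancies that depend on the sheet labels $m,n$. Nor can it absorb the factors $j!/[j]_q!$ and $(i-2j)!/[i-2j]_q!$, which are not multiplicative over points. Your induction on the number of pairs then has nothing to start from. Separately, $(\hslash/r)^i$ combined with $(dx/x)^i=r^i(dz/z)^i$ gives $1$, not the factor $r$ you claim.

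The paper's proof never passes through $\Psi^{(j)}_{r,q}$ or $K_q$. It treats the undressed field directly as a product of currents over $Z\subseteq\mathbf{f}(z)$ and takes $A_0,A_{\frac12}$ from $\mathrm{Ad}_{\hat{T}^q}$. It then attributes all pairings, weighted by the full $\omega^q_{0,2}$, to a $q$-Wick expansion.

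To rescue your route, you would need a starting expression for the undressed field with two properties. The currents must live on $\mathbf{f}(z)$. The internal contraction must be the $q$-Heisenberg propagator $\sum_{k>0}[k]_q z_1^{-k-1}z_2^{k-1}dz_1dz_2$, which the paper's example identifies with $\omega^{\mathrm{std},q}_{0,2}$. Formula \eqref{eq:ww} as written is not such an expression.
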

								\begin{proof}
									{The proof proceeds by analyzing the algebraic transformation of the $q$-deformed operators given in \eqref{eq:TPhi} into the $q$-deformed operators $H^{i,q}_k$ through the adjoint action of the operators $\hat{\Phi}^q$ and $\hat{T}^q$.}
									
									{Since $H^{i,q}_k = \text{Ad}_{\hat{\Phi}^q \hat{T}^q} (\mu_q(W^{\hslash,i}_k))$. By linearity, the generating function $H^{i,q}(x) = \sum H^{i,q}_k \frac{dx^i}{x^{k+i}}$ can be viewed as 
										the $i$-point insertion operator. The set of points $Z \subseteq \mathfrak{f}(z)$ represents the $i$ coordinates $\{z^{(k_1)}, \dots, z^{(k_i)}\}$ evaluated on the sheets of the $r$-fold cover. The body of the proof lies in the expansion of the product of dressed currents:
										\begin{equation}
											\rho_q'(\mathcal{J}^{(q)}(z_1) \dots \mathcal{J}^{(q)}(z_i)) = \hat{\Phi}^q \hat{T}^q \left( \mathcal{J}^{(q)}(z_1) \dots \mathcal{J}^{(q)}(z_i) \right) (\hat{T}^q)^{-1} (\hat{\Phi}^q)^{-1}.
									\end{equation}}
									{The operator $\hat{T}^q$ is the exponential of a linear combination of the modes $J^{(q)}_k$. Using the property $\text{Ad}_{e^A}(B) = B + [A,B]$ for Heisenberg-like algebras, the action of $\hat{T}^q$ on the current $\mathcal{J}(z)$ produces a shift by a scalar function:
										\begin{equation}
											\text{Ad}_{\hat{T}^q}(\mathcal{J}^{(q)}(z)) = \mathcal{J}^{(q)}(z) + \hslash \omega^q_{0,1}(z) + \hslash^2 \omega^q_{\frac{1}{2},1}(z).
										\end{equation}
										When expanding the product of $i$ such terms, any point $z' \in Z$ that is shifted  contributes to the subsets $A_0$ or $A_{1/2}$ with weights $\hslash \omega^q_{0,1}$ and $\hslash^2 \omega^q_{\frac{1}{2},1}$ respectively.}
									
									{The $q$-deformed operator $\hat{\Phi}^q$ is a quadratic exponential. Its adjoint action corresponds to a transformation of the Fock space basis. By the $q$-Wick Theorem, the insertion of multiple currents under this quadratic dressing results in a sum over all possible disjoint pairings $P$. Each pair $\{z', z''\} \in P$ generates a contraction weighted by the $q$-deformed Bergman kernel:
										\begin{equation}
											\text{Contraction}(z', z'') = \hslash^2 \omega^q_{0,2}(z', z'').
										\end{equation}
										This term accounts for the cylinders in the topological recursion, where $\hslash^2$ reflects the genus-1 or two-point correlation scaling.}
									
									{Points in $Z$ that are neither shifted to the background nor paired are represented by the residual action of the currents $J^{(q)}_k$. The definitions of $\mathcal{J}^{(q)}_-(z)$ and $\mathcal{J}^{(q)}_+(z)$ given in the relation \eqref{J1} and \eqref{J2}
										effectively split the total current into components $A_-$ and $A_+$ based on the polarization of the modes (creation vs. annihilation) and the basis of differential forms $\xi^q$ on the curve.}
									
									{The total dressed operator is the sum over all possible ways to distribute the $i$ points of $Z$ into these distinct algebraic roles. Since the operators $\hat{T}^q$ and $\hat{\Phi}^q$ act independently on each mode, the result is a summation over all set partitions $Z = (\sqcup P) \sqcup A_0 \sqcup A_{1/2} \sqcup A_+ \sqcup A_-$. Summing over all subsets $Z \subseteq \mathfrak{f}(z)$ of size $i$ leads to the final identity:
										\begin{align}
											rH^{i,q}(x) 
											&= \sum_{\substack{Z \subseteq \mathbf{f}(z)\\ |Z| = i}} 
											\sum_{\substack{(\sqcup P) \sqcup A_0 \sqcup A_{\frac{1}{2}} \sqcup A_+ \sqcup A_- = Z \\ P \in \mathcal{P}(\mathbf{f}(z))}}
											\prod_{\{ z', z''\} \in P} \hslash^2 \omega^{q}_{0,2}(z',z'') 
											\prod_{z' \in A_0} \hslash \omega^{q}_{0,1} (z') \nonumber \\
											&\times \prod_{z' \in A_{\frac{1}{2}}} \hslash^2 \omega^{q}_{\frac{1}{2},1}(z') 
											\prod_{z' \in A_+} \mathcal{J}^{(q)}_{+} (z') 
											\prod_{z' \in A_-} \mathcal{J}^{(q)}_{-} (z').
										\end{align}
										This concludes the proof.}
								\end{proof}
								\begin{example}
									{For small values of $i$, the Hamiltonian $rH^{i,q}(x)$ can be explicitly expanded as follows:
										\begin{enumerate}
											\item [(i)]{For the linear case ($i=1$)}, the expression represents the trace over the sheets, incorporating quantum corrections and currents. Then, we derive:
											\begin{equation}
												rH^{1,q}(x) = \sum_{z' \in \mathbf{f}(z)} \left( \hslash \omega^q_{0,1}(z') + \hslash^2 \omega^q_{\frac{1}{2},1}(z') + \mathcal{J}^{(q)}_{+}(z') + \mathcal{J}^{(q)}_{-}(z') \right).
											\end{equation}
											\item [(ii)]{For the quadratic case ($i=2$)}, 
											the relation involves both disconnected contributions and the cylindrical correlation $\omega^q_{0,2}$ between distinct sheets $z'$ and $z''$. Then, we have:
											\begin{align}
												rH^{2,q}(x) &= \sum_{\{z', z''\} \subseteq \mathbf{f}(z)} \hslash^2 \omega^{q}_{0,2}(z',z'') \nonumber \\
												&+ \sum_{\{z', z''\} \subseteq \mathbf{f}(z)} \left( \hslash \omega^q_{0,1}(z') + \hslash^2 \omega^q_{\frac{1}{2},1}(z') + \mathcal{J}^{(q)}_{+}(z') + \mathcal{J}^{(q)}_{-}(z') \right) \nonumber \\
												& \times \left( \hslash \omega^q_{0,1}(z'') + \hslash^2 \omega^q_{\frac{1}{2},1}(z'') + \mathcal{J}^{(q)}_{+}(z'') + \mathcal{J}^{(q)}_{-}(z'') \right).
											\end{align}
									\end{enumerate}}
								\end{example}
								Note that the results given in {\cite{BKS23}, Section~4.1} can be obtained in the classical limit of the parameter $q$.  
								
								While the expression in Proposition \ref{p:hh} may appear involved at first, it provides the precise framework needed to extract the $q$-loop equations.
								
								From the  $q$-deformed partition function
								\begin{equation}
									Z_q = \exp \bigg( \sum_{ \substack{g \in \frac{1}{2} \N, n \in \N^* \\ 2g-2 + n > 0}} \frac{\hslash^{2g-2+n}}{[n]_q!} F^{q}_{g,n}[k_1, \dotsc, k_n] \prod_{j=1}^n x_{k_j} \bigg)
								\end{equation} 
								associated with the deformed $(r,s,q)$-Airy structure, we construct a system of $q$-correlators on the deformed $(r,s,q)$-spectral curve.
								\begin{definition}\label{d:srs}
									Let $Z_q$ be the partition function associated with the deformed $(r,s,q)$-Airy structure. For $2g-2+n>0$, we define the following  $n$-differentials on the deformed $(r,s,q)$-spectral curve:
									\begin{equation}
										\omega^{q}_{g,n} (z_1, \dotsc, z_n ) \coloneqq \sum_{k_1, \dotsc, k_n = 1}^\infty F^{q}_{g,n}[k_1, \dotsc, k_n] \prod_{j=1}^n \xi^q_{-k_j} (z_j),
									\end{equation}
									where the basis of $1$-forms $\{\xi^q_{-k} (z)\}_{k\in \N^*}$ is specifically adapted to the $q$-deformed geometry of the curve.
								\end{definition}
								Since the $q$-correlators have finite expansions in the basis $\xi^q_{-k_j}(z_j)$ for $k_j >0$, the system of $q$-correlators naturally satisfies the $q$-projection property (see Lemma \ref{l:finiteness}):
								\begin{lemma}\label{l:projprop}
									The system of correlators $\{\omega^{q}_{g,n} \}_{g \in \frac{1}{2}\mathbb{N}, n \in \mathbb{N}^*}$ constructed above satisfies the $q$-projection property.
								\end{lemma}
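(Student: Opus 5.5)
The plan is to reduce Lemma \ref{l:projprop} to Lemma \ref{l:finiteness}, which says the projection property is equivalent to each stable correlator admitting an expansion of the form \eqref{eq:expansion} with only finitely many non-zero coefficients. With $N=1$, Definition \ref{d:srs} already writes $\omega^q_{g,n}$ as a linear combination of products $\prod_j \xi^q_{-k_j}(z_j)$ with $k_j\in\N^*$. So what remains is to show that, for fixed stable $(g,n)$, only finitely many coefficients $F^q_{g,n}[k_1,\dotsc,k_n]$ are non-zero.

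\textbf{Step 1: finiteness via a grading argument.}
I would put a degree on the Weyl algebra by $\deg x_k = k$ and $\deg \partial_{x_k} = -k$, and $\deg\hslash=1$ as in the classical $(r,s)$ case. The aim is to show that the generators $H^{i,q}_k=\rho_q'(W^{\hslash,i}_k)$ are homogeneous, or at least filtered, so that the recursion of Proposition \ref{t:airy} produces $F^q_{g,n}$ supported on
\[
\sum_j k_j \le C(g,n)
\]
for some bound $C(g,n)$, e.g.\ $(r+s)(2g-2+n)$.

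The main tool is the leading term $\hslash J^{(q)}_{rk+s(i-1)}$ from Theorem \ref{t:rsAs}. Together with the bijection $\Pi_{s,q}$, it solves for each $F^q_{g,n}[k_1,\dots]$ in terms of lower-order data of bounded total degree. An induction on $2g-2+n$ then gives finiteness.

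The conjugations by $\hat T^q$ and $\hat\Phi^q$ insert the infinitely many constants $F^q_{0,1}[-k]$, $F^q_{\frac12,1}[-k]$ and $F^q_{0,2}[-k,-l]$. However, they only ever multiply positive powers of $x$, namely the $J^{(q)}_{-k}$ terms. In the recursion each $F^q_{g,n}$ is therefore determined by finitely many terms of bounded degree, and finiteness survives.

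\textbf{Step 2: the projection identity.}
Once finiteness is established, I would check the identity directly on a single basis element. Using Definition \ref{d:xibasis}, one needs
\[
\Res_{z=0}\Big(\int_0^z\omega^q_{0,2}(\cdot,z_1)\Big)\,\xi^q_{-k}(z)=\xi^q_{-k}(z_1).
\]
This follows by expanding the primitive of $\omega^q_{0,2}$ in $z$ near $0$: the primitive of the polar part pairs with $z^{-k-1}dz$ to reproduce $\xi^q_{-k}(z_1)$, and its holomorphic correction terms are designed exactly for this. Alternatively, once finiteness is known, this step is simply Lemma \ref{l:finiteness} and need not be redone.

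\textbf{Main obstacle.}
The hard part is the finiteness of Step 1. It needs the leading term of $H^{i,q}_k$ to have no constant term. It also needs a genuine grading compatible with $\mathcal{D}_{q,a}$, which holds since $\mathcal{D}_{q}x^n=[n]_qx^{n-1}$ preserves the degree shift.

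I would first verify the $q$-case of homogeneity for the undressed $\rho_q$. I would then argue that the dressing terms with $k>s$, namely $F^q_{0,1}[-k]$, only lower the effective bound, by triangularity in degree, rather than create infinite support. If this fails, the fallback is to restrict to the undeformed $(r,s,q)$-curve, where the grading is exact.
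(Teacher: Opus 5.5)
Your main line matches the paper's: Definition~\ref{d:srs} already expands $\omega^q_{g,n}$ in the basis $\xi^q_{-k}$, $k\geq1$, and Lemma~\ref{l:finiteness} converts finiteness of that expansion into the $q$-projection property. The paper does not prove the finiteness; it simply asserts it. Your Step~1 therefore goes beyond the paper, but as written it does not work.

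With $\deg x_k=k$, $\deg\partial_{x_k}=-k$, $\deg\hslash=1$, the dilaton shift $J^{(q)}_{-s}\mapsto J^{(q)}_{-s}-s/\hslash$ produced by $\hat T^q_s$ is not homogeneous. Even the undeformed $Z_q$ is not homogeneous for this grading: its terms have degree $(s+1)(2g-2+n)$. So this grading by itself yields no bound.

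The grading has to be tied to $s$. Take $\deg x_k=-k$, $\deg\partial_{x_k}=k$ (so $\deg J^{(q)}_p=p$) and $\deg\hslash=s$, treating $q$ as a constant rather than $e^{\hslash}$. Then:
\begin{enumerate}
\item the undeformed $H^{i,q}_k$ are homogeneous of degree $rk+si$;
\item uniqueness of $Z_q$ forces $\sum_j k_j=s(2g-2+n)$;
\item the dressing by $F^q_{0,1}[-k]$ ($k>s$), $F^q_{\frac12,1}[-k]$ ($k>0$) and $F^q_{0,2}[-k,-l]$ replaces $J^{(q)}_{-k}$ by terms of degree $-s$, $0$ and $l$ respectively, so it only adds terms of strictly higher degree;
\item hence $\sum_j k_j\leq s(2g-2+n)$.
\end{enumerate}

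You also miss one case. For $s=r+1$ the data include $F^q_{0,1}[-r]$ (since $k\geq\min\{s,r\}$), and this term lowers the degree. It is harmless only because it amounts to conjugation by $\exp\big(c\,\rho_q(W^{\hslash,1}_1)/\hslash^2\big)$, where $\rho_q(W^{\hslash,1}_1)\propto\hslash J^{(q)}_r$ lies in the ideal. So neither the ideal nor $Z_q$ changes.

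Your direct check in Step~2 silently uses $\xi^q_{-k}(z)=(z^{-k-1}+\text{hol})\,dz$, as asserted in Definition~\ref{d:xibasis}. With $\omega^{\textup{std},q}_{0,2}$ and an ordinary primitive this fails. One gets $\int_0^z\omega^{\textup{std},q}_{0,2}(\cdot,z_1)=\sum_{m\geq1}\frac{[m]_q}{m}z^mz_1^{-m-1}dz_1$, so $\xi^q_{-k}(z)=\frac{[k]_q}{k}z^{-k-1}dz+\text{hol}$, and the residue returns $\frac{[k]_q}{k}\,\xi^q_{-k}(z_1)$ instead of $\xi^q_{-k}(z_1)$.

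The identity, and Lemma~\ref{l:finiteness} itself, hold only if $\int$ is the Jackson $q$-integral inverting $\mathcal{D}_q$. That reading turns the primitive into the Cauchy kernel $\sum_{m\geq1}z^mz_1^{-m-1}dz_1$. The paper relies on the same unstated convention, so this does not separate your argument from theirs, but you should make it explicit.
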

								The key step is to show that this system also satisfies the $q$-loop equations:
								\begin{proposition}\label{p:loopeq}
									Let $Z_q$ be the partition function of the deformed $(r,s,q)$-Airy structure, and let  $\{\omega^{q}_{g,n} \}_{g \in \frac{1}{2}\mathbb{N}, n \in \mathbb{N}^*}$ be the $q$-correlators of the deformed $(r,s,q)$-spectral curve defined in Definition \ref{d:srs} . Define the $q$-deformed  operator-valued fields:
									\begin{equation}
										G^{i,q}(x) \coloneqq Z_q^{-1} H^{i,q}(x) Z_q, \qquad i \in [r].
									\end{equation}
									Decompose  $G^{i,q}(x)$ into components homogeneous in $\hslash$ and the variables $ x_j$ as:
									\begin{equation}
										G^{i,q}(x) =: \sum_{g,n} \frac{\hslash^{2g+n}}{[n]_q!} G^{i,q}_{g,n}(x).
									\end{equation}
									Then, the insertion of $n$ points via the adjoint action of the currents yields:
									\begin{equation}
										\prod_{j=1}^n \ad_{\hslash^{-1}\mc{J}^{(q)}_-(z_j)} G^{i,q}_{g,n}(x) = \mathcal{E}^{i,q}_{g,n}(x; z_{[n]}),
									\end{equation}
									where $ \mathcal{E}^{i,q}_{g,n}(x; z_{[n]})$ are the abstract loop equations defined in in Definition \ref{d:EW} from the system of correlators $\{\omega^{q}_{g,n} \}_{g \in \frac{1}{2}\mathbb{N}, n \in \mathbb{N}^*}$ constructed from $Z_q$. Moreover, the system of correlators satisfies the loop equations:
									\begin{equation}\label{lpq}
										\mathcal{E}^{i,q}_{g,n} (x; z_{[n]}) \in \mathcal{O} \left( x^{ \lfloor \frac{s(i-1)}{r}\rfloor + 1} \right) \left( \frac{dx}{x}\right)^i \,.
									\end{equation}
								\end{proposition}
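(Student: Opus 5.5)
The argument has two parts. First we establish the combinatorial identity $\prod_j \ad_{\hslash^{-1}\mc{J}^{(q)}_-(z_j)} G^{i,q}_{g,n}(x)=\mathcal{E}^{i,q}_{g,n}(x;z_{[n]})$. Then we read off the pole bound \eqref{lpq} from the Airy constraints $H^{i,q}_kZ_q=0$.

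\textbf{Conjugating by $Z_q$.} We start from Proposition \ref{p:hh}, which writes $rH^{i,q}(x)$ as a sum over $Z\subseteq\mathbf{f}(z)$ with $|Z|=i$ of normally ordered products. These products involve the pairings $\hslash^2\omega^q_{0,2}$, the shifts $\hslash\omega^q_{0,1}$ and $\hslash^2\omega^q_{\frac12,1}$, the creation currents $\mathcal{J}^{(q)}_+$ and the annihilation currents $\mathcal{J}^{(q)}_-$.

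Because the $\mathcal{J}^{(q)}_-$ stand to the right, conjugation by $Z_q=\exp(F)$ acts only on them:
\begin{equation*}
Z_q^{-1}\mathcal{J}^{(q)}_-(z')Z_q=\mathcal{J}^{(q)}_-(z')+\big(\mathcal{J}^{(q)}_-(z')F\big).
\end{equation*}
By Definition \ref{d:srs} and the duality between $J^{(q)}_k=[k]_q\partial_{x_k}$ and $\xi^q_{-k}$, the function $\mathcal{J}^{(q)}_-(z')F$ is the generating series $\sum_{g,m}\frac{\hslash^{2g-1+m}}{m!}\,\omega^q_{g,1+m}(z',\cdot)$. Here the $\cdot$ slots are filled by $\mathcal{J}^{(q)}_+$-type variables. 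The $[n]_q!$ in $Z_q$ has to be matched against the $[n]_q$ produced by differentiating $x_{k_1}\cdots x_{k_n}$; we will check that these factors telescope so that each $F^q_{g,n}$ enters with the symmetric weight. This is the first place to be careful.

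Products of several such derivatives over points of $Z$ produce, through the exponential, connected and disconnected pieces. Organized by set partitions of $Z$, these are exactly the $\omega^q_{g_S,|S|+|N_S|}$ of Definition \ref{d:EW}.

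\textbf{Extracting $\mathcal{E}^{i,q}_{g,n}$.} Next we apply $\prod_{j=1}^n\ad_{\hslash^{-1}\mc{J}^{(q)}_-(z_j)}$ and set $x_A=0$. Each $\ad$ acts on a single remaining $\mathcal{J}^{(q)}_+(z')$ via the commutator $[\mathcal{J}^{(q)}_-(z_j),\mathcal{J}^{(q)}_+(z')]$. That commutator reproduces $\hslash\,\omega^q_{0,2}(z',z_j)$, or more precisely the part of $\omega^q_{0,2}$ paired by the polarization of Definition \ref{d:xibasis}. This attaches the external point $w_j=z_j$ to the block $S$ containing $z'$. The $\ad$ can alternatively act on $\mathcal{J}^{(q)}_-F$ terms, which raises $n$ inside some $\omega^q_{g_S,\cdot}$. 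Together these realize the splittings $\bigsqcup N_S=w_{[n]}$.

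The internal pairings $\hslash^2\omega^q_{0,2}(z',z'')$, the shifts $\omega^q_{0,1}$ and $\omega^q_{\frac12,1}$ play the role of the unstable blocks. Counting powers of $\hslash$ gives $\sum_S(g_S-1)=g-i$. Summing over $Z$ therefore yields $\mathcal{E}^{i,q}_{g,n}$; the factor $r$ and the $[n]_q!$ normalizations are absorbed by the chosen decomposition of $G^{i,q}$.

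\textbf{The loop equations.} Since $H^{i,q}_kZ_q=0$ for $k\ge-\lfloor s(i-1)/r\rfloor$, only modes with $k\le-\lfloor s(i-1)/r\rfloor-1$ survive in $Z_q^{-1}H^{i,q}(x)Z_q$. Their contribution is $\sum_k G_k\,dx^i/x^{k+i}$ with $-k-i\ge\lfloor s(i-1)/r\rfloor+1-i$, that is, it lies in $\mathcal{O}(x^{\lfloor s(i-1)/r\rfloor+1})(dx/x)^i$. Applying the $\ad$ operators and extracting homogeneous components preserves this $x$-expansion, which gives \eqref{lpq}.

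\textbf{Main obstacle.} The hardest step is the precise bookkeeping in the first two parts. One has to verify that the $q$-weights $[k]_q$ in $J^{(q)}_k$, the $[n]_q!$ in $Z_q$, and the rescalings \eqref{F1}--\eqref{F3} of the spectral data cancel exactly, so that the contractions give the same $\omega^q_{0,2}$ that appears in $\mathcal{E}$. I would first check the case $i=1,2$ against the Example following Proposition \ref{p:hh}, and then run the general Wick-type induction on $n$.
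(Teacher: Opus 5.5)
Your proposal follows essentially the same route as the paper. The constraints $H^{i,q}_kZ_q=0$ for $k\ge-\lfloor s(i-1)/r\rfloor$ force $G^{i,q}(x)\in\mathcal{O}(x^{\lfloor s(i-1)/r\rfloor+1})(dx/x)^i$, and the identity with $\mathcal{E}^{i,q}_{g,n}$ is read off from the current expansion of Proposition~\ref{p:hh}; the paper simply asserts that this ``follows combinatorially'', and your Wick-type bookkeeping is a faithful expansion of that claim.

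One caution on the step you defer. Since $J^{(q)}_k=[k]_q\partial_{x_k}$ is an ordinary derivative, differentiating $x_{k_1}\cdots x_{k_n}$ produces $n$, not $[n]_q$. Each leg also picks up an extra $[k]_q$ relative to Definition~\ref{d:srs}. So the $q$-factors do not telescope as you hope; the paper's proof leaves this normalization unresolved too, and it would have to be absorbed into the conventions for $\omega^q_{g,n}$ or for the decomposition of $G^{i,q}$.
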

								\begin{proof} 
									We can essentially use the differential constraints 
									\begin{equation}
										H^{i,q}_k Z_q = 0 \,, \quad i \in [r]\,, \quad k \geq - \lfloor \frac{s(i-1)}{r} \rfloor\,,
									\end{equation}
									that are equivalent to the statement that the expectation values $G^{i,q}(x)$ vanish modulo high powers of $x$:
									\begin{equation}
										G^{i,q}(x)  \in \mc{O} \left( x^{\lfloor \frac{s(i-1)}{r}\rfloor + 1} \right) \Big( \frac{dx}{x}\Big)^i \,.
									\end{equation}
									The identity then follows combinatorially from the expansion of $H^{i,q}(x)$ in terms of currents
									Proposition \ref{p:hh}.
								\end{proof}
								{\begin{example}
										Taking $i=1$ and $i=2$ in the relation \eqref{lpq}, we obtain, respectively, 
										\begin{align}
											\mathcal{E}^{1,q}_{g,n}(x; z_{[n]}) &= \omega^q_{g,n+1}(x, z_{[n]}) + \omega^q_{g,n+1}(\theta x, z_{[n]}),\nonumber\\
											\mathcal{E}^{2,q}_{g,n}(x; z_{[n]})& = \omega^q_{g-1, n+2}(x, \theta x, z_{[n]}) + \sum_{\substack{g_1+g_2=g \\ I \sqcup J = {[n]}}} \omega^q_{g_1, |I|+1}(x, z_I) \, \omega^q_{g_2, |J|+1}(\theta x, z_J)
										\end{align}
										and
										\begin{table}[h!]
											\centering
											\renewcommand{\arraystretch}{2}
											\begin{tabular}{|l|c|c|}
												\hline
												\textbf{Model} & \textbf{Linear ($j=1$)} & \textbf{Quadratic ($j=2$)} \\ \hline
												\textbf{Airy ($s=3$)} & $\lfloor \frac{3(0)}{2} \rfloor + 1 - 1 = 0$ & $\lfloor \frac{3(1)}{2} \rfloor + 1 - 2 = 0$ \\ \hline
												\textbf{Bessel ($s=1$)} & $\lfloor \frac{1(0)}{2} \rfloor + 1 - 1 = 0$ & $\lfloor \frac{1(1)}{2} \rfloor + 1 - 2 = -1$ \\ \hline
											\end{tabular}
											\caption{Vanishing orders of $\mathcal{E}^{j,q}_{g,n}$ for $r=2$}
										\end{table}
								\end{example}}
								\begin{remark}
									The expansion of the operator-valued fields $G^{i,q}(x)$ in Proposition \ref{p:loopeq} provides the fundamental link between the algebraic constraints and the geometric recursion. Specifically:
									\begin{itemize}
										\item[(a)] The leading order term $G^{i,q}_{0,0}(x)$ corresponds to the quantum spectral curve equation. In the limit $\hslash \rightarrow 0$ the condition $G^{i,q}_{0,0}(x)\in \mathcal{O}(x^N)$ recovers the classical (or q-deformed) algebraic relation given in Definition \ref{d:rsdeformed}.
										\item[(b)] The higher-order terms $G^{i,q}_{g,n}(x)$ encode the Ward identities of the $q$-deformed $\mathcal{W}$-algebra. The fact that these terms vanish (or are holomorphic) at the branch points, as expressed by the $\mathcal{O}(x^N)$ condition, is precisely what forces the correlators $\omega^q_{g,n}$ to be determined by their polar parts.
										\item[(c)] The use of the $q$-deformed basis $\xi^{q}_{-k}$  in the definition of the correlators is what allows the operator $\ad_{\hslash^{-1}\mc{J}^{(q)}_-(z)}$ to act as a loop insertion operator that remains consistent with the $q$-difference structure of the curve.
									\end{itemize}
									Essentially, this proposition ensures that the algebraic zero-mode constraints $H^{i,q}_k\cdot Z_q$ are equivalent to the geometric no-pole conditions required by the topological recursion.
								\end{remark}
								
								In summary, we have demonstrated that from the data of the partition function $Z_q$ associated with the deformed $(r,s,q)$-Airy structure, one can can construct a system of $q$-correlators on the deformed $(r,s,q)$-spectral curve that satisfies both the $q$-loop equations and the $q$-projection property.
								Since these two conditions uniquely characterize the $q$-correlators, the system is entirely determined by the geometry of the $q$-spectral curve through the $q$-topological recursion formula (see Theorem \ref{UnshiftedTR}).
								
								\subsection{The q-shifted loop equations and q-topological recursion}
								
								We now investigate the consequences of considering the shifted $(r,s,q)$-Airy structures introduced in Theorem \ref{t:shifts} or, more generally, the $q$-deformed and shifted versions from Proposition \ref{p:shiftsgen}.
								
								We will demonstrate that, from the partition function of these deformed and shifted structures, one can construct a new system of correlators on a shifted version of the deformed $(r,s,q)$-spectral curve (see Definition \ref{d:rsdeformed}). While this system still satisfies the $q$-projection property, it no longer obeys the standard loop equations. Instead, it satisfies a modified set of constraints which we call shifted $q$-loop equations. We further show that there exists a unique system of $q$-correlators satisfying both the shifted $q$-loop equations and the $q$-projection property, which can be constructed recursively via a shifted version of the $q$-topological recursion formula.
								
								Using the notation from the previous section, we first define the shifted deformed $(r,s,q)$-spectral curve:
								
								\begin{definition}\label{d:rsdefshift}
									Let $r  \in \mathbb{Z}_{\geq 2}$  and $s \in [r+1]$ such that $r = \pm 1 \pmod{s}$. Consider the set of  complex numbers:
									\begin{equation}
										F^{q}_{0,1}[-k]\,, k \geq \min \{ s, r\} \,, \qquad F^{q}_{\frac{1}{2},1}[-k]\,, k > 0\,, \qquad F^{q}_{0,2}[-k,-l] \,, k,l > 0, \qquad S_{i,1}, i \in [r].
									\end{equation}
									Assume that the set of shifts $\{ S_{i,1} \}_{i \in [r]}$ is $s$-consistent (see Definition \ref{d:consistent}).
									
									The \emph{shifted deformed $(r,s,q)$-spectral curve} is the tuple $\mc{S}_q = (C, x, \omega^{q}_{0,1}, \omega^{q}_{\frac{1}{2},1} \omega^{q}_{0,2} ) $, where $C$ is a small disk, $x=z^r$, and :
									\begin{align}
										\omega^{q}_{0,1} (z)
										&= 
										\sum_k F^{q}_{0,1} [-k] z^{k-1} dz \,,
										\\
										\omega^{q}_{\frac{1}{2},1} (z)
										&=
										\sum_k\left( \frac{[k]_q}{k} \right)^2 F^{q}_{\frac{1}{2},1} [-k] z^{k-1} dz +\sum_{i=1}^{r}(-1)^{i-1}S_{i,1}\,\frac{dz}{z^{s(i-1)+1}},
										\\
										\omega^{q}_{0,2} (z_1, z_2)
										&=
										\omega^{q,\textup{std}}_{0,2} (z_1, z_2) + \sum_{k,l} F^{q}_{0,2}[-k,-l] z_1^{k-1} z_2^{l-1} dz_1 dz_2\,.
									\end{align}
								\end{definition}
								Note that the primary departure from  Definition \ref{d:rsdeformed} lies in $\omega^{q}_{\frac{1}{2},1}(z)$, which is now shifted by terms linear in the constants $S_{i,1}$.These constants correspond to the $\mathcal{O}(\hslash)$ terms in the shifts of the differential operators of the underlying $q$-Airy structure. We define the \emph{shifted $(r,s,q)$-spectral curve} as the specific case where:
								\begin{equation}
									F^{q}_{0,1} [-k] = [r]_q \delta_{k,s}, \qquad F^{q}_{\frac{1}{2},1}[-k] = 0, \qquad F^{q}_{0,2}[-k,-l] =0.
								\end{equation}
								\begin{remark}
									By substituting these values, we recover a version of the original $(r,s,q)$-spectral curves where the classical factor $r$ is replaced by its $q$-deformed number $[r]_q$. This ensures that the classical limit $q\rightarrow 1$ is smooth and that the scaling of the $1$-form $\omega^q_{0,1}$ is consistent with the $q$-deformed differential operators $H^{i,q}_k$.
								\end{remark}
								In this setting, the deformations vanish and we recover a shifted version of the original $(r,s,q)$-spectral curve from Definition \ref{d:rs}. This shifted curve can still be interpreted as a parametrization of the algebraic curves in  \eqref{eq:rsac1} and \eqref{eq:rsac2}, but with a non-trivial $\omega^{q}_{\frac{1}{2},1}(z)$ introduced by the shifts.
								
								We begin by considering the differential constraints from Proposition \ref{p:shiftsgen}. To simplify the notation, we write the shifted operators as:
								\begin{equation}
									H^{i,q}_k :=  \rho_q'(W^{\hslash,i}_k(S)) =(H^{i,q}_k)^{\textup{unshifted}} -\delta_{k,0} \sum_{\ell=1}^\infty \hslash^\ell S_{i,\ell},
								\end{equation}
								where  the set of shifts is assumed to be $s$-consistent. Here, $(H^{i,q}_k)^{\textup{unshifted}}$  denotes the  unshifted differential operators of \eqref{eq:unshiftedH}.
								This notation highlights that the primary difference from the standard $(r,s,q)$ case lies in the modification of the zero-modes $H^{i,q}_0$ by the $s$-consistent shifts $\{S_{i,\ell} \}_{i \in [r], \ell \in \mathbb{N}^*}$.
								
								The $q$-partition function associated  with this deformed and shifted $(r,s,q)$-Airy structure is given by:
								\begin{equation}
									Z_q = \exp \bigg( \sum_{ \substack{g \in \frac{1}{2} \N, n \in \N^* \\ 2g-2+n > 0}} \frac{\hslash^{2g-2+n}}{[n]_q!} F^{q}_{g,n}[k_1, \dotsc, k_n] \prod_{j=1}^n x_{k_j} \bigg)
								\end{equation}
								and satisfies the differential constraints:
								\begin{equation}
									H^{i,q}_k Z_q = 0 \,, \quad i \in [r]\,, \quad k \geq - \lfloor \frac{s(i-1)}{r} \rfloor\, .
								\end{equation}
								Following the construction in in Definition \ref{d:srs}, we use this  $q$-partition function to define  a system of $q$-correlators on the shifted deformed $(r,s,q)$-spectral curve. As in the unshifted case, the system of $q$-correlators naturally satisfies the $q$-projection property: 
								\begin{lemma}\label{l:projpropshift}
									The system of correlators $\{\omega^{q}_{g,n} \}_{g \in \frac{1}{2}\mathbb{N}, n \in \mathbb{N}^*}$ on the shifted deformed $(r,s,q)$-spectral curve constructed from the partition function $Z_q$ according to Definition \ref{d:srs} satisfies the $q$-projection property.
								\end{lemma}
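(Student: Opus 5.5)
The plan is to reduce Lemma \ref{l:projpropshift} to Lemma \ref{l:finiteness}. By that lemma, it suffices to show that every stable correlator $\omega^{q}_{g,n}$ with $2g-2+n>0$, built from $Z_q$ as in Definition \ref{d:srs}, is a \emph{finite} linear combination of products $\prod_j \xi^q_{-k_j}(z_j)$ with all $k_j\in\mathbb{N}^*$. The form of this expansion is automatic, since Definition \ref{d:srs} writes $\omega^q_{g,n}$ directly in the polar basis $\xi^q_{-k}$ with $k\geq 1$. The shifts $S_{i,\ell}$ only change the unstable datum $\omega^q_{\frac12,1}$ and the zero-modes $H^{i,q}_0$. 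They do not change $\omega^q_{0,2}$, so the polar basis of Definition \ref{d:xibasis} is the same as in the unshifted case.

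The only real content is the finiteness of the coefficients $F^q_{g,n}[k_1,\dots,k_n]$ for fixed $(g,n)$. I would obtain it from a degree (grading) argument on the Airy constraints $H^{i,q}_k Z_q=0$.

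First, assign $\deg x_k = k$ and $\deg \hslash = 1$, together with the grading carried by the modes $J^{(q)}_m$ of \eqref{eq:Js}. Second, check that in $\rho_q'(W^{\hslash,i}_k(S))$ the leading term $\hslash J^{(q)}_{rk+s(i-1)}$ controls the degree. The constant shifts $\hslash^\ell S_{i,\ell}$ enter only through $k=0$, so they contribute to $H^{i,q}_0$, whose linear index is $s(i-1)\leq s(r-1)$, a bounded quantity. Third, run the recursive determination of $F^q_{g,n}$ from the proof of Proposition \ref{t:airy}. Solve $H^{i,q}_k Z_q=0$ at order $\hslash^{2g-2+n}$, degree $n$ in the $x$'s. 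The coefficient $F^q_{g,n}[k_1,\dots]$ with $k_1=rk+s(i-1)$ is then expressed through lower $(g',n')$ data and finitely many spectral-curve coefficients, with total index bounded by a linear function of $2g-2+n$. An induction on $2g-2+n$ then gives that only finitely many $F^q_{g,n}$ are nonzero for each fixed $(g,n)$.

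The main obstacle is the second and third steps in the deformed setting. The conjugation by $\hat\Phi^q\hat T^q$ introduces infinitely many coefficients $F^q_{0,1}[-k]$ and $F^q_{0,2}[-k,-l]$, which could spoil homogeneity. I would first handle the undeformed shifted $(r,s,q)$ curve, where $\omega^q_{0,1}\propto z^{s-1}dz$ and the constraints are homogeneous, so boundedness is immediate. I would then treat the general deformed curve by observing that all extra terms from $\hat\Phi^q\hat T^q$ involve creation modes $J^{(q)}_{-k}$ with $k>0$, that is, multiplication by $x_k$. These can only raise the index on one side, so at fixed $\hslash$-order and fixed number of annihilation operators they do not produce new annihilation indices. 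This is the same mechanism as in the classical case \cite{BBKN25,BKS23}, and the triangularity it provides should survive the $q$-rescalings $[k]_q$, because these are nonzero constants.
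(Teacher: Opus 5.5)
Your opening reduction is exactly the paper's argument. As for the unshifted Lemma \ref{l:projprop}, the paper notes that Definition \ref{d:srs} expands the stable correlators in the polar basis $\xi^q_{-k}$, $k\geq 1$, and appeals to Lemma \ref{l:finiteness}; it simply asserts that these expansions are finite. You rightly single out finiteness as the only real content, but the argument you add for it does not work as written, for three reasons.

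(i) With $\deg x_k=k$, the grading that makes the constraints homogeneous has $\deg\hslash=-s$, not $\deg\hslash=1$. Only then is $J^{(q)}_s/\hslash$ of degree zero, so that conjugation by $T^q_s$ preserves degree and $\rho_q(W^{\hslash,i}_k)$ is homogeneous of degree $-(rk+si)$. This is what gives $\sum_j k_j=s(2g-2+n)$ in the unshifted, undeformed case.

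(ii) On the undeformed \emph{shifted} curve the constraints are not homogeneous. The shift $\hslash^\ell S_{i,\ell}$ has degree $-s\ell$, whereas $(H^{i,q}_0)^{\mathrm{unshifted}}$ has degree $-si$, so homogeneity survives only for $\ell=i$. Yet $s$-consistency allows every $S_{1,\ell}$ when $r\equiv 1 \pmod s$, and every $S_{i,\ell}$ when $s=1$. The premise of your ``immediate'' boundedness therefore fails.

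(iii) $\hat T^q$ and $\hat\Phi^q$ are exponentials of annihilation modes $J^{(q)}_k$, $k>0$. Conjugation replaces each creation mode $J^{(q)}_{-m}$ by $J^{(q)}_{-m}$ plus a constant plus $\sum_l a_{ml}J^{(q)}_l$, with $a_{ml}\propto F^q_{0,2}[-m,-l]$. The extra terms thus remove factors $x_m$ and \emph{do} produce new annihilation indices, with no bound on $l$. The triangularity you invoke is not there in the form stated.

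What does work is weighted homogeneity with the parameters included. Put $\deg F^q_{0,1}[-k]=k-s$, $\deg F^q_{\frac{1}{2},1}[-k]=k$, $\deg F^q_{0,2}[-k,-l]=k+l$ and $\deg S_{i,\ell}=s(\ell-i)$. Then every $\rho_q'(W^{\hslash,i}_k(S))$ is homogeneous, and $F^q_{g,n}[k_1,\dots,k_n]$ is a polynomial in the parameters (and $1/F^q_{0,1}[-s]$) of degree $s(2g-2+n)-\sum_j k_j$. The correct version of your triangularity is that each replacement in (iii) raises the net mode index, i.e.\ these parameters have non-negative degree.

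All deformation parameters with $k\geq s$ have non-negative degree; the extra $F^q_{0,1}[-r]$ allowed when $s=r+1$ needs a separate remark. The only negative-degree parameters are the shifts $S_{i,\ell}$ with $\ell<i$, which occur only for $s=1$. Since $S_{i,\ell}$ comes with $\hslash^\ell$ and enters only through $\omega^q_{\ell/2,1}$ (cf.\ Proposition \ref{p:shifteloopeq}), a monomial of $F^q_{g,n}$ contains at most $2g$ shift factors, each of degree at least $-s(r-1)$. Hence $\sum_j k_j\leq s(2g-2+n)+2gs(r-1)$, which is the finiteness you need. Your remark that a shift only costs the bounded index $s(i-1)$ is the right germ; the missing step is to bound the \emph{number} of shift insertions.

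Finally, your reduction, like the paper's, is only as good as Lemma \ref{l:finiteness}. With the $q$-Bergman kernel and the basis of Definition \ref{d:xibasis}, one computes $\Res_{z=0}\big(\int_0^z\omega^q_{0,2}(\cdot,z_1)\big)\,\xi^q_{-k}(z)=\frac{[k]_q}{k}\,\xi^q_{-k}(z_1)$. So for $q\neq 1$ the projection does not fix the polar basis, and that lemma itself needs a corrected normalization.
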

								The remaining question is whether this system satisfies the $q$-loop equations, which is addressed in the following proposition.
								
								\begin{proposition}\label{p:shifteloopeq}
									Let $Z_q$ be the partition function associated with the deformed and shifted $(r,s,q)$-Airy structure. Define the $q$-deformed operator-valued fields:
									\begin{equation}
										G^{i,q}(x) = Z_q^{-1} H^{i,q}(x) Z_q  =: \sum_{g,n} \frac{\hslash^{2g+n}}{[n]_q!} G^{i,q}_{g,n}(x),
									\end{equation}
									where each $G^{i,q}_{g,n}(x)$ is a homogeneous polynomials of degree $n$ in the variables $x_j$.
									Then, the insertion of $n$ points via the adjoint action of the currents yields:
									\begin{equation}
										\prod_{j=1}^n \ad_{\hslash^{-1}\mathcal{J}^{(q)}_-(z_j)} G^{i,q}_{g,n}(x) = \mathcal{E}^{i,q}_{g,n}(x; z_{[n]}) - \delta_{n,0} S_{i,2g} \Big( \frac{dx}{x}\Big)^i ,
									\end{equation}
									where $ \mathcal{E}^{i,q}_{g,n}(x; z_{[n]})$ is the object from Definition \ref{d:EW} constructed from the system of $q$-correlators $\{\omega^{q}_{g,n} \}_{g \in \frac{1}{2}\mathbb{N}, n \in \mathbb{N}^*}$.
									
									Furthermore, these $q$-correlators  satisfy the \emph{shifted $q$-loop equations}:
									\begin{equation}\label{eq:sle}
										\mathcal{E}^{i,q}_{g,n} (x; z_{[n]}) - \delta_{n,0} S_{i,2g} \Big( \frac{dx}{x}\Big)^i  \in \mathcal{O} \left( x^{ \lfloor \frac{s(i-1)}{r}\rfloor + 1} \right) \left( \frac{dx}{x}\right)^i \,.
									\end{equation}
								\end{proposition}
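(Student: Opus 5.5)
The plan is to reduce the statement to the unshifted Proposition \ref{p:loopeq} by isolating exactly where the shift enters. By definition, $H^{i,q}_k = (H^{i,q}_k)^{\textup{unshifted}} - \delta_{k,0}\sum_{\ell\ge1}\hslash^\ell S_{i,\ell}$. Summing into the generating field $H^{i,q}(x)=\sum_k H^{i,q}_k\, dx^i/x^{k+i}$, the only modification is a scalar term:
\begin{equation}
H^{i,q}(x) = (H^{i,q}(x))^{\textup{unshifted}} - \Big(\sum_{\ell\ge1}\hslash^\ell S_{i,\ell}\Big)\Big(\frac{dx}{x}\Big)^i .
\end{equation}
Conjugating by $Z_q$ leaves this scalar term unchanged. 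Therefore $G^{i,q}(x)$ equals the conjugated unshifted field, with $Z_q$ now the shifted partition function, minus the same scalar series.

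The first step is to apply Proposition \ref{p:hh} to the unshifted part. The representation $\rho'_q$ is the same conjugation by $\hat\Phi^q\hat T^q$, so the expansion in $\omega^q_{0,1}$, $\omega^q_{\frac12,1}$, $\omega^q_{0,2}$ and the currents $\mathcal{J}^{(q)}_\pm$ carries over verbatim. The shifted spectral curve of Definition \ref{d:rsdefshift} differs from the unshifted one only in the polar part of $\omega^q_{\frac12,1}$. I would check that this polar part exactly absorbs the $\hslash^1$ coefficients $S_{i,1}$ once they are written through $\hat T^q$. This requires rewriting the constant $\hslash S_{i,1}(dx/x)^i$ as the $A_{\frac12}$-type contribution, using $x=z^r$ and the elementary symmetric sums over $\mathfrak f(z)$ of $z'^{-s(i-1)}$. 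That is the reason for the alternating signs $(-1)^{i-1}$ in the definition.

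The next step is to conjugate by $Z_q$. Here I would repeat the combinatorial argument of Proposition \ref{p:loopeq}. Each $\mathcal{J}^{(q)}_-$ acting on $Z_q$ produces $\sum_k F^q_{g,n}[k,\dots]\xi^q_{-k}$, which is $\omega^q_{g,n}$ by Definition \ref{d:srs}. Normal ordering $\mathcal{J}^{(q)}_-$ past products, together with the pairings $P$, regroups into the partial disconnected sums $\mathcal W^q_{g,i,n}$ of Definition \ref{d:EW}. Applying $\prod_j \ad_{\hslash^{-1}\mathcal J^{(q)}_-(z_j)}$ then extracts the degree-$n$ part in the $x_k$ and inserts $z_{[n]}$, giving $\mathcal E^{i,q}_{g,n}$.

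The remaining scalar $\sum_\ell \hslash^\ell S_{i,\ell}$ has degree $0$ in the $x_k$. It therefore survives only for $n=0$, and is killed by any adjoint action when $n\ge1$. Matching the power $\hslash^{2g+n}$ with $n=0$ gives $\ell=2g$, which produces the correction $-\delta_{n,0}S_{i,2g}(dx/x)^i$. The shifted $q$-loop equations \eqref{eq:sle} then follow at once. The constraints $H^{i,q}_k Z_q=0$ for $k\ge-\lfloor s(i-1)/r\rfloor$ say precisely that $G^{i,q}(x)\in\mathcal O(x^{\lfloor s(i-1)/r\rfloor+1})(dx/x)^i$, and this property passes to each homogeneous component $G^{i,q}_{g,n}$ and to its adjoint insertions.

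The main obstacle I expect is the bookkeeping of the $\hslash$-orders in the normalization $\hslash^{2g+n}/[n]_q!$. One point is the interplay of $[n]_q!$ with the $n!$ symmetrization arising from the adjoint insertions. The other is whether $S_{i,1}$ appears in $\mathcal E$ (via $\omega^q_{\frac12,1}$) or in the explicit correction, since $2g=1$ also corresponds to $g=\frac12$. I would settle both by checking the case $i=1$ by hand before doing the general combinatorics.
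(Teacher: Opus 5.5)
Your skeleton is exactly the paper's proof. The shift enters $H^{i,q}(x)$ only as the central term $-\sum_{\ell}\hslash^\ell S_{i,\ell}(dx/x)^i$. This term is untouched by conjugation with $Z_q$, survives only at $n=0$, and gives $\ell=2g$. Everything else is the unshifted computation of Proposition~\ref{p:loopeq}, and \eqref{eq:sle} is read off from $H^{i,q}_kZ_q=0$.

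The step where you go beyond the paper does not work as described. $\hat T^q$ is built only from annihilation modes $J^{(q)}_k$, $k>0$, so it can only produce the regular part of $\omega^q_{\frac12,1}$; there is no way to ``write $S_{i,1}$ through $\hat T^q$''. Nor can the constant $\hslash S_{i,1}(dx/x)^i$ be traded for $A_{\frac12}$-terms in Proposition~\ref{p:hh}. A polar $\omega^q_{\frac12,1}(z')$ inserted there is also multiplied by $\mathcal J^{(q)}_\pm(z'')$, by $\omega^q_{0,2}$ and by other $\omega^q_{\frac12,1}$ factors. That produces operator-valued terms which $H^{i,q}$ does not contain. These cross terms first appear at $i=2$, so your planned check at $i=1$ would not detect them. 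The case $i=1$ is degenerate anyway: with $J^{(q)}_0=0$ from \eqref{eq:Js}, $H^{1,q}_0$ is the pure scalar $-\sum_\ell\hslash^\ell S_{1,\ell}$. Finally, absorbing $\ell=1$ into $H^{i,q}$ contradicts keeping $\ell=1$ in the scalar at the end, so you would count $S_{i,1}$ twice.

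The polar part actually comes from $Z_q$. $H^{i,q}_0=\hslash C_{i,0}J^{(q)}_{s(i-1)}-\hslash S_{i,1}+O(\hslash^2)$ contains an order-$\hslash$ constant, so $Z_q$ need not be strictly stable (cf.\ the last clause of Proposition~\ref{t:airy}). For $i\ge2$, the order-$\hslash$, degree-$0$ part of $H^{i,q}_0Z_q=0$ forces $F^q_{\frac12,1}[s(i-1)]\propto S_{i,1}$. When $\mathcal J^{(q)}_-$ hits this term in the conjugation step, it produces a multiple of $F^q_{\frac12,1}[k]\,\xi^q_{-k}$. Added to the $\hat T^q$-part, this gives, up to normalization, the full $\omega^q_{\frac12,1}$ of Definition~\ref{d:rsdefshift} inside $\mathcal E^{i,q}_{g,n}$.

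So the answer to your question is ``both''. Keep all $\ell\ge1$ in the scalar, and include the $(\frac12,1)$ term of $\log Z_q$ when you run the combinatorics of Proposition~\ref{p:loopeq}. $S_{i,1}$ then appears explicitly only in the unstable instance $(g,n)=(\frac12,0)$. There, regularity of $\mathcal E^{i,q}_{\frac12,0}-S_{i,1}(dx/x)^i$ is precisely your root-of-unity computation: the sum of Lemma~\ref{l:Aq4_q} with $\gcd(r,s)=1$, which equals $(-1)^{i-1}r$. That condition pins the polar coefficients to $(-1)^{i-1}S_{i,1}$, up to normalization. For stable indices, $S_{i,1}$ acts only through $\omega^q_{\frac12,1}$. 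The paper's own proof is silent on this point: it simply invokes the unshifted computation, which does not account for the $(\frac12,1)$ term of $Z_q$.
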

								
								\begin{proof}
									The proof follows the strategy established in \cite[Sections~4.3-4]{BKS23} adapted here to account for the $q$-deformation of the underlying Airy structure. The key observation is that the $q$-deformed operators $ H^{i,q}_k$  differ from their unshifted counterparts $ (H^{i,q}_k)^{\textup{unshifted}}$ only by additive constants (central elements).  Consequently, the conjugation by the $q$-deformed partition function $Z_q$ preserves these constants. In the homogeneous decomposition of $ G^{i,q}(x)$, the shifts appear only at polynomial degree $n=0$. By matching the powers of $ \hslash^{\ell}$ with the expansion $ \hslash^{2g+n}$, we find that $ \ell = 2g$. This consistency check ensures that the $q$-analog weights $[k]_q$ in the currents do not alter the central nature of the shifts. The remainder of the calculation for $ \mathcal{E}^{i,q}_{g,n}$ proceeds as in the unshifted case, since the adjoint actions $ \ad_{\mathcal{J}^{(q)}_-(z_j)}$  vanish on the central terms and only affect the $q$-deformed differential modes.
								\end{proof}

								We have established that the differential constraints of the deformed and shifted $(r,s,q)$-Airy structures are equivalent to the existence of a system of $q$-correlators on the shifted deformed $(r,s,q)$-spectral curve,  satisfying both the $q$-projection property and the shifted $q$-loop equations. 
								Since these shifted $q$-loop equations differ from the standard ones, it is natural  to investigate whether these $q$-correlators can still be reconstructed recursively via a modification of the $q$-topological recursion formula.
								
								We begin with the following combinatorial lemma, which is essential for proving the recursion.
								
								\begin{lemma}
									Given a system of $q$-correlators $ \{ \omega^{q}_{g,n} \}_{g \in \frac{1}{2} \N, n\in \N^*}$ on a $q$-deformed  admissible local spectral curve (with $N=1$), and  the objects defined in Definition \ref{d:EW}, the following identity holds: 
									\begin{equation}\label{CombinatorialIdentity}
										\sum_{ \{ z\} \subseteq Z \subseteq \mathbf{f}(z)} \! \mathcal{W}^{q,'}_{g,|Z|,n} \!\! \prod_{z' \in \mathbf{f}'(z) } \!\! \Big( \omega^{q}_{0,1} (z') - \omega^{q}_{0,1} (z) \Big)
										=
										\sum_{i=1}^r \mathcal{E}^{i,q}_{g,n}(x; z_{[n]}) \big( - \omega^{q}_{0,1}(z)\big)^{r-i}.
									\end{equation}
								\end{lemma}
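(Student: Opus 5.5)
The identity is purely combinatorial: it only uses the definitions of $\mathcal{W}^{q}_{g,i,n}$, $\mathcal{W}^{'q}_{g,i,n}$ and $\mathcal{E}^{i,q}_{g,n}$ from Definition \ref{d:EW}. The $q$-deformation enters only through the forms $\omega^{q}_{0,1}$ and $\omega^{q}_{0,2}$, which will be treated as opaque one-forms evaluated on the $r$ points of $\mathfrak{f}(z)$. So I plan to follow the classical argument of \cite{BKS23,BBCKS} verbatim. Throughout I read the product on the left-hand side as running over $\mathfrak{f}(z)\setminus Z$, which is the classical form of the identity. The factors with $z'\in Z\setminus\{z\}$ do not arise from the computation below, so with the product literally over $\mathfrak{f}'(z)$ the stated identity would fail. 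I would make this reading explicit at the start.

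\textbf{Step 1: separate the disk factors.} I start from the connected/disconnected decomposition. In the sum defining $\mathcal{W}^{q}_{g,i,n}(Z;z_{[n]})$, each block $S\in P$ with $|S|=1$, $N_S=\emptyset$ and $g_S=0$ contributes exactly $\omega^{q}_{0,1}$. Let $A\subseteq Z$ be the set of such singleton blocks. The genus constraint $\sum_S(g_S-1)=g-i$ is compatible with removing them, since each removed block contributes $-1$ and lowers $i$ by one. This gives
\begin{equation*}
\mathcal{W}^{q}_{g,|Z|,n}(Z;z_{[n]})=\sum_{A\subseteq Z}\prod_{a\in A}\omega^{q}_{0,1}(a)\,\mathcal{W}^{'q}_{g,|Z|-|A|,n}(Z\setminus A;z_{[n]}),
\end{equation*}
with the convention $\mathcal{W}^{'q}_{g,0,n}=\delta_{n,0}\delta_{g,0}$. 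I will check that this boundary term does not contribute in the stable range, or that it cancels as in Step 3.

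\textbf{Step 2: resum.} I insert this decomposition into $\sum_{i=1}^r\mathcal{E}^{i,q}_{g,n}(x;z_{[n]})\,(-\omega^{q}_{0,1}(z))^{r-i}$ and reorganize the sum by the set $Y=Z\setminus A$, which is a subset of $\mathfrak{f}(z)$. For fixed $Y$, the set $A$ ranges over all subsets of $\mathfrak{f}(z)\setminus Y$, and the exponent is $r-|Y|-|A|=|(\mathfrak{f}(z)\setminus Y)\setminus A|$. The binomial expansion then yields
\begin{equation*}
\sum_{A\subseteq \mathfrak{f}(z)\setminus Y}\prod_{a\in A}\omega^{q}_{0,1}(a)\,(-\omega^{q}_{0,1}(z))^{r-|Y|-|A|}=\prod_{z'\in\mathfrak{f}(z)\setminus Y}\big(\omega^{q}_{0,1}(z')-\omega^{q}_{0,1}(z)\big).
\end{equation*}
The constraint $|Z|=i$ with $i\geq 1$ in $\mathcal{E}^{i,q}_{g,n}$ is absorbed here: only the single term $Y=A=\emptyset$ (that is, $i=0$) is missing, and that term is killed by Step 3.

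\textbf{Step 3: vanishing.} If $z\notin Y$, then the product above contains the factor with $z'=z$, which is $\omega^{q}_{0,1}(z)-\omega^{q}_{0,1}(z)=0$. Only the subsets $Y$ with $z\in Y\subseteq\mathfrak{f}(z)$ survive, which is exactly the left-hand side of \eqref{CombinatorialIdentity}. The main thing to get right is the bookkeeping of the genus constraint and of the empty-set and $i=0$ boundary terms in Steps 1 and 2; everything else is formal. In particular, no property of $\omega^{q}_{0,2}$ or of $q$ is used, which explains why the identity holds unchanged from the classical case.
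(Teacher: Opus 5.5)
The paper states this lemma without proof and then applies it directly in the proof of Theorem~\ref{ShiftedTR}, implicitly importing it from the classical setting. There is therefore no argument of the authors' to compare yours with. Your argument is the standard classical one, and it is correct: you peel off the $\omega^{q}_{0,1}$ blocks, resum with the binomial identity, and use the vanishing factor at $z'=z$. Since nothing in it depends on $q$, it transfers verbatim.

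Your reading of the product as running over $\mathfrak{f}(z)\setminus Z$ is also right. That is the form the paper itself uses when it invokes the identity in the proof of Theorem~\ref{ShiftedTR}. With the literal product over $\mathbf{f}'(z)$, the summand for $|Z|\geq 2$ would carry $|Z|+r-1$ one-form factors instead of $r$.

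One step needs a small correction: the handling of the missing $i=0$ term in Step~2. That term is $\mathcal{W}^{q,\prime}_{g,0,n}\,(-\omega^{q}_{0,1}(z))^{r}$, and Step~3 does not kill it. Step~3 kills the full product $\prod_{z'\in\mathfrak{f}(z)}(\omega^{q}_{0,1}(z')-\omega^{q}_{0,1}(z))$ attached to $Y=\emptyset$. Hence the net $Y=\emptyset$ contribution to the right-hand side is exactly $-\mathcal{W}^{q,\prime}_{g,0,n}\,(-\omega^{q}_{0,1}(z))^{r}$.

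This contribution vanishes only because $\mathcal{W}^{q,\prime}_{g,0,n}=\delta_{g,0}\delta_{n,0}$, which you already observed in Step~1. At $(g,n)=(0,0)$ the identity is actually false: the left-hand side is $0$, while the right-hand side is $\sum_{i=1}^r e_i\big(\omega^{q}_{0,1}(\mathfrak{f}(z))\big)(-\omega^{q}_{0,1}(z))^{r-i}=-(-\omega^{q}_{0,1}(z))^{r}$. You should therefore state the hypothesis $(g,n)\neq(0,0)$ explicitly. It is automatic in the range $2g-2+(n+1)>0$ where the lemma is applied, and you should drop the alternative ``or it cancels as in Step~3''.
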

								
								In the proof of Theorem \ref{UnshiftedTR}, a key step relies on the right-hand side of \eqref{CombinatorialIdentity} having a sufficient vanishing order, causing it to vanish under the residue formula. However, for the shifted $q$-loop equations \eqref{eq:sle}, this property no longer holds due to the additional shift terms. Consequently, these shifts must be explicitly incorporated into the $q$-topological recursion formula.
								\begin{theorem}\label{ShiftedTR}
									Let $\mathcal{S}_q$ be the shifted deformed $(r,s,q)$-spectral curve from Definition \ref{d:rsdefshift}, and  let $S = \{S_{i,\ell} \}_{i \in [r], \ell \in \mathbb{N}^*}$ be a set of $s$-consistent shifts. Then, there exists a unique system of $q$-correlators $\{\omega^{q}_{g,n} \}_{g \in \frac{1}{2}\mathbb{N}, n \in \mathbb{N}^*}$ satisfying the shifted $q$-loop equations \eqref{eq:sle} and the $q$-projection property.  For $2g-2+n > 0$, they are given by:
									\begin{align}
										\omega^{q}_{g,n+1}(z_0, z_{[n]}) 
										&= 
										-\Res_{z = 0} \Big( \sum_{Z \subseteq \mathbf{f}' (z)}
										K^{1 + |Z|}_q(z_0; z, Z) \mathcal{W}^{q,'}_{g,|Z|,n} (z,Z; z_{[n]})\nonumber
										\\
										&
										- \sum_{i=1}^{r}\delta_{n,0} S_{i,2g} K^r_q(z_0; \mathbf{f}(z)) \left( r\frac{dz}{z} \right)^{i} \big( - \omega^{q}_{0,1}(z)\big)^{r-i}\Big)\,.
									\end{align}
									In particular, this formula does produce symmetric correlators.
								\end{theorem}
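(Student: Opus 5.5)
The proof follows the residue argument of Theorem \ref{UnshiftedTR}; the only change is that the shift term is carried through explicitly. Existence and symmetry come for free from the Airy structure side. By Proposition \ref{p:shiftsgen}, the shifted ideal $\mathcal{I}^q(S)$ is a $q$-deformed Airy ideal, so Proposition \ref{t:airy} gives a unique partition function $Z_q$. Its coefficients $F^q_{g,n}[k_1,\dots,k_n]$ are symmetric, because they are Taylor coefficients of a single function. Definition \ref{d:srs} then produces correlators $\omega^q_{g,n}$ that are automatically symmetric. By Lemma \ref{l:projpropshift} they satisfy the $q$-projection property, and by Proposition \ref{p:shifteloopeq} they satisfy the shifted $q$-loop equations \eqref{eq:sle}. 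What remains is to show that \emph{any} system with these two properties obeys the stated recursion. That gives uniqueness, and since the Airy-side system is symmetric, it also shows the formula produces symmetric correlators.

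For the recursion I would proceed as follows. Start from the projection property,
\[
\omega^q_{g,n+1}(z_0,z_{[n]})=\Res_{z=0}\Big(\int_0^z\omega^q_{0,2}(\cdot,z_0)\Big)\omega^q_{g,n+1}(z,z_{[n]}).
\]
In the combinatorial identity \eqref{CombinatorialIdentity}, the term $Z=\{z\}$ on the left-hand side equals $\omega^q_{g,n+1}(z,z_{[n]})\prod_{z'\in\mathbf f'(z)}(\omega^q_{0,1}(z')-\omega^q_{0,1}(z))$. Dividing \eqref{CombinatorialIdentity} by this Vandermonde product and multiplying by $\tfrac12\int_0^z\omega^q_{0,2}(\cdot,z_0)$ expresses $\omega^q_{g,n+1}$ as the sum of two pieces: minus the kernel terms $K_q^{1+|Z|}\mathcal W^{q,'}_{g,1+|Z|,n}$ for $Z\neq\emptyset$, and the term $K^r_q(z_0;\mathbf f(z))\sum_i\mathcal E^{i,q}_{g,n}(-\omega^q_{0,1}(z))^{r-i}$. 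Two bookkeeping points need care here. The factor $\tfrac12$ is handled by averaging over the $r$ sheets, as in Theorem \ref{UnshiftedTR}. The kernel $K^r_q(z_0;\mathbf f(z))$ is interpreted as $\tfrac12\int_0^z\omega^q_{0,2}(\cdot,z_0)/\Upsilon^q_{r-1}(\mathbf f'(z);z)$. Now split
\[
\mathcal E^{i,q}_{g,n}=\Big(\mathcal E^{i,q}_{g,n}-\delta_{n,0}S_{i,2g}\big(\tfrac{dx}{x}\big)^i\Big)+\delta_{n,0}S_{i,2g}\big(\tfrac{dx}{x}\big)^i .
\]
Using $dx/x=r\,dz/z$, the second summand reproduces exactly the explicit shift term in the theorem. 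It remains to show that the bracketed remainder contributes zero residue.

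This vanishing is the heart of the argument and I expect it to be the main obstacle. The bracket lies in $\mathcal O(x^{\lfloor s(i-1)/r\rfloor+1})(dx/x)^i$ by \eqref{eq:sle}. This is exactly the vanishing order used in the unshifted proof, so the exponent count $E_i=r+s-si-1+r\lfloor s(i-1)/r\rfloor\ge 0$ carries over verbatim. It uses $\omega^q_{0,1}\sim z^{s-1}dz$ and $\Upsilon^q_{r-1}\sim z^{(r-1)(s-1)}$, and the $q$-deformation only affects the nonzero leading constants $F^q_{0,1}[-s]$ and $[r]_q$. With this count, the quotient times $\int_0^z\omega^q_{0,2}(\cdot,z_0)$ is holomorphic at $z=0$, so its residue vanishes. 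The delicate points are the subleading terms of $\omega^q_{0,1}$ in the deformed curve, and the fact that $\omega^q_{0,2}$ is now the $q$-Bergman kernel $B_q$ rather than the standard one. For the former I would handle the corrections by induction on $i$, as in the last step of Theorem \ref{UnshiftedTR}. For the latter I would check that $\int_0^z\omega^q_{0,2}(\cdot,z_0)$ is still holomorphic in $z$ near $0$ for $z_0$ away from $0$, which is clear from $(z_0-qz)(z_0-q^{-1}z)\neq0$.

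Finally, I would argue that the right-hand side involves only $\omega^q_{g',n'}$ with $2g'-2+n'<2g-2+n+1$, together with the given unstable data. This holds because $\omega^q_{0,1}$-factors are excluded from $\mathcal W^{q,'}$. Hence the formula determines every stable correlator recursively, and uniqueness follows. Combining uniqueness with the existence of the Airy-side system, which is symmetric, shows that the recursively defined $\omega^q_{g,n}$ are symmetric.
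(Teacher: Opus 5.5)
Your derivation of the recursion is the paper's own argument. You use the projection property, insert $\Upsilon^q_{r-1}$ through $K^r_q$, and apply the identity \eqref{CombinatorialIdentity}. You then split $\mathcal{E}^{i,q}_{g,n}$ into the shifted-loop-equation remainder, which is killed by the count $E_i\ge 0$, plus the source $\delta_{n,0}S_{i,2g}(r\,dz/z)^i$. Your restriction to $Z\neq\emptyset$ also quietly corrects the summation range in the statement. The difference is in the packaging. The paper's proof only derives the formula, and treats symmetry in a separate lemma by induction and a contour-deformation heuristic. You instead get existence and symmetry from the Airy side, and uniqueness from the derivation; this is well founded because $\mathcal{W}^{q,'}$ omits $\omega^q_{0,1}$. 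That is the argument the introduction alludes to, and it is a genuine proof of symmetry, conditional on the Airy-side results.

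The step that does not hold is your treatment of the $\frac12$ in $K_q$. Averaging over sheets, i.e.\ using $\Res_{z=0}f(z)=\Res_{z=0}f(\theta^k z)$, replaces $\int_0^z$ by $\frac1r\sum_k\int_0^{\theta^k z}$. For $r=2$ and $\sigma(z)=-z$ this becomes $\frac12\int_{\sigma(z)}^{z}$, the Eynard--Orantin kernel; it never produces $\frac12\int_0^z$. Definition \ref{d:projection} carries no $\frac12$. So what your computation establishes is the recursion with kernel $\int_0^z\omega^q_{0,2}(\cdot,z_0)\big/\prod_{z'\in Z}\big(\omega^q_{0,1}(z')-\omega^q_{0,1}(z)\big)=2K^{1+|Z|}_q$. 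With $K_q$ as defined, the displayed recursion is not the one implied by the projection property. The paper's proof contains the same discrepancy unremarked: its first display equates $\Res_{z=0}(\int_0^z\omega^q_{0,2})\,\mathcal{W}^{q,'}_{g,1,n}$ with $\Res_{z=0}K^r_q\,\mathcal{W}^{q,'}_{g,1,n}\,\Upsilon^q_{r-1}$. You should state the normalization correction rather than claim that averaging absorbs it.

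Note also that your existence step, and with it your symmetry argument, leans on Lemma \ref{l:projpropshift}. The paper states that lemma without proof, and it is not automatic for the $q$-Bergman kernel. Expanding the kernel gives $\Res_{z=0}\big(\int_0^z\omega^{\textup{std},q}_{0,2}(\cdot,z_0)\big)z^{-k-1}dz=\frac{[k]_q}{k}z_0^{-k-1}dz_0$. So the projector rescales $z^{-k-1}dz$ by $[k]_q/k$, which differs from $1$ for $k\ge 2$, and is therefore not idempotent. This must be repaired before the Airy-side correlators can be said to satisfy the projection property.
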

								
								\begin{proof}
									
									The proof relies on extending the arguments for the unshifted case to incorporate the $q$-deformation and quantum shifts. This allows us to establish the following result for the shifted structures.
									Since by definition $ \mathcal{W}^{q,'}_{g,1,n} = \omega^{q}_{g,n+1}$, we rely on  the $q$-projection property, the definition of the $q$-deformed recursion kernel $K_q$, and the combinatorial identity \ref{CombinatorialIdentity}. These yield:
									\begin{align}
										\omega^{q}_{g,n+1}(z_0, z_{[n]}) 
										&=
										\Res_{z = 0} \Big( \int_0^z \omega^{q}_{0,2} (\mathord{\cdot},z_0) \Big) \mathcal{W}^{q,'}_{g,1,n} (z; z_{[n]})
										\nonumber\\
										&=
										\Res_{z = 0} K^r_q(z_0; \mathbf{f}(z)) \mathcal{W}^{q,'}_{g,1,n} (z; z_{[n]}) \!\! \prod_{z' \in \mathbf{f}(z) \setminus \{ z\}} \! \Big( \omega^{q}_{0,1}(z') - \omega^{q}_{0,1}(z)\Big).
									\end{align}
									By applying the identity \eqref{CombinatorialIdentity}, we can rewrite the integrand in terms of the $q$-deformed loop equations $\mathcal{E}^{i,q}_{g,n}$.
									\begin{align}
										\omega^{q}_{g,n+1}(z_0, z_{[n]}) &=
										- \Res_{z = 0 } K^r_q(z_0; \mathbf{f}(z)) \Big( \sum_{ \{ z\} \subsetneq Z \subseteq \mathbf{f}(z)} \! \mathcal{W}^{q,'}_{g,|Z|,n} (Z; z_{[n]}) \!\! \prod_{z' \in \mathbf{f}(z) \setminus Z_q} \!\! \Big( \omega^{q}_{0,1} (z') - \omega^{q}_{0,1} (z) \Big)
										\nonumber\\
										&-
										\sum_{i=1}^r \mathcal{E}^{i,q}_{g,n}(x; z_{[n]}) \big( - \omega^{q}_{0,1}(z)\big)^{r-i}\Big).
									\end{align}
									The crucial step in the shifted framework is to decompose the $q$-loop equation term using the shifted constraints \eqref{eq:sle}:
									\begin{align}
										\omega^{q}_{g,n+1}(z_0, z_{[n]})&=
										- \Res_{z = 0 } K^r_q(z_0; \mf{f}(z)) \Big( \sum_{ \{ z\} \subsetneq Z \subseteq \mathbf{f}(z)} \! \mathcal{W}^{q,'}_{g,|Z|,n} (Z; z_{[n]}) \!\! \prod_{z' \in \mathbf{f}(z) \setminus Z} \!\! \Big( \omega^{q}_{0,1} (z') - \omega^{q}_{0,1} (z) \Big)
										\nonumber\\
										&-
										\sum_{i=1}^r \Big( \left(\mathcal{E}^{i,q}_{g,n}(x; z_{[n]})- \delta_{n,0} S_{i,2g} \big( \frac{dx}{x} \big)^i \right)+ \delta_{n,0} S_{i,2g} \big( \frac{dx}{x} \big)^i\Big) \big( - \omega^{q}_{0,1}(z)\big)^{r-i}\Big).
									\end{align}
									The terms in bracket in the first bracket do not contribute to the residue. Indeed, by the shifted $q$-loop equations \eqref{eq:sle}, they are $\mathcal{O}(x^N)$ and thus holomorphic at the branch points. This property is preserved under the $q$-deformation as the singular behavior of the $q$-differentials remains localized at the same poles.
									
									In contrast, the shift terms $\delta_{n,0}S_{i,2g}$ act as source terms at the pole. Simplifying the kernels using the  TR recursion identities (which remain valid for $K_q$), we obtain:
									\begin{align}
										\omega^q_{g,n+1}(z_0, z_{[n]}) 
										&=
										-{\rm Res}_{z = 0 } \Big( \sum_{ \{ z\} \subsetneq Z \subseteq \mathbf{f}(z)} K_q^{|Z|}(z_0; Z)\mathcal{W}^{q,'}_{g,|Z|,n} (Z; z_{[n]})\nonumber
										\\
										&-
										\sum_{i=1}^{r}\delta_{n,0} S_{i,2g} K^r_q(z_0; \mf{f}(z)) \big( r\frac{dz}{z} \big)^{i} \big( - \omega^{q}_{0,1}(z)\big)^{r-i}\Big).
									\end{align}
									Finally,  by redefining $Z$ to $Z \subseteq \mathbf{f}'(z)$ (excluding $z$), we arrive at the shifted $q$-topological recursion formula.
								\end{proof}
								\begin{lemma}[Symmetry of the $q$-Correlators]
									Let $\{\omega^{q}_{g,n}\}_{g,n}$ be the collection of $n$-forms defined by the symmetric $q$-deformed topological recursion. If the base cases $\omega^{q}_{0,1}$ and $\omega^{q}_{0,2}$ are symmetric, then for all $(g,n)$ such that $2g-2+n > 0$, the $n$-form $\omega^{q}_{g,n}(z_1, \dots, z_n)$ is symmetric under any permutation of its arguments $z_i \in \Sigma$.
								\end{lemma}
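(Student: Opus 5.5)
The natural route is not to attack symmetry directly through the recursion formula, where the variable $z_0$ plays a special role. Instead I will identify the recursively defined forms with the correlators built from an Airy-structure partition function, whose symmetry is automatic. Concretely, I fix the spectral curve data ($\omega^q_{0,1}$, $\omega^q_{\frac12,1}$, $\omega^q_{0,2}$, shifts $S$) and invoke Theorem \ref{t:shifts}, or Proposition \ref{p:shiftsgen} in the deformed case. These give a $q$-deformed Airy ideal, and by Proposition \ref{t:airy} a unique partition function $Z_q$ whose coefficients are $F^q_{g,n}[k_1,\dots,k_n]$. The coefficients $F^q_{g,n}$ are symmetric in $(k_1,\dots,k_n)$ by construction, since they multiply the commuting monomial $x_{k_1}\cdots x_{k_n}$. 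Hence the forms of Definition \ref{d:srs} are symmetric $n$-differentials, provided the basis $\xi^q_{-k}$ is the same in every slot. That holds because $\omega^q_{0,2}$ is assumed symmetric, so Definition \ref{d:xibasis} gives a single basis.

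The second step is identification. By Lemma \ref{l:projpropshift} (or Lemma \ref{l:projprop}) the Airy-structure correlators satisfy the $q$-projection property. By Proposition \ref{p:shifteloopeq} (or Proposition \ref{p:loopeq}) they satisfy the (shifted) $q$-loop equations. The uniqueness part of Theorem \ref{ShiftedTR} (or Theorem \ref{UnshiftedTR}) then says they coincide with the forms produced by the recursion. Symmetry of the recursive output follows.

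Uniqueness itself must be argued by induction on $2g-2+n$. Suppose two systems agree up to a given level. In the recursion step, the right-hand side of the recursion formula involves only correlators of lower $2g-2+n$, so the two systems agree at the next level. I would spell this out since the cited proof only sketches it.

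A direct check would be a useful backup, and it also exhibits where symmetry could fail. Symmetry in $z_1,\dots,z_n$ is immediate from the recursion, since $\mathcal{W}^{q,'}$ is symmetric in the $w$-arguments by induction. The only issue is the exchange $z_0\leftrightarrow z_1$. For that I would use the standard argument: write $\omega^q_{g,n+1}$ through the projection property, expand both sides in the $\xi^q$ basis, and reduce to showing that $F^q_{g,n+1}$ is symmetric. That is precisely the compatibility condition $\mathcal{D}_{q,j}\mathrm{RHS}_i=\mathcal{D}_{q,i}\mathrm{RHS}_j$ of Lemma \ref{lem:qPoincare_sym}, guaranteed by $[\mathcal{I},\mathcal{I}]\subseteq\hslash^2\mathcal{I}$.

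The main obstacle is the $q$-specific point in this last step. The operator $\mathcal{D}_{q,i}$ acting on monomials produces factors $[k]_q$ instead of $k$, and the normalization $1/[n]_q!$ in $Z_q$ is not the multinomial-compatible $1/n!$. So I must check that extracting $F^q_{g,n}$ from $\log Z_q$ still yields a symmetric array. I would first verify this at $(g,n)=(0,3)$ and $(1,1)$ by explicit expansion. The first term to check is the $\frac{1}{2}\int\omega^q_{0,2}$ kernel against the $q$-Bergman kernel $B_q$, using $B_q(z_1,z_2)=B_q(z_2,z_1)$.
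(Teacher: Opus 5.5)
Your argument is correct in the setting where it applies, but it is not the paper's route. The paper argues directly on the recursion, by induction on $2g-2+n$:
symmetry in $z_1,\dots,z_n$ comes from the inductive hypothesis on $\mathcal{W}^{q,'}$;
the exchange $z_0\leftrightarrow z_1$ is obtained by appealing to the loop equations and a global residue theorem, moving the contour from the ramification point to the poles at the $z_i$;
and the $S_{i,2g}$ terms are dismissed because they only occur for $n=0$.
You instead identify the recursive output with the Airy-structure correlators of Definition~\ref{d:srs}. These satisfy the projection property and the shifted loop equations. The derivation in the proof of Theorem~\ref{ShiftedTR} then shows that any such system obeys the recursion, so the two systems agree level by level. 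Symmetry is inherited from the coefficients of $\log Z_q$. This is the mechanism the paper itself advertises in its introduction and in the last sentence of Theorem~\ref{ShiftedTR}.

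Your route needs no contour deformation, which is a real advantage: $C$ is a small disk, and no global residue theorem is available there. It also needs no separate discussion of the shifts, since they are built into $\mathcal{I}^q(S)$. What it costs is generality. It proves the lemma only for initial data realized by a (shifted, deformed) $(r,s,q)$-Airy structure: one component, $x=z^r$, $r\equiv\pm1\pmod{s}$, $s$-consistent shifts, and $\omega^q_{0,2}=\omega^{\mathrm{std},q}_{0,2}$ plus a holomorphic symmetric part. It does not cover arbitrary symmetric $\omega^q_{0,1},\omega^q_{0,2}$ as the statement is phrased, e.g.\ not the multi-component recursion of Theorem~\ref{UnshiftedTR}.

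Two smaller points. First, the ``main obstacle'' you name is not one. Only the symmetrization of $F^q_{g,n}[k_1,\dots,k_n]$ multiplies the commuting monomial $x_{k_1}\cdots x_{k_n}$, so the array can always be taken symmetric. A scalar $1/[n]_q!$ per $n$, or the $[k]_q$ weights from $\mathcal{D}_{q,i}$, cannot affect that. The load-bearing inputs are instead the existence of $Z_q$ (Proposition~\ref{t:airy}) and Proposition~\ref{p:shifteloopeq}. Second, your ``backup'' is not an independent check. Expanding the recursive output in the $\xi^q_{-k}$ basis gives an array that is a priori symmetric only in its last $n$ indices. Lemma~\ref{lem:qPoincare_sym} says nothing about that array until you already know it is the coefficient array of $\log Z_q$, i.e.\ until the identification has been made.
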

								\begin{proof}
									We proceeds by induction on the Euler characteristic $\chi = 2g-2+n$.
									By definition, $\omega^{q}_{0,1}(z)$ is a 1-form (trivially symmetric). The $q$-Bergman kernel $\omega^{q}_{0,2}(z_1, z_2)$ is constructed as a symmetric 2-form on $\Sigma \times \Sigma$, satisfying $\omega^{q}_{0,2}(z_1, z_2) = \omega^{q}_{0,2}(z_2, z_1)$.
									We suppose that the symmetry holds for all $\omega^{q}_{g',n'}$ such that $2g'-2+n' < 2g-2+(n+1)$. Furthermore, we consider the recursion definition for $\omega^{q}_{g,n+1}(z_0, z_1, \dots, z_n)$:
									\begin{align*}
										\omega^{q}_{g,n+1}(z_0, z_{[n]}) &= -\Res_{z \to \mathcal{R}_q} \Big( \sum_{Z \subseteq \mathbf{f}' (z)} K^{1 + |Z|}_q(z_0; z, Z) \mathcal{W}^{q,'}_{g,1+|Z|,n} (z,Z; z_{[n]}) \\
										&\quad - \sum_{i=1}^{r}\delta_{n,0} S_{i,2g} K^r_q(z_0; \mathbf{f}(z)) \left( r\frac{dz}{z} \right)^{i} \big( - \omega^{q}_{0,1}(z)\big)^{r-i}\Big).
									\end{align*}
									\begin{enumerate}
										\item  The variables $z_1, \dots, z_n$ appear only within the terms $\mathcal{W}^{q,'}$, which are composed of lower-order correlators $\omega^q_{g',n'}$. By the inductive hypothesis, these are symmetric in their arguments. Thus, $\omega^{q}_{g,n+1}$ is invariant under permutations of $\{z_1, \dots, z_n\}$.
										
										\item To show symmetry under $z_0 \leftrightarrow z_1$, we observe that the recursion formula is a consequence of the $q$-loop equations. The $q$-loop equations are globally defined algebraic constraints on $\Sigma$. By applying the Global Residue Theorem, we can deform the contour of integration defining $\omega^q_{g,n+1}$ from the $q$-ramification points $\mathcal{R}_q$ to the poles at $z_i$. 
										
										\item  For $n \geq 1$, the term proportional to $\delta_{n,0}$ vanishes identically. Therefore, the symmetry of higher-order correlators ($n > 1$) is decoupled from the singular $q$-Casimir contributions $S_{i,2g}$, which only modify the geometry of the 1-point function $\omega^q_{g,1}$.
									\end{enumerate}
									Since the residue calculus preserves the $\mathfrak{S}_{n+1}$ invariance of the underlying $q$-Ward identities, the form $\omega^q_{g,n+1}$ is fully symmetric.
								\end{proof}
								\begin{remark}
									To illustrate the concrete application of Theorem \ref{ShiftedTR}, let us examine the first non-trivial stable correlator beyond the disk and cylinder topology: $\omega^{q}_{1,1}(z_0)$. In the unshifted case, this correlator is determined solely by the geometry of the spectral curve. In our shifted framework, the formula yields:
									\begin{align}
										\omega^q_{1,1}(z_0) &= -\Res_{z=0} \Bigg( \sum_{\emptyset \neq Z \subseteq \mathbf{f}'(z)} K^{1+|Z|}_q(z_0; z, Z) \mathcal{W}^{q,'}_{1,1+|Z|,0}(z, Z) \nonumber \\
										&\quad - \sum_{i=1}^r S_{i,2} K^r_q(z_0; \mathbf{f}(z)) \left( r \frac{dz}{z} \right)^i \left( -\omega^q_{0,1}(z) \right)^{r-i} \Bigg).
									\end{align}
								\end{remark}
								\begin{remark}
									While we have derived the shifted $q$-topological recursion specifically for the deformed $(r,s,q)$-spectral curves, this construction generalizes naturally to any arbitrary admissible local $q$-spectral curve.
								\end{remark}
								
								\section{Concluding remarks}
								The synthesis of our results can be summarized as follows. From the partition function of the deformed $(r,s,q)$-Airy structure, one constructs a system of $q$-correlators on the deformed $(r,s,q)$-spectral curve that satisfies the standard topological recursion formula. Conversely, the partition function of the shifted and deformed $(r,s,q)$-Airy structure yields a system of $q$-correlators on a shifted $q$-spectral curve, which shares the same underlying geometry but features a modified initial condition $\omega^{q}_{0,1}(z)$.
								
								The fundamental distinction lies in the recursion itself: the shifted $q$-topological recursion formula incorporates additional source terms that directly shift the $n=1$ $q$-correlators $\omega^{q}_{g,1}(z)$. These shifts propagate through the recursive steps, resulting in a distinct global system of $q$-correlators. Notably, both shifted and unshifted systems consistently satisfy the $q$-projection property.
								
								In the specific case $s=1$, where all shifts are permissible, we have established that the partition function $Z_q$ acts as a highest-weight vector for the $q$-deformed $\mathcal{W}(\mathfrak{gl}_r)$-algebra at the self-dual level. We have demonstrated that these highest-weight conditions manifest within the $q$-topological recursion framework as explicit shifts of the $q$-correlators $\omega^{q}_{g,1}(z)$. The precise mapping between this $q$-deformed algebraic structure and the shifted geometry of the spectral curve confirms the robustness and consistency of this $q$-deformed topological recursion framework.

								{Our findings open several avenues for future research. First, the precise mapping between the $q$-deformed algebraic structure and the shifted geometry of the spectral curve established herein provides the necessary foundation to define and study the corresponding quantum behavior. The geometric construction of these $q$-deformed quantum curves will be addressed in detail in an upcoming companion paper. 
									
									Furthermore, it would be highly interesting to extend these shifted structures to higher $s$, or to explore their potential connections to broader classes of $q$-integrable systems.}
				
				\section{Acknowledgements:}
				FM was supported\footnote{``Funded by the Alexander von Humboldt Foundation} through  the Georg Forster
				Fellowship.  
				RW was supported\footnote{``Funded by the Deutsche Forschungsgemeinschaft (DFG, German Research
					Foundation) -- Project-ID 427320536 -- SFB 1442, as well as under
					Germany's Excellence Strategy EXC 2044/2 -- 390685587, Mathematics
					M\"unster: Dynamics -- Geometry -- Structure.''} by the Cluster of Excellence
				\emph{Mathematics M\"unster} and the CRC 1442 \emph{Geometry:
					Deformations and Rigidity}. The authors thank Jakob Lindner for useful discussions.
				\section{Appendix}
				
				\begin{lemma}[$q$-deformed evaluation of the kernel]\label{l:Aq3_q}
					{Let $x_m = \theta^m q^{-2m}$ with $m \in \{0, \dots, r-1\}$. For any $i \in \{1, \dots, r\}$, the $q$-deformed kernel satisfies:
						\begin{equation}\label{eq:Aq3_result}
							i \Psi^{(0)}_{r,q}\big(r-1, \dots, r-1, i-1\big) = \big(-1\big)^{i-1} r.
					\end{equation}}
				\end{lemma}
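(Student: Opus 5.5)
The plan is to reduce the statement to a finite symmetric-function computation in the variables $x_m=\theta^m q^{-2m}$, carrying every $q$-weight explicitly. By the definition of $\Psi^{(0)}_{r,q}$ given with \eqref{eq:ww}, with $a_1=\dots=a_{i-1}=r-1$ and $a_i=i-1$,
\begin{equation*}
i\,\Psi^{(0)}_{r,q}(r-1,\dots,r-1,i-1)=\frac{i}{[i]_q!}\sum_{\substack{m_1,\dots,m_i=0\\ \text{distinct}}}^{r-1}\ \prod_{l=1}^{i-1}y_{m_l}\; x_{m_i}^{-(i-1)},
\qquad y_m:=x_m^{-(r-1)}.
\end{equation*}
The first structural point is that $y_m=x_m\cdot x_m^{-r}=\theta^{m}q^{-2m}\,q^{2mr}=(\theta Q)^m$ with $Q:=q^{2(r-1)}$. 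In the classical limit $Q=1$ and $y_m=x_m$. Here, however, the $\theta$-periodicity survives while the $q$-part becomes a genuine geometric progression, so the sums stay computable in closed form.

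First I will fix $m_i=m$ and sum over the ordered distinct $(i-1)$-tuples avoiding $m$. This gives $(i-1)!\,e_{i-1}(y_{\hat m})$, where $\hat m$ removes index $m$. Next I will use $\prod_{n\neq m}(1+ty_n)=\prod_n(1+ty_n)/(1+ty_m)$ to write $e_{i-1}(y_{\hat m})=\sum_{k=0}^{i-1}(-y_m)^{i-1-k}e_k(y)$. The full elementary symmetric functions of the progression $y_n=(\theta Q)^n$ are then given by the $q$-binomial (Gauss) theorem:
\begin{equation*}
e_k(y)=(\theta Q)^{k(k-1)/2}\binom{r}{k}_{\theta Q}.
\end{equation*}
This reduces the whole expression to the double sum
\begin{equation*}
\frac{i!}{[i]_q!}\sum_{k=0}^{i-1}(-1)^{i-1-k}e_k(y)\sum_{m=0}^{r-1}\big(\theta^{\,i-1-k-(i-1)}\,Q^{\,i-1-k}q^{2m(i-1)/m}\big)^{m}.
\end{equation*}
More precisely, the inner sum is $\sum_m y_m^{i-1-k}x_m^{-(i-1)}=\sum_m\big(\theta^{-k}Q^{i-1-k}q^{2(i-1)}\big)^m$, a finite geometric series with an explicit closed form.

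In the classical limit only $k=0$ survives, because the $\theta$-binomials vanish for $0<k<r$ and $\sum_m\theta^{-km}=r\delta_{k,0}$. That limit yields $(i-1)!\,(-1)^{i-1}r$, and hence $(-1)^{i-1}r$ after the factor $i/i!$. I will therefore organise the $q$-computation as the $k=0$ term plus corrections. The $k=0$ term equals $(-1)^{i-1}\,\frac{i!}{[i]_q!}\,\frac{1-(Q q^{2})^{(i-1)r}}{1-(Qq^2)^{i-1}}$. The remaining task is to show that the $k\ge1$ terms together with this term collapse to $(-1)^{i-1}r$, as asserted in the remark after Theorem~\ref{t:rsAs} that the $q$-factors cancel.

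The main obstacle is exactly this cancellation, and especially the mismatch between $i!$ and $[i]_q!$, which no $\theta$-periodicity can remove. Before attempting the general proof I will test the smallest case $r=i=2$ symbolically, where everything is a two-term sum. If the identity holds there, I will look for a telescoping between consecutive $k$ in the double sum, using the recursion $\binom{r}{k}_{p}=\binom{r-1}{k-1}_{p}+p^{k}\binom{r-1}{k}_{p}$ with $p=\theta Q$. If instead the $r=i=2$ check shows a residual factor such as $\frac{i!}{[i]_q!}$ or $\frac{1-(Qq^2)^{(i-1)r}}{r(1-(Qq^2)^{i-1})}$, I will record that factor precisely and prove the lemma in the form in which it is actually used in Theorem~\ref{t:rsAs}. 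That use only needs $C_{i,k}(q)\neq0$, which follows because the displayed factor is a nonvanishing formal series in $\hslash$ with constant term $(-1)^{i-1}r$.
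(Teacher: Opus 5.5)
Your reduction is correct, and more careful than the paper's. With $y_m=x_m^{-(r-1)}=(\theta Q)^m$ and $Q=q^{2(r-1)}$, one does get
\[
i\,\Psi^{(0)}_{r,q}(r-1,\dots,r-1,i-1)=\frac{i!}{[i]_q!}\sum_{k=0}^{i-1}(-1)^{i-1-k}e_k(y)\sum_{m=0}^{r-1}\bigl(\theta^{-k}Q^{\,i-1-k}q^{2(i-1)}\bigr)^{m}.
\]
The gap is the step you leave as the remaining task. The $k\ge 1$ terms do not combine with the $k=0$ term to give $(-1)^{i-1}r$, because \eqref{eq:Aq3_result} is false for $q\neq 1$. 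Your own planned test shows this. For $r=i=2$ one has $\theta=-1$, $x_0=1$, $x_1=-q^{-2}$, hence, with $q=e^{\hslash}$,
\[
2\,\Psi^{(0)}_{2,q}(1,1)=\frac{2}{[2]_q}\cdot 2\,x_0^{-1}x_1^{-1}=-\frac{4q^2}{q+q^{-1}}=\frac{2}{q+q^{-1}}\Bigl(-(1+q^4)+(1-q^2)^2\Bigr)=-2-4\hslash+O(\hslash^2).
\]
More generally, for $i=2$ and any $r\ge 2$ the coefficient of $\hslash$ is $-r^2(r-1)$, coming from $\sum_m x_m^{-r}=\sum_m q^{2rm}$. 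No telescoping can remove this. The remark after Theorem~\ref{t:rsAs}, which asserts that the $q$-factors cancel, cannot be relied on either.

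The paper's proof reaches the exact identity only through steps that are valid at $q=1$:
\begin{itemize}
\item It rewrites $\sum_{L}\prod_{l\in L}x_l^{-(r-1)}$ as $e_{i-1}(\vec{x}[m])$. This replaces $x_l^{-(r-1)}=\theta^l q^{2l(r-1)}$ by $x_l=\theta^l q^{-2l}$, which needs $q^{2lr}=1$.
\item It asserts $\prod_a(t-x_a)=t^r-\prod_a x_a$, hence $e_k(\vec{x})=0$ for $0<k<r$. This fails because $x_a^r=q^{-2ar}$ depends on $a$; for $r=2$, $e_1(\vec{x})=1-q^{-2}$.
\item It silently drops the prefactor $i!/[i]_q!$ coming from the normalisation of $\Psi^{(0)}_{r,q}$.
\end{itemize}
So you are not missing an idea; the exact statement cannot be proved.

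What is true, and what your fallback gives, is a weaker statement. The quantity $i\,\Psi^{(0)}_{r,q}(r-1,\dots,r-1,i-1)$ lies in $\mathbb{C}[[\hslash]]$ with constant term $(-1)^{i-1}r$: setting $\hslash=0$ term by term yields the classical kernel, which the classical argument evaluates. It is therefore a unit. This is all that the nonvanishing of $C_{i,k}(q)$ in the proof of Theorem~\ref{t:rsAs} uses, and proving it needs no Gauss binomial expansion. You should state and prove that weaker lemma explicitly, instead of presenting the argument as a proof of \eqref{eq:Aq3_result}.

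Two minor points. Your first display of the inner sum is garbled. Your closed form for the $k=0$ term is $0/0$ at $i=1$, which is the one case where the stated identity does hold, trivially, since $\Psi^{(0)}_{r,q}(0)=r$.
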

				\begin{proof}
					{Let $\vec{{x}} = (x_0, x_1, \dots, x_{r-1})$ be the set of $q$-deformed  variables, and let $\vec{{x}}[m] = \vec{{x}} \setminus \{x_m\}$ denote the set excluding the $m$-th variable.} 
					
					{Recalling the definition of the $q$-deformed kernel $\Psi^{(0)}_{r,q}$ for the zero-pairing case ($j=0$):
						\begin{equation}
							i \Psi^{(0)}_{r,q}(r-1, \dots, r-1, i-1) = \sum_{m=0}^{r-1} x_m^{-(i-1)} \left( \sum_{\substack{L \subseteq \{0, \dots, r-1\} \setminus \{m\} \\ |L|=i-1}} \prod_{l \in L} x_l^{-(r-1)} \right).
						\end{equation}
						Using the properties of elementary symmetric polynomials $e_k$, and noting that the inner sum corresponds to the $(i-1)$-th symmetric polynomial of the shifted variables, we have:
						\begin{equation}\label{eq:Aq5_q_step}
							i \Psi^{(0)}_{r,q}\big(r-1, \dots, r-1, i-1\big) = \sum_{m=0}^{r-1} x_m^{-(i-1)} e_{i-1}\big(\vec{{x}}[m]\big).
						\end{equation}
						The variables $x_a$ are the roots of the polynomial $P(t) = \prod_{a=0}^{r-1} \big(t - x_a\big) = t^r - \prod x_a$. Consequently, the elementary symmetric polynomials of the full set $\vec{\mathbf{x}}$ satisfy:
						\begin{equation}
							\sum_{k=0}^{r-1} e_k\big(\vec{{x}}\big) t^k = \prod_{a=0}^{r-1} (1 + t x_a) = 1 + \big(-1\big)^r \left( \prod_{a=0}^{r-1} x_a \right) t^r.
						\end{equation}
						This implies that $e_k\big(\vec{{x}}\big) = 0$ for all $k \in \{1, \dots, r-1\}$. By the inclusion-exclusion relation for symmetric polynomials:
						\begin{equation}
							e_k\big(\vec{{x}}\big) = x_m e_{k-1}\big(\vec{{x}}[m]\big) + e_k\big(\vec{{x}}[m]\big).
						\end{equation}
						Setting $e_k\big(\vec{{x}}\big) = 0$ leads to the recurrence relation $e_k\big(\vec{{x}}[m]\big) = -x_m e_{k-1}\big(\vec{{x}}[m]\big)$. By induction, we find:
						\begin{equation}
							e_k\big(\vec{{x}}[m]\big) = (-1)^k x_m^k.
						\end{equation}
						Substituting this identity into \eqref{eq:Aq5_q_step} for $k = i-1$:
						\begin{equation}
							i \Psi^{(0)}_{r,q} = \sum_{m=0}^{r-1} x_m^{-(i-1)} \left( (-1)^{i-1} x_m^{i-1} \right) = \sum_{m=0}^{r-1} (-1)^{i-1} = (-1)^{i-1} r.
						\end{equation}
						This confirms that the $q$-deformation preserves the normalization of the Airy structure generators.}
				\end{proof}
				\begin{lemma}[General $q$-deformed kernel evaluation]\label{l:Aq4_q}
					{Let $r \geq 2$, $s > 0$, and $d = \gcd(r, s)$ with $r' = r/d$. The $q$-deformed kernel satisfies the following identity:
						\begin{equation}
							i \Psi^{(0)}_{r,q}\big(-s, \dots, -s, (i-1)s\big) = \big(-1\big)^{i-1} r \big(-1\big)^{\lfloor \frac{i-1}{r'} \rfloor} \binom{d-1}{\lfloor \frac{i-1}{r'} \rfloor}.
					\end{equation}}
				\end{lemma}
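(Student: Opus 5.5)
Following the proof of Lemma \ref{l:Aq3_q}, I first make the left-hand side explicit. With $j=0$ there are no $K_q$ factors, so $i\Psi^{(0)}_{r,q}$ is a sum over ordered tuples of distinct indices $m_1,\dots,m_i\in\{0,\dots,r-1\}$ of the monomial
\[
\prod_l x_{m_l}^{-a_l},
\]
where $x_m=\theta^m q^{-2m}$. The arguments are $a_1=\dots=a_{i-1}=-s$ and $a_i=(i-1)s$. The factor $i/[i]_q!$ combines with the permutations of the $i-1$ equal entries; I will check that this normalization produces exactly the combinatorial factor used in Lemma \ref{l:Aq3_q}. Granting this, the expression reduces to
\[
\sum_{m=0}^{r-1} x_m^{-(i-1)s}\, e_{i-1}\big(\vec{x}^{\,s}[m]\big),
\]
where $\vec{x}^{\,s}[m]$ denotes the $s$-th powers of the variables other than $x_m$. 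This is the form announced in the remark after Theorem \ref{t:rsAs}.

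The key step is to compute the elementary symmetric polynomials of the $s$-th powers. I will treat the $x_m$ as in Lemma \ref{l:Aq3_q}, namely as a full set of roots of $t^r-c$, where $c=\prod_a x_a$. There is a subtlety here: $\theta^m q^{-2m}$ is not literally such a root set for $q\neq1$. I will follow the paper's convention that the cyclic structure is preserved, which effectively means working with $y_m=\theta^m y_0$, and I will flag this point.

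Under that convention, $x_m^s=y_0^s\theta^{ms}$. As $m$ ranges over $\mathbb{Z}_r$, the values $\theta^{ms}$ cover the $r'$-th roots of unity, each exactly $d$ times. This gives
\[
\prod_{a}(1+tx_a^s)=\big(1-(-t y_0^s)^{r'}\big)^d .
\]
Dividing by $(1+t x_m^s)$ and expanding as a geometric series in $-t x_m^s$ produces the coefficient of $t^{i-1}$, i.e.\ $e_{i-1}(\vec{x}^{\,s}[m])$. Write $i-1=kr'+\rho$ with $0\le\rho<r'$. The coefficient is
\[
x_m^{(i-1)s}(-1)^{i-1}\sum_{k'\le k}(-1)^{k'}\binom{d}{k'}\cdot(\text{sign from } (-1)^{r'k'}) ,
\]
using $(x_m^{s})^{r'}=y_0^{sr'}$. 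The alternating partial sum collapses to $(-1)^k\binom{d-1}{k}$. Multiplying by $x_m^{-(i-1)s}$ removes all $m$-dependence, and summing over $m$ contributes the factor $r$. This yields $(-1)^{i-1}r(-1)^{\lfloor (i-1)/r'\rfloor}\binom{d-1}{\lfloor (i-1)/r'\rfloor}$, as claimed.

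I expect the sign bookkeeping in the geometric-series expansion to be the main obstacle. The factors $(-1)^{r'k'}$ and $(-1)^{i-1}$ must combine into exactly $(-1)^{i-1}(-1)^k$. To pin this down I would first verify the case $d=1$, where the result should reduce to Lemma \ref{l:Aq3_q}, and then the case $r'=1$, i.e.\ $s$ a multiple of $r$, where all $x_m^s$ are equal and $e_{i-1}$ equals $\binom{r-1}{i-1}x_0^{(i-1)s}$. Once the sign convention is fixed by these two checks, the partial-sum identity
\[
\sum_{k'=0}^{k}(-1)^{k'}\binom{d}{k'}=(-1)^k\binom{d-1}{k}
\]
finishes the argument.
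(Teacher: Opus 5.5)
There is a genuine gap, and it is exactly the point you flag and then set aside. For $q\neq 1$ the numbers $x_m=\theta^m q^{-2m}=(\theta q^{-2})^m$ form a geometric progression whose ratio is not a root of unity. So $\{x_m^s\}_m$ is not $d$ copies of the roots of any $t^{r'}-c$, and $\prod_a(1+tx_a^s)$ does not factor as $\bigl(1-(-ct)^{r'}\bigr)^d$ for any $c$.

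Invoking ``the paper's convention'' does not close this step. The paper's proof asserts the same thing, namely that the $x_m^s$ are the roots of $t^{r'}-(q^{-2s})^{r'}$ with multiplicity $d$, and that $\sum_m x_m^{-is}$ vanishes unless $r'\mid i$. Both claims are equally unjustified; the latter sum is $\sum_m(\theta^{-is}q^{2is})^m\neq 0$ for generic $q$.

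The normalization you grant also fails. Summing over ordered tuples of distinct indices gives $(i-1)!\,e_{i-1}(\vec x^{\,s}[m])$, so the prefactor is $i!/[i]_q!$, which is not $1$ for $i\ge 2$.

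In fact the identity is false for generic $q$. Take $r=2$, $s=1$, $i=2$, so $x_0=1$ and $x_1=-q^{-2}$. Then, for $q=e^{\hslash}$,
\[
2\Psi^{(0)}_{2,q}(-1,1)=\frac{2}{[2]_q}\Bigl(\frac{x_0}{x_1}+\frac{x_1}{x_0}\Bigr)=-\frac{2(q^2+q^{-2})}{q+q^{-1}}=-2-3\hslash^2+O(\hslash^4),
\]
while the right-hand side is $-2$. Replacing $[2]_q$ by $2$ still leaves $-(q^2+q^{-2})\neq -2$. So no bookkeeping can rescue this step for the stated $x_m$.

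What your argument does establish, correctly, is the identity for genuinely cyclic variables $y_m=\theta^m y_0$ with $i!$ in place of $[i]_q!$, in particular the case $q=1$. There it is a clean alternative to the paper's route.

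The paper derives the recurrence $\psi_i+\psi_{i+1}=e_i(\vec x^{\,s})\sum_m x_m^{-is}$ by inclusion--exclusion and solves it from $\psi_1=r$, which requires power sums. You instead extract $e_{i-1}(\vec u[m])$, with $u_a=y_a^s$, for each fixed $m$. This shows directly that every summand $u_m^{-(i-1)}e_{i-1}(\vec u[m])$ is independent of $m$, and power sums never appear.

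Your sign concern also disappears. Set $w=-tu_m$ and use $u_m^{r'}=y_0^{sr'}$; then $(-ty_0^s)^{r'}=w^{r'}$ exactly, so
\[
\prod_{a\neq m}(1+tu_a)=\frac{(1-w^{r'})^d}{1-w}
\]
with no extra factor $(-1)^{r'k'}$. The coefficient of $w^{i-1}$ is then $\sum_{k'\le k}(-1)^{k'}\binom{d}{k'}=(-1)^k\binom{d-1}{k}$ at once.
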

				\begin{proof}
					{Let $x_m = \theta^m q^{-2m}$ be the $q$-deformed  variables. We define the sum $\psi_{i,s}^q := i \Psi^{(0)}_{r,q}(-s, \dots, -s, (i-1)s)$. By the definition of the $q$-deformed kernel, this quantity can be expressed in terms of elementary symmetric polynomials:
						\begin{equation}
							\psi_{i,s}^q = \sum_{m=0}^{r-1} x_m^{-(i-1)s} \left( \sum_{\substack{L \subseteq \{0, \dots, r-1\} \setminus \{m\} \\ |L|=i-1}} \prod_{l \in L} x_l^s \right) = \sum_{m=0}^{r-1} x_m^{-(i-1)s} e_{i-1}(\vec{{x}}^s[m]),
						\end{equation}
						where $\vec{{x}}^s = (x_0^s, x_1^s, \dots, x_{r-1}^s)$ represents the $s$-th powers of the $q$-deformed alphabet.}
					
					{The inclusion-exclusion identity for elementary symmetric polynomials $e_i$ states:
						\begin{equation}\label{eq:inc_ex_en}
							e_i(\vec{{x}}^s) = x_m^s e_{i-1}(\vec{{x}}^s[m]) + e_i(\vec{{x}}^s[m]).
						\end{equation}
						Multiplying \eqref{eq:inc_ex_en} by $x_m^{-is}$ and summing over the index $m \in \{0, \dots, r-1\}$ yields:
						\begin{equation}
							e_i(\vec{{x}}^s) \sum_{m=0}^{r-1} x_m^{-is} = \sum_{m=0}^{r-1} x_m^{-(i-1)s} e_{i-1}(\vec{{x}}^s[m]) + \sum_{m=0}^{r-1} x_m^{-is} e_i(\vec{{x}}^s[m]).
						\end{equation}
						Recognizing $\psi_{i,s}^q$ in the first term and $\psi_{i+1,s}^q$ in the second term on the right-hand side, we obtain the recurrence:
						\begin{equation}\label{eq:rec_q_en}
							e_i(\vec{{x}}^s) \left( \sum_{m=0}^{r-1} x_m^{-is} \right) = \psi_{i,s}^q + \psi_{i+1,s}^q.
					\end{equation}}
					
					{The variables are given by $x_m^s = (\theta^s q^{-2s})^m$. Since $d = \gcd(r, s)$, the set $\{x_m^s\}_{m=0}^{r-1}$ consists of the $r'$ roots of the equation $t^{r'} - (q^{-2s})^{r'} = 0$, each occurring with multiplicity $d$. The generating function for $e_i(\vec{{x}}^s)$ is:
						\begin{equation}
							\sum_{i=0}^r e_i(\vec{{x}}^s) t^i = \prod_{a=0}^{r-1} (1 + t x_a^s) = \left( 1 + (-1)^{r'} C_q t^{r'} \right)^d,
						\end{equation}
						where $C_q = \prod_{k=0}^{r'-1} x_k^s$. Expanding this via the binomial theorem shows that $e_i(\vec{{x}}^s) \neq 0$ only when $i$ is a multiple of $r'$. Specifically, for $i = j r'$, $e_{jr'}(\vec{{x}}^s) = \binom{d}{j} (-1)^{j r'} C_q^j$.}
					
					{The sum $S_i = \sum_{m=0}^{r-1} x_m^{-is}$ vanishes unless $is \equiv 0 \pmod{r}$, which corresponds to $i$ being a multiple of $r'$. For $i = j r'$, the term $x_m^{-is}$ compensates for the $q$-dependent factor $C_q^j$ in $e_i(\vec{{x}}^s)$. Consequently, the $q$-dependence drops out of the recurrence \eqref{eq:rec_q_en}, leading to:
						\begin{equation}
							\psi_{i,s}^q + \psi_{i+1,s}^q = r \delta_{r'|i} (-1)^{i/r'} \binom{d}{i/r'}.
					\end{equation}}
					
					{Solving the recurrence with the initial condition $\psi_{1,s}^q = r$ (the spin-1 mode) yields:
						\begin{equation}
							\psi_{i,s}^q = (-1)^{i-1} r \sum_{j=0}^{\lfloor (i-1)/r' \rfloor} (-1)^j \binom{d}{j}.
						\end{equation}
						Applying the binomial identity $\sum_{j=0}^k (-1)^j \binom{d}{j} = (-1)^k \binom{d-1}{k}$, we arrive at:
						\begin{equation}
							\psi_{i,s}^q = (-1)^{i-1} r (-1)^{\lfloor \frac{i-1}{r'} \rfloor} \binom{d-1}{\lfloor \frac{i-1}{r'} \rfloor}.
						\end{equation}
						This completes the proof.}
				\end{proof}
				
				Before stating the main theorem, we recall the necessary ingredients for the construction of the basis in the $q$-deformed case.
				
				\begin{itemize}
					\item[(a)] For each simple root $\alpha_i$, there exists an automorphism $T_i$ of $U_q(\mathfrak{g})$ satisfying the braid relations. These are used to define the quantum root vectors $E_{\beta}$ for any $\beta \in \Delta^+$.
					\item[(b)]  This formula provides the $q$-commutation relation between two root vectors $E_{\beta_i}$ and $E_{\beta_j}$ with $i < j$:
					\begin{equation}
						E_{\beta_j} E_{\beta_i} - q^{(\beta_i, \beta_j)} E_{\beta_i} E_{\beta_j} = \text{lower terms in the convex ordering}.
					\end{equation}
					It ensures that the algebra can be spanned by ordered monomials.
				\end{itemize} 
				
				The Poincaré-Birkhoff-Witt (PBW) theorem is a foundational result in the theory of Lie algebras and their deformations. It provides a structured basis for the universal enveloping algebra, ensuring that the non-commutative structure does not collapse the underlying vector space.
				
				\begin{theorem}[Poincaré-Birkhoff-Witt for $U_q(\mathfrak{g})$]\label{pbw}
					Let $U_q(\mathfrak{g})$ be the quantized universal enveloping algebra associated with a finite-dimensional semi-simple Lie algebra $\mathfrak{g}$ over $\mathbb{C}$. Let $\Delta^+$ denote the set of positive roots and fix a normal ordering $\{\beta_1, \dots, \beta_N\}$ of $\Delta^+$. 
					
					Provided that $q \in \mathbb{C}^\times$ is not a root of unity, the set of ordered monomials
					\begin{equation}
						\mathcal{B} = \{ E_{\beta_1}^{i_1} E_{\beta_2}^{i_2} \cdots E_{\beta_N}^{i_N} \mid i_1, \dots, i_N \in \mathbb{N} \}
					\end{equation}
					constitutes a basis for the positive subalgebra $U_q(\mathfrak{n}^+)$.
				\end{theorem}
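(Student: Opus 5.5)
The plan follows the standard Lusztig--Levendorskii--Soibelman route, using the two ingredients (a) and (b) recalled just before the statement. Fix a reduced expression $w_0=s_{i_1}\cdots s_{i_N}$ of the longest Weyl group element. This induces the normal (convex) ordering $\beta_k=s_{i_1}\cdots s_{i_{k-1}}(\alpha_{i_k})$ of $\Delta^+$. Define the root vectors by $E_{\beta_k}=T_{i_1}\cdots T_{i_{k-1}}(E_{i_k})$.

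\textbf{Step 1: the root vectors lie in $U_q(\mathfrak{n}^+)$.} Each $E_{\beta_k}$ is a priori only an element of $U_q(\mathfrak{g})$, so the first task is to show it lies in $U_q(\mathfrak{n}^+)$ and has weight $\beta_k$. I would prove this by induction on $k$, using the rank-two computations. For $s_i(\alpha_j)\in\Delta^+$, the element $T_i(E_j)$ is an explicit $q$-commutator polynomial in $E_i,E_j$. The general case is then reduced to rank two through the braid relations, which allow any reduced word to be moved to one passing through the relevant rank-two subword. This step also shows that $E_{\beta_k}=E_i$ whenever $\beta_k=\alpha_i$ is simple.

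\textbf{Step 2: spanning.} Let $V$ be the span of the ordered monomials $E_{\beta_1}^{i_1}\cdots E_{\beta_N}^{i_N}$. By the Levendorskii--Soibelman formula (b), for $i<j$ the product $E_{\beta_j}E_{\beta_i}$ equals $q^{(\beta_i,\beta_j)}E_{\beta_i}E_{\beta_j}$ plus a combination of ordered monomials in $E_{\beta_{i+1}},\dots,E_{\beta_{j-1}}$. Consequently $V$ is stable under left multiplication by each $E_{\beta_k}$. The argument is an induction on the pair (total degree, inversion count with respect to the convex order), in which every reordering strictly decreases this pair. Since $V$ contains $1$ and every Chevalley generator $E_i$, we get $V=U_q(\mathfrak{n}^+)$.

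\textbf{Step 3: linear independence, the main obstacle.} I would try a dimension count first. Over $\mathbb{C}(q)$, or after specialization at generic $q\in\mathbb{C}^\times$, the weight space $U_q(\mathfrak{n}^+)_\nu$ has dimension at most the Kostant partition number $P(\nu)$, because the monomials of weight $\nu$ span it and there are $P(\nu)$ of them. For the reverse inequality, I would pass to the $\mathbb{A}=\mathbb{C}[q,q^{-1}]$-form generated by divided powers and specialize at $q=1$. There the ordered monomials reduce to classical PBW monomials in $U(\mathfrak{n}^+)$, which are independent by the classical PBW theorem. Hence no nontrivial $\mathbb{A}$-relation can survive, and by semicontinuity the weight spaces have dimension exactly $P(\nu)$ at generic $q$.

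The delicate point is the passage from $\mathbb{C}(q)$ to a specific non-root-of-unity $q\in\mathbb{C}^\times$. For this I would instead use Lusztig's/Kirillov--Reshetikhin's orthogonality of PBW monomials under the Drinfeld pairing between $U_q(\mathfrak{b}^+)$ and $U_q(\mathfrak{b}^-)$. That pairing is diagonal on PBW bases, with diagonal entries that are products of factors $(q_i-q_i^{-1})^{-n}[n]_{q_i}!$ up to units. These entries are nonzero exactly when $q$ is not a root of unity, and this is where the hypothesis on $q$ is used. Nondegeneracy of the pairing on the span of the monomials then gives their independence. The hardest part of the write-up will be verifying this orthogonality, which relies on Step 1 and on the compatibility of $T_i$ with the pairing.
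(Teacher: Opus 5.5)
Your proposal is correct. Your Step 2 and the first half of Step 3 (specializing the divided-power $\mathbb{A}$-form at $q=1$ and invoking the classical PBW theorem) match the paper's proof, which defines the $E_{\beta_k}$ via Lusztig's $T_i$, gets spanning from the Levendorskii--Soibelman formula, and gets independence from flatness.

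The genuine difference is the decisive independence step. The paper stops at the flatness argument. As the paper itself says, that argument only gives independence for \emph{generic} $q$, i.e.\ over $\mathbb{C}(q)$, or outside an exceptional set that is finite in each weight but never identified. So it does not actually reach the hypothesis ``$q$ not a root of unity'' in the statement.

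Your Drinfeld-pairing argument closes this. The pairing $(F^{\mathbf a},E^{\mathbf b})$ is diagonal, with diagonal entries equal, up to units, to $\prod_k [a_k]_{q_{\beta_k}}!\,(q_{\beta_k}-q_{\beta_k}^{-1})^{-a_k}$. This works at each individual such $q$ and shows exactly where the hypothesis enters. Once you have it, the specialization at $q=1$ is superfluous, since spanning plus this independence already give the basis.

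The price is that the orthogonality theorem (Levendorskii--Soibelman, Kirillov--Reshetikhin, Lusztig) is much deeper than anything the paper invokes. You must also make sure it is established over a ring from which it specializes to your chosen $q$, such as $\mathbb{Q}[q,q^{-1}]$ localized at the $q$-integers.

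Your Step 1 supplies something the paper omits entirely. The paper never checks that $E_{\beta_k}\in U_q(\mathfrak{n}^+)$, or that $E_{\beta_k}=E_i$ when $\beta_k=\alpha_i$. Both are needed for $\mathcal{B}\subset U_q(\mathfrak{n}^+)$ and for the spanning argument to reach the Chevalley generators.

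One technical caveat concerns Step 2. Your measure (total degree, inversion count) need not strictly decrease. A correction term from (b) replaces the inverted pair $E_{\beta_j}E_{\beta_i}$ by letters $E_{\beta_k}$ with $i<k<j$, and these can form new inversions with the surrounding letters. A clean fix is to prove, by induction on $j-i$, that the span of ordered monomials in $E_{\beta_i},\dots,E_{\beta_j}$ is a subalgebra. Move the powers of $E_{\beta_i}$ leftward one letter at a time. Each correction term then lies in the span for $[i+1,j]$, which is a subalgebra by the inductive hypothesis.
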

				\begin{proof}
					The proof proceeds in three main steps:
					\begin{enumerate}
						\item  We fix a reduced expression of the longest element $w_0$ of the Weyl group. Using Lusztig's braid group automorphisms $T_i$, we define the quantum root vectors $E_{\beta_k}$ for each $\beta_k \in \Delta^+$.
						\item  By the Levendorskii-Soibelman formula, for $i < j$, the commutator of $E_{\beta_j}$ and $E_{\beta_i}$ is given by:
						\begin{equation}
							E_{\beta_j} E_{\beta_i} - q^{(\beta_i, \beta_j)} E_{\beta_i} E_{\beta_j} = \sum_{i < k_1 < \dots < k_m < j} c_{\mathbf{k}} E_{\beta_{k_1}}^{r_1} \dots E_{\beta_{k_m}}^{r_m}.
						\end{equation}
						This allows any product of generators to be reordered into a linear combination of ordered monomials via induction.
						\item  Since $U_q(\mathfrak{g})$ is a flat deformation of the universal enveloping algebra $\mathcal{U}(\mathfrak{g})$, the linear independence of the PBW basis at $q=1$ implies the linear independence for generic $q$.
					\end{enumerate}
				\end{proof}

\end{document}